\newtheorem{theorem}{Theorem}[chapter]
\newtheorem{lemma}{Lemma}[chapter]
\newtheorem{corollary}[theorem]{Corollary}
\newtheorem{problem}[theorem]{Problem}
\newtheorem{invariant}{Invariant}
\newcommand{\edel}{\ensuremath{\mathtt{del}}}
\newcommand{\idel}{\ensuremath{\mathtt{undel}}}
\newcommand{\vdel}{\ensuremath{\mathtt{del}}}
\newcommand{\chooseedge}{\ensuremath{\mathtt{choose}}}
\newcommand{\kdfs}{\ensuremath{\mathtt{dfs}_k}}
\newcommand{\mkdfs}{\ensuremath{\mathtt{mdfs}_k}}
\newcommand{\treeoutput}{\ensuremath{\mathtt{output}}}
\newcommand{\listtrees}{\ensuremath{\mathtt{ListTrees}_{v_i}}}
\newcommand{\promote}{\ensuremath{\mathtt{promote}}}
\newcommand{\fdeg}{\ensuremath{\mathtt{deg}}}
\newcommand{\adj}{\ensuremath{\mathtt{adj}}}
\newcommand{\isunary}{\ensuremath{\mathtt{is\_unary}}}
\newcommand{\treecut}{\ensuremath{\mathtt{treecut}}}
\newcommand{\unary}{\ensuremath{\mathtt{unary}}}
\newcommand{\comb}{\ensuremath{\mathit{comb}}}
\newcommand{\setofisg}{\ensuremath{\mathcal{S}}}
\newcommand{\choosevertex}{\ensuremath{\mathtt{choose}}}
\newcommand{\certificate}{\ensuremath{\mathtt{certificate}}}
\newcommand{\cleft}{\ensuremath{\mathtt{update\_left}}}
\newcommand{\cright}{\ensuremath{\mathtt{update\_right}}}
\newcommand{\listsubgraphs}{\ensuremath{\mathtt{ListSubgraphs}_{v_i}}}
\newcommand{\restore}{\ensuremath{\mathtt{restore}}}
\newcommand{\removelastleaf}{\ensuremath{\mathtt{removelastleaf}}}
\newcommand{\setofcycles}{\ensuremath{\mathcal{C}}}
\newcommand{\setofpaths}{\ensuremath{\mathcal{P}}}
\newcommand{\beadstring}{\ensuremath{B_{u,t}}}
\newcommand{\sbeadstring}{\ensuremath{B_{s,t}}}
\newcommand{\vbeadstring}{\ensuremath{B_{v,t}}}
\newcommand{\head}{\ensuremath{H_u}}
\newcommand{\chead}{\ensuremath{H_X}}
\newcommand{\del}{\ensuremath{\mathtt{del}}}
\newcommand{\undel}{\ensuremath{\mathtt{undel}}}
\newcommand{\oracleleft}{\ensuremath{\mathtt{left\_update}}}
\newcommand{\oracleright}{\ensuremath{\mathtt{right\_update}}}
\newcommand{\undooracle}{\ensuremath{\mathtt{restore}}}
\newcommand{\bcc}{\textsc{bcc}}
\newcommand{\liststpaths}{\ensuremath{\mathtt{list\_paths}_{s,t}}}
\newcommand{\true}{\ensuremath{\mathtt{true}}}
\newcommand{\false}{\ensuremath{\mathtt{false}}}
\newcommand{\print}{\ensuremath{\mathtt{output}}}
\newcommand{\return}{\ensuremath{\mathtt{return}}}
\newcommand{\routput}{\ensuremath{\mathtt{output}}}
\begin{document}

\title{Efficiently Listing Combinatorial Patterns in Graphs}
\author{Rui Ferreira}
\supervisor{Roberto Grossi \and Romeo Rizzi}
\referee{Takeaki Uno \and St\'ephane Vialette} 
\date{May 2013\\
\medskip
Direttore della Scuola: Pierpaolo Degano\\
\medskip
\medskip
\medskip
\medskip
\medskip
SSD: INF/01 - Informatica}
\maketitle

\chapter*{Acknowledgments}

First and foremost, a heartfelt thank you to my advisers Roberto
Grossi and Romeo Rizzi \emph{-- this thesis would not have been possible
without your deep insights, support and encouragement}. My gratitude
to my co-authors Etienne Birmel\'e, Pierluigi Crescenzi, Roberto
Grossi, Vicent Lacroix, Andrea Marino, Nadia Pisanti, Romeo Rizzi,
Gustavo Sacomoto and Marie-France Sagot. I would also like to thank
everyone at the Department of Computer Science of the University of
Pisa, who made me feel at home in a distant country. It has been a
pleasure working surrounded by these great people.

\medskip

My deepest gratitude to my parents Fernando and Cila for their
encouragement and unconditional love. A big hug to my sister Ana,
brother-in-law Paulo, Andre \& Miguel. To Hugo, Jo\~ao, Tina, Ernesto
and all my family. \emph{-- I cannot express how sorry I am for being away
from you all.}

\medskip

Last but not least, to Desi for being by my side.

\medskip

\hfill --- Rui

\begin{abstract}

	Graphs are extremely versatile and ubiquitous mathematical
	structures with potential to model a wide range of domains.
	For this reason, graph problems have been of interest since
	the early days of computer science. Some of these problems
	consider substructures of a graph that have certain
	properties.  These substructures of interest, generally called
	patterns, are often meaningful in the domain being modeled.
	Classic examples of patterns include spanning trees, cycles
	and subgraphs.
	
	This thesis focuses on the topic of explicitly listing all the
	patterns existing in an input graph. One of the defining
	features of this problem is that the number of patterns is
	frequently exponential on the size of the input graph. Thus,
	the time complexity of listing algorithms is parameterized
	by the size of the output.

	The main contribution of this work is the presentation of
	optimal algorithms for four different problems of listing
	patterns in graphs. These algorithms are framed within the
	same generic approach, based in a recursive partition of the
	search space that divides the problem into subproblems. The
	key to an efficient implementation of this approach is to
	avoid recursing into subproblems that do not list any
	patterns. With this goal in sight, a dynamic data structure,
	called the certificate, is introduced and maintained
	throughout the recursion. Moreover, properties of the
	recursion tree and lower bounds on the number of patterns are
	used to amortize the cost of the algorithm on the size of the
	output.
	
	The first problem introduced is the listing of all
	$k$-subtrees: trees of fixed size~$k$ that are subgraphs of an
	undirected input graph. The solution is presented
	incrementally to illustrate the generic approach until an
	optimal output-sensitive algorithm is reached. This algorithm
	is optimal in the sense that it takes time proportional to the
	time necessarily required to read the input and write the
	output.

	The second problem is that of listing $k$-subgraphs: connected
	induced subgraphs of size $k$ in an undirected input
	graph. An optimal algorithm is presented, taking time
	proportional to the size of the input graph plus the
	edges in the $k$-subgraphs.

	The third and fourth problems are the listing of cycles and
	listing of paths between two vertices in an undirected input
	graph.  An optimality-preserving reduction from listing cycles
	to listing paths is presented. Both problems are solved
	optimally, in time proportional to the size of the input plus
	the size of the output.

	The algorithms presented improve previously known solutions
	and achieve optimal time bounds. The thesis concludes by
	pointing future directions for the topic.
\end{abstract}

\tableofcontents
\listoffigures

\chapter{Introduction}

This chapter presents the contributions and organization of this
thesis. Let us start by briefly framing the title \emph{``Efficiently
Listing Combinatorial Patterns in Graphs''}.

\medskip

{\bf Graphs.}
Graphs are an ubiquitous abstract model of both natural and man-made
structures. Since Leonhard Euler's famous use of graphs to solve the
Seven Bridges of K\"onigsberg
\cite{euler1956seven,gribkovskaia2007bridges}, graph models have been
applied in computer science, linguistics, chemistry, physics and
biology \cite{bondy1976graph,pirzada2008applications}.  Graphs, being
discrete structures, are posed for problems of combinatorial
nature~\cite{bondy1976graph} -- often easy to state and difficult to
solve.

\medskip

{\bf Combinatorial patterns.}
The term ``pattern'' is used as an umbrella of concepts to describe
substructures and attributes that are considered to have significance
in the domain being modeled. Searching, matching, enumerating and
indexing patterns in strings, trees, regular expressions or graphs
are widely researched areas of computer science and mathematics.  This
thesis is focused on patterns of combinatorial nature. In the
particular case of graphs, these patterns are subgraphs or
substructures that have certain properties. Examples of combinatorial
patterns in graphs include spanning trees, graph minors, subgraphs and paths
\cite{knuth2006art,ruskey2003combinatorial}.

\medskip

{\bf Listing.} The problem of graph enumeration, pioneered by
P\'olya, Caley and Redfield, focuses on counting the number of graphs
with certain properties as a function of the number of vertices and
edges in those graphs \cite{harary1973graphical}. Philippe Flajolet
and Robert Sedgewick have introduced techniques for deriving
generating functions of such objects \cite{flajolet2009analytic}. The
problem of counting patterns occurring in graphs (as opposed to the
enumeration of the graphs themselves) is more algorithmic in nature.
With the introduction of new theoretical tools, such as parameterized
complexity
theory~\cite{downey1999parameterized,flum2006parameterized}, several
new results have been achieved in the last
decade~\cite{guillemot2010finding,koutis2009limits}.

In the research presented, we tackle the closely related problem of
listing patterns in graphs: i.e.~explicitly outputting each pattern
found in an input graph. Although this problem can be seen as a type
of exhaustive enumeration (in fact, it is common in the literature
that the term enumeration is used), the nature of both problems is
considered different as the patterns have to be explicitly generated
\cite{knuth2006art,ruskey2003combinatorial}.

\medskip

{\bf Complexity.}
One of the defining features of the problem of listing combinatorial
patterns is that there frequently exists an exponential number of
patterns in the input graph. This implies that there are no
polynomial-time algorithms for this family of problems.  Nevertheless,
the time complexity of algorithms for the problem of listing patterns
in graphs can still be analyzed. The two most common approaches are:
(i) output-sensitive analysis, i.e.~analyzing the time complexity of
the algorithm as a function of the output and input size, and (ii)
time delay, i.e.~bounding the time it takes from the output of one
pattern to the next in function of the size of the input graph and the
pattern.

\section{Contribution}

This thesis presents a new approach for the problem of listing
combinatorial patterns in an input graph $G$ with uniquely labeled
vertices. At the basis of our technique
\cite{Ferreira11,ferreira2013}, we list the set of patterns in $G$ by
recursively using a binary partition: (i) listing the set of patterns
that include a vertex (or edge), and (ii) listing the set of patterns
that do not include that vertex (resp.~edge).

The core of the technique is to maintain a dynamic data structure that
efficiently tests if the set of patterns is empty, avoiding
branches of the recursion that do not output any patterns.  This
problem of dynamic data structures is very different from the
classical view of fully-dynamic or decrementally-dynamic data
structures. Traditionally, the cost of operations in a dynamic data
structure is amortized over a sequence of arbitrary operations. In
our case, the binary-partition method induces a well defined order in
which the operations are performed. This allows a much better
amortization of the cost.  Moreover, the existence of lower bounds
(even if very weak) on the number of the combinatorial patterns in
a graph allows a better amortization of the cost of the recursion and
the maintenance of the dynamic data structure.

\medskip

This approach is applied to the listing of the following four
different patterns in undirected graphs: $k$-subtrees\footnote{This
result has been published in \cite{Ferreira11}.}, $k$-subgraphs,
$st$-paths\footnotemark[2] and cycles\footnotemark[2]. In all four
cases, we obtain optimal output-sensitive algorithms, running in time
proportional to the size of the input graph plus the size of the
output patterns.

\footnotetext[2]{These results have been published in \cite{ferreira2013}.}

\subsection{Listing $k$-subtrees}

Given an undirected connected graph $G=(V,E)$, where $V$ is the set of
vertices and $E$ the set of edges, a $k$-subtree $T$ is a set of edges
$T \subseteq E$ that is acyclic, connected and contains $k$ vertices.
We present the first optimal output-sensitive algorithm for listing
the $k$-subtrees in input graph~$G$. When $s$ is the number of
$k$-subtrees in $G$, our algorithm solves the problem in~$O(s k)$
time.

We present our solution starting from the binary-partition method. We
divide the problem of listing the $k$-subtrees in $G$ into two
subproblems by taking an edge~$e \in E$: (i) we list the $k$-subtrees
that contain~$e$, and (ii) list those that do not contain~$e$. We
proceed recursively on these subproblems until there is just one
$k$-subtree to be listed. This method induces a binary recursion
tree, and all the $k$-subtrees are listed when reaching the leaves of
this recursion tree.

Although this first approach is output sensitive, it is not optimal.
One problem is that each adjacency list in $G$ can be of length
$O(n)$, but we cannot pay such a cost in each recursive call. Also, we
need to maintain a certificate throughout the recursive calls to
guarantee \emph{a priori} that at least one $k$-subtree
will be generated. By exploiting more refined structural properties of the
recursion tree, we present our algorithmic ideas until an optimal
output-sensitive listing is obtained.

\subsection{Listing $k$-subgraphs}

When considering an undirected connected graph $G$, a $k$-subgraph is
a connected subgraph of $G$ induced by a set of $k$ vertices. We
propose an algorithm to solve the problem of listing all the
$k$-subgraphs in $G$. Our algorithm is optimal: solving the problem in
time proportional to the size of the input graph $G$ plus the size of
the edges in the $k$-subgraphs to output.

We apply the binary-partition method, dividing the problem of listing
all the $k$-subgraphs in two subproblems by taking a vertex $v \in V$:
we list the $k$-subgraphs that contain $v$; and those that do not
contain $v$. We recurse on these subproblems until no $k$-subgraphs
remain to be listed. This method induces a binary recursion tree where
the leaves correspond to $k$-subgraphs.

In order to reach an optimal algorithm, we maintain a certificate that
allows us to determine efficiently if there exists at least one
$k$-subgraph in $G$ at any point in the recursion.  Furthermore, we
select the vertex $v$ (which divides the problem into two
subproblems), in a way that facilitates the maintenance of the
certificate.

\subsection{Listing cycles and $st$-paths}

 Listing all the simple cycles (hereafter just called cycles) in a
 graph is a classical problem whose efficient solutions date back to
 the early 70s. For a directed graph with $n$ vertices and $m$ edges,
 containing $\eta$ cycles, the best known solution in the literature
 is given by Johnson's algorithm \cite{Johnson1975} and takes
 $O((\eta+1)(m+n))$ time. This algorithm can be adapted to undirected
 graphs, maintaining the same time complexity. Surprisingly, this time
 complexity is not optimal for undirected graphs: to the best of our
 knowledge, no theoretically faster solutions have been proposed in
 almost 40 years.

  We present the first optimal solution to list all the cycles in an
  undirected graph $G$, improving the time bound of Johnson's
  algorithm.  Specifically, let $\setofcycles(G)$ denote the set of
  all these cycles, and observe that $|\setofcycles(G)| = \eta$. For a
  cycle $c \in \setofcycles(G)$, let $|c|$ denote the number of edges
  in~$c$. Our algorithm requires $O(m + \sum_{c \in
  \setofcycles(G)}{|c|})$ time and is asymptotically optimal: indeed,
  $\Omega(m)$ time is necessarily required to read the input $G$, and
  $\Omega(\sum_{c \in \setofcycles(G)}{|c|})$ time is required to list
  the output.

  We also present the first optimal solution to list all the simple
  paths from~$s$ to~$t$ (shortly, $st$-paths) in an undirected graph
  $G$. Let $\setofpaths_{st}(G)$ denote the set of $st$-paths in $G$
  and, for an $st$-path $\pi \in \setofpaths_{st}(G)$, let $|\pi|$ be
  the number of edges in $\pi$.  Our algorithm lists all the
  $st$-paths in~$G$ optimally in $O(m + \sum_{\pi \in
  \setofpaths_{st}(G)}{|\pi|})$ time, observing that $\Omega(\sum_{\pi
  \in \setofpaths_{st}(G)}{|\pi|})$ time is required to list the
  output.

  While the basic approach is simple, we use a number of non-trivial
  ideas to obtain our optimal algorithm for an undirected 
  graph $G$. We start by presenting an optimality-preserving reduction
  from the problem of listing cycles to the problem of listing
  $st$-paths. Focusing on listing $st$-paths, we consider the
  decomposition of the graph into biconnected components and use the
  property that $st$-paths pass in certain articulation points to
  restrict the problem to one biconnected component at a time. We then
  use the binary-partition method to list: (i) $st$-paths
  that contain an edge $e$, and (ii) those that do not contain
  $e$. We make use of a certificate of existence of at least one
  $st$-path and prove that the cost of maintaining the certificate
  throughout the recursion can be amortized. A key
  factor of the amortization is the existence of a lower bound on the
  number of $st$-paths in a biconnected component.

\section{Organization}

After this brief introduction, let us outline the organization of this
thesis.

\medskip

Chapter~\ref{chapter:Background} introduces background information on
the topic at hand. In Section~\ref{section:Graphs}, we start by
illustrating uses of graphs and reviewing basic concepts in graph
theory. Section~\ref{section:CombinatorialPatternsInGraphs} motivates
the results presented, by introducing combinatorial patterns and their
applications in diverse areas. We proceed into
Section~\ref{section:ListingAndEnumeration} where an overview of 
listing and enumeration of patterns is given and the state of the art
is reviewed. For a review of the state of the art on each problem
tackled, the reader is referred to the introductory text of the
respective chapter. We finalize the chapter with
Section~\ref{section:OverviewOfTheTechniquesUsed} which includes an
overview of the techniques used throughout the thesis and explains how
they are framed within the state of the art.

Chapters~\ref{chapter:ListingKSubtrees} to \ref{chapter:cycles} give a
detailed and formal view of the problems handled and results achieved.
In Chapter~\ref{chapter:ListingKSubtrees} we provide an incremental
introduction to the optimal output-sensitive algorithm to list
$k$-subtrees, starting from a vanilla version of the binary-partition
method and step by step introducing the certificate and amortization
techniques used.  Chapter~\ref{chapter:ListingKSubgraphs} directly
presents the optimal output-sensitive algorithm for the problem of
listing $k$-subgraphs.  Similarly, the results achieved on the
problems of listing cycles and $st$-paths are presented in
Chapter~\ref{chapter:cycles}.

We finalize this exposition with Chapter~\ref{chapter:Conclusion},
reviewing the main contributions and presenting some future directions
and improvements to the efficient listing of combinatorial patterns in
graphs.

\chapter{Background}
\label{chapter:Background}

\section{Graphs}
\label{section:Graphs}

\begin{figure}[t]
\centering
\subfigure[Undirected graph $G_1$ \label{fig:undirgraph}]{

\begin{tikzpicture}[shorten >=1pt,->,scale=1.3]
  \tikzstyle{vertex}=[shape=circle,draw,thick,fill=black,minimum size=2pt,inner sep=2pt]
  \node[vertex,label=right:$a$] (A) at (0,0) {};
  \node[vertex,label=left:$b$] (B) at (-1,-1)   {};
  \node[vertex,label=left:$c$] (C) at (-1,-2)  {};
  \node[vertex,label=right:$d$] (D) at (1,-2)  {};
  \node[vertex,label=right:$e$] (E) at (1,-1)  {};
  \draw (B) -- (C) -- (D) -- (E) -- cycle;
  \draw (B) -- (E) -- cycle;
  \draw (C) -- (E) -- cycle;
  \draw (A) -- (E) -- cycle;
\end{tikzpicture}
}
\subfigure[Directed graph $G_2$\label{fig:dirgraph}]{

\begin{tikzpicture}[shorten >=1pt,->,scale=1.3]
  \tikzstyle{vertex}=[shape=circle,draw,thick,fill=black,minimum size=2pt,inner sep=2pt]
  \node[vertex,label=right:$a$] (A) at (0,0) {};
  \node[vertex,label=left:$b$] (B) at (-1,-1)   {};
  \node[vertex,label=left:$c$] (C) at (-1,-2)  {};
  \node[vertex,label=right:$d$] (D) at (1,-2)  {};
  \node[vertex,label=right:$e$] (E) at (1,-1)  {};
  \draw (A) -- (B);
  \draw (B) -- (C);
  \draw (C) -- (D);
  \draw (E) -- (C);
  \draw (D) -- (E);
  \draw (A) -- (E);
  \draw (B) -- (E);
\end{tikzpicture}
}
\subfigure[Biconnected graph~$G_3$\label{fig:undirgraph2}]{

\begin{tikzpicture}[shorten >=1pt,->,scale=1.3]
  \tikzstyle{vertex}=[shape=circle,draw,thick,fill=black,minimum size=2pt,inner sep=2pt]
  \node[vertex,label=right:$a$] (A) at (0,0) {};
  \node[vertex,label=left:$b$] (B) at (-1,-1)   {};
  \node[vertex,label=left:$c$] (C) at (-1,-2)  {};
  \node[vertex,label=right:$d$] (D) at (1,-2)  {};
  \node[vertex,label=right:$e$] (E) at (1,-1)  {};
  \draw (A) -- (B) -- (C) -- (D) -- (E) -- (A) -- cycle;
  \draw (B) -- (E) -- cycle;
\end{tikzpicture}
}
\caption{Examples of graphs\label{fig:3examplegraphs}}
\end{figure}
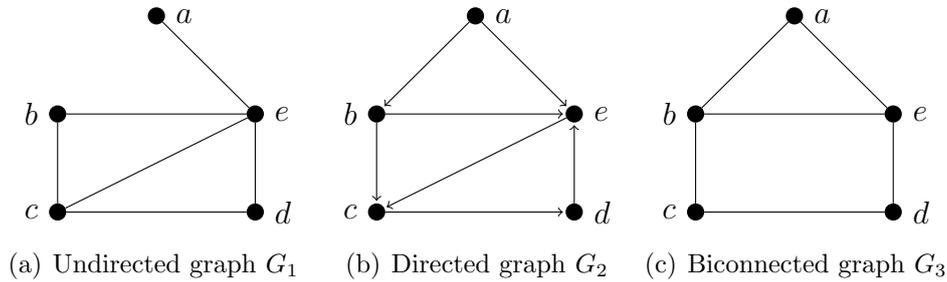

Graphs are abstract mathematical structures that consist of objects,
called \emph{vertices}, and links that connect pairs of vertices,
called \emph{edges}.  Although the term graph was only coined in 1876
by Sylvester, the study of graphs dates back to 1736 when Euler
published his famous paper on the bridges of K\"oningsberg
\cite{gribkovskaia2007bridges,euler1956seven}. With the contributions
of Caley, P\'olya, De Bruijn, Peterson, Robertson, Seymour, Erd\"os,
R\'enyi and many many others, the field of graph theory developed and
provided the tools for what is considered one of the most versatile
mathematical structures.

A classical example of a graph is the rail
network of a country: vertices represent the stations and there is an
edge between two vertices if they represent consecutive stations along
the same railroad. This is an example of an \emph{undirected graph}
since the notion of consecutive station is a symmetric relation.
Another example is provided by the World Wide Web: the vertices are
the websites and two are adjacent if there is a direct link from one
to the other. Graphs of this latter type are called \emph{directed
graphs}, since a link from one website to the other does not imply the
reverse. These edges are sometimes called \emph{directed edges} or
\emph{arcs}.

\subsection{Applications}
\label{subsection:GraphApplications}
There are vast practical uses of graphs. As an example, Stanford Large
Dataset Collection \cite{leskovec2012stanford} includes graphs from
domains raging from jazz musicians to metabolic networks. In this
section, we take a particular interest in graphs as a model of
biological functions and processes.  For further information about
different domains, \cite{pirzada2008applications,bondy1976graph} are
recommended as a starting point.

\medskip

{\bf Metabolic networks.} The physiology of living cells is controlled
by chemical reactions and regulatory interactions.  Metabolic pathways
model these individual processes. The collection of distinct pathways
co-existing within a cell is called a metabolic network.
Interestingly, with the development of the technology to sequence the
genome, it has been possible to link some of its regions to metabolic
pathways. This allows a better modeling of these networks and, through
simulation and reconstruction, it possible to have in-depth insight
into the key features of the metabolism of organisms.

\medskip

{\bf Protein-protein interaction networks.} Proteins have many
biological functions, taking part in processes such as DNA replication
and mediating signals from the outside to the inside of a cell. In
these processes, two or more proteins bind together. These
interactions play important roles in diseases (e.g.~cancer) and can be
modeled through graphs.

\medskip

{\bf Gene regulatory networks.} Segments of DNA present in a cell
interact with each other through their RNA and protein expression
products. There are also interactions occurring with the other
substances present in the cell. These processes govern the rates of
transcription of genes and can be modeled through a network.

\medskip

{\bf Signaling networks.} These networks integrate metabolic,
protein-protein interaction and gene regulatory networks to model how
signals are transduced within or between cells.

\subsection{Definitions}
\label{subsection:GraphDefinitions}

Let us now introduce some formalism and notation when dealing with
graphs. We will recall some of the concepts introduced here on the
preliminaries of
Chapters~\ref{chapter:ListingKSubtrees}-\ref{chapter:cycles}.

\medskip

{\bf Undirected graphs.} A graph~$G$ is an ordered
pair~$G=(V,E)$ where~$V$ is the set of vertices and~$E$ is the set of
edges. The \emph{order} of a graph is the number of vertices $|V|$ and
its \emph{size} the number of edges $|E|$.
In the case of \emph{undirected graphs}, an edge~$e \in E$ is an
unordered pair~$e=(u,v)$ with~$u,v \in V$. Graph~$G$ is said to be
simple if: (i) there are no \emph{loops}, i.e.~edges that start and
end in the same vertex, and (ii) there are no \emph{parallel edges},
i.e.~multiple edges between the same pair of vertices.  Otherwise, $E$
is a multiset and $G$ is called a \emph{multigraph}.
Figure~\ref{fig:undirgraph} shows the example of undirected
graph~$G_1$.

Given a simple undirected graph $G=(V,E)$, vertices $u,v
\in V$ are said to be \emph{adjacent} if there exists an edge $(u,v)
\in E$. The set of vertices adjacent to a vertex $u$ is called its
\emph{neighborhood} and is denoted by $N(u) = \{ v \mid (u,v) \in E
\}$. An edge $e \in E$ is incident in $u$ if $u \in e$. The
\emph{degree} $\deg(u)$ of a vertex $u \in V$ is the number of edges
incident in $u$, that is $\deg(u) = |N(u)|$. 

\medskip

{\bf Directed graphs.} In case of \emph{directed graph}
$G=(V,E)$, an edge~$e \in E$ has an orientation and therefore is an
\emph{ordered} pair of vertices $e=(u,v)$. Figure~\ref{fig:dirgraph}
shows the example of directed graph~$G_2$. The neighborhood $N(u)$ of
a vertex $u \in V$, is the union of the \emph{out-neighborhood}
$N^+(u)=\{ v \mid (u,v) \in E \}$ and \emph{in-neighborhood}
$N^-(u)=\{ v \mid (v,u) \in E \}$.  Likewise, the degree $deg(u)$ is
the union of the \emph{out-degree}~$deg^+(u) = |N^+(u)|$ and the
\emph{in-degree}~$deg^-(u) = |N^-(u)|$.

\medskip

{\bf Subgraphs.} A graph $G'=(V',E')$ is said to be a
\emph{subgraph} of $G=(E,V)$ if $V' \subseteq V$ and $E' \subseteq
E$. The subgraph $G'$ is said to be \emph{induced} if and only if $e
\in E'$ for every edge $e = (u,v) \in E$ with $u,v \in V'$. A subgraph
of $G$ induced by a vertex set $V'$ is denoted by $G[V']$.

\medskip

{\bf Biconnected graphs.} An undirected graph $G=(V,E)$ is said
to be \emph{biconnected} if it remains connected after the removal of
any vertex $v \in V$. The complete graph of two vertices if considered
biconnected.  Figure~\ref{fig:undirgraph2} shows the example of
biconnected graph~$G_3$. A \emph{biconnected component} is a maximal
biconnected subgraph. An \emph{articulation point} (or \emph{cut
vertex}) is any vertex whose removal increases the number of
biconnected components in $G$.  Note that any connected graph can be
decomposed into a tree of biconnected components, called the
\emph{block tree} of the graph.  The biconnected components in the
block tree are attached to each other by shared articulation points.

\section{Combinatorial patterns in graphs}
\label{section:CombinatorialPatternsInGraphs}

\begin{figure}[t]
\centering
\subfigure[4-subtree $T_1$ of $G_1$\label{fig:ksubtreeexample}]{

\begin{tikzpicture}[shorten >=1pt,->,scale=1.3]
  \tikzstyle{vertex}=[shape=circle,draw,thick,fill=black,minimum size=2pt,inner sep=2pt]
  \node[vertex,label=right:$a$] (A) at (0,0) {};
  \node[vertex,label=left:$c$] (C) at (-1,-2)  {};
  \node[vertex,label=right:$d$] (D) at (1,-2)  {};
  \node[vertex,label=right:$e$] (E) at (1,-1)  {};
  \draw (E) -- (D) -- cycle;
  \draw (E) -- (A) -- cycle;
  \draw (C) -- (E) -- cycle;
\end{tikzpicture}
}
\subfigure[4-subgraph $G_1{[\{b,c,d,e\}]}$\label{fig:ksubgraphexample}] {

\begin{tikzpicture}[shorten >=1pt,->,scale=1.3]
  \tikzstyle{vertex}=[shape=circle,draw,thick,fill=black,minimum size=2pt,inner sep=2pt]
  \node[vertex,label=left:$b$] (B) at (-1,-1)   {};
  \node[vertex,label=left:$c$] (C) at (-1,-2)  {};
  \node[vertex,label=right:$d$] (D) at (1,-2)  {};
  \node[vertex,label=right:$e$] (E) at (1,-1)  {};
  \draw (B) -- (C) -- (D) -- (E) -- cycle;
  \draw (B) -- (E) -- cycle;
  \draw (C) -- (E) -- cycle;
  \path (-2,-1) --  (2, -1);
\end{tikzpicture}
}
\subfigure[$ab$-path $\pi_1$ in $G_1$\label{fig:stpathexample}]{

\begin{tikzpicture}[shorten >=1pt,->,scale=1.3]
  \tikzstyle{vertex}=[shape=circle,draw,thick,fill=black,minimum size=2pt,inner sep=2pt]
  \node[vertex,label=right:$a$] (a) at (0,0) {};
  \node[vertex,label=left:$b$] (b) at (-1,-1)   {};
  \node[vertex,label=left:$c$] (c) at (-1,-2)  {};
  \node[vertex,label=right:$d$] (d) at (1,-2)  {};
  \node[vertex,label=right:$e$] (e) at (1,-1)  {};
  \draw (a) -- (e) -- (d) -- (c) -- (b) -- cycle;
\end{tikzpicture}
}

\subfigure[cycle $c_1$ in $G_1$\label{fig:cycleexample}]{

\begin{tikzpicture}[shorten >=1pt,->,scale=1.1]
  \tikzstyle{vertex}=[shape=circle,draw,thick,fill=black,minimum size=2pt,inner sep=2pt]
  \node[vertex,label=right:$a$] (a) at (0,0) {};
  \node[vertex,label=left:$b$] (b) at (-1,-1)   {};
  \node[vertex,label=left:$c$] (c) at (-1,-2)  {};
  \node[vertex,label=right:$d$] (d) at (1,-2)  {};
  \node[vertex,label=right:$e$] (e) at (1,-1)  {};
  \draw (a) -- (e) -- (d) -- (c) -- (b) -- (a) -- cycle;
  \path (-2,-1) --  (2, -1);
\end{tikzpicture}
}
\subfigure[induced path $\pi_2$ in $G_3$\label{fig:inducedpathexample}]{

\begin{tikzpicture}[shorten >=1pt,->,scale=1.1]
  \tikzstyle{vertex}=[shape=circle,draw,thick,fill=black,minimum size=2pt,inner sep=2pt]
  \node[vertex,label=right:$a$] (a) at (0,0) {};
  \node[vertex,label=left:$c$] (c) at (-1,-2)  {};
  \node[vertex,label=right:$d$] (d) at (1,-2)  {};
  \node[vertex,label=right:$e$] (e) at (1,-1)  {};
  \draw (a) -- (e) -- (d) -- (c) -- cycle;
  \path (-2,-1) --  (2, -1);
\end{tikzpicture}
}

\subfigure[chordless cycle $c_2$ in $G_3$\label{fig:chordlesscycleexample}]{

\begin{tikzpicture}[shorten >=1pt,->,scale=1.1]
  \tikzstyle{vertex}=[shape=circle,draw,thick,fill=black,minimum size=2pt,inner sep=2pt]
  \node[vertex,label=left:$b$] (B) at (-1,-1)   {};
  \node[vertex,label=left:$c$] (C) at (-1,-2)  {};
  \node[vertex,label=right:$d$] (D) at (1,-2)  {};
  \node[vertex,label=right:$e$] (E) at (1,-1)  {};
  \draw (B) -- (E) -- (D) -- (C) -- (B) -- cycle;
  \path (-2,-1) --  (2, -1);
\end{tikzpicture}
}
\subfigure[$ac$-bubble $b_1$ in $G_2$\label{fig:bubbleexample}]{

\begin{tikzpicture}[shorten >=1pt,->,scale=1.1]
  \tikzstyle{vertex}=[shape=circle,draw,thick,fill=black,minimum size=2pt,inner sep=2pt]
  \node[vertex,label=right:$a$] (A) at (0,0) {};
  \node[vertex,label=left:$b$] (B) at (-1,-1)   {};
  \node[vertex,label=left:$c$] (C) at (-1,-2)  {};
  \node[vertex,label=right:$e$] (E) at (1,-1)  {};
  \draw (A) -- (B);
  \draw (B) -- (C);
  \draw (E) -- (C);
  \draw (A) -- (E);
\end{tikzpicture}
}
\caption{Examples of patterns\label{fig:examplepatterns}}
\end{figure}
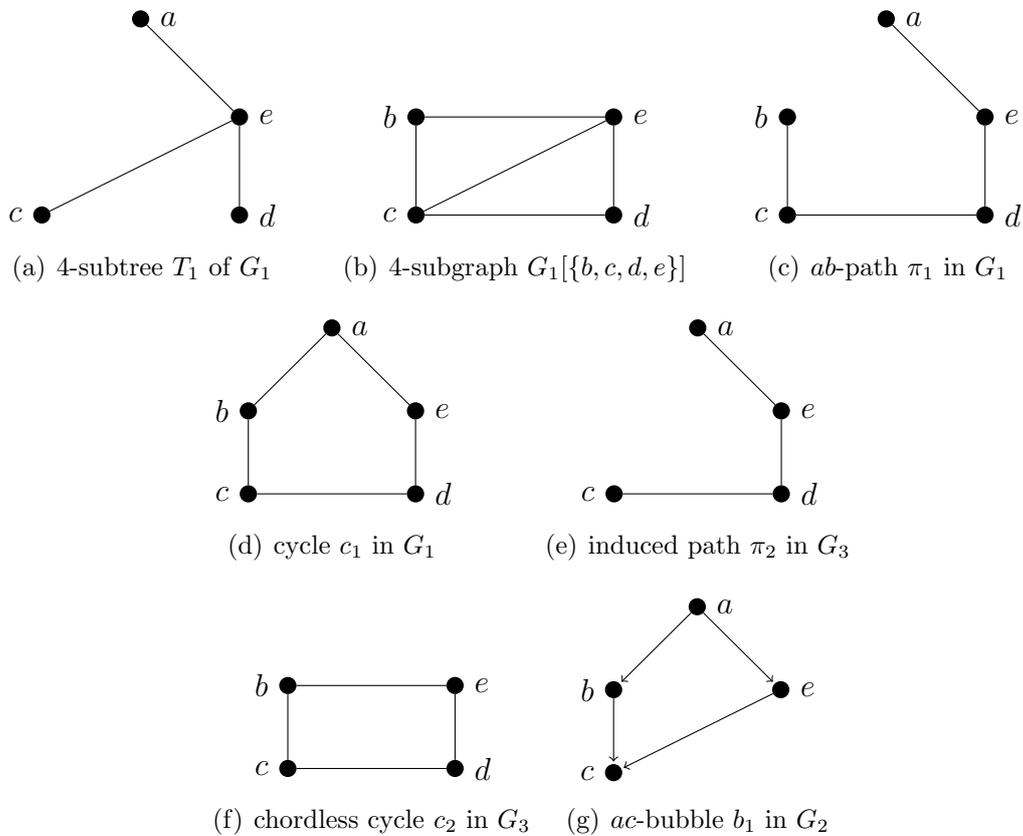

Combinatorial patterns in an input graph are constrained substructures
of interest for the domain being modeled. As an example, a cycle in a
graph modeling the railway network of a country could represent a
service that can be efficiently performed by a single train. In
different domains, there exists a myriad of different structures with
meaningful interpretations. We focus on generic patterns such as
subgraphs, trees, cycles and paths, which have broad applications in
different domains. In addition to the patterns studied in this thesis,
we also introduce other patterns where the techniques presented are
likely to have applications.

\subsection{$k$-subtrees}
\label{subsection:kSubtrees}

Given a simple (without loops or parallel edges), undirected and
connected graph $G = (V,E)$ with $n:=|V|$ and an integer $k \in
[2,n]$, a \emph{$k$-subtree} $T$ of $G$ is an acyclic connected
subgraph of $G$ with $k$ vertices. Formally, $T \subseteq E$, $|T| =
k-1$, and $T$ is an unrooted, unordered, free tree.
Figure~\ref{fig:ksubtreeexample} shows the example of the $4$-subtree
$T_1$ of graph $G_1$ from Figure~\ref{fig:undirgraph}.

\medskip

Trees that are a subgraph of an input graph $G=(V,E)$ have been
considered significant since the early days of graph theory. Examples
of such trees that have been deeply studied include spanning trees
\cite{graham1982history} and Steiner trees \cite{hwang1992steiner}.
Interestingly, when $k=|V|$, the $k$-subtrees of $G$ are its spanning
trees. Thus, $k$-subtrees can be considered a generalization of
spanning trees.

In several domains, e.g.~biological networks, researchers are
interested in local
structures~\cite{girvan2002community,boccaletti2006complex} that
represent direct interactions of the components of the network (see
Section~\ref{subsection:GraphApplications}). In
these domains, $k$-subtrees model acyclic interactions between those
components. In \cite{wasa2012}, the authors present related
applications of $k$-subtrees.

\subsection{$k$-subgraphs}
\label{subsection:InducedSubgraphs}
Given an undirected graph $G=(V,E)$, a set of vertices $V' \subseteq
V$ induces a subgraph $G[V']=(V',E')$ where $E' = \{ (u,v) \mid u,v
\in V' \text{ and } (u,v) \in E \}$. A $k$-subgraph is a connected
induced subgraph $G[V']$ with $k$ vertices.
Figure~\ref{fig:ksubgraphexample} shows the example of the
$4$-subgraph $G_1[V']$ with $V'=\{b,c,d,e\}$ (graph $G_1$ is available
on Figure~\ref{fig:undirgraph}).

\medskip

The subgraphs of an input graph have the been subject of study
\cite{ullmann1976algorithm,mcgregor1982backtrack,barnard1993substructure,strogatz2001exploring}
of many researchers. We highlight the recent interest of the
bioinformatics community in network motifs \cite{milo}. Network motifs
are $k$-subgraphs that appear more frequently in an input graph than
what would be expected in random networks. These motifs are likely to
be components in the function of the network. In fact, motifs
extracted from gene regulatory networks
\cite{lee2002transcriptional,milo,shen2002network} have been shown to
perform definite functional roles.

\subsection{$st$-paths and cycles}
\label{subsection:PathsAndCycles}

Given a directed or undirected graph $G=(V,E)$, a
\emph{path} in $G$ is a subgraph $\pi = (V',E') \subseteq
G$ of the form:
$$
V' = \{u_1,u_2, \ldots, u_k \}
\quad\quad
E' = \{ (u_1,u_2), (u_2,u_3), \ldots, (u_{k-1},u_{k}) \}
$$
where all the $u_i \in V'$ are distinct (and therefore a path is
simple by definition). We refer to a path $\pi$ by its natural
sequence of vertices or edges. A path $\pi$ from $s$ to $t$, or
$st$-\emph{path}, is denoted by $\pi = s \leadsto t$.
Figure~\ref{fig:stpathexample} shows the example of the
$ab$-path~$\pi_1$ in graph $G_1$ from Figure~\ref{fig:undirgraph}.

When $\pi = u_1, u_2, \ldots, u_k$ is a path, $k \ge 3$ and edge
$(u_k,u_1) \in E$ then $c = \pi + (u_k,u_1)$ is a cycle in $G$.
Figure~\ref{fig:cycleexample} shows the example of
the cycle~$c_1$ in graph $G_1$ from Figure~\ref{fig:undirgraph}.
We denote the number of edges in a path $\pi$ by $|\pi|$ and in a
cycle~$c$ by $|c|$.

\medskip

Cycles have broad applications in many fields. Following our
bioinformatics theme, cycles in metabolic networks represent cyclic
metabolic pathways which are known to often be functionally important
(e.g. Krebs cycle \cite{melendez1996puzzle}). Other domains where
cycles and $st$-paths are considered important include symbolic model
checking \cite{mcmillan1992symbolic}, telecommunication networks
\cite{gavish1982topological,schwartz1987telecommunication} and circuit
design~\cite{barahona1988application}.

\subsection{Induced paths, chordless cycles and holes}
\label{subsection:InducedPathsCyclesAndHoles}
Given an undirected graph $G=(V,E)$ and vertices $s,t \in V$, an
\emph{induced path} $\pi = s \leadsto t$ is an induced subgraph of $G$
that is a path from $s$ to $t$. By definition, there exists an edge
between each pair of vertices adjacent in $\pi$ and there are no edges
that connect two non-adjacent vertices.
Figure~\ref{fig:inducedpathexample} shows the example of the induced
path $\pi_2$ in graph $G_3$ from Figure~\ref{fig:undirgraph2}. Note
that $\pi_1$ in Figure~\ref{fig:inducedpathexample} is not an induced
path in $G_3$ due to the existence of edges $(a,b)$ and $(b,e)$.

Similarly, a \emph{chordless cycle} $c$ is an induced subgraph of $G$
that is a cycle. Chordless cycles are also sometimes called
\emph{induced cycles} or, when $|c|>4$, \emph{holes}.
Figure~\ref{fig:chordlesscycleexample} shows the example of the
chordless cycle $c_2$ in graph $G_3$ from Figure~\ref{fig:undirgraph2}.

\medskip

Many important graph families are characterized in terms of induced
paths and induced cycles. One example are chordal graphs: graphs with
no holes. Other examples include distance hereditary graphs
\cite{blair1991introduction}, trivially perfect graphs
\cite{golumbic1978trivially} and block graphs
\cite{harary1963characterization}.

\subsection{$st$-bubbles}
\label{subsection:Bubbles}
Given a directed graph $G=(V,E)$, and two vertices $s,t \in V$, a
$st$\emph{-bubble} $b$ is an unordered pair $b=(P,Q)$ of $st$-paths
$P, Q$ whose inner-vertices are disjoint (i.e.: $P \cap Q = \{s,t\}$).
The term \emph{bubble} (also called a \emph{mouth}) refers to
$st$-bubbles without fixing the source and target (i.e. $\forall s,t
\in V$). Figure~\ref{fig:bubbleexample} shows the example of the
$ac$-bubble $b_1$ in directed graph $G_2$ from Figure~\ref{fig:dirgraph}.

\medskip

Bubbles represent polymorphisms in models of the DNA. Specifically, in
De Bruijn graphs (a directed graph) generated by the reads of a DNA
sequencing project, bubbles can represent two different traits
occurring in the same species or population
\cite{pevzner2004novo,robertson2010novo}. Moreover, bubbles can can
also represents sequencing errors in the DNA sequencing project
\cite{simpson2009abyss,zerbino2008velvet,birmele2012}.

\section{Listing}
\label{section:ListingAndEnumeration}

Informally, given an input graph $G$ and a definition of pattern $P$,
a listing problem asks to output all substructures of graph $G$ that
satisfy the properties of pattern $P$.

Listing combinatorial patterns in graphs has been a long-time problem
of interest. In his 1970 book \cite{moon1970counting}, Moon remarks
``Many papers have been written giving algorithms, of varying degrees
of usefulness, for listing the spanning trees of a
graph''\footnote{This citation was found in
\cite{ruskey2003combinatorial}}. Among others, he cites
\cite{hakimi1961trees,duffin1959analysis,wang1934new,feussner2,feussner}
-- some of these early papers date back to the beginning of the XX
century. More recently, in the 1960s, Minty proposed an algorithm to
list all spanning trees \cite{minty}. Other results from Welch,
Tiernan and Tarjan for this and other problems soon followed
\cite{Welch66,Tiernan70,Tarjan73}.

It is not easy to find a reference book or survey with the state of
the art in this area,
\cite{harary1973graphical,Bezem87,ruskey2003combinatorial,knuth2006art}
partially cover the material. In this section we present a brief
overview of the theory, techniques and problems of the area. For a
review of the state of the art related with the problems tackled, we
refer the reader to the introductory text of each chapter.

\subsection{Complexity}

One defining characteristic of the problem of listing combinatorial
patterns in graphs is that the number of patterns present in the input
is often exponential in the input size. Thus, the number of patterns
and the output size have to be taken into account when analyzing the
time complexity of these algorithms. Several notions of
output-sensitive efficiency have been proposed:

\begin{enumerate}
\item \emph{Polynomial total time.} Introduced by Tarjan in
	\cite{Tarjan73}, a listing algorithm runs in polynomial total
	time if its time complexity is polynomial in the input and
	output size.

\item \emph{P-enumerability.} Introduced by Valiant in
	\cite{valiant1979permanent,valiant1979complexity}, an
	algorithm P-enumerates the patterns of a listing problem if
	its running time is $O(p(n)s)$, where $p(n)$ is a
	polynomial in the input size $n$ and $s$ is the number of
	solutions to output. When this algorithm uses space polynomial
	in the input size only, Fukuda introduced the term
	\emph{strong P-enumerability} \cite{fukuda1997analysis}.

\item \emph{Polynomial delay.} An algorithm has \emph{delay} $D$ if:
	(i) the time taken to output the first solution is $O(D)$, and
	(ii) the time taken between the output of any two consecutive
	solutions is $O(D)$. Introduced by Johnson, Yannakakis and
	Papadimitriou \cite{johnson1988generating}, an algorithm
	has polynomial (resp. linear, quadratic, \ldots) delay if $D$
	is polynomial (resp. linear, quadratic, \ldots) in the size of
	the input.
\end{enumerate}

In \cite{rospocher2006dit}, Rospocher proposed a hierarchy of
complexity classes that take these concepts into account. He
introduces a notion of reduction between these classes and some
listing problems were proven to be complete for the
class~$\mathcal{L}\texttt{P}$: the listing analogue of
class~$\texttt{\#P}$ for counting problems.

\medskip

We define an algorithm for a listing problem to be \emph{optimally
output sensitive} if the running time of the algorithm is $O(n+q)$,
where $n$ is the input size and $q$ is the size of the output.
Although this is a notion of optimality for when the output has to be
explicitly represented, it is possible that the output can be encoded
in the form of the differences between consecutive patterns in the
sequence of patterns to be listed.

\subsection{Techniques}

Since combinatorial patterns in graphs have different properties and
constraints, it is complex to have generic algorithmic techniques that
work for large classes of problems. Some of the ideas used for the
listing of combinatorial structures
\cite{goldberg2009efficient,kreher1998combinatorial,wilf1989combinatorial}
can be adapted to the listing of combinatorial patterns in those
structures. Let us present some of the known approaches.

\medskip

{\bf Backtrack search.} According to this approach, a
backtracking algorithm finds the solutions for a listing problem by
exploring the search space and abandoning a partial solution (thus,
the name ``backtracking'') that cannot be completed to a valid one.
For further information see~\cite{Read75}.

\medskip

{\bf Binary partition.} An algorithm that follows this approach
recursively divides the search space into two parts. In the case of
graphs, this is generally done by taking an edge (or a vertex) and:
(i) searching for all solutions that include that edge (resp.~vertex),
and (ii) searching for all solutions that do not include that edge
(resp.~vertex). This method is presented with further detail in
Section~\ref{subsection:BinaryPartitionMethod}.

\medskip

{\bf Contraction--deletion.} Similarly to the binary-partition
approach, this technique is characterized by recursively dividing the
search space in two. By taking an edge of the graph, the search space
is explored by: (i) contraction of the edge, i.e.~merging the
endpoints of the edge and their adjacency list, and (ii) deletion of
the edge. For further information the reader is referred
to~\cite{bollobas2000contraction}

\medskip

{\bf Gray codes.} According to this technique, the space of
solutions is encoded in such a way that consecutive solutions differ
by a constant number of modifications. Although not every pattern has
properties that allow such encoding, this technique leads to very
efficient algorithms. For further information see \cite{savage}.

\medskip

{\bf Reverse search} This is a general technique to explore the
space of solutions by reversing a local search algorithm. Considering
a problem with a unique maximum objective value, there exist local
search algorithms that reach the objective value using simple
operations. One such example is the Simplex algorithm. The idea of
reverse search is to start from this objective value and
\emph{reverse} the direction of the search. This approach implicitly
generates a tree of the search space that is traversed by the reverse
search algorithm. One of the properties of this tree is that it has
bounded height, a useful fact for proving the time complexity of the
algorithm. Avis and Fukuda introduced this idea in \cite{avis}.

\medskip

Although there is some literature on techniques for enumeration
problems \cite{uno1998new,takeaki2003two}, many more techniques and
``tricks'' have been introduced when attacking particular problems.
For a deep understanding of the topic, the reader is recommended to
review the work of David Avis, Komei Fukuda, Shin-ichi Nakano, Takeaki
Uno.

\subsection{Problems}

As a contribution for researchers starting to explore the topic of
listing patterns in graphs, we present a table with the most relevant
settings of problems and a list of state of the art references. For
the problems tackled in this thesis, a more detailed review is
presented on the introductory text of
Chapters~\ref{chapter:ListingKSubtrees},
\ref{chapter:ListingKSubgraphs} and \ref{chapter:cycles}.

\begin{center}
\begin{tabular}{|l|l|}
\hline
Cycles and Paths & See \cite{ferreira2013} and \cite{Johnson1975} \\ \hline
Spanning trees & See \cite{shioura} \\ \hline
Subgraphs & See \cite{avis}, \cite{koch2001enumerating} \\ \hline
Matchings & See \cite{uno1997algorithms}, \cite{propp1999enumeration}, \cite{fukuda1994finding} and \cite{uno2001fast} \\ \hline
Cut-sets & See \cite{tsukiyama1980algorithm}, \cite{arunkumar1979enumeration} \\ \hline
Independent sets & See \cite{johnson1988generating} \\ \hline
Induced paths, cycles, holes & See \cite{uno2003output} \\ \hline
Cliques, pseudo-cliques & See \cite{modani2008large}, \cite{makino2004new} and \cite{uno2007efficient}\\ \hline
Colorings & See \cite{matsui2007enumeration} and \cite{matsui1996enumeration}\\ \hline
\end{tabular}
\end{center}

\section{Overview of the techniques used}
\label{section:OverviewOfTheTechniquesUsed}

In this section, we present an overview of the approach we have applied
to problems of listing combinatorial patterns in graphs. Let us start
by describing the binary partition method and then introduce the
concept of certificate.

\subsection{Binary partition method}
\label{subsection:BinaryPartitionMethod}

Binary partition is a method for recursively subdividing the search
space of the problem. In the case of a graph $G=(V,E)$, we can take an
edge $e \in E$ (or a vertex $v \in V$) and list: (i) all the patterns
that include $e$, and (ii) all the patterns that do not include edge
$e$.

Formally, let $\mathcal{S}(G)$ denote the set of patterns in $G$.  For
each pattern $p \in \mathcal{S}(G)$, a \emph{prefix} $p'$ of~$p$ is a
substructure of pattern $p$, denoted by $p' \subseteq p$.  Let
$\mathcal{S}(p',G)$ be the set of patterns in $G$ that have $p'$ as a
substructure. By taking an edge $e \in E$, we can write the
following recurrence:

\begin{equation}
\label{eq:genbinpart}
\mathcal{S}\left(p',G\right) \quad = \quad \mathcal{S} \left( p' + \{e\},G\right) \quad \cup \quad \mathcal{S}\left(p',G - \{e\} \right)
\end{equation}

Noting that the left side is disjoint with the right side of the
recurrence, this can be a good starting point to enumerate the
patterns in $G$.

\medskip

In order to implement this idea efficiently, one key point is to
maintain $\mathcal{S}(p',G) \neq \emptyset$ as an invariant. Generally
speaking, maintaining the invariant on the left side of the recurrence
is easier than on the right side. Since we can take any edge $e \in E$
to partition the space, we can use the definition of the pattern to
select an edge that maintains the invariant. For example, if the
pattern is connected, we augment the prefix with an edge that is
connected to it. In order to maintain this invariant efficiently, we
introduce the concept of a certificate.

\subsection{Certificates}

When implementing an algorithm using recurrence~\ref{eq:genbinpart},
we maintain a certificate that guaranties that $\mathcal{S}(p',G) \neq
\emptyset$. An example of a certificate is a pattern $p \supseteq p'$.
In the recursive step, we select an edge $e$ that interacts as little
as possible with the certificate and thus facilitates its maintenance.
In the ideal case, a certificate $C$ is both the certificate of
$\mathcal{S}(p',G)$ and of $\mathcal{S}(p',G-\{e\})$. Although this is
not always possible, we are able to amortize the cost of modifying
the certificate.

During the recursion we are able to avoid copying the certificate and
maintain it instead. We use data structures that preserve previous
versions of itself when they are modified. These data structures,
usually called \emph{persistent}, allow us to be efficient in terms of
the space.

\subsection{Adaptive amortized analysis}

The problem of maintaining a certificate is very different than the
traditional view of dynamic data structures. When considering
fully-dynamic or partially-dynamic data structures, the cost of
operations is usually amortized after a sequence of \emph{arbitrary}
insertions and deletions. In the case of maintaining the
certificate throughout the recursion, the sequence of operations is not
arbitrary. As an example, when we remove an edge from the graph, the
edge is added back again on the callback of the recursion.
Furthermore, we have some control over the recursion as we can select
the edge $e$ in a way that facilitates the maintenance of the
certificate (and its analysis).

Another key idea is that a certificate can provide a lower bound on the
size of set $\mathcal{S}(p',G)$. We have designed certificates whose
time to maintain is proportional to the lower bound on the number of
patterns they provide.

\medskip

Summing up, the body of work presented in this thesis revolves around
shaping the recursion tree, designing and efficiently implementing the
certificate while using graph theoretical properties to amortize the
costs of it all.

\chapter{Listing k-subtrees}
\label{chapter:ListingKSubtrees}

Given an undirected connected graph $G=(V,E)$ with $n$ vertices and
$m$ edges, we consider the problem of listing all the $k$-subtrees in
$G$. Recall that we define a $k$-subtree $T$ as an edge subset $T
\subseteq E$ that is acyclic and connected, and contains $k$ vertices.
Refer to Section~\ref{subsection:kSubtrees} for applications of this
combinatorial pattern.  We denote by $s$ the number of $k$-subtrees in
$G$.  For example, there are $s=9$ $k$-subtrees in the graph of
Fig.~\ref{fig:ksubtreesexample}, where $k=3$. Originally published
in~\cite{Ferreira11}, we present the \emph{first optimal
output-sensitive} algorithm for listing all the $k$-subtrees in $O(s k)$
time, using $O(m)$ space.

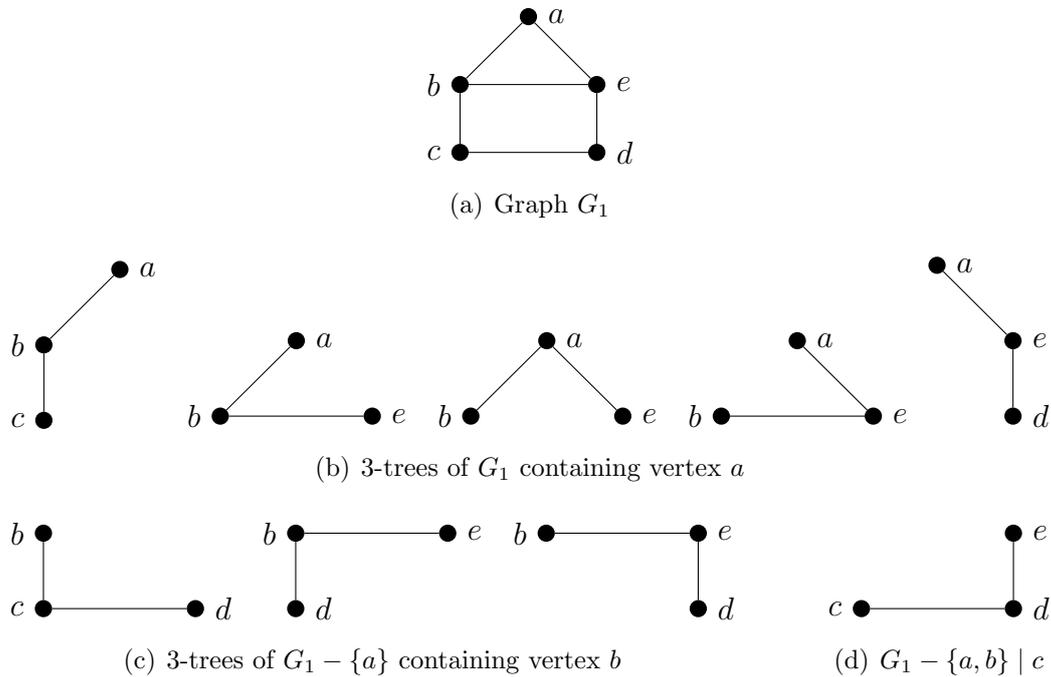
\begin{figure}[t]
\centering
\subfigure[Graph $G_1$]{
\begin{tikzpicture}[shorten >=1pt,->,scale=0.9]
  \tikzstyle{vertex}=[shape=circle,draw,thick,fill=black,minimum size=2pt,inner sep=2pt]
  \node[vertex,label=right:$a$] (A) at (0,0) {};
  \node[vertex,label=left:$b$] (B) at (-1,-1)   {};
  \node[vertex,label=left:$c$] (C) at (-1,-2)  {};
  \node[vertex,label=right:$d$] (D) at (1,-2)  {};
  \node[vertex,label=right:$e$] (E) at (1,-1)  {};
  \draw (A) -- (B) -- (C) -- (D) -- (E) -- (A) -- cycle;
  \draw (B) -- (E) -- cycle;
\end{tikzpicture}
} 

\subfigure[3-trees of $G_1$ containing vertex $a$]{
\begin{tikzpicture}[shorten >=1pt,->,scale=1.0]
  \tikzstyle{vertex}=[shape=circle,draw,thick,fill=black,minimum size=2pt,inner sep=2pt]
  \node[vertex,label=right:$a$] (A) at (0,0) {};
  \node[vertex,label=left:$b$] (B) at (-1,-1)   {};
  \node[vertex,label=left:$c$] (C) at (-1,-2)  {};
  \draw (A) -- (B) -- (C) -- cycle;
\end{tikzpicture}
\begin{tikzpicture}[shorten >=1pt,->,scale=1.0]
  \tikzstyle{vertex}=[shape=circle,draw,thick,fill=black,minimum size=2pt,inner sep=2pt]
  \node[vertex,label=right:$a$] (A) at (0,0) {};
  \node[vertex,label=left:$b$] (B) at (-1,-1)   {};
  \node[vertex,label=right:$e$] (E) at (1,-1)  {};
  \draw (A) -- (B) -- (E) -- cycle;
\end{tikzpicture}
\begin{tikzpicture}[shorten >=1pt,->,scale=1.0]
  \tikzstyle{vertex}=[shape=circle,draw,thick,fill=black,minimum size=2pt,inner sep=2pt]
  \node[vertex,label=right:$a$] (A) at (0,0) {};
  \node[vertex,label=left:$b$] (B) at (-1,-1)   {};
  \node[vertex,label=right:$e$] (E) at (1,-1)  {};
  \draw (A) -- (B) -- cycle;
  \draw (A) -- (E) -- cycle;
\end{tikzpicture}
\begin{tikzpicture}[shorten >=1pt,->,scale=1.0]
  \tikzstyle{vertex}=[shape=circle,draw,thick,fill=black,minimum size=2pt,inner sep=2pt]
  \node[vertex,label=right:$a$] (A) at (0,0) {};
  \node[vertex,label=left:$b$] (B) at (-1,-1)   {};
  \node[vertex,label=right:$e$] (E) at (1,-1)  {};
  \draw (A) -- (E) -- (B) -- cycle;
\end{tikzpicture}
\begin{tikzpicture}[shorten >=1pt,->,scale=1.0]
  \tikzstyle{vertex}=[shape=circle,draw,thick,fill=black,minimum size=2pt,inner sep=2pt]
  \node[vertex,label=right:$a$] (A) at (0,0) {};
  \node[vertex,label=right:$e$] (E) at (1,-1)  {};
  \node[vertex,label=right:$d$] (D) at (1,-2)   {};
  \draw (A) -- (E) -- (D) -- cycle;
\end{tikzpicture}
}

\subfigure[$3$-trees of $G_1-\{a\}$ containing vertex $b$]{
\begin{tikzpicture}[shorten >=1pt,->,scale=1.0]
  \tikzstyle{vertex}=[shape=circle,draw,thick,fill=black,minimum size=2pt,inner sep=2pt]
  \node[vertex,label=left:$b$] (B) at (-1,-1)   {};
  \node[vertex,label=left:$c$] (C) at (-1,-2)  {};
  \node[vertex,label=right:$d$] (D) at (1,-2)  {};
  \draw (B) -- (C) -- (D) -- cycle;
\end{tikzpicture}
\begin{tikzpicture}[shorten >=1pt,->,scale=1.0]
  \tikzstyle{vertex}=[shape=circle,draw,thick,fill=black,minimum size=2pt,inner sep=2pt]
  \node[vertex,label=left:$b$] (B) at (-1,-1)   {};
  \node[vertex,label=right:$d$] (C) at (-1,-2)  {};
  \node[vertex,label=right:$e$] (E) at (1,-1)  {};
  \draw (C) -- (B) -- (E) -- cycle;
\end{tikzpicture}
\begin{tikzpicture}[shorten >=1pt,->,scale=1.0]
  \tikzstyle{vertex}=[shape=circle,draw,thick,fill=black,minimum size=2pt,inner sep=2pt]
  \node[vertex,label=left:$b$] (B) at (-1,-1)   {};
  \node[vertex,label=right:$d$] (D) at (1,-2)  {};
  \node[vertex,label=right:$e$] (E) at (1,-1)  {};
  \draw (B) -- (E) -- (D) -- cycle;
\end{tikzpicture}
}
\hspace{1.0em}
\subfigure[$G_1-\{a,b\} \mid c$]{
\begin{tikzpicture}[shorten >=1pt,->,scale=1.0]
  \tikzstyle{vertex}=[shape=circle,draw,thick,fill=black,minimum size=2pt,inner sep=2pt]
  \node[vertex,label=left:$c$] (C) at (-1,-2)  {};
  \node[vertex,label=right:$d$] (D) at (1,-2)  {};
  \node[vertex,label=right:$e$] (E) at (1,-1)  {};
  \draw (C) -- (D) -- (E) -- cycle;
\end{tikzpicture}
}

\caption{Example graph $G_1$ and its 3-trees}
\label{fig:ksubtreesexample}

\end{figure}

\medskip

As a special case, this problem also models the classical problem of
listing the spanning trees in $G$, which has been largely investigated
(here $k=n$ and $s$ is the number of spanning trees in $G$). The first
algorithmic solutions appeared in the 60's~\cite{minty}, and the
combinatorial papers even much earlier~\cite{moon1970counting}. Read
and Tarjan presented an output-sensitive algorithm in $O(sm)$ time and
$O(m)$ space~\cite{Read75}. Gabow and Myers proposed the first
algorithm~\cite{gabow} which is optimal when the spanning trees are
explicitly listed. When the spanning trees are implicitly enumerated,
Kapoor and Ramesh~\cite{kapoor} showed that an elegant incremental
representation is possible by storing just the $O(1)$ information
needed to reconstruct a spanning tree from the previously enumerated
one, giving $O(m+s)$ time and $O(mn)$ space~\cite{kapoor}, later
reduced to $O(m)$ space by Shioura et al.~\cite{shioura}.  After we
introduced the problem in~\cite{Ferreira11}, a constant-time algorithm
for the enumeration of $k$-subtrees in the particular case of
trees~\cite{wasa2012} was introduced by Wasa, Uno and Arimura.  We are
not aware of any other non-trivial output-sensitive solution for the
problem of listing the $k$-subtrees in the general case.

\medskip

We present our solution starting from the binary partition method
(Section~\ref{subsection:BinaryPartitionMethod}). We divide the
problem of listing all the $k$-subtrees in two subproblems by choosing
an edge $e \in E$: we list the $k$-subtrees that contain~$e$ and those
that do not contain~$e$.  We proceed recursively on these subproblems
until there is just one $k$-subtree to be listed. This method induces
a binary recursion tree, and all the $k$-subtrees can be listed when
reaching the leaves of the recursion tree.

Although output sensitive, this simple method is not optimal. In fact,
a more careful analysis shows that it takes $O((s+1)(m+n))$ time. One
problem is that the adjacency lists of $G$ can be of length $O(n)$
each, but we cannot pay such a cost in each recursive call. Also, we
need a \emph{certificate} that should be easily maintained through the
recursive calls to guarantee \emph{a priori} that there will be at
least one $k$-subtree generated. By exploiting more refined structural
properties of the recursion tree, we present our algorithmic ideas
until an optimal output-sensitive listing is obtained, i.e.\mbox{}
$O(sk)$ time. Our presentation follows an incremental approach to
introduce each idea, so as to evaluate its impact in the complexity of
the corresponding algorithms.

\section{Preliminaries}
\label{section:kSubtreesPreliminaries}
Given a simple (without self-loops or parallel edges), undirected and
connected graph $G = (V,E)$, with $n=|V|$ and $m=|E|$, and an integer
$k \in [2,n]$, a \emph{$k$-subtree} $T$ is an acyclic connected
subgraph of $G$ with $k$ vertices.  We denote the total number of
$k$-subtrees in $G$ by $s$, where $sk \geq m \geq n-1$ since $G$ is
connected. Let us formulate the problem of listing $k$-subtrees.

\begin{problem}
  \label{problem:ksubtreelisting}
  Given an input graph $G$ and an integer $k$, list all the $k$-subtrees of $G$.
\end{problem}

We say that an algorithm that solves Problem~\ref{problem:ksubtreelisting} is
\emph{optimal} if it takes $O(sk)$ time, since the latter is
proportional to the time taken to explicitly list the output, namely,
the $k-1$ edges in each of the $s$ listed $k$-subtrees. We also say that
the algorithm has \emph{delay} $t(k)$ if it takes $O( t(k) )$ time to
list a $k$-subtree after having listed the previous one.

We adopt the standard representation of graphs using adjacency lists
$\adj(v)$ for each vertex $v \in V$. We maintain a counter for each $v
\in V$, denoted by $|\adj(v)|$, with the number of edges in the
adjacency list of $v$.  Additionally, as the graph is undirected,
$(u,v)$ and $(v,u)$ are equivalent.

Let $X \subseteq E$ be a \emph{connected edge} set. We denote by $V[X]
= \{ u \mid (u,v) \in X \}$ the set of its endpoints, and its
\emph{ordered vertex list} $\hat V(X)$ recursively as follows: $\hat
V( \{ (\cdot,u_0)\} ) = \langle u_0 \rangle$ and $\hat V(X+(u,v)) =
\hat V(X)+\langle v \rangle$ where $u \in V[X]$, $v \notin V[X]$, and
$+$ denotes list concatenation. We also use the shorthand $E[X] \equiv
\{ (u,v) \in E \mid u,v \in V[X] \}$ for the induced edges.  In
general, $G[X]=(V[X],E[X])$ denotes the subgraph of $G$ \emph{induced}
by $X$, which is equivalently defined as the subgraph of $G$ induced
by the vertices in $V[X]$.

The \emph{cutset} of $X$ is the set of edges $C(X)
\subseteq E$ such that $(u,v) \in C(X)$ if and only if $u \in
V[X]$ and $v \in V-V[X]$. Note that when $V[X]=V$, the cutset is empty.
Similarly, the  \emph{ordered cutlist}~$\hat C(X)$
contains the edges in $C(X)$ ordered by the rank of their endpoints
in $\hat V(X)$. If two edges have the same endpoint
vertex $v \in \hat V(S)$, we use the order in which they appear in
$\adj(v)$ to break the tie.

Throughout the chapter we represent an unordered $k'$-subtree $T=\langle
e_1, e_2, \ldots, e_{k'} \rangle$ with $k' \leq k$ as an
\emph{ordered}, connected and acyclic list of $k'$ edges, where we use
a \emph{dummy} edge $e_1=(\cdot , v_i)$ having a vertex
$v_i$ of $T$ as endpoint. The order is the one by which we discover the edges
$e_1, e_2, \ldots, e_k'$.  Nevertheless, we do not generate two
different orderings for the same $T$. 

\section{Basic approach: recursion tree}
\label{sec:initial-approach}

We begin by presenting a simple algorithm that solves
Problem~\ref{problem:ksubtreelisting} in $O(s k^3)$ time, while using
$O(mk)$ space.  Note that the algorithm is not optimal yet: we will
show in Sections~\ref{sec:improved}--\ref{sec:output-sensitive} how to
improve it to obtain an optimal solution with $O(m)$ space and delay
$t(k) = k^2$.

\subsection{Top level}
\label{sub:top-level}

We use the standard idea of fixing an ordering of the vertices in $V=
\langle v_{1},v_{2},\ldots, v_{n} \rangle$. For each $v_{i}\in V$, we
list the $k$-subtrees that include $v_{i}$ and do not include any
previous vertex $v_{j}\in V$ ($j<i$). After reporting the
corresponding $k$-subtrees, we remove $v_{i}$ and its incident edges
from our graph~$G$. We then repeat the process, as summarized in
Algorithm~\ref{alg:ListAllTrees}.
Here, $S$ denotes a $k'$-subtree with $k' \leq k$, and we use the dummy
edge $(\cdot, v_i)$ as a start-up point, so that the ordered vertex
list is $\hat V(S) = \langle v_i \rangle$.  Then, we find a $k$-subtree
by performing a DFS starting from $v_i$: when we meet the $k$th
vertex, we are sure that there exists at least one $k$-subtree for $v_i$
and execute the binary partition method with \listtrees;
otherwise, if there is no such $k$-subtree, we can skip~$v_i$ safely.  We
exploit some properties on the recursion tree and an efficient
implementation of the following operations on~$G$:

\begin{itemize}
\item \label{op:vdel} $\vdel(u)$ deletes a vertex $u \in V$ and all
  its incident edges.

\item \label{op:edel} $\edel(e)$ deletes an edge $e=(u,v) \in E$.
The inverse operation
is denoted by $\idel(e)$.
Note that $|\adj(v)|$ and $|\adj(u)|$ are updated.

\item \label{op:chooseedge} $\chooseedge(S)$, for a $k'$-subtree $S$ with
  $k' \leq k$, returns an edge $e \in C(S)$: 
  $e^-$ the vertex in $e$ that belongs to~$V[S]$ and by $e^+$ the one
  s.t. $e^+ \in V - V[S]$.

\item \label{op:kdfs} $\kdfs(S)$ returns the list of the 
  tree edges obtained by a \emph{truncated DFS}, where conceptually $S$ is
  treated as a \emph{single} (collapsed vertex) source whose adjacency
  list is the cutset $C(S)$. The DFS is truncated when it finds $k$
  tree edges (or less if there are not so many). The resulting list is
  a $k$-subtree (or smaller) that includes all the edges in $S$.
  Its purpose is to check if there exists a connected component of
  size at least $k$ that contains $S$.
\end{itemize}

\begin{lemma} 
  \label{lem:graph-ops}
  Given a graph $G$ and a $k'$-subtree $S$, we can implement the
  following operations: $\vdel(u)$ for a vertex $u$ in time
  proportional to $u$'s degree; $\edel(e)$ and $\idel(e)$ for an edge
  $e$ in $O(1)$ time; $\chooseedge(S)$ and $\kdfs(S)$ in $O(k^2)$ time.
\end{lemma}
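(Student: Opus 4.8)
The plan is to exhibit concrete data structures realising each operation and then bound its running time; the only non-obvious points are the $O(k^2)$ bounds for $\chooseedge$ and $\kdfs$, whose naive implementations would pay $\Theta(n)$ for scanning a single adjacency list.

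For the edge operations I would represent each $\adj(v)$ as a doubly-linked list in which the node for an edge $(u,v)$ stored in $\adj(u)$ carries a \emph{twin} pointer to the node for the same edge stored in $\adj(v)$, and vice versa, alongside the two integer counters $|\adj(u)|,|\adj(v)|$. Then $\edel(e)$ for $e=(u,v)$ splices the node for $e$ out of $\adj(u)$ and, following the twin pointer, splices its mate out of $\adj(v)$, decrementing both counters: a constant number of pointer updates, hence $O(1)$. The two spliced-out nodes are \emph{retained} rather than deallocated, so $\idel(e)$ merely re-links them between their former neighbours and bumps the counters back; this is correct whenever $\edel$ and $\idel$ are issued in the nested, stack-like fashion dictated by the recursion, since that discipline keeps each removed node's saved neighbours adjacent in the list until the node is restored. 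Given $\edel$, the operation $\vdel(u)$ is implemented by walking $\adj(u)$ and calling $\edel$ on each currently incident edge, for a total of $O(\deg(u))$ time.

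For $\chooseedge(S)$ I would maintain a single global boolean array $\mathtt{inS}[\,]$ of length $n$, with $1$ exactly on the $\le k$ vertices of $V[S]$. To return a cutset edge, iterate over the $\le k$ vertices $v\in V[S]$ and scan $\adj(v)$ until the first neighbour $w$ with $\mathtt{inS}[w]=0$ is found; that edge has $e^-=v$ and $e^+=w$. Simplicity of $G$ gives the time bound: at a fixed $v$, every scanned edge except possibly the last leads to a \emph{distinct} vertex of $V[S]$, so at most $|V[S]|\le k$ edges of $\adj(v)$ are inspected before we either succeed or exhaust $\adj(v)$ (and in the latter case $\deg(v)\le k-1$ anyway); summing over the $\le k$ vertices of $V[S]$ yields $O(k^2)$. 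The array is set up and torn down by touching only its $O(k)$ marked cells, so no $\Theta(n)$ initialisation is ever paid.

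For $\kdfs(S)$ I would run an iterative DFS on $G$ in which $V[S]$ plays the role of one collapsed source: a vertex counts as ``visited'' iff it lies in $V[S]$ or has already been reached (recorded in a global array $\mathtt{vis}[\,]$ reset afterwards over the $O(k)$ touched cells), and each $\adj(v)$ carries a resume-pointer so that every edge of $\adj(v)$ is inspected at most once during the whole search. The DFS is truncated as soon as $k$ tree edges, equivalently $k$ new vertices, have been collected, so the visited set has size at most $|V[S]|+k\le 2k$ throughout. Each inspected edge $(v,w)$ is either a tree edge ($w$ unvisited) — at most $k$ of these in total — or is skipped ($w$ visited, which also covers the internal edges of $S$ scanned at the source); by simplicity, a given visited vertex $v$ has at most $2k$ skipped edges, and there are at most $2k$ visited vertices $v$, so the skipped edges number $O(k^2)$. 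Hence $\kdfs(S)$ runs in $O(k)+O(k^2)=O(k^2)$ time and returns a connected acyclic edge list extending $S$, which is a $k$-subtree precisely when it contains $k-1$ new tree edges. I expect this last counting argument — charging the cost of the truncated DFS to the product of (number of visited vertices) and (visited vertices adjacent to a fixed one), both $O(k)$ thanks to truncation and simplicity — to be the crux; the rest is bookkeeping already sketched above.
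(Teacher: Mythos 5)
Your proof is correct and follows essentially the same approach as the paper's: doubly-linked adjacency lists with cross-references (your ``twin pointers'') for constant-time $\edel$/$\idel$, a global mark array for $V[S]$ touched only in $O(k)$ cells, and the observation that every edge scanned by $\chooseedge$ or the truncated DFS that is not returned (or does not become a tree edge) has both endpoints among the $O(k)$ marked/visited vertices, so simplicity caps their number at $O(k^2)$. You spell out two points the paper leaves implicit---that $\idel$ relies on the nested, stack-like order of $\edel/\idel$ calls so the spliced-out node's former neighbours are still adjacent, and a per-vertex (rather than aggregate) accounting of scanned edges in $\chooseedge$---but these are refinements of the same argument, not a different route.
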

\begin{proof}
  We represent $G$ using standard adjacency lists where, for each edge
  $(u,v)$, we keep references to its occurrence in the adjacency list
  of $u$ and $v$, respectively. We also update $|\adj(u)|$ and
  $|\adj(v)|$ in constant time. This immediately gives the complexity
  of the operations $\vdel(u)$, $\edel(e)$, and $\idel(e)$.

  As for $\chooseedge(S)$, recall that $S$ is a $k'$-subtree with $k' <
  k$.  Recall that there are at most ${k' \choose 2} = O(k^2)$ edges
  in between the vertices in $V[S]$. Hence, it takes $O(k^2)$ to
  discover an edge that belongs to the cutset $C(S)$: it is the first
  one found in the adjacency lists of $V[S]$ having an endpoint in
  $V-V[S]$. This can be done in $O(k^2)$ time using a standard trick:
  start from a binary array $B$ of $n$ elements set to~0. For each $x
  \in V[S]$, set $B[x] := 1$. At this point, scan the adjacency lists
  of the vertices in $V[S]$ and find a vertex $y$ within them with
  $B[y] = 0$. After that, clean it up setting $B[x] := 0$ for each $x
  \in V[S]$.

  Consider now $\kdfs(S)$. Starting from the $k'$-subtree $S$, we first
  need to generate the list $L$ of $k-k'$ vertices in $V-V[S]$ (or less
  if there are not so many) from the cutset $C(S)$. This is a simple
  variation of the method above, and it takes $O(k^2)$ time. At this
  point, we start a regular, truncated DFS-tree from a source
  (conceptually representing the collapsed $S$) whose adjacency list
  is $L$. During the DFS traversal, when we explore an edge $(v,u)$
  from the current vertex $v$, either $u$ has already been visited or
  $u$ has not yet been visited.  Recall that we stop when new $k-k'$
  distinct vertices are visited. Since there are at most ${k \choose
    2} = O(k^2)$ edges in between $k$ vertices, we traverse $O(k^2)$
  edges before obtaining the truncated DFS-tree of size $k$.
\end{proof}

\begin{algorithm}[t]
\begin{algorithmic}[1]
	\FOR{$i=1,2,\ldots,n-1$}
		\STATE $S:= \langle (\cdot,v_i) \rangle$
		\IF{$|\kdfs(S)|=k$}
			\STATE $\listtrees(S)$
		\ENDIF
		\STATE $\vdel(v_i)$
	\ENDFOR
\end{algorithmic}
\caption{$\mathtt{ListAllTrees}$( $G=(V,E)$, $k$ )\label{alg:ListAllTrees}}
\end{algorithm}

\begin{algorithm}[t]
\begin{algorithmic}[1]
	\IF{$|S|=k$}
        	\STATE $\treeoutput(S)$
        	\STATE $\mathtt{return}$
	\ENDIF
\STATE $e:=\chooseedge(S)$
\STATE $\listtrees(S+ \langle e \rangle)$
\STATE $\edel(e)$
\IF {$|\kdfs(S)|=k$}
	\STATE $\listtrees(S)$
\ENDIF
\STATE $\idel(e)$
\end{algorithmic}
\caption{ $\listtrees(S)$ \label{alg:ListTrees}}
\end{algorithm}

\subsection{Recursion tree and analysis}
\label{sub:analysis-suboptimal}

The recursive binary partition method in Algorithm~\ref{alg:ListTrees}
is quite simple, and takes a $k'$-subtree $S$ with $k' \leq k$ as
input. The purpose is that of listing all $k$-subtrees that include all
the edges in $S$ (excluding those with endpoints $v_1, v_2, \ldots,
v_{i-1}$).  The precondition is that we recursively explore $S$ if and
only if there is at least a $k$-subtree to be listed.  The corresponding
recursion tree has some interesting properties that we exploit during
the analysis of its complexity. The root of this binary tree is
associated with $S = \langle (\cdot, v_i) \rangle$. Let $S$ be the
$k'$-subtree associated with a node in the recursion tree. Then, left
branching occurs by taking an edge $e \in C(S)$ using \chooseedge, so
that the \emph{left child} is $S + \langle e \rangle$. Right branching
occurs when $e$ is deleted using \edel, and the \emph{right child} is
still $S$ but on the reduced graph $G := (V, E - \{e\})$. Returning
from recursion, restore $G$ using $\idel(e)$.  Note that we do
\emph{not} generate different permutations of the same $k'$-subtree's
edges as we either take an edge $e$ as part of $S$ or remove it from
the graph by the binary partition method. 

\begin{lemma}
  \label{lem:correctness}
  Algorithm \ref{alg:ListTrees} lists each $k$-subtree containing vertex
  $v_i$ and no vertex $v_j$ with $j<i$, once and only once.
\end{lemma}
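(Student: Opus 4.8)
The plan is to prove two things: \emph{completeness} (every $k$-subtree $T$ containing $v_i$ but no earlier $v_j$ is listed) and \emph{non-redundancy} (no $T$ is listed twice), and to do both by induction on the recursion tree of Algorithm~\ref{alg:ListTrees} combined with the structure imposed by the ordered-edge representation from the preliminaries. First I would set up the invariant that a call $\listtrees(S)$ lists exactly those $k$-subtrees $T$ of the current graph (i.e.\ with edges not yet deleted and no endpoint among $v_1,\dots,v_{i-1}$) such that $S$, viewed as an ordered connected acyclic edge-list, is a \emph{prefix} of the discovery order of $T$. The base case $|S|=k$ is immediate: $S$ itself is the unique such $T$ and it is output once. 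For the inductive step I would invoke Recurrence~\ref{eq:genbinpart}: with $e=\chooseedge(S)\in C(S)$, the $k$-subtrees extending $S$ split into those whose next discovered edge (after $S$) is $e$ — handled by $\listtrees(S+\langle e\rangle)$ — and those that do not contain $e$ at all, since $e\in C(S)$ means any $T\supseteq S$ either uses $e$ as its unique connection across that part of the cut or avoids $e$ entirely; the latter are exactly the $k$-subtrees of $G-\{e\}$ extending $S$, handled by the recursive call after $\edel(e)$. The guard $|\kdfs(S)|=k$ correctly gates the right branch because, by the specification of $\kdfs$ in Lemma~\ref{lem:graph-ops}, it returns a subtree of size $k$ containing $S$ iff the connected component of $S$ in $G-\{e\}$ has at least $k$ vertices, which is precisely the condition for $\mathcal{S}(S, G-\{e\})\neq\emptyset$.

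For non-redundancy, the key point is that the left and right subproblems are \emph{disjoint}: every $T$ listed on the left contains $e$, every $T$ listed on the right does not (it lives in $G-\{e\}$), so no $T$ can be produced in both branches; combined with the inductive hypothesis that each branch lists its own $T$'s once, this gives uniqueness within each call. The remaining subtlety — and the place I'd be most careful — is the claim in the preliminaries that ``we do not generate two different orderings for the same $T$.'' I would argue that $\chooseedge(S)$ is deterministic (it returns the first cutset edge in a fixed scan order of the adjacency lists of $V[S]$), so the decomposition of any given $T$ into a root-to-leaf path in the recursion tree is forced: at the node labeled $S$, the edge $e=\chooseedge(S)$ is determined, and $T$ goes left iff $e\in T$. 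Hence $T$ corresponds to exactly one leaf, and the ordered edge-list built along that path is the unique discovery order the algorithm ever assigns to $T$. I would spell this out as: the map $T\mapsto$ (its root-to-leaf path) is well-defined and injective, because at each node the branch taken is determined by whether $e\in T$.

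Finally I would close the induction at the top level: Algorithm~\ref{alg:ListAllTrees} calls $\listtrees(\langle(\cdot,v_i)\rangle)$ exactly when $|\kdfs(\langle(\cdot,v_i)\rangle)|=k$, i.e.\ when $v_i$'s component has $\geq k$ vertices, which is necessary for any $k$-subtree through $v_i$ to exist; and the dummy edge makes $\hat V(S)=\langle v_i\rangle$, so the invariant says this call lists exactly the $k$-subtrees containing $v_i$ in the current graph, which — since $v_1,\dots,v_{i-1}$ have been removed by prior iterations of $\vdel$ — are exactly those containing $v_i$ and no $v_j$ with $j<i$. Combining completeness and non-redundancy from the call's invariant yields the lemma. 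The main obstacle I anticipate is not the recurrence itself (that is the clean part) but rigorously justifying the ``no two orderings'' claim — i.e.\ tying the determinism of \chooseedge\ to the uniqueness of the recursion path for a fixed $T$ — since the ordered-list representation is what makes the disjointness of the binary partition translate into listing each $T$ exactly once; I would make sure the statement and proof of the invariant explicitly mention the prefix-of-discovery-order condition rather than just ``$S\subseteq T$'', as the latter alone would not rule out redundant orderings.
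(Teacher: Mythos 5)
Your argument is correct and follows essentially the same route as the paper: both rest on the disjointness of the binary partition (trees containing $e$ versus trees avoiding $e$, after $\edel(e)$), the correctness of the $|\kdfs(S)|=k$ gate, and the top-level removal of $v_1,\dots,v_{i-1}$. The paper's proof is more informal; you simply formalize it as an induction with an explicit "prefix of discovery order" invariant, and the subtlety you flag about orderings is exactly the point the paper dispatches with the sentence that the two sub-sets are disjoint.
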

\begin{proof}
  \listtrees\ outputs all the wanted $k$-subtrees according to a simple
  rule: first list all the $k$-subtrees that include edge $e$ and then
  those not including $e$: an edge $e$ must exist because of the
  precondition, and we can choose \emph{any} edge $e$ incident to the
  partial solution $S$. As \chooseedge\ returns an edge $e$ from the
  cutset $C(S)$, this edge is incident in $S$ and does not introduce a
  cycle. Note that if $\kdfs(S)$ has size $k$, then
  there is a connected component of size $k$. Hence, there must be a
  $k$-subtree to be listed, as the spanning tree of the component is a valid
  $k$-subtree. Additionally, if there is a $k$-subtree to be listed, a
  connected component of size $k$ exists.  As a result, we
  conceptually partition all the $k$-subtrees in two disjoint sets: the
  $k$-subtrees that include $S+\langle e \rangle$, and the $k$-subtrees that
  include $S$ and do not include $e$. We stop when $|S|=k$ and we
  start the partial solution $S$ with a dummy edge that connects to
  $v_i$, ensuring that all trees of size $k$ connected to $v_i$ are
  listed.  Since all the vertices $v_j$ with $j<i$ are removed from
  $G$, $k$-subtrees that include $v_j$ are not listed twice. Therefore,
  each tree is listed at most one time.
\end{proof}

\begin{figure}[t]

\centering

\begin{tikzpicture}[style=thick,scale=0.8,rotate=-90] 
\tikzstyle{vertex}=[circle,inner sep=2pt,draw] 
\tikzstyle{level 1}=[sibling distance=60mm]
\tikzstyle{level 2}=[sibling distance=20mm]
\tikzstyle{level 3}=[sibling distance=15mm]
\node[vertex]{\null}[grow'=right] 
child{
	node[vertex]{\null}
	child{node [] {\null} edge from parent[draw=none]} 
	child{
		node[vertex]{}
		child{
			node[vertex]{}
			child{node [] {\null} edge from parent[draw=none]} 
			child{
				node[vertex]{$T_5$}
				node{\null} edge from parent node[left] {$ed$}
			} 
			node{\null} edge from parent node[right] {$\lnot be$}
		} 
		child{
			node[vertex]{$T_4$}
			node{\null} edge from parent node[left] {$be$}
		} 
		node{\null} edge from parent node[left] {$ae$}
	} 
	node{\null} edge from parent node[right] {$\lnot ab$}
} 
child{
	node[vertex]{\null}
	child{
		node[vertex]{\null}
		child{
			node[vertex]{\null}
			child{node [] {\null} edge from parent[draw=none]} 
			child{
				node[vertex]{$T_3$}
				node{\null} edge from parent node[left] {$ae$}
			} 
			node{\null} edge from parent node[right] {$\lnot be$}
		} 
		child{
			node[vertex]{$T_2$}
			node{\null} edge from parent node[left] {$be$}
		} 
		node{\null} edge from parent node[right] {$\lnot bc$}
	} 
	child{
		node[vertex]{$T_1$}
		node{\null}  edge from parent node[left] {$bc$}
	}
	node{\null}  edge from parent node[left] {$ab$}
}; 
\end{tikzpicture}

\caption{Recursion tree of \listtrees{} for graph $G_1$ in
Fig.~\ref{fig:ksubtreeexample} and $v_i=a$}
\label{fig:recursion-tree}
\end{figure}

A closer look at the recursion tree (e.g.\mbox{}
Figure~\ref{fig:recursion-tree}), reveals that it is
\emph{$k$-left-bounded}: namely, each root-to-leaf path has exactly
$k-1$ \emph{left branches}. Since there is a one-to-one correspondence
between the leaves and the $k$-subtrees, we are guaranteed that leftward
branching occurs less than $k$ times to output a $k$-subtree.

What if we consider rightward branching?  Note that the height of the
tree is less than $m$, so we might have to branch rightward $O(m)$
times in the worst case. Fortunately, we can prove in
Lemma~\ref{lem:kleftbounded} that for each internal node $S$ of the
recursion tree that has a right child, $S$ has always its left child
(which leads to one $k$-subtree). This is subtle but very
useful in our analysis in the rest of the chapter.

\begin{lemma}
  \label{lem:kleftbounded}
  At each node $S$ of the recursion tree, if there exists a $k$-subtree
  (descending from $S$'s right child) that does not include edge $e$,
  then there is a $k$-subtree (descending from $S$'s left child) that
  includes~$e$.
\end{lemma}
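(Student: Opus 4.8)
The plan is to argue by exhibiting an explicit $k$-subtree descending from $S$'s left child, starting from the hypothesized one descending from $S$'s right child. So suppose $T$ is a $k$-subtree (in the current graph, with $v_1,\ldots,v_{i-1}$ already removed) such that $T$ includes all the edges of $S$ but does not include $e = \chooseedge(S)$. Recall from the definition of \chooseedge\ that $e = (e^-, e^+)$ with $e^- \in V[S]$ and $e^+ \in V - V[S]$; in particular $e$ is in the cutset $C(S)$, so $T$ and $e$ together still form an acyclic edge set (adding $e$ to $T$ cannot create a cycle unless $e^+ \in V[T]$, which is the only case requiring care). The goal is to produce, from $T$, a $k$-subtree $T'$ that contains $S + \langle e \rangle$.

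First I would dispose of the easy case: if $e^+ \notin V[T]$, then since $T$ has $k$ vertices and is connected, some leaf edge of $T$ can be pruned — more precisely, take $\hat V(T)$, remove the last vertex $u$ in this ordering together with its (unique, since it is a leaf in the tree order) incident edge, and then attach $e$. The resulting set $T' = (T - \{(\cdot,u)\text{-leaf edge}\}) + \langle e \rangle$ is still acyclic, connected, has exactly $k$ vertices, contains every edge of $S$ (the pruned leaf cannot be an edge of $S$, since $e^+$ being outside $V[T]$ forces $|V[T]| = k$ and we may always choose to prune a leaf not in $V[S]$ as long as $V[S] \subsetneq V[T]$), and contains $e$; hence $T'$ descends from $S$'s left child. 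The second case is $e^+ \in V[T]$: then $e$ is a \emph{chord} of $T$ with respect to $S$, i.e.\ both endpoints of $e$ are already vertices of $T$. Here I would add $e$ to $T$, creating exactly one cycle $C$, and then remove from $C$ an edge that is neither in $S$ nor equal to $e$ and whose removal keeps the graph connected and spanning the same vertex set. Such an edge exists because $S$ is acyclic (so $S$ alone cannot contain all of $C \setminus \{e\}$; at least one edge of $C$ other than $e$ lies outside $S$), and removing any single edge of a cycle leaves the vertex set connected. The result is again a $k$-subtree containing $S + \langle e \rangle$.

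The step I expect to be the main obstacle is the bookkeeping in the chord case — specifically, verifying that one can always choose the edge to delete from the cycle $C$ to lie outside $S$. The subtlety is that $S$, although acyclic, shares edges with $C$, and a priori $C$ could consist of $e$ plus a path all of whose edges belong to $S$; but that would make $S \cup \{e\}$ contain a cycle, contradicting that $e \in C(S)$ means adding $e$ to the connected acyclic set $S$ keeps it acyclic — wait, that is only true when $e^+ \notin V[S]$, which holds by definition of the cutset. So in fact $e$ closes a cycle with $T$ but \emph{not} with $S$, which immediately guarantees that $C$ contains at least one non-$S$ edge, and deleting it yields the desired $T'$. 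Once this observation is in place, the rest is routine verification that $T'$ satisfies all four conditions (subset of $E$, acyclic, connected, exactly $k$ vertices) and contains $S + \langle e \rangle$, so that by Lemma~\ref{lem:correctness} it is listed as a leaf descending from $S$'s left child.
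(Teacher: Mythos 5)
Your proof is correct and follows essentially the same route as the paper's: split on whether $e^+ \in V[T]$, then perform the standard edge exchange (remove a leaf edge of $T$ outside $S$ in the external case; add $e$ and delete a non-$S$ edge of the unique created cycle in the chord case). The paper's version is terser but performs the identical case analysis; your write-up merely makes explicit the point that the deleted edge can always be chosen outside $S$, which the paper leaves implicit.
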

\begin{proof}
  Consider a $k$-subtree $T$ that does not include $e=(u,v)$, which was
  opted out at a certain node~$S$ during the recursion. Either $e$ is
  only incident to one vertex in $T$, so there is at least one
  $k$-subtree $T'$ that includes $e$ and does not include an edge  of $T$.
  Or, when $e =(u,v)$ is incident to two different nodes $u,v \in
  V[T]$, there is a valid $k$-subtree $T'$ that includes $e$ and does not
  include one edge on the path that connects $u$ and $v$ using edges
  from $T$. Note that both $T$ and $T'$ are ``rooted'' at $v_i$ and
  are found in two descending leaves from $S$.
\end{proof}

Note that the symmetric situation for Lemma~\ref{lem:kleftbounded}
does not necessarily hold. We can find nodes having just the left
child: for these nodes, the chosen edge cannot be removed since this
gives rise to a connected component of size smaller than $k$. We can
now state how many nodes there are in the recursion tree.

\begin{corollary}
  \label{cor:right-branch}
  Let $s_i$ be the number of $k$-subtrees reported by \listtrees. Then,
  its recursion tree is binary and contains $s_i$ leaves and at most
  $s_i \, k$ internal nodes. Among the internal nodes, there are
  $s_i-1$ of them having two children.
\end{corollary}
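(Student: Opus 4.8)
That the recursion tree is binary is immediate from Algorithm~\ref{alg:ListTrees}: each invocation spawns at most two recursive calls, a left one on $S+\langle e\rangle$ and a right one on $S$ after $\edel(e)$. By Lemma~\ref{lem:correctness} the leaves are in one-to-one correspondence with the $k$-subtrees reported, so there are exactly $s_i$ leaves. For the number of nodes with two children I would first invoke Lemma~\ref{lem:kleftbounded}, which says that no node can have \emph{only} a right child; hence every internal node has either exactly one child (a left one) or two children. Writing $I_1,I_2$ for the counts of these two kinds of internal node, the tree has $N=s_i+I_1+I_2$ nodes and $N-1$ edges, while summing out-degrees gives $N-1=I_1+2I_2$; subtracting the two expressions yields $I_2=s_i-1$, as claimed.

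It remains to bound the total number $I_1+I_2$ of internal nodes by $s_i k$. The plan is to exploit $k$-left-boundedness: a leaf is an $S$ with $|S|=k$, consisting of the dummy edge $(\cdot,v_i)$ — present from the root and never produced by a branch — together with $k-1$ real edges, each inserted by a left branch, so every root-to-leaf path carries exactly $k-1$ left branches. Next I would observe that an internal node has $|S|<k$ with a $k$-subtree descending from it, so its cutset $C(S)$ is non-empty and its left child $S+\langle e\rangle$ is always created; therefore the number of internal nodes equals the number of \emph{left edges} of the recursion tree (edges joining a node to its left child). Finally I would double count the quantity $\sum_{\ell}\bigl(\text{number of left edges on the root-to-}\ell\text{ path}\bigr)$, the sum ranging over the $s_i$ leaves: by $k$-left-boundedness each summand is $k-1$, so the total is $s_i(k-1)$; counted instead by left edges, a left edge $(S,S+\langle e\rangle)$ is charged once for each leaf in the subtree rooted at $S+\langle e\rangle$, and by the precondition of \listtrees{} (it is entered only when at least one $k$-subtree descends from it) every node has at least one leaf below it, so each left edge is charged at least once. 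Hence the number of left edges — and thus of internal nodes — is at most $s_i(k-1)\le s_i k$.

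The step I expect to be the real obstacle is this last bound: a priori rightward branching can occur up to $m$ times along a single root-to-leaf path, so the crude ``height $<m$'' estimate only gives $O(s_i m)$ nodes. The two facts that rescue the argument are Lemma~\ref{lem:kleftbounded} (which pins down the two-children nodes exactly and rules out right-only nodes) and $k$-left-boundedness together with the precondition (which let us recast ``counting internal nodes'' as ``counting left edges'', each charged to at least one of the $s_i$ leaves). Notably, no property of the input graph $G$ itself is needed beyond these structural facts about the shape of the recursion tree.
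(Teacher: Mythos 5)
Your proposal is correct and follows essentially the same approach as the paper: both arguments invoke Lemma~\ref{lem:kleftbounded} to rule out right-only internal nodes and then charge internal nodes against the $s_i$ leaves using $k$-left-boundedness. Your double-counting of (left edge, descending leaf) pairs and your degree-sum derivation of $I_2 = s_i-1$ are tidier ways of packaging the paper's token-charging and compacted-tree arguments, and in fact yield the marginally sharper bound $s_i(k-1)$ on the total number of internal nodes.
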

\begin{proof}
  The number $s_i$ of leaves derives from the one-to-one
  correspondence with the $k$-subtrees found by \listtrees.  To give an
  upper bound on the number of internal nodes, consider a generic node
  $S$ and apply Lemma~\ref{lem:kleftbounded} to it.  If $S$ has a
  single child, it is a left child that leads to one or more $k$-subtrees
  in its descending leaves. So, we can charge one token (corresponding
  to~$S$) to the leftmost of these leaves. Hence, the total number of
  tokens over all the $s_i$ leaves is at most $s_i \, (k-1)$ since the
  recursion tree is $k$-left-bounded. The other option is that $S$ has
  two children: in this case, the number of these kind of nodes cannot
  exceed the number $s_i$ of leaves. Summing up, we have a total of
  $s_i \, k$ internal nodes in the recursion tree. Consider the
  compacted recursion tree, where each maximal path of unary nodes
  (i.e.\mbox{} internal nodes having one child) is replaced by a
  single edge. We obtain a binary tree with $s_i$ leaves and all the other
  nodes having two children: we fall within a classical situation, for
  which the number of nodes with two children is one less than the
  number of leaves, hence, $s_i-1$.
\end{proof}

\begin{lemma}
  \label{lem:listtrees}
  Algorithm~\ref{alg:ListTrees} takes $O(s_i \, k^3)$ time and $O(mk)$
  space, where $s_i$ is the number of $k$-subtrees reported by
  \listtrees.
\end{lemma}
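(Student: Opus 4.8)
The plan is to combine the structural facts already established — Lemma 1.2 on the cost of the graph operations, Lemma 1.3 on correctness (each leaf is a distinct $k$-subtree), and Corollary 1.4 on the size of the recursion tree — into a straightforward accounting argument. First I would observe that every internal node of the recursion tree performs $O(1)$ of the expensive operations: one call to $\chooseedge(S)$ for the left branch, one call to $\kdfs(S)$ to test the right branch, plus an $\edel$/$\idel$ pair, while every leaf performs one $\treeoutput(S)$ of cost $O(k)$. By Lemma 1.2, each call to $\chooseedge$ or $\kdfs$ costs $O(k^2)$, and $\edel$, $\idel$ cost $O(1)$, so the work done at each internal node is $O(k^2)$ and the work at each leaf is $O(k)$.

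Next I would invoke Corollary 1.4: the recursion tree rooted at the top-level call $\listtrees(\langle(\cdot,v_i)\rangle)$ has exactly $s_i$ leaves and at most $s_i\,k$ internal nodes. Multiplying the per-node costs by the node counts gives $O(s_i\,k)\cdot O(k^2) = O(s_i\,k^3)$ for the internal nodes and $O(s_i)\cdot O(k) = O(s_i\,k)$ for the leaves, so the total time is $O(s_i\,k^3)$. Summing over all top-level iterations $i$ — and using that $\sum_i s_i = s$ together with the fact (from Algorithm 1) that we only invoke $\listtrees$ when $|\kdfs(S)|=k$, whose cost $O(k^2)$ per vertex is dominated — yields $O(s\,k^3)$ overall, though for the lemma statement it suffices to state the per-iteration bound $O(s_i\,k^3)$.

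For the space bound I would argue as follows: the graph $G$ with adjacency lists uses $O(m)$ space, and along any root-to-leaf path we store the current partial solution $S$; the recursion depth is the length of such a path, which has $k-1$ left branches but possibly many unary (right-only and left-only) nodes, so depth can be $\Theta(m)$ in the worst case. However the stack of $S$'s along the path is shared incrementally (each frame stores $O(1)$ extra edge information over its parent), and the only genuinely large per-frame cost is the temporary $L$ and the returned DFS-tree list inside $\kdfs$, each of size $O(k)$; summing $O(k)$ over a path of length $O(m)$ gives $O(mk)$. The main obstacle — and the point deserving the most care — is the space analysis: one must check that the $O(k)$-sized scratch structures of $\kdfs$ at different recursion levels genuinely coexist (because a parent's $\kdfs$ call completes before the right-child recursion, but the returned list may be retained), justifying the $O(mk)$ rather than a naive $O(m+k)$ or an over-pessimistic bound; the time analysis is otherwise a routine multiplication of the bounds from Lemma 1.2 and Corollary 1.4.
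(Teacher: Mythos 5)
Your time analysis matches the paper's: charge $O(k^2)$ (from Lemma~\ref{lem:graph-ops}) to each of the $O(s_i k)$ nodes given by Corollary~\ref{cor:right-branch}, plus $O(k)$ per leaf for output, giving $O(s_i k^3)$. That is exactly the paper's argument.

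On the space bound, however, your framing inverts the burden of proof. The lemma only asserts an \emph{upper} bound of $O(mk)$, so you never need to check whether the $O(k)$-sized scratch structures of $\kdfs$ ``genuinely coexist'' across levels — that would matter only if you were arguing the bound is tight. The paper's argument is simpler and unconditional: Algorithm~\ref{alg:ListTrees} passes $S + \langle e \rangle$ by value, so each stack frame holds its own copy of $S$, which is $O(k)$ words; the recursion depth is at most $m$ (each level either adds an edge to $S$, at most $k-1$ times, or removes an edge from $G$, at most $m$ times); hence $O(mk)$ space. Your alternative reading — that $S$ is shared incrementally and the $O(k)$ cost comes from $\kdfs$ internals — is not wrong as an upper bound, but it is a more elaborate account of a strictly simpler fact, and the worry you attach to it (``does it really add up to $mk$?'') is not a gap that needs closing. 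The sharper implementation that does avoid the $O(k)$-per-frame copy is exactly what the paper introduces later with persistent lists in Section~\ref{sec:output-sensitive} to reach $O(m)$ space; here the naive copy suffices.
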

\begin{proof}
  Each call to \listtrees\ invokes operations $\edel$, $\idel$,
  $\chooseedge$, and $\kdfs$ once. By Lemma~\ref{lem:graph-ops}, the
  execution time of the call is therefore $O(k^2)$.  Since there are
  $O(s_i \, k)$ calls by Corollary~\ref{cor:right-branch}, the total
  running time of Algorithm~\ref{alg:ListTrees} is $O(s_i \,
  k^3)$. The total space of $O(mk)$ derives from the fact that the
  height of the recursion tree is at most $m$. On each node in the
  recursion path we keep a copy of $S$ and $D$, taking $O(k)$ space.
  As we modify and restore the graph incrementally, this totals
  $O(mk)$ space.
\end{proof}

\begin{theorem}
  \label{the:main}
  Algorithm~\ref{alg:ListAllTrees} can solve Problem~\ref{problem:ksubtreelisting}
  in $O(n k^2 + s k^3) = O(sk^3)$ time
  and $O(mk)$ space.
\end{theorem}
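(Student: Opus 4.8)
The plan is to assemble the theorem from the pieces already established, essentially by summing the cost of the top-level loop in Algorithm~\ref{alg:ListAllTrees} over the iterations $i = 1, \dots, n-1$. First I would account for the work done directly inside the loop body, exclusive of the recursive calls to \listtrees: in each iteration we build $S := \langle (\cdot, v_i) \rangle$ in $O(1)$, invoke $\kdfs(S)$ once, and possibly call $\vdel(v_i)$. By Lemma~\ref{lem:graph-ops}, the call to $\kdfs(S)$ costs $O(k^2)$. The cost of $\vdel(v_i)$ is proportional to the degree of $v_i$ at the time of deletion; since each vertex is deleted exactly once and each edge is incident to two vertices, the total cost of all $\vdel$ operations across the loop telescopes to $O(m)$. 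Hence the loop overhead, excluding \listtrees, is $O(nk^2 + m)$, and since $G$ is connected we have $m \le sk$ (using $sk \ge m$ from the preliminaries), so this is $O(nk^2 + sk)$.

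Next I would bound the total cost of the calls to \listtrees. For each $i$ such that $|\kdfs(S)| = k$, Algorithm~\ref{alg:ListAllTrees} invokes $\listtrees(\langle (\cdot, v_i)\rangle)$, and by Lemma~\ref{lem:listtrees} that invocation costs $O(s_i k^3)$, where $s_i$ is the number of $k$-subtrees reported in that iteration. The crucial observation — which follows from Lemma~\ref{lem:correctness}, since each $k$-subtree is listed once and only once with $v_i$ being its lowest-indexed vertex — is that the $s_i$ partition the total count $s$, i.e. $\sum_{i} s_i = s$. Therefore $\sum_i O(s_i k^3) = O(s k^3)$. Combining with the loop overhead gives a total of $O(nk^2 + m + sk^3) = O(nk^2 + sk^3)$ time.

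Finally I would simplify the bound to $O(sk^3)$. The term $m$ is dominated since $m \le sk \le sk^3$. For the $nk^2$ term: the loop only spends the "extra" $\kdfs$ cost of $O(k^2)$ per iteration, but this is charged even when no $k$-subtree is found. To absorb it, note $n - 1 \le m \le sk$, so $nk^2 \le (sk)k^2 = sk^3$. Hence the running time is $O(sk^3)$. The space bound $O(mk)$ is inherited directly from Lemma~\ref{lem:listtrees}: only one invocation of \listtrees is active at a time, its recursion stack has height at most $m$ with $O(k)$ data per level, and the incremental graph modifications via \edel/\idel and \vdel/\idel add only $O(m)$ more.

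The only mildly delicate point is the accounting that makes $\sum_i s_i = s$ and the telescoping of the $\vdel$ costs; everything else is bookkeeping on top of Corollary~\ref{cor:right-branch} and Lemmas~\ref{lem:graph-ops}, \ref{lem:correctness} and \ref{lem:listtrees}. I do not expect any real obstacle — the substance of the argument already lives in those results, and the theorem is their straightforward aggregation.
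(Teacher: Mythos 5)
Your proposal is correct and follows essentially the same route as the paper: sum the loop overhead ($\vdel$ and $\kdfs$) over iterations, invoke Lemma~\ref{lem:listtrees} for each call to \listtrees, observe that $\sum_i s_i = s$ by Lemma~\ref{lem:correctness}, and absorb the remaining terms. The only minor divergence is in how the $nk^2$ overhead is absorbed into $O(sk^3)$: you use the preliminaries' inequality $n-1 \le m \le sk$, whereas the paper tracks the $\kdfs$ cost more finely as $O((n-k)k^2)$ and invokes a separate inductive fact that $s \ge n-k+1$ in a connected graph; both derivations are valid and yield the same bound.
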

\begin{proof}
  The correctness of Algorithm~\ref{alg:ListAllTrees} easily derives
  from Lemma~\ref{lem:correctness}, so it outputs each $k$-subtree once
  and only once. Its cost is upper bounded by the sum (over all $v_i
  \in V$) of the costs of $\vdel(v_i)$ and $\kdfs(S)$ plus
  the cost of Algorithm~\ref{alg:ListTrees} (see
  Lemma~\ref{lem:listtrees}). The costs for $\vdel(v_i)$'s sum to
  $O(m)$, while those of $\kdfs(S)$'s sum to $O((n-k)
  k^2)$. Observing that $\sum_{i=1}^n s_i = s$, we obtain that the
  cumulative cost for Algorithm~\ref{alg:ListTrees} is $O(s \,
  k^3)$. Hence, the total running time is $O(m+(n-k)k^2+sk^3)$, which is
  $O(m+sk^3)$ since it can be proved by a simple induction that $s
  \geq n-k+1$ in a connected graph (adding a vertex increases the
  number of $k$ trees by at least one). Space usage of $O(mk)$ derives
  from Lemma~\ref{lem:listtrees}. As for the delay $t(k)$, we observe
  that for any two consecutive leaves in the preorder of the recursion
  tree, their distance (traversed nodes in the recursion tree) never
  exceeds $2k$. Since we need $O(k^2)$ time per node in the recursion
  tree, we have $t(k) = O(k^3)$.
\end{proof}

\section{Improved approach: certificates}
\label{sec:improved}


A way to improve the running time of \listtrees\ to $O(s_i k^2)$ is
indirectly suggested by Corollary~\ref{cor:right-branch}. Since there
are $O(s_i)$ binary nodes and $O(s_i k)$ unary nodes in the recursion
tree, we can pay $O(k^2)$ time for binary nodes and $O(1)$ for unary
nodes (i.e.\mbox{} reduce the cost of \chooseedge\ and \kdfs\ to
$O(1)$ time when we are in a \emph{unary} node). This way, the total
running time is $O(sk^2)$.

The idea is to maintain a certificate that can tell us if we are
in a unary node in $O(1)$ time and that can be updated in $O(1)$ time in
such a case, or can be completely rebuilt in $O(k^2)$ time otherwise
(i.e.\mbox{} for binary nodes).  This will guarantee a total cost
of $O(s_i k^2)$ time for \listtrees, and lay out the path to the
wanted optimal output-sensitive solution of Section~\ref{sec:output-sensitive}.
 
\subsection{Introducing certificates}
\label{sub:certificates}

We impose an ``unequivocal behavior'' to $\kdfs(S)$, obtaining a
variation denoted $\mkdfs(S)$ and called \emph{multi-source truncated
  DFS}.  During its execution, \mkdfs\ takes the order of the edges in
$S$ into account (whereas an order is not strictly necessary in
$\kdfs$). Specifically, given a $k'$-subtree $S = \langle e_1, e_2,
\ldots, e_{k'} \rangle$, the returned $k$-subtree $D = \mkdfs(S)$
contains $S$, which is conceptually treated as a collapsed vertex: the
main difference is that $S$'s ``adjacency list'' is now the
\emph{ordered cutlist} $\hat C(S)$, rather than $C(S)$ employed for
\kdfs.

Equivalently, since $\hat C(S)$ is induced from $C(S)$ by using the
ordering in $\hat V(S)$, we can see $\mkdfs(S)$ as the execution of
multiple standard DFSes from the vertices in $\hat V(S)$, in that
order. Also, all the vertices in $V[S]$ are conceptually marked as
visited at the beginning of \mkdfs, so $u_j$ is never part of the DFS
tree starting from $u_i$ for any two distinct $u_i, u_j \in
V[S]$. Hence the adopted terminology of multi-source.  Clearly, $\mkdfs(S)$
is a feasible solution to $\kdfs(S)$ while the vice versa is not true.

We use the notation $S \sqsubseteq D$ to indicate that $D=\mkdfs(S)$,
and so $D$ is a \emph{certificate} for $S$: it guarantees that node
$S$ in the recursion tree has at least one descending leaf whose
corresponding $k$-subtree has not been listed so far. Since the behavior
of \mkdfs\ is non-ambiguous, relation $\sqsubseteq$ is well
defined. We preserve the following invariant on \listtrees, which now
has two arguments.

\begin{invariant}
  For each call to $\listtrees(S,D)$, we have $S \sqsubseteq D$. 
\end{invariant}

Before showing how to keep the invariant, we detail how to represent
the certificate $D$ in a way that it can be efficiently updated. We
maintain it as a partition $D = S \cup L \cup F$, where $S$ is the
given list of edges, whose endpoints are kept in order as $\hat V(S) =
\langle u_1, u_2, \ldots, u_{k'} \rangle$. Moreover, $L = D \cap C(S)$
are the tree edges of $D$ in the cutset $C(S)$, and $F$ is the forest
storing the edges of $D$ whose both endpoints are in $V[D]-V[S]$.

\begin{enumerate}
\renewcommand{\theenumi}{(\roman{enumi})}
\renewcommand{\labelenumi}{\theenumi}
\item \label{item:S} We store the $k'$-subtree $S$ as a sorted
  doubly-linked list of $k'$ edges $\langle e_1, e_2, \ldots, e_{k'}
  \rangle$, where $e_1:=(\cdot,v_i)$. We also keep the sorted
  doubly-linked list of vertices $\hat V(S) = \langle u_1, u_2,
  \ldots, u_{k'} \rangle$ associated with $S$, where $u_1 := v_i$. For
  $1 \leq j \leq k'$, we keep the number of tree edges in the cutset
  that are incident to $u_j$, namely $\eta[u_j] = | \{ (u_j,x) \in L \}|$.
\item \label{item:L} We keep $L = D \cap C(S)$ as an ordered
  doubly-linked list of edges in $\hat C(S)$'s order: it can be easily
  obtained by maintaining the parent edge connecting a root in $F$ to
  its parent in $\hat V(S)$.
\item \label{item:F} We store the forest $F$ as a sorted doubly-linked
  list of the roots of the trees in $F$. The order of this list is
  that induced by $\hat C(S)$: a root $r$ precedes a root~$t$ if the
  (unique) edge in $L$ incident to~$r$ appears before the (unique) edge of
  $L$ incident to~$t$.  For each node $x$ of a tree $T \in F$, we also
  keep its number $\fdeg(x)$ of children in $T$, and its predecessor
  and successor sibling in $T$.
\item \label{item:isunary} We maintain a flag \isunary\ that is true
  if and only if
  $|\adj(u_i)| = \eta[u_i] + \sigma(u_i) \mbox{\ for\ all\ } 1 \leq i \leq k'$, 
  where $\sigma(u_i) = |\{ (u_i,u_j) \in E \mid i \neq j \}|$ is the number of
  internal edges, namely, having both endpoints in $V[S]$.
\end{enumerate}

Throughout the chapter, we identify $D$ with both \emph{(1)}~the set of
$k$ edges forming it as a $k$-subtree and \emph{(2)}~its representation
above as a certificate. We also support the following operations on
$D$, under the requirement that \isunary\ is true (i.e.\mbox{} all the
edges in the cutset $C(S)$ are tree edges), otherwise they are
undefined:

\begin{itemize}
\item $\treecut(D)$ returns the last edge in $L$.
\item $\promote(r,D)$, where root $r$ is the last in the doubly-linked
  list for $F$: remove~$r$ from $F$ and replace $r$ with its children
  $r_1, r_2, \ldots, r_c$ (if any) in that list, so they become the
  new roots (and so $L$ is updated). 
\end{itemize}

\begin{lemma}
  \label{lem:certificate-timespace}
  The representation of certificate $D = S \cup L \cup F$ requires
  $O(|D|) = O(k)$ memory words, and $\mkdfs(S)$ can build $D$ in
  $O(k^2)$ time.  Moreover, each of the operations \treecut\ and
  \promote\ can be supported in $O(1)$ time.
\end{lemma}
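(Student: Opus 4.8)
The plan is to establish the three assertions in turn --- space, construction time, and $O(1)$ per operation --- reusing Lemma~\ref{lem:graph-ops} for everything that involves the truncated DFS. For the \emph{space} bound, note first that the certificate $D$ is a $k$-subtree (or smaller), so $|D|\le k$: at most $k$ vertices, at most $k-1$ tree edges, plus the dummy edge. I would then add up the pieces of the partition $D = S\cup L\cup F$. The doubly-linked lists for $S$ and for $\hat V(S)$ together with the counters $\eta[\cdot]$ cost $O(|S|)=O(k)$ words. In $L = D\cap C(S)$ each edge is the unique tree-parent edge (inside $D$) of a distinct root of $F$, so $|L|$ equals the number of roots of $F$, which is at most $|V[F]| = |V[D]|-|V[S]| \le k-|V[S]|$; hence the list for $L$ uses $O(k)$ words. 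The forest $F$ is stored with $O(1)$ words per node --- the value $\fdeg(\cdot)$, the doubly-linked sibling pointers, and a pointer to the node's first and last child (a benign augmentation used below) --- for a total of $O(|V[F]|)=O(k)$ words. Finally \isunary\ is one bit. Summing, the representation uses $O(|D|)=O(k)$ memory words.

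For the \emph{construction} time, the point is that $\mkdfs(S)$ is exactly the DFS analysed in Lemma~\ref{lem:graph-ops} with two modifications: the collapsed source's ``adjacency list'' is the \emph{ordered} cutlist $\hat C(S)$ instead of $C(S)$, and the vertices of $V[S]$ are marked visited before the traversal begins. Scanning the vertices of $\hat V(S)$ in order and, within each, its list $\adj(\cdot)$ in list order, produces the cutset edges already sorted in $\hat C(S)$-order, so no explicit sort is needed. The truncation --- stop as soon as $k$ vertices have been seen --- still bounds the number of traversed edges by $O(k^2)$, by the same counting as in Lemma~\ref{lem:graph-ops}: at most $\binom{k}{2}$ edges lie among the at most $k$ visited vertices, and an edge to an as-yet-unvisited vertex is traversed only at the moment it makes that vertex visited. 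From the resulting DFS tree, a single $O(k)$-time pass fills in the linked lists for $S$, $\hat V(S)$, $L$ and the roots of $F$, the sibling/child pointers, and the counters $\eta[\cdot]$ and $\fdeg(\cdot)$. The flag \isunary\ is then decided by computing, for each $u_i\in\hat V(S)$, the number $\sigma(u_i)$ of internal edges incident to it via the bit-array trick of Lemma~\ref{lem:graph-ops} ($O(k^2)$ in total) and comparing $|\adj(u_i)|$ with $\eta[u_i]+\sigma(u_i)$. The total is $O(k^2)$.

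For the \emph{operations}, with a tail pointer kept on the list for $L$, $\treecut(D)$ returns its last element in $O(1)$. For $\promote(r,D)$ we are given that $r$ is the last root of $F$, so the edge of $L$ incident to $r$ is the last of $L$; its children $r_1,\dots,r_c$ already form a doubly-linked sibling list with $O(1)$-accessible endpoints, and the edges $(r,r_1),\dots,(r,r_c)$ are the forest edges emanating from $r$. Thus we splice the child list in place of $r$ at the tail of $F$'s root list, and splice $(r,r_1),\dots,(r,r_c)$ in place of the last edge of $L$ --- a constant number of pointer updates --- which leaves the $\eta[\cdot]$ values and \isunary\ intact under the standing precondition that every cutset edge is a tree edge. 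Hence $\promote$ runs in $O(1)$ time.

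The step I expect to be the real obstacle is the $O(k^2)$ construction bound: one must argue that the multi-source truncated DFS inspects only $O(k^2)$ edges even though individual adjacency lists may have length $\Theta(n)$, and that marking $V[S]$ visited up front and obeying the $\hat V(S)$-order does not spoil the edge count that Lemma~\ref{lem:graph-ops} already guarantees for $\kdfs$. The space summation and the two constant-time splices are then routine bookkeeping.
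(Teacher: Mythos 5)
Most of your argument is sound and tracks the paper's proof closely: the space accounting, the $O(k^2)$ construction bound via the edge-counting argument inherited from the proof of Lemma~\ref{lem:graph-ops}, and the tail-pointer implementation of $\treecut$ are all fine. There is, however, a genuine error in your treatment of $\promote(r,D)$, namely the claim that the splices ``leave the $\eta[\cdot]$ values and \isunary\ intact.'' That is not the case, and the whole point of the last few lines of the paper's proof is precisely to handle this. When $r$ is promoted, $V[S]$ gains a new vertex and the cutset $C(S)$ changes: the edge $(x,r)$ that used to be in $L$ (and thus contributed $1$ to $\eta[x]$) becomes an internal edge of $S$, so $\eta[x]$ must be decremented; $\eta[r]$ was previously undefined (it is indexed over $\hat V(S)$) and must be initialized to the number of tree edges from $r$ into the new cutset, i.e.\ $\fdeg(r)$. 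Most importantly, \isunary\ can \emph{become false} at this step: $r$ may have incident edges in $G$ beyond the $\fdeg(r)+1$ tree edges of $D$, and those extra edges are now non-tree edges in the enlarged cutset $C(S')$. If you declare \isunary\ unchanged, subsequent calls to $\treecut$/$\promote$ (which are only defined when \isunary\ holds) will be applied in states where they should not be, and the recursion in Algorithm~\ref{alg:ListTrees2} will misclassify binary nodes as unary.

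The repair is cheap and preserves the $O(1)$ bound: set $\eta[r] := \fdeg(r)$, decrement $\eta[x]$, and test the single inequality $|\adj(r)| > \fdeg(r) + 1$; if it holds, flip \isunary\ to false. The key observation that makes this an $O(1)$ check (and which is the crux the paper relies on) is that \isunary\ was true \emph{before} the call, so none of the old vertices $u_j \in \hat V(S)$ can acquire a new non-tree cutset edge by the promotion of $r$ --- the only vertex that can introduce non-tree edges into the cutlist is $r$ itself, which is exactly what the degree comparison detects.
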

\begin{proof}
  Recall that $D = S \cup L \cup F$ and that $|S|+ |L| + |F| = k$ when
  they are seen as sets of edges. Hence, $D$ requires $O(k)$ memory
  words, since the representation of $S$ in the certificate takes
  $O(|S|)$ space, $L$ takes $O(|L|)$ space, $F$ takes $O(|F|)$ space,
  and \isunary\ takes $O(1)$ space.

  Building the representation of $D$ takes $O(k^2)$ using
  $\mkdfs$. Note that the algorithm behind $\mkdfs$ is very similar to
  $\kdfs$ (see the proof of Lemma~\ref{op:kdfs}) except that the order
  in which the adjacency lists for the vertices in $V[S]$ are explored
  is that given by $\hat V(S)$. After that the edges in $D$ are found
  in $O(k^2)$ time, it takes no more than $O(k^2)$ time to build the lists
  in points~\ref{item:S}--\ref{item:F} and check the condition in
  point~\ref{item:isunary} of Section~\ref{sub:certificates}.

  Operation \treecut\ is simply implemented in constant time by
  returning the last edge in the list for $L$ (point~\ref{item:L} of
  Section~\ref{sub:certificates}).  

  As for $\promote(r,D)$, the (unique) incident edge $(x,r)$ that
  belongs to $L$ is removed from $L$ and added to $S$ (updating the
  information in point~\ref{item:S} of
  Section~\ref{sub:certificates}), and edges $(r,r_1)$,~\dots,
  $(r,r_c)$ are appended at the end of the list for $L$ to preserve
  the cutlist order (updating the information in~\ref{item:L}). This
  is easily done using the sibling list of $r$'s children: only a
  constant number of elements need to be modified.  Since we know that
  \isunary\ is true before executing \promote, we just need to check
  if appending $r$ to $\hat V(S)$ adds non-tree edges to the cutlist
  $\hat C(S)$. Recalling that there are $\fdeg(r) + 1$ tree edges
  incident to~$r$, this is only the case when $|\adj(r)| > \fdeg(r) +
  1$, which can be easily checked in $O(1)$ time: if so, we set
  \isunary\ to false, otherwise we leave \isunary\ true. Finally, we
  correctly set $\eta[r] := \fdeg(r)$, and decrease $\eta[x]$ by one,
  since \isunary\ was true before executing \promote.
\end{proof}

\subsection{Maintaining the invariant}
\label{sub:new-listtrees}

We now define \chooseedge\ in a more refined way to facilitate the
task of maintaining the invariant $S \sqsubseteq D$ introduced in
Section~\ref{sub:certificates}. As an intuition, \chooseedge\ selects
an edge $e=(e^-,e^+)$ from the \emph{cutlist}~$\hat C(S)$ that
interferes as least as possible with the certificate~$D$. Recalling
that $e^- \in V[S]$ and $e^+ \in V - V[S]$ by definition of cutlist,
we consider the following case analysis:

\usetikzlibrary{snakes}
\tikzstyle{vertex}=[circle,fill=black,minimum size=5pt,inner sep=0pt]
\tikzstyle{edge} = [draw,thin,-]
\tikzstyle{triangle} = [triangle]

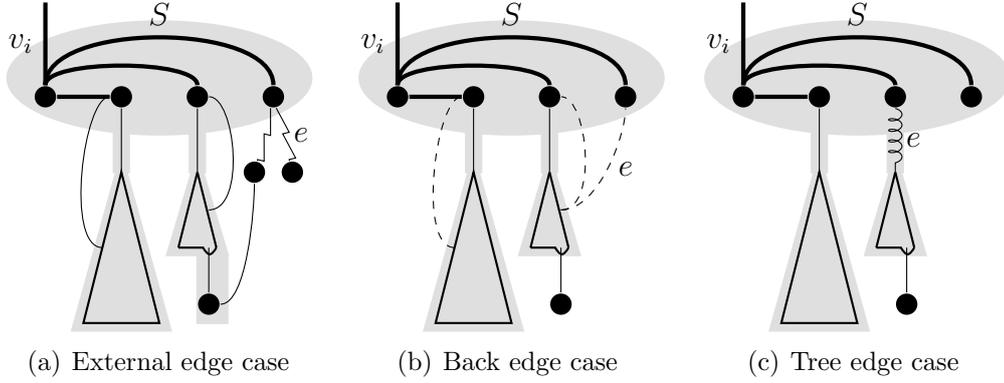
\begin{figure}[t]

\centering

\subfigure[External edge case]{
\label{fig:ksubtreeexternal}

\begin{tikzpicture}[style=thick,scale=0.5] 

\draw[color=gray!25,fill=gray!25] (-3,-1.5) ellipse (4 and 1.5);
\draw[color=gray!25,fill=gray!25] (-4,-3.5) -- (-5.3,-8.2) -- (-2.7,-8.2) -- cycle;
\draw[color=gray!25,fill=gray!25] (-2,-3.5) -- (-2.8,-6.2) -- (-1.2,-6.2) -- cycle;
\draw[color=gray!25,fill=gray!25] (-4.2,-2) -- (-4.2,-4) -- (-3.8,-4) -- (-3.8,-2) -- cycle;
\draw[color=gray!25,fill=gray!25] (-2.2,-2) -- (-2.2,-4) -- (-1.8,-4) -- (-1.8,-2) -- cycle;
\draw[color=gray!25,fill=gray!25] (-2.0,-6.2) -- (-2.0,-8) -- (-1.21,-8) -- (-1.21,-6.2) -- cycle;

\node[vertex] (a) at (-6,-2) {$a$};
\node[vertex] (b) at (0,-2) {$a$};
\node[vertex] (c) at (-4,-2) {$a$};
\node[vertex] (g) at (-2,-2) {$a$};

\path[edge,ultra thick] (-6,0.5) --  node [left] {$v_i$} (a);

\path[edge,ultra thick] (a) .. controls +(up:2cm) and +(up:2cm) .. node [auto] {$S$} (b);
\path[edge,ultra thick] (a) -- (c);
\path[edge,ultra thick] (a) .. controls +(up:1cm) and +(up:1cm) .. (g);

\draw (-4,-4) -- (-5,-8) -- (-3,-8) -- cycle;

\draw (-2,-4) -- (-2.5,-6);
\draw[snake=coil,segment aspect=0.0] (-1.5,-6) -- (-2.5,-6);
\draw (-1.5,-6) -- (-2,-4);

\node[vertex] (d) at (-0.5,-4) {$a$};
\node[vertex] (e) at (0.5,-4) {$a$};
\node[vertex] (f) at (-1.7,-7.5) {$a$};

\path[edge,snake=saw] (b) -- (d);
\path[edge,snake=saw] (b) -- node [right] {$e$} (e);
\path[edge] (d) .. controls +(down:1cm) and +(right:1cm) .. (f);
\path[edge] (c) -- (-4,-4);
\path[edge] (g) -- (-2,-4);

\path[edge] (-1.7,-6) -- (f);
\path[edge] (-1.7,-5) .. controls +(right:1cm) and +(right:1cm) .. (g);
\path[edge] (-4.5,-6) .. controls +(left:1cm) and +(left:1cm) .. (c);
\end{tikzpicture} 
}
%
\subfigure[Back edge case]{
\label{fig:ksubtreebackedge}

\begin{tikzpicture}[style=thick,scale=0.5] 

\draw[color=gray!25,fill=gray!25] (-3,-1.5) ellipse (4 and 1.5);
\draw[color=gray!25,fill=gray!25] (-4,-3.5) -- (-5.3,-8.2) -- (-2.7,-8.2) -- cycle;
\draw[color=gray!25,fill=gray!25] (-2,-3.5) -- (-2.8,-6.2) -- (-1.2,-6.2) -- cycle;
\draw[color=gray!25,fill=gray!25] (-4.2,-2) -- (-4.2,-4) -- (-3.8,-4) -- (-3.8,-2) -- cycle;
\draw[color=gray!25,fill=gray!25] (-2.2,-2) -- (-2.2,-4) -- (-1.8,-4) -- (-1.8,-2) -- cycle;

\node[vertex] (a) at (-6,-2) {$a$};
\node[vertex] (b) at (0,-2) {$a$};
\node[vertex] (c) at (-4,-2) {$a$};
\node[vertex] (g) at (-2,-2) {$a$};

\path[edge,ultra thick] (-6,0.5) --  node [left] {$v_i$} (a);

\path[edge,ultra thick] (a) .. controls +(up:2cm) and +(up:2cm) .. node [auto] {$S$} (b);
\path[edge,ultra thick] (a) -- (c);
\path[edge,ultra thick] (a) .. controls +(up:1cm) and +(up:1cm) .. (g);

\draw (-4,-4) -- (-5,-8) -- (-3,-8) -- cycle;

\draw (-2,-4) -- (-2.5,-6);
\draw[snake=coil,segment aspect=0.0] (-1.5,-6) -- (-2.5,-6);
\draw (-1.5,-6) -- (-2,-4);

\node[vertex] (f) at (-1.7,-7.5) {$a$};

\path[edge] (c) -- (-4,-4);
\path[edge] (g) -- (-2,-4);

\path[edge] (-1.7,-6) -- (f);
\path[edge,dashed] (-1.7,-5) .. controls +(right:1cm) and +(right:1cm) .. (g);
\path[edge,dashed] (-4.5,-6) .. controls +(left:1cm) and +(left:1cm) .. (c);
\path[edge,dashed] (-1.7,-5) .. controls +(right:1cm) and +(down:1cm) .. node [right] {$e$} (b);

\end{tikzpicture} 
}
\subfigure[Tree edge case]{
\label{fig:treeedge}

\begin{tikzpicture}[style=thick,scale=0.5] 

\draw[color=gray!25,fill=gray!25] (-3,-1.5) ellipse (4 and 1.5);
\draw[color=gray!25,fill=gray!25] (-4,-3.5) -- (-5.3,-8.2) -- (-2.7,-8.2) -- cycle;
\draw[color=gray!25,fill=gray!25] (-2,-3.5) -- (-2.8,-6.2) -- (-1.2,-6.2) -- cycle;
\draw[color=gray!25,fill=gray!25] (-4.2,-2) -- (-4.2,-4) -- (-3.8,-4) -- (-3.8,-2) -- cycle;
\draw[color=gray!25,fill=gray!25] (-2.2,-2) -- (-2.2,-4) -- (-1.8,-4) -- (-1.8,-2) -- cycle;

\node[vertex] (a) at (-6,-2) {$a$};
\node[vertex] (b) at (0,-2) {$a$};
\node[vertex] (c) at (-4,-2) {$a$};
\node[vertex] (g) at (-2,-2) {$a$};

\path[edge,ultra thick] (-6,0.5) --  node [left] {$v_i$} (a);

\path[edge,ultra thick] (a) .. controls +(up:2cm) and +(up:2cm) .. node [auto] {$S$} (b);
\path[edge,ultra thick] (a) -- (c);
\path[edge,ultra thick] (a) .. controls +(up:1cm) and +(up:1cm) .. (g);

\draw (-4,-4) -- (-5,-8) -- (-3,-8) -- cycle;

\draw (-2,-4) -- (-2.5,-6);
\draw[snake=coil,segment aspect=0.0] (-1.5,-6) -- (-2.5,-6);
\draw (-1.5,-6) -- (-2,-4);

\node[vertex] (f) at (-1.7,-7.5) {$a$};

\path[edge] (c) -- (-4,-4);
\path[edge,snake=coil,segment length=4pt] (g) -- node [right] {$e$} (-2,-4);

\path[edge] (-1.7,-6) -- (f);
\end{tikzpicture} 
}

\caption{Choosing edge $e \in C(S)$. The certificate $D$ is shadowed.}

\label{fig:chooseedge}
\end{figure}

\begin{enumerate}
\renewcommand{\theenumi}{(\alph{enumi})}
\renewcommand{\labelenumi}{\theenumi}
\item \label{item:external} [\textsf{external edge}] Check if there
  exists an edge $e \in \hat C(S)$ such that $e \not \in D$ and $e^+
  \not \in V[D]$. If so, return $e$, shown as a saw in
  Figure~\ref{fig:ksubtreeexternal}.
\item \label{item:back} [\textsf{back edge}] Otherwise, check if there
  exists an edge $e \in \hat C(S)$ such that $e \not \in D$ and $e^+
  \in V[D]$. If so, return $e$, shown dashed in Figure~\ref{fig:ksubtreebackedge}.
\item \label{item:tree} [\textsf{tree edge}] As a last resort, every
  $e \in \hat C(S)$ must be also $e \in D$ (i.e.\mbox{} all edges in the
  cutlist are tree edges). Return $e := \treecut(D)$, the last edge
  from $\hat C(S)$, shown as a coil in Fig.~\ref{fig:treeedge}.
\end{enumerate}

\begin{lemma}
  \label{lemma:choose_cases}
  For a given $k'$-subtree $S$, consider its corresponding node in the
  recursion tree. Then, this node is
  \emph{binary} when \chooseedge\ returns an external or back edge
  (cases~\ref{item:external}--\ref{item:back}) and is \emph{unary}
  when \chooseedge\ returns a tree edge (case~\ref{item:tree}).
\end{lemma}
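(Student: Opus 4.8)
The plan is to reduce the dichotomy to a single connectivity question about $G-\{e\}$ and then treat the two edge types separately. First I would record that the node of $S$ \emph{always} has a left child: by the precondition there is a $k$-subtree containing $S$, its vertices all lie in the connected component of $V[S]$, and since $e=\chooseedge(S)\in C(S)$ the set $S+\langle e\rangle$ is still an acyclic connected subset of that component, hence extends to a $k$-subtree. So the node is binary exactly when it also has a right child, i.e.\ exactly when $\mkdfs(S)$ still returns $k$ vertices once $e$ is deleted, which in turn holds iff the component of $V[S]$ in $G-\{e\}$ has at least $k$ vertices. The whole lemma thus becomes: deleting an external or back edge keeps $V[S]$ in a component of size $\ge k$, whereas deleting the tree edge returned in case~\ref{item:tree} does not.

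The external/back case is the easy half. In cases~\ref{item:external}--\ref{item:back} we have $e\in\hat C(S)$ but $e\notin D$, and $D=\mkdfs(S)$ is a $k$-subtree that contains $S$ and avoids $e$; hence $D$ is still a $k$-subtree of $G-\{e\}$ containing $S$, so the component of $V[S]$ there has $\ge k$ vertices and the node is binary.

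For the tree-edge case I would argue as follows. Reaching case~\ref{item:tree} means, by the defining condition of \isunary, that every cutset edge of $S$ is a tree edge of $D$, i.e.\ $L=D\cap C(S)$ is the whole ordered cutlist $\hat C(S)=\langle f_1,\dots,f_p\rangle$. Viewing $\mkdfs(S)$ as one DFS out of the collapsed vertex $S$ along $\hat C(S)$, the targets $r_1,\dots,r_p$ of $f_1,\dots,f_p$ are precisely the roots of the trees $T_{r_1},\dots,T_{r_p}$ of $F$, in cutlist order, so that $V[D]=V[S]\sqcup T_{r_1}\sqcup\cdots\sqcup T_{r_p}$ and $e=\treecut(D)=f_p$ is the edge joining $V[S]$ to the last root $r_p$. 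I would then show that the component of $V[S]$ in $G-\{e\}$ equals $V[D]\setminus T_{r_p}$, which has $k-|T_{r_p}|\le k-1<k$ vertices, so there is no right child and the node is unary. Since \isunary\ holds, $V[S]$ has no incident edge other than $f_1,\dots,f_p$; thus in $G-\{e\}$ the only vertices reachable from $V[S]$ are $r_1,\dots,r_{p-1}$ and, by the DFS, their subtrees $T_{r_1},\dots,T_{r_{p-1}}$ — and nothing beyond. This ``nothing beyond'' is the crux, because $\mkdfs$ may have been \emph{truncated}, so $D$ need not span the component and one cannot merely say ``$D-\{e\}$ is disconnected.'' The resolution is that \isunary\ forces the traversal to have completely finished exploring $f_1,\dots,f_{p-1}$ before discovering $r_p$ through $f_p$ (otherwise some $f_j$ with $j<p$ would fail to be a tree edge, contradicting $L=\hat C(S)$); hence truncation, if any, happens strictly inside $T_{r_p}$, and the DFS ``first-reached'' invariant — a vertex is placed under the first explored cut-edge from which it is reached — guarantees that no vertex lying outside $V[D]\setminus T_{r_p}$ is adjacent to $T_{r_1}\cup\cdots\cup T_{r_{p-1}}$ or to $V[S]$. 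So deleting $e$ detaches all of $T_{r_p}$ (together with any unexplored vertices hanging off it), leaving $V[S]$ in a component of at most $k-1$ vertices.

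The step I expect to be the main obstacle is exactly this control of truncation in the tree-edge case: making rigorous that the \isunary\ condition, though it says nothing directly about vertices far from $V[S]$, still pins the only possible truncation point to the interior of the last subtree $T_{r_p}$, so that cutting the single edge $\treecut(D)$ suffices to drop the component of $V[S]$ below $k$ vertices.
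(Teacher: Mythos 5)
Your proof is correct and follows essentially the same route as the paper's: for external/back edges you use $D$ itself as a $k$-subtree avoiding $e$ (with Lemma~\ref{lem:kleftbounded} supplying the left child), and for tree edges you exploit that $e=f_p$ is the last cutlist edge in the DFS order of $\hat V(S)$ to show the component of $V[S]$ in $G-\{e\}$ falls below $k$ vertices. Your careful treatment of truncation --- arguing that \isunary\ forces $T_{r_1},\dots,T_{r_{p-1}}$ to be fully explored so that truncation can only occur inside $T_{r_p}$, and the absence of cross edges then pins the component exactly --- unpacks what the paper compresses into the clause \emph{``at least $k$ vertices would be reachable using the previous branches of $\hat C(S)$, a contradiction with the DFS order''}; it is the same observation, spelled out.
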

\begin{proof}
  Consider a node $S$ in the recursion tree and the corresponding
  certificate $D$. For a given edge $e$ returned by $\chooseedge(S,D)$
  note that: if $e$ is an external or back edge
  (cases~\ref{item:external}--\ref{item:back}), $e$ does not belong to
  the $k$-subtree $D$ and therefore there exists a $k$-subtree that does not
  include $e$. By Lemma~\ref{lem:kleftbounded}, a $k$-subtree that
  includes $e$ must exist and hence the node is binary.  We are left
  with case~\ref{item:tree}, where $e$ is an edge of $D$ that belongs
  to the cutlist $\hat C(S)$.  Recall that, by the way \chooseedge\
  proceeds, all edges in the cutlist $\hat C(S)$ belong to $D$ (see
  Figure \ref{fig:treeedge}).  There are no further $k$-subtrees that do
  not include edge $e$ as $e$ is the last edge from $\hat C(S)$ in the
  order of the truncated DFS tree, and so the node in the recursion
  tree is unary. The existence of a $k$-subtree that does not include $e$
  would imply that $D$ is not the valid $\mkdfs(S)$: at least $k$
  vertices would be reachable using the previous branches of $\hat
  C(S)$ which is a contradiction with the fact that we traverse
  vertices in the DFS order of $\hat V(S)$.
\end{proof}

\begin{algorithm}[t]
\begin{algorithmic}[1]
	\FOR{$v_i \in V$}
		\STATE $S:=\, \langle (\cdot,v_i) \rangle$
		\STATE $D:=\mkdfs(S)$
		\IF{$|D|<k$}
			\FOR {$u \in V[D]$}
			\STATE $\vdel(u)$
			\ENDFOR
		\ELSE
			\STATE $\listtrees(S,D)$ 
                	\STATE $\vdel(v_i)$
		\ENDIF
	\ENDFOR
\end{algorithmic}
\caption{$\mathtt{ListAllTrees}$( $G=(V,E)$, $k$ ) \label{alg:ListAllTrees2}}
\end{algorithm}

\begin{algorithm}[t]
\begin{algorithmic}[1]
	\IF{$|S|=k$}
		\STATE $\treeoutput(S)$
		\STATE $\mathtt{return}$
	\ENDIF
\STATE $e:=\chooseedge(S,D)$
\IF {\isunary}
        \STATE $D' := \promote(e^+,D)$
        \STATE $\listtrees(S+ \langle e \rangle,D')$
\ELSE
        \STATE $D' := \mkdfs(S+ \langle e \rangle)$
        \STATE $\listtrees(S+ \langle e \rangle,D')$
        \STATE $\edel(e)$
        \STATE $D'' := \mkdfs(S)$
        \STATE $\listtrees(S,D'')$
        \STATE $\idel(e)$
\ENDIF
\end{algorithmic}
\caption{ $\listtrees(S,D)$ \hfill \textsl{\{Invariant: $S \sqsubseteq D$\}} \label{alg:ListTrees2}}
\end{algorithm} 

We now present the new listing approach in
Algorithm~\ref{alg:ListAllTrees2}. If the connected component of
vertex $v_i$ in the residual graph is smaller than~$k$, we delete its
vertices since they cannot provide $k$-subtrees, and so we skip them in
this way.  Otherwise, we launch the new version of \listtrees, shown
in Algorithm~\ref{alg:ListTrees2}. In comparison with the previous
version (Algorithm~\ref{alg:ListTrees}), \emph{we produce the new
  certificate $D'$ from the current $D$ in $O(1)$ time when we are in
  a unary node}.  On the other hand, we completely rebuild the
certificate twice when we are in a binary nodes (since either child
could be unary at the next recursion level).

\begin{lemma}
  \label{fig:ksubtreeinvariant}
  Algorithm \ref{alg:ListTrees2} correctly maintains the invariant
  $S\sqsubseteq D$.
\end{lemma}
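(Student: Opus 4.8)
The plan is to prove the statement by induction on the depth of the recursion, establishing the stronger reading that every call $\listtrees(S,D)$ performed by Algorithms~\ref{alg:ListAllTrees2} and~\ref{alg:ListTrees2} satisfies $D=\mkdfs(S)$ \emph{with respect to the graph the algorithm currently holds}. The base case is immediate: the top-level call in Algorithm~\ref{alg:ListAllTrees2} is issued only after $D:=\mkdfs(S)$ has been computed for $S=\langle(\cdot,v_i)\rangle$. For the inductive step, assume $S\sqsubseteq D$ at an internal node (so $|S|<k$) and look at the recursive calls, which are either one or two depending on the branch. In the branch where \isunary\ is false, the left call uses $D':=\mkdfs(S+\langle e\rangle)$ and the right call uses $D'':=\mkdfs(S)$, so both invariants hold by construction; the only thing to observe is that the computation of $D''$ and the call it feeds happen after $\edel(e)$ and before $\idel(e)$, i.e.\ in the graph $G-\{e\}$, so $D''=\mkdfs(S)$ relative to that graph, which is exactly what is required of that subtree.

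The substantive case is the unary branch, where I must show that the constant-time update $D':=\promote(e^+,D)$ produces exactly $\mkdfs(S+\langle e\rangle)$. Here \isunary\ holds, so every edge of the cutset $C(S)$ is a tree edge of $D$, hence $C(S)=L$; by the behaviour of \chooseedge\ (case~\ref{item:tree}) the returned edge is $e=\treecut(D)$, the last edge of the cutlist $\hat C(S)$, so $e^+$ is the last root in the list for $F$. Two consequences of \isunary\ will carry the argument: (i) $e$ is the \emph{unique} edge joining $V[S]$ to $e^+$, and in fact the only edge of $\mkdfs(S)$ by which $e^+$ can have been discovered; and (ii) since $e$ is the last cut edge, the tree $T_{e^+}\in F$ rooted at $e^+$ is the \emph{last} subtree explored by $\mkdfs(S)$, so every vertex of $V[D]\setminus V[S]$ outside $T_{e^+}$ is discovered strictly before $e^+$.

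I would then compare the executions of $\mkdfs(S)$ and $\mkdfs(S+\langle e\rangle)$ step by step. Their source lists agree except that the latter has the extra last source $e^+$, and its initial marking has $e^+$ pre-visited. By (i)–(ii) the two DFS runs are identical up to the moment $\mkdfs(S)$ would traverse $e$: all earlier discoveries happen along the same edges in the same adjacency-list order, because the presence of the extra pre-marked vertex $e^+$ is irrelevant until an edge into $e^+$ is first examined, and that edge is $e$. At that instant the visited sets coincide; $\mkdfs(S)$ now recurses into $e^+$ and builds $T_{e^+}$, whereas $\mkdfs(S+\langle e\rangle)$ skips $e$ (nothing further is discovered by the $V[S]$-sources, as $e$ was the last cut edge) and proceeds to its last source $e^+$, exploring the same adjacency list of $e^+$ against the same visited set — and therefore producing exactly the same tree edges, including the same truncation point. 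The only difference is bookkeeping: the edges $(e^+,r_1),\dots,(e^+,r_c)$ from $e^+$ to its children, which were forest edges of $F$ in $D$, now have an endpoint in $V[S\!+\!\langle e\rangle]$ and so become cut edges, appended to the cutlist in $\adj(e^+)$-order, while $r_1,\dots,r_c$ become roots and $e$ passes into the prefix. This is precisely the effect of $\promote(e^+,D)$ as spelled out in the proof of Lemma~\ref{lem:certificate-timespace} (together with the $\eta$- and $\fdeg$-updates there). Hence $S+\langle e\rangle\sqsubseteq D'$, completing the induction.

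I expect the delicate point to be exactly that step-by-step comparison of the two DFS executions: one must be sure that pre-marking $e^+$ does not perturb the adjacency-list traversal order anywhere before $e$ is reached, and that nothing new is found after $T_{e^+}$. This is where both hypotheses are genuinely used — \isunary\ guarantees $e$ is the unique $V[S]$-to-$e^+$ edge and that the $V[S]$-sources discover nothing past $e$, and $e=\treecut(D)$ (equivalently, $e^+$ being the last root) guarantees $T_{e^+}$ is the last subtree, so promoting $e^+$ to a source only relocates that one subtree. Everything else is routine.
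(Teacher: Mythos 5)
Your proof is correct and follows essentially the same route as the paper's: both reduce the whole claim to the tree-edge (unary) case, and both establish there that $\promote(e^+,D)$ reproduces $\mkdfs(S+\langle e\rangle)$ by reasoning about the multi-source DFS, using \isunary\ to rule out back edges (so $e$ is the unique $V[S]$-to-$e^+$ edge and nothing in $\hat C(S)$ other than $e$ can touch $T_{e^+}$) and the fact that $e=\treecut(D)$ to place $T_{e^+}$ last in the exploration order. The only cosmetic difference is that you compare the executions of $\mkdfs(S)$ and $\mkdfs(S+\langle e\rangle)$ and then identify $\promote$ as the bookkeeping bridge, whereas the paper compares the output of $\promote$ directly against $\mkdfs(S+\langle e\rangle)$ field by field ($\hat V$, $\hat C$, $L$, $F$, $\eta$, \isunary); the underlying argument is the same.
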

\begin{proof}
  The base case is when $|S|=k$, as before. Hence, we discuss the
  recursion, where we suppose that $S$ is a $k'$-subtree with $k' < k$.
  Let $e=(e^-,e^+)$ denote the edge returned by $\chooseedge(S,D)$.

  \medskip 

  \emph{First:} Consider the situation in which we want to output all
  the $k$-subtrees that contain $S' = S + \langle e \rangle$.  Now, from
  certificate $D = S \cup L \cup F$ we obtain a new $D' = S' \cup L'
  \cup F'$ according to the three cases behind \chooseedge, for which
  we have to prove that $S' \sqsubseteq D'$.

  \smallskip

  \ref{item:external}~[\textsf{external edge}] We simply recompute $D'
  := \mkdfs(S')$. So $S' \sqsubseteq D'$ by definition of
  relation~$\sqsubseteq$.

  \smallskip

  \ref{item:back}~[\textsf{back edge}] Same as in the case above, we
  recompute $D' := \mkdfs(S')$ and therefore $S' \sqsubseteq D'$.

  \smallskip

  \ref{item:tree}~[\textsf{tree edge}] In this case, the set of edges
  of the certificate does not change ($D' = D$ seen as sets), but the
  internal representation described in Section~\ref{sub:certificates}
  changes partially since $S' = S + \langle e\rangle$. The flag
  \isunary\ is true, and so \treecut\ and \promote\ can be
  invoked. The former is done by \chooseedge, which correctly returns
  $e=(e^-,e^+)$ as the last tree edge in the cutlist $\hat C(S)$. The
  latter is done to promote the children $r_1, r_2, \ldots, r_c$ of
  $e^+$.

  To show that $S' \sqsubseteq D'$ for this case, we need to prove
  that the resulting certificate $D'$ is the same as the one returned
  by an explicit call to $\mkdfs(S')$, which we clearly want to avoid
  calling.  Let $D_0 = S_0 \cup L_0 \cup F_0$ be the output of the
  call to $\mkdfs(S')$, and let $D' = S' \cup L' \cup F'$ be what we
  obtain in Algorithm~\ref{alg:ListTrees2}. 

  First of all, note that $S' = S_0 = S + \langle e \rangle$ by
  definition of \mkdfs. This means that the sorted lists for $S'$ and
  $\hat V(S')$ are ``equal'' to those for $S_0$ and $\hat V(S_0)$
  (elements are the same and in the same order). Hence, $S' = S_ 0 =
  \langle e_1, e_2, \ldots, e_{k'}, e \rangle$ and $\hat V(S') = \hat
  V(S_0) = \langle u_1, u_2, \ldots, u_{k'}, e^+
  \rangle$. 

  Consequently, the cutsets $C(S') = C(S_0)$: when considering the
  corresponding cutlists $\hat C(S')$ and $\hat C(S_0)$, recall that
  \mkdfs\ performs a multi-source truncated DFS from the vertices
  $u_1$, $u_2$,~\dots, $u_{k'}$, $e^+$ in this order (where all of
  them are initially marked as already visited).  When \mkdfs\ starts
  from $ e^- \equiv\, u_j$ (for some $1 \leq j \leq k'$), it does not
  explore $e^+$ through edge $e$. Moreover, $r_1, r_2, \ldots, r_c$
  are not explored as well, since otherwise there would be back edges
  and \isunary\ would be false. Since $e^+$ is the last in $S'$, when
  \mkdfs\ starts from $e^+$, observe that $e^-$ has been totally
  explored, and $r_1, r_2, \ldots, r_c$ are discovered now from
  $e^+$. Since $e$ is the last tree edge in the cutlist $\hat C(S)$,
  we have that the ordering in the new cutlists $\hat C(S')$ and $\hat
  C(S_0)$ must be the same.

  Consider now $L'$ and $L_0$. We show that $L' = L_0$ using the fact
  that $\hat C(S') = \hat C(S_0)$. Operation $\promote(e^+,D)$ removes
  $e$ from $L$ and adds tree edges $(e^+, r_i)$ for its children $r_1,
  r_2, \ldots, r_c$ to form $L'$. Note these edges are added in the
  same order as they were discovered by $\mkdfs(S)$ and $\mkdfs(S')$
  since \isunary\ is true and $\hat C(S') = \hat C(S_0)$. Since $L_0$
  does not contain $e$, we have that $L'=L_0$.
  
  It remains to show that $F' = F_0$. This is easy since $\hat C(S') =
  \hat C(S_0)$: the subtree at $r_i$ is totally explored before that
  at $r_j$ for $i < j$. Hence, when $r_1, r_2, \ldots, r_c$ are
  promoted as roots in $F'$, their corresponding subtrees do not
  change. Also, their ordering in the sublist for $F'$ and $F_0$ is
  the same because $\hat C(S') = \hat C(S_0)$.

  Finally, since there are no back edges, the update of $\eta$ 
  only involves $\eta[e^-]$ and $\eta[e^+]$ as discussed in the
  implementation of \promote. For the same reason, the only case in
  which \isunary\ can becomes false is when $|\adj(e^+)| > \fdeg(e^+)
  + 1$. This completes the proof that $S' \sqsubseteq D'$ for
  case~\ref{item:tree}.

  \medskip

  \emph{Second:} Consider the situation in which we want to list all
  the $k$-subtrees that contain $S$ but do \emph{not} contain $e$. This
  is equivalent to list all the $k$-subtrees that contain $S$ in
  $G-\{e\}$. Hence, we remove $e$ from $G$ and recomputed the
  certificate from scratch before each of the two recursive calls.
  Consider the three cases behind \chooseedge.
  Cases~\ref{item:external} and~\ref{item:back} are trivial, since we
  recompute $D'' := \mkdfs(S)$ and so $S \sqsubseteq D''$.
  Case~\ref{item:tree} cannot arise by Lemma~\ref{lemma:choose_cases}.
\end{proof}

\subsection{Analysis}
\label{sub:listtree2-analysis}

We implement $\chooseedge(S,D)$ so that it can now exploit the
information in $D$. At each node $S$ of the recursion tree, when it
selects an edge $e$ that belongs to the cutset $C(S)$, it first
considers the edges in $C(S)$ that are external or back
(cases~\ref{item:external}--\ref{item:back}) before the edges in $D$
(case~\ref{item:tree}).

\begin{lemma} 
  \label{lem:choosetime_k}
  There is an implementation of \chooseedge\ in $O(1)$ for unary
  nodes in the recursion tree and $O(k^2)$ for binary nodes.
\end{lemma}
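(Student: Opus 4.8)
The goal is to show that $\chooseedge(S,D)$ can be implemented to run in $O(1)$ time whenever the current node is unary and in $O(k^2)$ time whenever it is binary. By Lemma~\ref{lemma:choose_cases}, a node is unary exactly when \chooseedge\ falls into the tree-edge case~\ref{item:tree}, and this happens exactly when the flag \isunary\ (maintained as part of the certificate $D$, point~\ref{item:isunary} of Section~\ref{sub:certificates}) is true. So the first step is to simply test \isunary\ in $O(1)$ time: if it is true, we are in a unary node and we return $\treecut(D)$, the last edge of $L$, which by Lemma~\ref{lem:certificate-timespace} takes $O(1)$ time. This already disposes of the unary case completely.

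**Handling the binary case.** If \isunary\ is false, we must realize cases~\ref{item:external} and~\ref{item:back}: scan the cutlist $\hat C(S)$ looking first for an external edge (one whose far endpoint $e^+$ lies outside $V[D]$, and which is not itself in $D$), and failing that, for a back edge (not in $D$ but with $e^+ \in V[D]$). The plan is to use the same standard marking trick used in Lemma~\ref{lem:graph-ops}: allocate a global bit-array $B$ of size $n$ (reused across calls, cleaned up after each), set $B[x]:=1$ for every $x \in V[D]$ — there are only $|V[D]| \le k$ such vertices, so this costs $O(k)$ — and also mark which edges belong to $D$ (using the $O(k)$ edges of $D = S \cup L \cup F$). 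Then walk through the adjacency lists $\adj(u_1),\ldots,\adj(u_{k'})$ of the vertices of $S$ in order; each edge encountered is classified in $O(1)$ time as internal (both endpoints in $V[S]$), a $D$-edge, a back edge ($e^+$ marked in $B$), or an external edge. We return the first external edge found; if the scan completes without one, we return the first back edge found. Since \isunary\ is false, by definition of the flag there is at least one cutlist edge not in $D$, hence at least one back or external edge, so the scan does find something. Finally we un-mark $B$.

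**Why the scan costs $O(k^2)$.** The key point is to bound the number of adjacency-list entries examined before we stop. We stop as soon as we find an external or back edge. I claim we examine only $O(k^2)$ entries. The entries we skip over are internal edges (both endpoints in $V[S]$, of which there are at most $\binom{k'}{2}=O(k^2)$) and edges of $L$ that are tree edges of the cutlist. Here is where the structure of the certificate matters: $D = \mkdfs(S)$ was produced by a multi-source truncated DFS, so the tree edges $L$ of the cutset number at most $|L| \le k$; and crucially, since \chooseedge\ in cases~\ref{item:external}--\ref{item:back} considers the cutlist in the DFS order $\hat C(S)$, every cutlist edge preceding the first non-$D$ edge is in $D$ — but there are at most $k$ such edges. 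Thus before we hit the first external/back edge we have passed over $O(k^2)$ internal edges and $O(k)$ cutlist tree edges, for $O(k^2)$ total; when \isunary\ is false such an edge exists, so the scan terminates within this bound. The marking and cleanup of $B$ cost $O(k)$. Hence the binary case costs $O(k^2)$.

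**Main obstacle.** The delicate part is not the marking machinery — which is routine — but arguing the $O(k^2)$ bound on the scan \emph{without} charging for a full traversal of the (possibly $\Theta(n)$-length) adjacency lists. One must be careful that "examine adjacency lists of $V[S]$ in $\hat V(S)$ order" really means we can abort the moment an external/back edge surfaces, and that all entries passed over before that moment are either internal-to-$S$ or are the $\le k$ tree edges of $\hat C(S)$ that precede it; any other entry would be precisely the external or back edge we are seeking, so the scan cannot run long. I would state this counting argument explicitly as the crux of the proof, then note that combining it with the $O(1)$ unary branch via the \isunary\ flag and $\treecut$ gives the claimed bounds.
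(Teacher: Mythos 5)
Your proof is correct and takes essentially the same approach as the paper: test \isunary\ in $O(1)$ to decide unary vs.\ binary, return $\treecut(D)$ for unary nodes, and for binary nodes scan the adjacency lists of $V[S]$ in $O(k^2)$ time. One small inaccuracy in the accounting: since \chooseedge\ must return an external edge whenever one exists (case~\ref{item:external} takes priority over case~\ref{item:back}), the scan cannot simply abort at the first back edge as your third paragraph states, so back edges also belong among the ``skipped'' entries. Fortunately they too number $O(k^2)$ (at most $\binom{k}{2}$ edges lie among the $k$ vertices of $V[D]$), so the bound is unaffected; the paper sidesteps this subtlety by scanning only the first $2k$ entries of each $\adj(u)$, observing that $u$ has fewer than $k$ non-external neighbors so any external edge incident to $u$ must appear within that prefix.
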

\begin{proof}
  Given $D$, we can check if the current node $S$ in the recursion
  tree is unary by checking the flag \isunary. If this is the case we
  simply return the edge indicated by $\treecut(D)$ in $O(1)$
  time. Otherwise, the node $S$ is binary, and so there exists at
  least an external edge or a back edge. We visit the first $2k$ edges
  in each $\adj(u)$ for every $u \in S$. Note that less than $k$ edges
  can connect $u$ to vertices in $V[S]$ and less than $k$ edges can
  connect $u$ to vertices in $V[D]$: if an external edge exists, we
  can find it in $O(k^2)$ time. Otherwise, no external edge exists, so
  there must be a back edge to be returned since the node is
  binary. We visit the first $k$ edges in each $\adj(u)$ for every $u
  \in S$, and surely find one back edge in $O(k^2)$.
\end{proof}

\begin{lemma}
  \label{lem:listtrees2}
  Algorithm~\ref{alg:ListTrees2} takes $O(s_i \, k^2)$ time and $O(mk)$
  space, where $s_i$ is the number of $k$-subtrees reported by \listtrees.
\end{lemma}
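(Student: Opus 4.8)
The plan is to charge the running time to the nodes of the recursion tree of \listtrees\ and then invoke the node counts of Corollary~\ref{cor:right-branch}. First I would observe that, even though the refined \chooseedge\ of Section~\ref{sub:new-listtrees} may change the \emph{shape} of the recursion tree relative to Algorithm~\ref{alg:ListTrees}, the tree is still binary and $k$-left-bounded: Lemma~\ref{lem:kleftbounded} holds for \emph{any} cutset edge that is opted out, so Corollary~\ref{cor:right-branch} applies verbatim and gives $s_i$ leaves, at most $s_i k$ internal nodes, and exactly $s_i-1$ internal nodes with two children. By Lemma~\ref{lemma:choose_cases}, moreover, an internal node is binary precisely when \chooseedge\ returns an external or back edge (equivalently, when \isunary\ is false) and unary precisely when it returns a tree edge.

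Next I would bound the work done per node of Algorithm~\ref{alg:ListTrees2}. At a \emph{unary} node, the call performs one \chooseedge, which costs $O(1)$ by Lemma~\ref{lem:choosetime_k} since \isunary\ lets us answer with $\treecut(D)$, followed by one \promote, which costs $O(1)$ by Lemma~\ref{lem:certificate-timespace}; hence $O(1)$ per unary node. At a \emph{binary} node, \chooseedge\ costs $O(k^2)$ by Lemma~\ref{lem:choosetime_k}, the certificate is rebuilt twice with \mkdfs\ at $O(k^2)$ each by Lemma~\ref{lem:certificate-timespace}, and one \edel\ and one \idel\ cost $O(1)$ by Lemma~\ref{lem:graph-ops}; hence $O(k^2)$ per binary node. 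Each leaf does a \treeoutput\ in $O(k)$ time. Summing over the whole tree with the counts above, the unary internal nodes contribute $O(s_i k)\cdot O(1)$, the binary internal nodes contribute $(s_i-1)\cdot O(k^2)$, and the leaves contribute $s_i\cdot O(k)$, for a total of $O(s_i k^2)$.

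For the space bound I would reproduce the reasoning of Lemma~\ref{lem:listtrees}. The recursion depth is $O(m)$: along any root-to-node path there are at most $k-1$ left branches (each appends an edge to $S$) and at most $m$ right branches (each deletes a distinct edge of $G$ that is not restored until we return above that node), and $k\le n\le m+1$ in a connected graph. The graph $G$ is stored once and edited in place by \edel/\idel, while each frame of the recursion stack keeps the current $S$ together with its certificate $D=S\cup L\cup F$, which occupies $O(k)$ words by Lemma~\ref{lem:certificate-timespace}. Thus the stack uses $O(mk)$ words, which dominates the $O(m)$ words for $G$, giving $O(mk)$ overall.

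The point that needs care — and the reason the whole certificate apparatus of Section~\ref{sub:certificates} is developed — is the unary/binary dichotomy: we can afford the $O(k^2)$ recomputations of \mkdfs\ only because Corollary~\ref{cor:right-branch} caps the number of binary nodes at $s_i-1$, whereas each of the $\Theta(s_i k)$ unary nodes must be processed in $O(1)$. Making this rigorous reduces to two facts already established: that the counting in Corollary~\ref{cor:right-branch} is insensitive to the edge-selection rule (Lemma~\ref{lem:kleftbounded}), and that \promote\ together with the \isunary\ flag updates the certificate in constant time along a maximal chain of unary nodes (Lemmas~\ref{lem:certificate-timespace} and~\ref{lem:choosetime_k}). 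I expect this bookkeeping to be the only subtle part; the rest is a routine sum over the recursion tree.
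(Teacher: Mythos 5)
Your proof is correct and follows essentially the same route as the paper's: break down the per-node cost via Lemmas~\ref{lem:certificate-timespace} and~\ref{lem:choosetime_k} into $O(1)$ for unary nodes and $O(k^2)$ for binary nodes, then multiply by the node counts from Corollary~\ref{cor:right-branch}, with the space analysis carried over from Lemma~\ref{lem:listtrees}. Your added remark that Lemma~\ref{lem:kleftbounded} and hence Corollary~\ref{cor:right-branch} are insensitive to the refined edge-selection rule is a worthwhile clarification that the paper leaves implicit, but it does not change the substance of the argument.
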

\begin{proof}
  We report the breakdown of the costs for a call to \listtrees\
  according the cases, using Lemmas~\ref{lem:certificate-timespace}
  and~\ref{lem:choosetime_k}:
  \begin{itemize}
  \item[\ref{item:external}] External edge: $O(k^2)$ for \chooseedge\
    and \mkdfs, $O(1)$ for $\edel$, $\idel$.
  \item[\ref{item:back}] Back edge: $O(k^2)$ for \chooseedge\ and
    \mkdfs, and $O(1)$ for $\edel$ and $\idel$.
  \item[\ref{item:tree}] Tree edge: $O(1)$ for \chooseedge\ and
    \promote.
  \end{itemize}
  Hence, binary nodes take $O(k^2)$ time and unary nodes take $O(1)$
  time. By Corollary~\ref{cor:right-branch}, there are $O(s_i)$ binary
  nodes and $O(s_i k)$ unary nodes, and so
  Algorithm~\ref{alg:ListTrees2} takes $O(s_i \, k^2)$ time. The space
  analysis is left unchanged, namely, $O(mk)$ space.
\end{proof}

\begin{theorem}
  \label{the:main2}
  Algorithm~\ref{alg:ListAllTrees2} solves
  Problem~\ref{problem:ksubtreelisting} in $O(sk^2)$ time and $O(mk)$ space.
\end{theorem}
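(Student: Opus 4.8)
The plan is to mirror the proof of Theorem~\ref{the:main}, now combining the correctness of the binary partition with the sharper per-node cost bounds of Section~\ref{sec:improved}, while being careful that the top-level overhead stays within $O(sk^2)$. I would write the total cost as a sum, over all $v_i\in V$, of (i) the top-level work done for $v_i$ in Algorithm~\ref{alg:ListAllTrees2} --- one call $\mkdfs(\langle(\cdot,v_i)\rangle)$ plus either a single $\vdel(v_i)$ or a batch of $\vdel(u)$ over the vertices of a small component --- and (ii) the cost of the call $\listtrees(S,D)$, when it is launched. The $O(mk)$ space bound is inherited verbatim from Lemma~\ref{lem:listtrees2}.

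For correctness, first observe that $D:=\mkdfs(\langle(\cdot,v_i)\rangle)$ is a feasible output of $\kdfs$ (as noted in Section~\ref{sub:certificates}), so $|D|=k$ iff the connected component of $v_i$ in the residual graph has at least $k$ vertices, iff there is a $k$-subtree through $v_i$ avoiding $v_1,\dots,v_{i-1}$; when $|D|<k$ the entire component carries no $k$-subtree and is safely deleted. When $|D|=k$, the precondition of $\listtrees$ holds, and by Lemma~\ref{fig:ksubtreeinvariant} the invariant $S\sqsubseteq D$ is maintained throughout the recursion, so $D$ is always a witnessing $k$-subtree and we never recurse into an empty subproblem. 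That each wanted $k$-subtree is then output exactly once, and that trees rooted at different $v_i$ are not double counted, is exactly Lemma~\ref{lem:correctness}, whose argument is independent of how the certificate is represented. Hence Algorithm~\ref{alg:ListAllTrees2} is correct.

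For the running time, I would handle the recursion first: by Lemma~\ref{lem:listtrees2} the call $\listtrees(S,D)$ from $v_i$ runs in $O(s_i k^2)$ time, where $s_i$ is the number of $k$-subtrees it reports, and since $\sum_i s_i=s$ these calls together cost $O(sk^2)$. For the top-level work, the $\vdel$ operations delete each vertex (hence each edge) at most once, so they sum to $O(m)$ whether a vertex is removed singly or inside a small component. The $\mkdfs$ calls split into two cases: if $|D|=k$ then $\listtrees$ is invoked and reports $s_i\ge 1$ subtrees, so there are at most $s$ such iterations, contributing $O(sk^2)$; if $|D|<k$ the truncated DFS explores only the edges inside the (small) component of $v_i$ before halting, and since these components are pairwise disjoint over the run of the algorithm this contributes $O(m)$ in total (an already-deleted $v_i$ is skipped in $O(1)$). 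Summing, the top level costs $O(m+sk^2)$; since $G$ is connected, $m\le sk\le sk^2$, so the whole algorithm runs in $O(sk^2)$ time, with $O(mk)$ space by Lemma~\ref{lem:listtrees2}.

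I expect the last accounting step to be the delicate point. The naive bound of $O(k^2)$ per top-level iteration gives $O(nk^2)$, which in general cannot be absorbed into $O(sk^2)$ (for instance, on a long path $s$ is far smaller than $n$ when $k$ is fixed), so the proof of Theorem~\ref{the:main} does not transfer literally here. The key is to argue that only $O(s)$ of the $n$ iterations actually pay $\Theta(k^2)$ --- precisely those that trigger $\listtrees$ and thereby produce a fresh $k$-subtree --- whereas all remaining top-level work consists of exploring and discarding pairwise disjoint small components, which is linear in their total size and hence $O(m)$, and $m=O(sk)$ then closes the bound.
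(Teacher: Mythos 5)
Your proof is correct and takes essentially the same route as the paper, whose own argument for Theorem~\ref{the:main2} is a two-sentence remark: the small-component iterations now cost $O(m)$ in total (because the whole component is deleted, so their DFS work is charged to pairwise-disjoint edges), and the rest follows from Lemma~\ref{lem:listtrees2}. You have filled in exactly the accounting the paper leaves implicit --- splitting the top-level $\mkdfs$ cost between $|D|<k$ iterations (amortized to $O(m)$ over disjoint components) and $|D|=k$ iterations (at most $s$ of them, so $O(sk^2)$, or equivalently absorbed into the $\Omega(1)$ trees each such call reports) --- and your observation that the $O(nk^2)$ bound from Theorem~\ref{the:main} cannot be absorbed into $O(sk^2)$ in general (e.g.\ a path with $k$ close to $n$) correctly identifies the reason Algorithm~\ref{alg:ListAllTrees2} must batch-delete small components rather than just skip them.
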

\begin{proof}
  The vertices belonging to the connected components of size less than
  $k$ in the residual graph, now contribute with $O(m)$ total time
  rather than $O(n k^2)$. The rest of the complexity follows from
  Lemma~\ref{lem:listtrees2}.
\end{proof}

\section{Optimal approach: amortization}
\label{sec:output-sensitive}

In this section, we discuss how to adapt
Algorithm~\ref{alg:ListTrees2} so that a more careful analysis can
show that it takes $O(sk)$ time to list the $k$-subtrees. Considering
$\listtrees$, observe that each of the $O(s_i k)$ unary nodes requires
a cost of $O(1)$ time and therefore they are not much of problem. On
the contrary, each of the $O(s_i)$ binary nodes takes $O(k^2)$ time:
our goal is to improve this case.

Consider the operations on a binary node $S$ of the recursion tree
that take $O(k^2)$ time, namely: $(I)$~$e := \chooseedge(S,D)$;
$(II)$~$D' := \mkdfs(S')$, where $S' \equiv S+ \langle e \rangle$; and
$(III)$~$D'' := \mkdfs(S)$ in $G-\{e\}$. In all these operations,
while scanning the adjacency lists of vertices in $V[S]$, we visit
some edges $e'=(u,v)$, named \emph{internal}, such that $e' \not \in
S$ with $u,v \in V[S]$. These internal edges of $V[S]$ can be
visited even if they were previously visited on an ancestor node to
$S$. In Section~\ref{sub:internaledges}, we show how to amortize the
cost induced by the internal edges. In Section~\ref{sub:amortization},
we show how to amortize the cost induced by the remaining edges and
obtain a delay of $t(k) = k^2$ in our optimal
output-sensitive algorithm.

\subsection{Internal edges}
\label{sub:internaledges}

To avoid visiting the internal edges of $V[S]$ several times
throughout the recursion tree, we remove these edges from the graph
$G$ on the fly, and introduce a global data structure, which we call
\emph{parking lists}, to store them temporarily. Indeed, out of the
possible $O(n)$ incident edges in vertex $u \in V[S]$, less than~$k$
are internal: it is simply too costly removing these internal edges by
a complete scan of $\adj(u)$. Therefore we remove them as they appear
while executing \chooseedge\ and \mkdfs\ operations.

Formally, we define \emph{parking lists} as a global array $P$ of $n$
pointers to lists of edges, where $P[u]$ is the list of internal edges
discovered for $u \in V[S]$. When $u \not \in V[S]$, $P[u]$ is null.
On the implementation level, we introduce a slight modification of the
\chooseedge\ and \mkdfs\ algorithms such that, when they meet for the
first (and only) time an internal edge $e' =(u,v)$ with $u,v \in
V[S]$, they perform $\edel(e')$ and add $e'$ at the end of both
parking lists $P[u]$ and $P[v]$. We also keep a cross reference to the
occurrences of $e'$ in these two lists.

Additionally, we perform a small modification in algorithm \listtrees\
by adding a fifth step in Algorithm~\ref{alg:ListTrees2} just before
it returns to the caller. Recall that on the recursion node $S+
\langle e \rangle$ with $e = (e^-,e^+)$, we added the vertex $e^+$ to
$V[S]$. Therefore, when we return from the call, all the internal
edges incident to $e^+$ are no longer internal edges (and are the only
internal edges to change status). On this new fifth step, we scan $P[e^+]$
and for each edge $e'= (e^+,x)$ in it, we remove $e'$ from both $P[e^+]$
and $P[x]$ in $O(1)$ time using the cross reference. Note that when
the node is unary there are no internal edges incident to $e^+$, so
$P[e^+]$ is empty and the total cost is $O(1)$. When the node is
binary, there are at most $k-1$ edges in $P[e^+]$, so the cost is
$O(k)$.

\begin{lemma}
  \label{lem:internal_edges}
  The operations over internal edges done in \listtrees\ have a total
  cost of $O(s_i k)$ time.
\end{lemma}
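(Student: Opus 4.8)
The plan is to account for the total cost of all internal‑edge operations by a charging argument over the nodes of the recursion tree. The internal‑edge work consists of exactly two parts: (a) the on‑the‑fly removals — each time \chooseedge\ or \mkdfs\ meets an internal edge $e'=(u,v)$ with $u,v\in V[S]$ for the first time, it performs $\edel(e')$ and appends $e'$ to $P[u]$ and $P[v]$, all in $O(1)$; and (b) the new fifth step of \listtrees, which on returning from the call associated with $S+\langle e\rangle$ scans $P[e^+]$ and deletes each of its edges from the two parking lists it occurs in, again $O(1)$ per edge. So the whole bound reduces to counting, over the entire recursion of a fixed $v_i$, how many (node, internal‑edge) incidences are ever touched.

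**First I would** bound part (b). Each vertex $x$ enters $V[S]$ along exactly one left branch of any given root‑to‑leaf path, namely at the node $S'=S+\langle e\rangle$ with $e^+=x$; when that call returns, the fifth step empties $P[x]$ once. By the argument in the text, a unary such node has $P[x]$ empty (cost $O(1)$), and a binary such node has $|P[x]|\le k-1$ (cost $O(k)$). Since the recursion tree has $O(s_i)$ binary nodes and $O(s_i k)$ unary nodes (Corollary~\ref{cor:right-branch}), summing $O(k)$ over the binary ones and $O(1)$ over the unary ones gives $O(s_i k)$ for part (b).

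**Next I would** bound part (a), which is the crux. The key observation is that an internal edge $e'=(u,v)$ can be met and removed by \chooseedge/\mkdfs\ only while \emph{both} $u$ and $v$ already lie in $V[S]$; as soon as it is met it is deleted from $G$ and parked, so it will not be scanned again by any descendant until it is restored. It is restored precisely when the fifth step runs for whichever of $u,v$ entered $V[S]$ last — i.e. on the callback of the node that made it internal — after which the recursion at that node (and below) is finished. Consequently a given internal edge $e'$ can be ``freshly met'' at most once along each root‑to‑leaf branch, and more carefully: the work of discovering all internal edges of $V[S]$ at a node $S$ is $O(k)$ (there are fewer than $k$ of them incident to the newly added vertex, and any internal edge not incident to the newly added vertex was already parked by a strict ancestor), and this $O(k)$ charge occurs only at \emph{binary} nodes — at a unary node $S+\langle e\rangle$ no new internal edges appear, since the chosen tree edge $e$ promotes $e^+$ together with its subtree and, \isunary\ being true, $e^+$ has no back edges. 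Hence the total discovery cost is $O(k)$ per binary node, i.e. $O(s_i k)$ over all of them. Adding the matching restorations (already counted, and anyway $O(1)$ each, charged against the corresponding deletions) the total for part (a) is $O(s_i k)$.

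**The main obstacle** I expect is making the ``each internal edge is met at most a bounded number of times, and only at binary nodes'' claim fully rigorous — in particular arguing that when the recursion descends into a unary child no \mkdfs\ or \chooseedge\ call there re‑scans an internal edge that is still in $G$, because every internal edge of $V[S]$ was removed the moment it first appeared in an ancestor's scan, and new internal edges only arise when a vertex with internal incidences is added, which (by the \isunary\ analysis) never happens at a unary step. Once this is pinned down, combining the $O(s_i k)$ bounds for parts (a) and (b) yields the stated total cost of $O(s_i k)$, completing the proof.
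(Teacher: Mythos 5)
Your proposal is correct and follows essentially the same route as the paper: pair each on-the-fly $\edel$ of an internal edge with the unique $\idel$ performed in the fifth step, observe that a nonempty $P[e^+]$ can only occur when the node that chose $e$ is binary (at unary nodes \isunary\ forces $P[e^+]$ to be empty), and then charge $O(k)$ per binary node using Corollary~\ref{cor:right-branch}. The only nit is in your part~(a), where you speak of ``the work of discovering all internal edges of $V[S]$ at a node $S$'' as if it happened at $S$ itself — in fact those edges may only be scanned at descendants of $S$ — but the matching-with-restorations argument you also invoke makes the attribution rigorous, so the bound stands.
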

\begin{proof}
For each $\edel(e')$ in  \chooseedge\ or \mkdfs, there
  is exactly one matching $\idel(e')$ in \listtrees\ done in a binary
  node. Since there are $O(s_i)$ binary nodes and $O(k)$ {\idel} calls per
  binary node, the total contribution of internal edges to the
  complexity can be overall bounded by $O(s_i k)$ time.
\end{proof}

\subsection{Amortization}
\label{sub:amortization}

Let us now focus on the contribution given by the remaining edges,
which are not internal for the current $V[S]$.  Given the results in
Section~\ref{sub:internaledges}, for the rest of this section \emph{we
  can assume wlog that there are no internal edges} in $V[S]$, namely,
$E[S] = S$. We introduce two metrics that help us to parameterize the
time complexity of the operations done in binary nodes of the
recursion tree.

The first metric we introduce is helpful when analyzing the operation
\chooseedge. For connected edge sets $S$ and $X$ with $S \sqsubseteq
X$, define the \emph{cut number}~$\gamma_X$ as the number of edges in
the induced (connected) subgraph $G[X] = (V[X], E[X])$ that are in the
cutset $C(S)$ (i.e. tree edges plus back edges): $\gamma_X = |\,E[X]
\cap C(S)\,| $.

\begin{lemma}
  \label{lem:choose_improved}
  For a binary node $S$ with certificate $D$, $\chooseedge(S,D)$ takes
  $O(k + \gamma_D)$ time.
\end{lemma}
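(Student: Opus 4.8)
The plan is to exploit the standing assumption of Section~\ref{sub:internaledges} that $E[S] = S$ (no internal edges incident to $V[S]$), so that a single pass over the adjacency lists of the vertices of $V[S]$ suffices and touches only $O(k)$ ``overhead'' edges together with $O(\gamma_D)$ cutset edges of $G[D]$. First I would set up, in $O(|V[D]|) = O(k)$ time, a $0/1$ marker over $V$ flagging the vertices of $V[D]$ (using the clean-up trick of Lemma~\ref{lem:graph-ops}), and likewise mark the $|D| = k$ edges of $D$. With these two markers, any edge $(u,v)$ encountered with $u \in V[S]$ is classified in $O(1)$ time as: an edge of $S$ (skipped); an \textsf{external} edge (when $v \notin V[D]$); a tree edge of $D$ in the cutset, i.e.\ an edge of $L$ (when $(u,v) \in D$); or a \textsf{back} edge (when $v \in V[D]\setminus V[S]$ but $(u,v) \notin D$). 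This is where the improvement over Lemma~\ref{lem:choosetime_k} comes from: building the markers is $O(k)$ rather than implicitly paying $\Theta(k)$ per vertex of $V[S]$.

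Next I would scan $\adj(u)$ for each $u \in V[S]$ in turn, returning the first external edge met the moment it is met, and otherwise recording the first back edge met. By the no-internal-edges assumption, every edge of $\adj(u)$ either lies in $S$ (there are $O(k)$ such incidences overall) or is a cutset edge, and each cutset edge of $G[D]$ — each of the $|L|$ tree-cutset edges and each back edge — occurs on the adjacency list of its \emph{unique} endpoint in $V[S]$, hence is visited at most once since we never enter $\adj(v)$ for $v \notin V[S]$. Therefore, before either finding an external edge or exhausting $C(S)$, the scan visits at most $\gamma_D = |E[D]\cap C(S)|$ such edges, plus $O(k)$ edges of $S$, plus at most one external edge; the marker clean-up is $O(k)$, for a total of $O(k + \gamma_D)$. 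For correctness: if the scan finishes without meeting an external edge, then every cutset edge leads into $V[D]$, so all of $C(S)$ has been examined and case~\ref{item:back} genuinely applies; since $S$ is binary, Lemma~\ref{lemma:choose_cases} guarantees that \chooseedge\ returns a back edge, so a back edge was recorded and is returned. Case~\ref{item:tree} cannot occur at a binary node, so \treecut\ is never invoked, and hence the arbitrary scan order (not necessarily the $\hat C(S)$ order) is harmless, as cases~\ref{item:external}--\ref{item:back} admit any external resp.\ back edge.

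The step I expect to require the most care is the double-counting bound that the scan inspects at most $\gamma_D$ non-external, non-$S$ edges: one must argue cleanly that back edges and tree-cutset edges are each charged to their single endpoint in $V[S]$ and thus read only once, and that an external edge which is never actually reached (because the scan already stopped at an earlier external edge) simply contributes nothing. This is precisely what replaces the crude ``first $2k$ edges of each $\adj(u)$'' accounting of Lemma~\ref{lem:choosetime_k} by the output-sensitive $O(k+\gamma_D)$, and it is consistent since $|L|\le|D|=k$ is always subsumed in $\gamma_D$ while the remaining part of $\gamma_D$ is exactly the number of back edges the scan may be forced to sift through.
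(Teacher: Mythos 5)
Your proposal is correct and follows essentially the same route as the paper's proof: scan $\adj(u)$ for each $u\in V[S]$, stop at the first external edge, and otherwise fall back to a back edge, using the standing assumption $E[S]=S$ to argue that every edge touched is either one of the $O(k)$ edges of $S$, one of the $\gamma_D$ cutset edges into $V[D]$, or the single external edge that terminates the scan. You are somewhat more explicit than the paper on two implementation points that the paper leaves implicit — building the $O(k)$-cost temporary markers over $V[D]$ and over the edges of $D$ so that each classification is $O(1)$, and recording the first back edge during the same pass rather than locating one afterward — but the charging argument, the reliance on Section~\ref{sub:internaledges} to eliminate internal edges, and the use of Lemma~\ref{lemma:choose_cases} to rule out case~\ref{item:tree} at binary nodes are all the same as in the paper.
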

\begin{proof}
  Consider that we are in a binary node $S$ with associated
  certificate $D$, otherwise just return $\treecut(D)$ in $O(1)$. We
  examine each adjacency list $\adj(u)$ for $u \in V[S]$ until we find an
  external edge. If we scan all these lists completely without finding
  an external edge, we have visited less than $k$ edges that belong to
  $S$ and $\gamma_{D}$ edges that belong to $C(S)$ (the tree and back
  edges of $D$). When we know that no external edges exist, a back edge
  is found in $O(k)$ time. This totals $O(k + \gamma_{D})$ time.
\end{proof}

For connected edge sets $S$ and $X$ with $S \sqsubseteq X$, the second
metric is the \emph{cyclomatic number} $\nu_X$ (also known as circuit
rank, nullity, or dimension of cycle space) as the smallest number of
edges which must be removed from $G[X]$ so that no cycle remains in
it: $\nu_X = |\,E[X]\,| - |\,V[X]\,| + 1$.


Using the cyclomatic number of a certificate $D$ (ignoring the
internal edges of $V[S]$), we obtain a lower bound on the number of
$k$-subtrees that are output in the leaves descending from a node $S$ in the
recursion tree.

\begin{lemma}
  \label{lem:cyclomatic}
  Considering the cyclomatic number $\nu_D$
  and the fact that $|V[D]|=k$, we have that $G[D]$ contains at least $\nu_D$ $k$-subtrees.
\end{lemma}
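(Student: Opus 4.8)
The plan is to produce $\nu_D$ pairwise distinct $k$-subtrees inside $G[D]$, each of which moreover contains $S$, so that they are genuinely counted among the $k$-subtrees output in the subtree of the recursion rooted at $S$. The starting point is that the (non-dummy) edges of the certificate $D$ form a spanning tree of the induced subgraph $G[D]$ on the $k$ vertices of $V[D]$; hence $G[D]$ is connected and $E[D]$ consists of this spanning tree together with exactly $\nu_D = |E[D]| - |V[D]| + 1$ extra edges (chords), call them $e_1, e_2, \ldots, e_{\nu_D}$.

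First I would invoke the standard fundamental-cycle exchange of cycle-space theory: adding a chord $e_i$ to the spanning tree $D$ creates a unique cycle $C_i$ (the chord $e_i$ plus the path in $D$ between its two endpoints), and for any tree edge $f_i$ lying on $C_i$ the edge set $T_i := (D \setminus \{f_i\}) \cup \{e_i\}$ is again a spanning tree of $G[D]$, hence a $k$-subtree. It then remains to check two properties of the family $T_1, \ldots, T_{\nu_D}$: that they are pairwise distinct, and that each one contains $S$.

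Distinctness is immediate: since $e_i$ is a chord we have $e_i \notin D$, while for $j \neq i$ the set $T_j = (D \setminus \{f_j\}) \cup \{e_j\}$ is obtained from $D$ by removing one tree edge and adding only the single chord $e_j \neq e_i$, so $e_i \notin T_j$. Thus $e_i$ belongs to $T_i$ but to none of the other $T_j$ (nor to $D$ itself), so the $T_i$ are pairwise distinct — in fact, including $D$ this even yields the slightly stronger count $\nu_D + 1$. The containment $S \subseteq T_i$ is where I would use the normalization of Section~\ref{sub:amortization} that $V[S]$ carries no internal edges, i.e.\ $E[S] = S$: since $e_i \in E[D]$ but $e_i \notin D \supseteq S$, the chord $e_i$ cannot have both endpoints in $V[S]$ (it would otherwise be an internal edge of $V[S]$), so it has an endpoint $w \in V[D] \setminus V[S]$. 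Because $S$ is a connected edge set whose vertex set is exactly $V[S]$, no edge of $S$ is incident to $w$; in particular the first edge $f_i$ of the $D$-path from $w$ to the other endpoint of $e_i$ is a tree edge lying on $C_i$ with $f_i \notin S$. Choosing this $f_i$ yields $S \subseteq D \setminus \{f_i\} \subseteq T_i$, as needed.

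Assembling these observations, $T_1, \ldots, T_{\nu_D}$ are $\nu_D$ distinct $k$-subtrees of $G[D]$, each containing $S$, which proves the claim. I expect the only subtle point to be the containment step: recognizing that the ``no internal edges'' assumption is precisely what forces every fundamental cycle $C_i$ to leave $V[S]$, and hence guarantees that a swap edge $f_i$ outside $S$ can always be selected. Everything else reduces to the textbook exchange argument for spanning trees.
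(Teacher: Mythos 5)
Your proof is correct and uses the same spanning-tree exchange (fundamental-cycle swap) as the paper's: take a spanning tree of $G[D]$, and for each chord $e_i$ swap it into the tree by removing a tree edge on the induced cycle, yielding $\nu_D$ pairwise distinct spanning trees. The one thing you add that the paper's proof does not is the verification that each exchange tree $T_i$ contains $S$ — using the no-internal-edges normalization to show the fundamental cycle of $e_i$ leaves $V[S]$ and hence meets a tree edge $f_i \notin S$. That containment is exactly what justifies the way this lemma is invoked in Lemma~\ref{lem:amortization}, where the $\nu_{D'}$ $k$-subtrees must correspond to leaves descending from $S'$ in the recursion tree, so your refinement closes a small gap between the lemma as literally stated and how it is used.
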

\begin{proof}
As $G[D]$ is connected, it contains at least one spanning tree. Take any
spanning tree $T$ of $G[D]$. Note that there are $\nu_D$ edges in $G[D]$ that
do not belong to this spanning tree. For each of these $\nu_D$ edges
$e_1, \ldots, e_{\nu}$ one can construct a spanning tree $T_i$ by
adding $e_i$ to $T$ and breaking the cycle introduced. These $\nu_D$
spanning trees are all different $k$-subtrees as they include a different edge
$e_i$ and have $k$ different edges.
\end{proof}

\begin{lemma}
  \label{lem:mkdfs_improved}
  For a node $S$ with certificate $D$, computing $D' =
  \mkdfs(S)$ takes $O(k + \nu_{D'})$ time.
\end{lemma}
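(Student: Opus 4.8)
The plan is to charge the running time of \mkdfs\ to the edges its DFS actually scans, and to show that every scanned edge lies inside the induced subgraph $G[D'] = (V[D'], E[D'])$. Since $D'$ is connected and spans $V[D']$ with $|V[D']| \le k$, the definition of the cyclomatic number gives $|E[D']| = |V[D']| - 1 + \nu_{D'} = O(k + \nu_{D'})$, so any bound of the form ``$O(1)$ work per scanned edge plus $O(k)$ overhead'' yields exactly the claim. This is the same edge-counting philosophy used for \chooseedge\ in Lemma~\ref{lem:choose_improved}, just applied to the full traversal.

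First I would recall from Section~\ref{sub:certificates} that \mkdfs\ marks all of $V[S]$ as visited and then runs a standard DFS from the vertices of $\hat V(S)$, in that order, truncating as soon as $k - |V[S]|$ new vertices have been discovered (or earlier if the component of $S$ is exhausted). By the standing assumption of Section~\ref{sub:amortization} we may take $E[S] = S$, so the only edges incident to a vertex of $V[S]$ are the $O(k)$ edges of $S$ and the cutset edges $C(S)$; there are no internal edges to encounter. The key step is then: \emph{every edge examined during the traversal has both endpoints in $V[D']$}. Indeed, an edge $(x,y)$ scanned from $x$ either reaches an already-visited $y$ --- which therefore belongs to $V[S]\subseteq V[D']$ or was discovered earlier, hence lies in $V[D']$ --- or reaches an unvisited $y$, in which case DFS immediately recurses into $y$ and adds it to $V[D']$. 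The only caveat is truncation, but truncation occurs exactly upon discovering the last needed vertex and no further edge is examined afterwards, so a truncated DFS scans a subset of the edges of the non-truncated one; in particular an edge leaving $V[D']$ is never looked at. Hence the scanned edges form a subset of $E[D']$, each scanned at most twice (once from each endpoint), giving $O(|E[D']|) = O(k + \nu_{D'})$ total scanning work; if the component is small the same argument applies with $|V[D']| < k$.

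It then remains to handle the bookkeeping and the construction of the certificate representation $D' = S \cup L' \cup F'$ of Section~\ref{sub:certificates}. Resetting the visited marks is done in $O(k)$ time by keeping a list of the at most $k$ vertices touched during this call, rather than rescanning a length-$n$ array. Building $L'$ (the ordered cutlist), the root list of $F'$, and the counters $\eta$ and $\fdeg$, together with evaluating the flag \isunary, costs one $O(|D'|) = O(k)$ traversal of $D'$ once the DFS tree is in hand --- and the cutlist order comes for free because $\hat V(S)$ and every adjacency list are processed in order (this is precisely why \mkdfs\ is unambiguous, cf.\ Lemma~\ref{lem:graph-ops}). Summing the scanning cost and the $O(k)$ overhead gives $O(k + \nu_{D'})$. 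The delicate point I expect to need the most care is the truncation argument, i.e.\ making precise that the DFS halts before ever inspecting an edge that leaves $V[D']$; once that is pinned down, the rest is a routine per-edge accounting.
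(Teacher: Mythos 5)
Your proof is correct and follows the same approach as the paper: bound the work by the edges of $G[D']$, observing that $|E[D']| = |V[D']| - 1 + \nu_{D'}$ and that the DFS never inspects an edge leaving $V[D']$ because it would have recursed into the other endpoint and added it to $V[D']$. You spell out the truncation boundary, the standing assumption $E[S]=S$, and the $O(k)$ cost of assembling the certificate representation more explicitly than the paper's terse proof, but the underlying argument is the same.
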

\begin{proof}
	Consider the certificate $k$-subtree $D'$ returned by
	$\mkdfs(S)$. To calculate $D'$, the edges in $D'$ are visited
	and, in the worst case, all the $\nu_{D'}$ edges in between
	vertices of $V[D']$ in G are also visited. No other edge
	$e'=(u,v)$ with $u\in D'$ and $v \not \in D'$ is visited, as
	$v$ would belong to $D'$ since we visit vertices depth first.
\end{proof}

Recalling that the steps done on a binary node $S$ with certificate
$D$ are: $(I)$~$e := \chooseedge(S,D)$; $(II)$~$D' := \mkdfs(S')$,
where $S' \equiv S+ \langle e \rangle$; and $(III)$~$D'' := \mkdfs(S)$
in $G-\{e\}$, they take a total time of $O(k + \gamma_D + \nu_{D'} +
\nu_{D''})$. We want to pay $O(k)$ time on the recursion node $S$ and
amortize the \emph{rest} of the cost to some suitable nodes descending
from its \emph{left child} $S'$ (with certificate $D'$). To do this we
are to relate $\gamma_D$ with $\nu_{D'}$ and avoid performing
step~$(III)$ in $G-\{e\}$ by maintaining $D''$ from $D'$. We exploit
the property that the cost $O(k + \nu_{D'})$ for a node $S$ in the
recursion tree can be amortized using the following lemma:

\begin{lemma}
\label{lem:amortization}
Let $S'$ be the left child (with certificate $D'$) of a generic node
$S$ in the recursion tree. The sum of $O(\nu_{D'})$ work, over all
left children $S'$ in the recursion tree is upper bounded by
$\sum_{S'} \nu_{D'} = O( s_i k )$.
\end{lemma}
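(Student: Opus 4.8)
The plan is to read $\nu_{D'}$ as a lower bound on the number of leaves of the recursion tree hanging below the left child $S'$, and then convert the global sum into a double count controlled by the $k$-left-boundedness of the tree.

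\textbf{Step 1: each left child carries at least $\nu_{D'}$ leaves.} Fix a left child $S'$ with certificate $D'$; by the reduction of Section~\ref{sub:internaledges} I may assume $E[S'] = S'$, i.e.\ $V[S']$ has no internal edge. I would argue that at least $\nu_{D'}$ distinct $k$-subtrees \emph{containing} $S'$ are listed in the subtree of the recursion tree rooted at $S'$. Since $|V[D']| = k$, the non-dummy edges of $D'$ form a spanning tree of the connected graph $G[D']$, so exactly $\nu_{D'}$ edges of $E[D']$ lie outside $D'$; call them $e_1,\dots,e_{\nu_{D'}}$. Following the construction in the proof of Lemma~\ref{lem:cyclomatic}, but breaking each cycle at an edge \emph{outside} $S'$: because $V[S']$ has no internal edge, each $e_j$ has an endpoint in $V[D']-V[S']$, so adding $e_j$ to $D'$ closes a cycle that uses an edge incident to $V[D']-V[S']$ — hence an edge not in $S'$ — and removing that edge yields a $k$-subtree $T_j \supseteq S'$ containing $e_j$. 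The $T_j$ are pairwise distinct (only $T_j$ contains $e_j$) and all live in the residual graph at $S'$ (as $G[D']$ is a subgraph of it), so by the correctness of \listtrees{} at node $S'$ (Lemma~\ref{lem:correctness}) each $T_j$ appears at a leaf below $S'$. Thus the number of leaves below $S'$ is at least $\nu_{D'}$ (vacuously true when $|S'|=k$, where $D'=S'$ and $\nu_{D'}=0$).

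\textbf{Step 2: double counting.} Interchanging the order of summation,
\[
  \sum_{S' \text{ left child}} \nu_{D'} \;\le\; \sum_{S' \text{ left child}} \bigl|\{\text{leaves below } S'\}\bigr| \;=\; \sum_{\text{leaf } \ell} \bigl|\{\, S' \text{ left child} : \ell \text{ lies below } S' \,\}\bigr|.
\]
For a fixed leaf $\ell$, the left children having $\ell$ below them are exactly the left-child nodes on the root-to-$\ell$ path; since the recursion tree is $k$-left-bounded that path has exactly $k-1$ left branches, so the inner cardinality equals $k-1$ for every leaf. As \listtrees{} reports $s_i$ leaves, this gives $\sum_{S'} \nu_{D'} \le s_i(k-1) = O(s_i k)$, as claimed.

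\textbf{Main obstacle.} The delicate point is Step~1: certifying that the $\nu_{D'}$ $k$-subtrees obtained from the cyclomatic number of $D'$ genuinely descend from $S'$ — that is, they contain the full prefix $S'$ and survive in the residual graph at $S'$. This is precisely where the normalization $E[S']=S'$ from Section~\ref{sub:internaledges} is used: it forces every non-tree edge of $G[D']$ to touch a vertex outside $V[S']$, so the cycle it closes can always be broken away from $S'$, leaving $S'$ intact. Once Step~1 holds, Step~2 is the routine interchange of summation above.
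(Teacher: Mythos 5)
Your proof is correct and follows the same approach as the paper: charge one unit of the $\nu_{D'}$ work to each leaf descending from $S'$, and use the $k$-left-bounded property of the recursion tree to conclude that each of the $s_i$ leaves absorbs at most $k-1$ such charges. Your Step~1 usefully makes explicit what the paper leaves implicit in citing Lemma~\ref{lem:cyclomatic} — namely that, under the normalization $E[S']=S'$, each of the $\nu_{D'}$ non-tree edges of $G[D']$ closes a cycle that can be broken at an edge outside $S'$, so the resulting $k$-subtrees genuinely contain the prefix $S'$ and hence appear as leaves below it.
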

\begin{proof}
  By Lemma~\ref{lem:cyclomatic}, $S'$ has at least $\nu_{D'}$
  descending leaves. Charge $O(1)$ to each leaf descending from $S'$
  in the recursion tree. Since $S'$ is a left child and we know that
  the recursion tree is $k$-left-bounded by
  Lemma~\ref{lem:kleftbounded}, each of the $s_i$ leaves can be
  charged at most $k$ times, so $\sum_{S'} \nu_{D'} = O(s_i k)$ for
  all such $S'$.
\end{proof}

We now show how to amortize the $O(k + \gamma_D)$ cost of
step~$(I)$. Let us define $\comb(S')$ for a left child $S'$ in the
recursion tree as its maximal path to the right (its right spine) and
the left child of each node in such a path. Then, $|\comb(S')|$ is the
number of such left children.

\begin{lemma}
\label{lem:comb}
On a node $S$ in the recursion tree, the cost of
$\chooseedge(S,D)$ is $O(k + \gamma_D) = O(k + \nu_{D'} + |\comb(S')|)$.
\end{lemma}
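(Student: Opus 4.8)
By Lemma~\ref{lem:choose_improved} we already have that $\chooseedge(S,D)$ runs in $O(k+\gamma_D)$ time (and for a unary node $S$ it runs in $O(1)$ time, since it merely returns $\treecut(D)$), so the plan is to prove the inequality $\gamma_D = O(k + \nu_{D'} + |\comb(S')|)$ for a binary node $S$, where $e = \chooseedge(S,D)$, $S' = S + \langle e \rangle$ is the left child of $S$ and $D' = \mkdfs(S')$. I would split the $\gamma_D$ edges of $E[D] \cap C(S)$ into the tree edges of the certificate that lie in the cutset and the remaining non-tree (``back'') edges, bound the former by $O(k)$ outright, and charge the latter against $\nu_{D'}$ and $\comb(S')$.

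First, the tree edges of $D$ lying in $C(S)$ are exactly the edges of $L$, and $|L|$ is the number of trees of the forest $F$, hence $|L| \le |V[D]| - |V[S]| \le k$. The back edges are cut edges in $E[D]$ not belonging to $D$, hence a subset of the non-tree edges of the spanning tree $D$ of $G[D]$, of which there are exactly $\nu_D$; so there are at most $\nu_D$ back edges and $\gamma_D \le k + \nu_D$. It thus suffices to show that $\nu_D \le \nu_{D'} + O(k + |\comb(S')|)$.

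To compare $\nu_D$ with $\nu_{D'}$ I would analyse the two cases of \chooseedge\ separately, since they produce $D'$ very differently. If $e$ is an \emph{external} edge, a direct inspection of the truncated multi-source DFS shows that $\mkdfs(S')$ discovers exactly the first $k - |V[S]| - 1$ of the vertices discovered by $\mkdfs(S)$ and then stops --- the new vertex $e^+$ is never reached by $\mkdfs(S)$ precisely because $e$ is external, so pre-marking it visited is irrelevant, and by the time the extra source $e^+$ would be processed the quota of $k$ vertices is already met. Hence $V[D']$ is obtained from $V[D]$ by deleting the last vertex $x$ discovered by $\mkdfs(S)$ and inserting $e^+$, so $\nu_D - \nu_{D'} = |E[D]| - |E[D']|$ is at most the degree of $x$ inside $G[D]$, i.e.\ at most $k-1$. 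If $e$ is a \emph{back} edge, then $S$ has no external edge at all, so $C(S) \subseteq E[D]$; here $V[D]$ and $V[D']$ may differ in a whole region (the part of $e^+$'s subtree in $D$ that $\mkdfs(S')$ now truncates away, versus the vertices it reaches instead while rerouting around the already-visited $e^+$), and this is exactly where $\comb(S')$ enters: I would argue that each such ``lost'' vertex of $V[D]$ reappears at a distinct node along the right spine of $S'$, so the discrepancy --- and hence the excess of $\nu_D$ over $\nu_{D'}$ --- injects into the left children counted by $|\comb(S')|$. Combining the two cases yields $\gamma_D \le k + \nu_D = O(k + \nu_{D'} + |\comb(S')|)$.

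The easy half is the external-edge case, which needs only the bookkeeping of the two DFS runs agreeing up to their truncation point. The hard part will be the back-edge case: making the correspondence between the $V[D]$-versus-$V[D']$ discrepancy and the left children appearing in $\comb(S')$ precise, which requires tracking step by step how the truncated multi-source DFS behind $\mkdfs(S')$ reroutes once $e^+$ has been absorbed into the prefix, and showing that each vertex of $V[D]$ that drops out is re-discovered at a distinct node on the right spine of $S'$.
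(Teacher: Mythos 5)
Your proposal agrees with the paper in invoking Lemma~\ref{lem:choose_improved} to reduce the claim to $\gamma_D = O(k + \nu_{D'} + |\comb(S')|)$, but then takes a genuinely different route that opens a gap. The paper's proof is a direct per-edge classification: it looks at each of the $\gamma_D$ edges $e' \in E[D] \cap C(S)$ and observes that after the left branch to $S' = S + \langle e \rangle$ each such $e'$ either becomes a tree edge of $D'$ (at most $k$ of these), a back edge in $G[D']$ (at most $\nu_{D'}$), or an external edge for $D'$, in which case $e'$ remains in $C(S')$ and will be the chosen edge at some node of $\comb(S')$, so it can be charged to $|\comb(S')|$. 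You instead first coarsen to $\gamma_D \le k + \nu_D$ and then try to bound $\nu_D - \nu_{D'}$. This coarsening already loses information ($\nu_D$ counts \emph{all} chords of $G[D]$, not just those in the cutset $C(S)$), but more seriously it forces you into a global comparison of two certificates that the paper never needs.

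The gap is in your back-edge case. There $\nu_D - \nu_{D'} = |E[D]| - |E[D']|$ is a difference of induced-edge counts between two $k$-vertex sets that, as you note, can differ in a ``whole region.'' Your sketch aims to put the vertices of $V[D]\setminus V[D']$ in bijection with nodes on the right spine of $S'$, but even if that bijection were established, it controls only the number of differing \emph{vertices} (which is at most $k$ anyway), not the number of differing \emph{edges}: replacing a single vertex can shift $|E[\cdot]|$ by up to $k-1$, so a vertex-set symmetric difference of size $\delta$ can force an edge gap of order $\delta k$, giving at best $O(k\,|\comb(S')|)$ rather than the required $O(k + \nu_{D'} + |\comb(S')|)$. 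Your external-edge case is correct (there $V[D']$ is $V[D]$ with one vertex swapped, so $\nu_D - \nu_{D'} < k$), but that case is already immediate under the paper's classification, so the extra machinery buys nothing. To close the argument you would need to replace the vertex-reappearance heuristic with an edge-level accounting --- which, once done, essentially reproduces the paper's tri-partition of $E[D]\cap C(S)$ and renders the detour through $\nu_D$ superfluous.
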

\begin{proof}
  Consider the set $E'$ of $\gamma_D$ edges in $E[D] \cap C(S)$. Take
  $D'$, which is obtained from $S' = S + \langle e \rangle$, and
  classify the edges in $E'$ accordingly. Given $e' \in E'$, one of
  three possible situations may arise: either $e'$ becomes a tree edge
  part of $D'$ (and so it contributes to the term $k$), or $e'$
  becomes a back edge in $G[D']$ (and so it contributes to the term
  $\nu_{D'}$), or $e'$ becomes an external edge for $D'$. In the
  latter case, $e'$ will be chosen in one of the subsequent recursive
  calls, specifically one in $\comb(S')$ since $e'$ is still part of
  $C(S')$ and will surely give rise to another $k$-subtree in a
  descending leaf of $\comb(S')$.
\end{proof}

While the $O(\nu_{D'})$ cost over the leaves of $S'$ can be amortized
by Lemma~\ref{lem:mkdfs_improved}, we need to show how to amortize the
cost of $|\comb(S')|$ using the following:

\begin{lemma}
\label{lem:comb_amortize}
$\sum_{S'} |\comb(S')| = O( s_i k ) $ over all left children
$S'$ in the recursion. 
\end{lemma}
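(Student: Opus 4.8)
The plan is to mimic the amortization argument used for Lemma~\ref{lem:amortization}, but charging to a larger set of leaves. Recall $\comb(S')$ consists of the nodes on the right spine of $S'$ together with the left child hanging off each such node. Each of these left children is itself a left child in the recursion tree, and — crucially — it leads (as a left branch) to at least one descending leaf, hence to at least one distinct $k$-subtree. So the first step is to observe that $|\comb(S')|$ is bounded by the number of \emph{distinct} leaves reachable from the left children along the right spine of $S'$. The right spine consists of right branches only, so all these leaves are descendants of $S'$ and they are pairwise disjoint across the different nodes on the spine (because right branching splits off disjoint sets of solutions). Thus for a fixed left child $S'$ we get $|\comb(S')| = O(\ell(S'))$ where $\ell(S')$ is the number of leaves descending from $S'$.

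Next I would sum over all left children $S'$ in the recursion tree. Since the recursion tree is $k$-left-bounded (Lemma~\ref{lem:kleftbounded}), each root-to-leaf path contains exactly $k-1$ left branches, so every one of the $s_i$ leaves descends from at most $k-1$ distinct left children $S'$. Therefore each leaf is counted $O(k)$ times in $\sum_{S'} \ell(S')$, giving $\sum_{S'} |\comb(S')| = O\!\big(\sum_{S'} \ell(S')\big) = O(s_i k)$, which is exactly the claimed bound.

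The step I expect to be the main obstacle is making precise the claim that the left children counted inside $\comb(S')$, across all the $S'$, are genuinely charged to distinct leaves — i.e.\ that the charging scheme does not double-count. One has to be careful that a given leaf is not simultaneously blamed for a node in $\comb(S')$ \emph{and} for a node in $\comb(S'')$ with $S'' \neq S'$: this is where $k$-left-boundedness does the real work, since the left children on $\comb(S')$ are precisely those left branches whose lowest common ancestor structure with a leaf is controlled, and each leaf has only $k-1$ left-branch ancestors. Once this bookkeeping is pinned down, the rest is the same leaf-charging argument as in Lemma~\ref{lem:amortization}, and combining it with Lemma~\ref{lem:comb} and Lemma~\ref{lem:amortization} will yield the desired $O(k + \nu_{D'} + |\comb(S')|)$ amortized cost per binary node summing to $O(s_i k)$ overall.
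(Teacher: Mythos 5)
Your proof is correct but uses a genuinely different argument from the paper's. The paper observes that the right spines rooted at left children partition the nodes of the recursion tree, so each left child $S^*$ is counted in $\comb(S')$ for at most one $S'$; hence $\sum_{S'}|\comb(S')|$ is at most the total number of left children, which is $O(s_i k)$ by Corollary~\ref{cor:right-branch}. You instead bound each $|\comb(S')|$ by the number $\ell(S')$ of leaves in the subtree rooted at $S'$: the left children hanging off the right spine of $S'$ are pairwise incomparable, so their leaf sets are disjoint, and each contains at least one leaf by the algorithm's precondition. You then charge each leaf at most $k-1$ times across all left children $S'$ via $k$-left-boundedness (Lemma~\ref{lem:kleftbounded}). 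What the paper's route buys is brevity --- it reduces the claim to a fact already established; what your route buys is self-containedness and uniformity, since it mirrors the leaf-charging scheme of Lemma~\ref{lem:amortization} and lets the two amortization lemmas read as a pair. One remark: the double-counting worry in your last paragraph is unfounded. Once you have $|\comb(S')|\le\ell(S')$ and that each leaf has at most $k-1$ left-child ancestors, the bound $\sum_{S'}\ell(S')\le s_i(k-1)$ follows immediately; the sum $\sum_{S'}\ell(S')$ counts each leaf once per ancestor left child and $k$-left-boundedness caps that multiplicity, so no further bookkeeping is needed.
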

\begin{proof}
	Given a left child $S^*$ in the recursion tree, there is
	always a unique node $S'$ that is a left child such that its
	$\comb(S')$ contains $S^*$. Hence, $\sum_{S'} |\comb(S')|$ is
	upper bounded by the number of left children in the recursion
	tree, which is $O(s_i k)$.
\end{proof}

At this point we are left with the cost of computing the two
$\mkdfs$'s. Note that the cost of step~$(II)$ is $O(k + \nu_{D'})$,
and so is already expressed in terms of the cyclomatic number of its
left child, $\nu_{D'}$ (so we use Lemma~\ref{lem:cyclomatic}). The
cost of step~$(III)$ is $O(k + \nu_{D''})$, expressed with the
cyclomatic number of the certificate of its \emph{right} child. This
cost is not as easy to amortize since, when the edge $e$ returned by
\chooseedge\ is a back edge, $D'$ of node $S+\langle e \rangle$ can
change \emph{heavily} causing $D'$ to have just $S$ in common with
$D''$. This shows that $\nu_{D''}$ and $\nu_{D'}$ are not easily
related.

Nevertheless, note that $D$ and $D''$ are the same certificate since
we only remove from $G$ an edge $e \not \in D$. The only thing that
can change by removing edge $e=(e^-,e^+)$ is that the right child of
node $S'$ is no longer binary (i.e.\mbox{} we removed the last back
edge). The question is if we can check quickly whether it is unary in
$O(k)$ time: observe that $|\adj(e^-)|$ is no longer the same,
invalidating the flag $\isunary$ (item~\ref{item:isunary} of
Section~\ref{sub:certificates}). Our idea is the following: instead of
recomputing the certificate $D''$ in $O(k+\nu_{D''})$ time, we update
the $\isunary$ flag in just $O(k)$ time. We thus introduce a new operation
$D'' = \unary(D)$, a valid replacement for $D'' = \mkdfs(D)$ in $G -
\{e\}$: it maintains the certificate while recomputing the flag
$\isunary$ in $O(k)$ time.

\begin{lemma}
\label{lem:unary}
Operation $\unary(D)$ takes $O(k)$ time and correctly computes
$D''$.
\end{lemma}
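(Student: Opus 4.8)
The plan is to prove this in two movements: first, that deleting $e$ from $G$ leaves the certificate $D$ completely unchanged both as a $k$-subtree and in every piece of its internal bookkeeping \emph{except} possibly the flag \isunary; and second, that \isunary\ can be recomputed from the cached data in $O(k)$ time. For the first movement I would observe that we are in a binary node, so by Lemma~\ref{lemma:choose_cases} the edge $e=(e^-,e^+)$ returned by \chooseedge\ falls under case~\ref{item:external} or~\ref{item:back}, hence $e\notin D$ (it is not a tree edge of $D$). Now run $\mkdfs(S)$ conceptually in $G-\{e\}$: during the multi-source truncated DFS from $\hat V(S)$ the edge $e$ is never used to discover a vertex — if $e$ is a back edge then $e^+\in V[D]-V[S]$ has already been visited by the time $e$ is scanned in $\adj(e^-)$, and if $e$ is external then the $k-k'$ new vertices have already all been discovered before $e$ is reached. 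Consequently deleting $e$ changes neither which vertices enter the DFS tree, nor the order in which they are discovered, nor the parent chosen for each; so $\mkdfs(S)$ in $G-\{e\}$ returns exactly the same edge set $D$ with $|D''|=k$. Moreover, removing $e$ only deletes one entry from $\adj(e^-)$ (and from $\adj(e^+)$) without reordering the survivors, so the cutlist order $\hat C(S)$ is preserved, and with it the ordered list $L$, the root order of $F$, and the cached counters $\eta[\cdot]$, $\fdeg(\cdot)$ and the sibling pointers — none of these quantities depends on a non-tree cutset edge.

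The second movement isolates what can change. The flag \isunary\ is true iff $|\adj(u_i)| = \eta[u_i] + \sigma(u_i)$ for every $u_i\in\hat V(S)$, i.e.\ iff every cutset edge of $S$ is a tree edge of $D$. Deleting $e=(e^-,e^+)$ affects, among the vertices of $V[S]$, only $|\adj(e^-)|$, which decreases by one; on the other hand $\eta[e^-]$ is unchanged since $e\notin L$, and $\sigma(e^-)$ is unchanged since $e^+\notin V[S]$, so $e$ is not an internal edge (recall that under the assumption of Section~\ref{sub:amortization} we have $E[S]=S$). Hence the only per-vertex identity that can flip is the one at $e^-$, and \isunary\ can only change from \false\ to \true, precisely in the case — highlighted before the lemma — in which $e$ was the last non-tree edge of the cutset of $S$.

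Finally, to implement $\unary(D)$ I would simply scan the $k'\le k$ vertices $u_1,\dots,u_{k'}$ of $\hat V(S)$ and test $|\adj(u_i)| = \eta[u_i] + \sigma(u_i)$ using the cached values (with $E[S]=S$, $\sigma(u_i)$ is just the number of $S$-edges incident to $u_i$, obtainable in $O(k)$ total time by one pass over $S$ if it is not already stored); then set \isunary\ to the conjunction of these tests and leave $S$, $L$, $F$ and all counters exactly as they are. By the two movements above, the structure produced is literally the certificate $\mkdfs(S)$ would return in $G-\{e\}$, and the whole operation costs $O(k)$. I expect the delicate step to be the insensitivity argument for $\mkdfs$: one must be careful to rule out that removing $e$ shifts the discovery order of the later edges of $\adj(e^-)$ in a way that would reshuffle $\hat C(S)$ — and the point that makes this go through is exactly that $e$ is a back or external edge, so it is never a tree edge and never the reason a vertex enters $D$, hence its removal perturbs neither the DFS tree nor the relative order of the remaining cutset edges.
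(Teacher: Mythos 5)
Your proof is correct and takes essentially the same approach as the paper's: the paper likewise asserts that the only component of $D''$ that can differ from $D$ is the flag \isunary\ (the DFS tree, $L$, $F$, $\eta$, $\fdeg$ being insensitive to the deletion of a non-tree edge), and recomputes it in $O(k)$ time under the parking-list invariant $E[S]=S$. The one divergence is an implementation detail --- you evaluate the identity $|\adj(u_i)| = \eta[u_i] + \sigma(u_i)$ directly from cached counters, whereas the paper rescans the adjacency lists of $V[S]$, halting at the first external or back edge and bounding a full scan by the $O(k)$ entries contributed by $S$ and $D$ --- but these are interchangeable realizations of the same $O(k)$ check.
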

\begin{proof}
  Consider the two certificates $D$ and $D''$: we want to compute
  directly $D''$ from $D$. Note that the only difference in the two
  certificate is the flag \isunary. Hence, it suffices to show how to
  recompute \isunary\ from scratch in $O(k)$ time. This is
  equivalent to checking if the cutset $C(S)$ in $G - \{e\}$
  contains only tree edges from $D$. It suffices to examine the each
  adjacency list $\adj(u)$ for $u \in S$, until either we find an
  external or a back edge (so \isunary\ is false) or we scan all these
  lists (so \isunary\ is true): in the latter case, there are at most
  $2k$ edges in the adjacency lists of the vertices in $V[S]$ as these
  are the edges in $S$ or in $D$. All other edges are external or back
  edges.
\end{proof}

Since there is no modification or impact on unary nodes of the
recursion tree, we finalize the analysis.

\begin{lemma}
\label{lem:binary_nodes_parameterized}
The cost of $\listtrees(S,D)$ on a binary node $S$ is
  $O(k + \nu_{D'} + |\comb(S')|)$.
\end{lemma}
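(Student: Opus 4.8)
The plan is to simply account for all the work performed in a single call to $\listtrees(S,D)$ at a binary node, \emph{excluding} the two recursive calls it spawns, and to bound each piece using the lemmas of this section. Recall from Section~\ref{sub:amortization} that we may assume wlog that $E[S] = S$, i.e.\ there are no internal edges in $V[S]$ (their removal cost has already been charged separately via Lemma~\ref{lem:internal_edges}); this lets us use Lemmas~\ref{lem:choose_improved}, \ref{lem:comb} and~\ref{lem:mkdfs_improved} as stated. First I would enumerate the operations executed at a binary node in the (modified) Algorithm~\ref{alg:ListTrees2}: (I)~$e := \chooseedge(S,D)$; (II)~$D' := \mkdfs(S')$ with $S' = S + \langle e \rangle$; (III)~$\edel(e)$, then $D'' := \unary(D)$ (our replacement for $\mkdfs(S)$ in $G-\{e\}$), then $\idel(e)$; and (IV)~the fifth step that scans the parking list $P[e^+]$ after the first recursive call returns.

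The second step is to bound each piece. For (I), Lemma~\ref{lem:comb} gives $\chooseedge(S,D) = O(k + \gamma_D) = O(k + \nu_{D'} + |\comb(S')|)$. For (II), Lemma~\ref{lem:mkdfs_improved} gives $O(k + \nu_{D'})$. For (III), $\edel$ and $\idel$ cost $O(1)$ by Lemma~\ref{lem:graph-ops}, and $\unary(D)$ costs $O(k)$ by Lemma~\ref{lem:unary}. For (IV), on a binary node there are at most $k-1$ edges in $P[e^+]$ (as observed in Section~\ref{sub:internaledges}), each removed from $P[e^+]$ and $P[x]$ in $O(1)$ via the cross reference, so this step is $O(k)$. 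Summing, the total local cost is $O(k + \nu_{D'} + |\comb(S')|)$, as claimed.

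The step I expect to require the most care is justifying (III), namely that replacing the recompute $D'' := \mkdfs(S)$ in $G-\{e\}$ by $D'' := \unary(D)$ is legitimate both for correctness and for this bound. Here the key observation is that since $e \notin D$ (by Lemma~\ref{lemma:choose_cases} a binary node's chosen edge is external or back, hence not a tree edge of the certificate), deleting $e$ from $G$ leaves $D$ unchanged \emph{as an edge set}, and by Lemma~\ref{lem:unary} the only field of the certificate that can change is the \isunary\ flag, which $\unary$ recomputes correctly in $O(k)$ time; in particular $S \sqsubseteq D''$ still holds, so the second recursive call $\listtrees(S,D'')$ receives a valid certificate and the invariant of Algorithm~\ref{alg:ListTrees2} is preserved exactly as in the ``Second'' part of the proof of Lemma~\ref{fig:ksubtreeinvariant}. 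Once this is in place, the remaining bookkeeping is routine: collect the $O(k)$-terms, note $\nu_{D'}$ appears in both (I) and (II), and conclude $O(k + \nu_{D'} + |\comb(S')|)$.
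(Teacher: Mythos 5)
Your proof is correct and follows essentially the same breakdown as the paper's: bound $\chooseedge$ by Lemma~\ref{lem:comb}, bound $\mkdfs(S')$ by Lemma~\ref{lem:mkdfs_improved}, and bound the replacement $\unary(D)$ by Lemma~\ref{lem:unary}, then sum. Your accounting is in fact a bit more thorough than the paper's (you also explicitly include the $O(1)$ $\edel/\idel$ calls and the $O(k)$ parking-list scan, and you re-justify why $\unary$ is a valid substitute for $\mkdfs$ in $G-\{e\}$), but the core argument and the lemmas invoked are the same.
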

\begin{proof}
	Consider the operations done in binary nodes. By Lemmas
	\ref{lem:comb}, \ref{lem:mkdfs_improved} and \ref{lem:unary}
	we have the following breakdown of the costs:
\begin{itemize}
  \item Operation $e := \chooseedge(S,D)$ takes $O(k + \nu_{D'} +
	  |\comb(S')|)$;
  \item Operation $D':= \mkdfs(D)$ takes $O(k + \nu_{D'})$;
  \item Operation $D'' := \unary(D)$ takes $O(k)$.
\end{itemize}

This sums up to $O(k + \nu_{D'} + |\comb(S')|)$ time per binary
node.
\end{proof}

\begin{lemma}
\label{lem:amortized_listtrees}
The algorithm \listtrees takes $O(s_i k)$ time and $O(m k)$ space.
\end{lemma}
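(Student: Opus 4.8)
The plan is to aggregate the per-node costs already established, splitting the nodes of the recursion tree of \listtrees{} into unary and binary nodes and summing separately. For unary nodes, I recall from Lemma~\ref{lem:listtrees2} (and the discussion in Section~\ref{sec:output-sensitive}) that each such node costs $O(1)$, and from Corollary~\ref{cor:right-branch} that there are $O(s_i k)$ of them; hence their total contribution is $O(s_i k)$. To this I must add the $O(s_i k)$ cost of the internal-edge bookkeeping (the fifth step of \listtrees, the \edel/\idel{} pairs into and out of the parking lists), which is exactly the content of Lemma~\ref{lem:internal_edges}. So the only thing left to control is the cumulative cost over binary nodes.

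For the binary nodes, I invoke Lemma~\ref{lem:binary_nodes_parameterized}, which gives a per-node bound of $O(k + \nu_{D'} + |\comb(S')|)$, where $S'$ is the left child of the binary node and $D'$ its certificate. Summing over all binary nodes $S$, the $O(k)$ term contributes $O(s_i k)$ because there are only $O(s_i)$ binary nodes by Corollary~\ref{cor:right-branch}. The $\nu_{D'}$ term is summed over the left children $S'$ of binary nodes — a subset of all left children — so $\sum \nu_{D'} = O(s_i k)$ by Lemma~\ref{lem:amortization}. Likewise the $|\comb(S')|$ term is summed over these left children and is $O(s_i k)$ by Lemma~\ref{lem:comb_amortize}. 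Adding the three pieces gives $O(s_i k)$ total over all binary nodes.

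Combining unary nodes, internal-edge operations, and binary nodes, the total running time of \listtrees{} is $O(s_i k)$. For the space bound, nothing has changed relative to Lemma~\ref{lem:listtrees2}: the recursion has depth at most $m$, each node on the current recursion path stores a copy of $S$ and the certificate $D$ in $O(k)$ words, and the graph is modified and restored incrementally, so the working space is $O(mk)$; the global parking lists $P$ store each edge of $G$ at most twice and hence add only $O(m)$, which is absorbed. This yields the claimed $O(s_i k)$ time and $O(mk)$ space.

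The step I expect to require the most care is making sure the bookkeeping accounting is not double-charged: the $O(k)$ fixed cost, the $\nu_{D'}$ cost, and the $|\comb(S')|$ cost of a binary node are all three charged, via the amortization lemmas, against leaves descending from the \emph{left} subtree of that node, and one must check that the three chargings use disjoint budget (they do: Lemma~\ref{lem:amortization} charges $O(1)$ per descending leaf counted up to $k$ times because the tree is $k$-left-bounded, Lemma~\ref{lem:comb_amortize} charges against the total number of left children, and the $O(k)$ term is charged against the $s_i$ binary nodes directly). Everything else is a straightforward summation against Corollary~\ref{cor:right-branch}.
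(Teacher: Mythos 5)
Your proof takes essentially the same approach as the paper: split the recursion tree into unary and binary nodes, bound the unary nodes via Corollary~\ref{cor:right-branch} and Lemma~\ref{lem:listtrees2}, and bound the binary nodes by summing the per-node cost from Lemma~\ref{lem:binary_nodes_parameterized} and amortizing the $\nu_{D'}$ and $|\comb(S')|$ terms via Lemmas~\ref{lem:amortization} and~\ref{lem:comb_amortize}. Your version is in fact a bit more complete than the paper's: you explicitly fold in the parking-list bookkeeping cost from Lemma~\ref{lem:internal_edges} and justify the $O(mk)$ space bound, both of which the paper's proof leaves implicit.
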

\begin{proof}
	By Lemma~\ref{lem:binary_nodes_parameterized} we have a cost
	of $O(k + \nu_{D'} + |\comb(S')|)$ per binary node on the
	recursion tree. Given that there are $O(s_i)$ binary nodes by
	Corollary~\ref{cor:right-branch} and given the amortization of
	$\nu_{D'}$ and $|\comb(S')|$ in each left children $S'$ over
	the leafs of the recursion tree in
	Lemmas~\ref{lem:amortization} and \ref{lem:comb_amortize}, the
	sums of $\sum_{S}{k} + \sum_{S'}{\nu_{D'}} +
	\sum_{S'}{|\comb(S')|} = s_i k$.  Additionally, by
	Lemma~\ref{lem:listtrees2}, the $O(s_i k)$ unary nodes give a
	total contribution $O(s_i k)$ to the cost. This is sums up to
	$O(s_i k)$ total time.
\end{proof}

Note that our data structures are lists and array, so it is not
difficult to replace them with \emph{persistent} arrays and lists, a
classical trick in data structures. As a result, we just need $O(1)$
space per pending recursive call, plus the space of the parking lists,
which makes a total of $O(m)$ space.  

\begin{theorem}
\label{thm:final}
Algorithm~\ref{alg:ListAllTrees2} takes a total of $O(s k)$ time,
being therefore optimal, and $O(m)$ space.
\end{theorem}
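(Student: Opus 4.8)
The plan is to assemble the statement from the per-vertex bound of Lemma~\ref{lem:amortized_listtrees}, a short accounting of the top-level loop of Algorithm~\ref{alg:ListAllTrees2}, and the remark that $\Omega(sk)$ time is unavoidable. Correctness comes essentially for free: by Lemma~\ref{fig:ksubtreeinvariant} the invariant $S \sqsubseteq D$ is maintained, so \listtrees\ is only entered at recursion nodes from which at least one not-yet-reported $k$-subtree descends, and the binary-partition argument of Lemma~\ref{lem:correctness} --- untouched by the certificate machinery --- shows that for each $v_i$ every $k$-subtree containing $v_i$ but no $v_j$ with $j<i$ is listed exactly once; summing over $i$ lists each $k$-subtree of $G$ once.

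For the time bound I would split the work of Algorithm~\ref{alg:ListAllTrees2} into the calls to \listtrees\ and the top-level operations performed for each $v_i$ (the call $D:=\mkdfs(S)$, the deletion loop when the component is small, and $\vdel(v_i)$). The \listtrees\ part costs $O(s_i k)$ by Lemma~\ref{lem:amortized_listtrees}, so $\sum_i O(s_i k) = O(sk)$ since $\sum_i s_i = s$. For the top-level part, if the component of $v_i$ in the residual graph has fewer than $k$ vertices, then $\mkdfs(S)$ only explores that component and the following loop deletes its vertices and edges; charging this cost to the deleted edges, and using that each edge of $G$ is deleted at most once, bounds the total over all such $v_i$ by $O(m)$. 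If the component has at least $k$ vertices, then $s_i \ge 1$ (a spanning tree of any $k$-vertex connected subgraph through $v_i$ is a valid $k$-subtree), so there are at most $s$ such indices; for each, $\mkdfs(S)$ costs $O(k + \nu_D)$ by Lemma~\ref{lem:mkdfs_improved} (the ``no internal edges'' hypothesis holds trivially since $V[S] = \{v_i\}$), and by Lemma~\ref{lem:cyclomatic} the $\nu_D$ $k$-subtrees it certifies all live in $G[D]$, which contains $v_i$ and none of $v_1, \dots, v_{i-1}$, hence $\nu_D \le s_i$. Thus $\sum_i (k + \nu_D) \le sk + \sum_i s_i = O(sk)$, while the calls $\vdel(v_i)$ cost $\sum_i O(\deg(v_i)) = O(m)$ in total. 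Since every edge of $G$ lies in some $k$-subtree and a $k$-subtree has $k-1$ edges, $m \le s(k-1) \le sk$, so all these terms are $O(sk)$ and the total running time is $O(sk)$.

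For space I would recall that the only source of the $O(mk)$ bound in Lemma~\ref{lem:listtrees2} was storing a fresh $O(k)$-word copy of the certificate at each of the $O(m)$ pending recursion levels. Making the doubly-linked lists and arrays of Section~\ref{sub:certificates} persistent --- a standard transformation, legitimate because \treecut, \promote\ and \unary\ touch only $O(1)$, $O(1)$ and $O(k)$ cells respectively --- reduces the extra space charged to a pending call to $O(1)$, hence $O(m)$ over the whole recursion; the parking lists of Section~\ref{sub:internaledges} only hold edges removed from the adjacency lists of $G$, so together with $G$ they still occupy $O(m)$. This yields $O(m)$ space. Optimality then follows: writing the output explicitly requires $\Omega(s(k-1)) = \Omega(sk)$ time (each of the $s$ trees has $k-1$ edges) and reading $G$ requires $\Omega(m)$ time, and since $m \le sk$ the lower bound is $\Omega(sk)$, matched by the algorithm.

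The step I expect to be the main obstacle is not any single hard argument but making the top-level accounting watertight: one must check that Lemma~\ref{lem:mkdfs_improved}'s improved $O(k + \nu_D)$ bound really does apply at the root of each recursion, that the small-component $\mkdfs$ and deletion costs can be amortized against edges that are removed once and for all (rather than edges that are removed and then restored inside \listtrees), and that the persistence transformation does not disturb the $O(1)$-per-operation accounting underlying Lemma~\ref{lem:amortized_listtrees}. Once $\nu_D \le s_i$ is secured via Lemma~\ref{lem:cyclomatic}, the rest is summation.
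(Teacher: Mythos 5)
Your proof is correct and follows essentially the same route as the paper's: small residual components are deleted once for all at $O(m)$ total cost, the body of the work is handed to Lemma~\ref{lem:amortized_listtrees} for the $O(s_i k)$ per-vertex bound, and persistence reduces the space from $O(mk)$ to $O(m)$. Where you go beyond the paper is in the top-level accounting for vertices whose residual component has at least $k$ vertices: the paper's proof of Theorem~\ref{thm:final} just says ``we can proceed as in the analysis of Algorithm~\ref{alg:ListAllTrees2}'' and never explains why the root call $D:=\mkdfs(S)$ does not cost $O(k^2)$ per vertex (which would give $O(sk^2)$, not $O(sk)$). Your observation that this call costs $O(k+\nu_D)$ by Lemma~\ref{lem:mkdfs_improved} (legitimately, since $V[S]=\{v_i\}$ forces no internal edges), together with the bound $\nu_D\le s_i$ via Lemma~\ref{lem:cyclomatic} (every spanning tree of $G[D]$ contains $v_i$ and no earlier $v_j$), closes this gap cleanly. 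That is a genuine, if small, improvement in rigor over the paper's terse writeup; the rest of the argument --- $m\le sk$, $\vdel$ summing to $O(m)$, persistence for space, and the $\Omega(sk)$ output lower bound --- matches the paper.
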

\begin{proof}
Algorithm~\ref{alg:ListAllTrees2} deletes the connected component of
size smaller than $k$ and containing $v_i$ in the residual graph. So
this cost sums up to $O(m)$. Otherwise, we can proceed as in the
analysis of Algorithm~\ref{alg:ListAllTrees2}. Note that our data
structures are lists and array, so it is not difficult to replace
them with \emph{persistent} arrays and lists, a classical trick in
data structures. As a result, we just need $O(1)$ space per pending
recursive call, plus the space of the parking lists, which makes a
total of $O(m)$ space.
\end{proof}

We finally show how to obtain an
efficient delay. We exploit the following property on the recursion
tree, which allows to associate a unique leaf with an internal node
\emph{before} exploring the subtree of that node (recall that we are
in a recursion tree). Note that only the rightmost leaf descending
from the root is not associated in this way, but we can easily handle
this special case.

\begin{lemma}
  \label{lemma:rightmost_leaf}
  For a binary node $S$ in the recursion tree, $\listtrees(S,D)$
  outputs the $k$-subtree $D$ in the rightmost leaf
  descending from its left child $S'$.
\end{lemma}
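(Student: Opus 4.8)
The plan is to follow, from $S'$, the unique root-to-leaf path of the recursion tree that descends into the right child at every binary node and into the unique (left) child at every unary node, and to show that the certificate, regarded merely as a set of $k$ edges, is the same at every node of this path. The recursion tree being finite, the path ends at a leaf $\ell$, which is by construction the rightmost leaf descending from $S'$. At $\ell$ we have $|S_\ell| = k$, and the maintained invariant gives $S_\ell \sqsubseteq C_\ell$, hence $S_\ell \subseteq C_\ell$, hence $S_\ell = C_\ell$ since both are sets of $k$ edges; so once the invariance is proved, $\ell$ reports exactly the certificate carried down from $S'$, which is $D$.

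\textbf{Unary steps freeze the certificate.} Let $R$ be a unary node on this path, with certificate $C$. By Lemma~\ref{lemma:choose_cases}, $e := \chooseedge(R,C)$ is the tree edge $\treecut(C)$, the only child of $R$ is $R + \langle e\rangle$, and its certificate is $\promote(e^+, C)$. By the tree-edge case in the proof of Lemma~\ref{fig:ksubtreeinvariant}, $\promote$ only relocates edges that already belong to $C$ (it moves $e$ from $L$ into $S$ and the child edges of $e^+$ from $F$ into $L$), so the underlying edge set is unchanged.

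\textbf{Binary right-branch steps freeze the certificate.} Let $R$ be a binary node on this path, with certificate $C$. By Lemma~\ref{lemma:choose_cases}, $e := \chooseedge(R,C)$ is an external or a back edge, so $e \notin C$. The right child is $R$ in $G - \{e\}$ with certificate $\unary(C)$, which by Lemma~\ref{lem:unary} coincides with $C$ as a set of edges (only the \isunary\ flag may flip): an edge not in the multi-source truncated DFS is either a back edge, which is traversed but never promoted to a tree edge, or an edge beyond the truncation frontier, which is never traversed, so deleting it does not change the tree edges the DFS produces.

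Chaining the two kinds of step along the path, an easy induction gives $C_\ell = D$ as an edge set, and the first paragraph then yields $S_\ell = C_\ell = D$, i.e.\ $\treeoutput$ reports $D$ at $\ell$. The one genuinely delicate point is the binary step, namely that deleting the chosen external or back edge cannot perturb $\mkdfs(R)$ as a set; rather than redo that argument I would cite Lemma~\ref{lem:unary}, whose proof already isolates the \isunary\ flag as the only thing that moves. Setting up the ``always go right'' path and running the induction along it is then routine.
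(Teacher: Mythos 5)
Your argument is structurally the same as the paper's: follow from $S'$ the path that always takes the right child at binary nodes and the unique child at unary nodes, and note that the certificate's edge set is frozen along this path (since $\promote$ only relocates edges among $S$, $L$, $F$, and $\unary$ only recomputes the $\isunary$ flag). Your certificate-set-invariance framing is in fact a cleaner restatement of what the paper's proof describes more informally as "the path from $S'$ to its rightmost leaf includes only edges from $D$ when branching to the left, and removes the external and back edges when branching to the right."

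What needs more care is your final identification, "the certificate carried down from $S'$, which is $D$." At a binary node $S$, Algorithm~\ref{alg:ListTrees2} does not hand $D$ to the left child: it freshly computes $D' := \mkdfs(S + \langle e\rangle)$. Since $e$ is external or back we have $e \notin D$, whereas $e \in S' \subseteq D'$; hence $D' \neq D$ as edge sets, and no binary node can have $D' = D$. Your invariance argument therefore pins the output at the rightmost leaf of $S'$'s subtree to $D'$, not to $D$. This is in fact the version of the claim that the Nakano--Uno delay scheme actually needs (the binary node $S$ should emit the already-computed $D'$ as a substitute for that leaf's output), and the paper's own proof glosses over the same distinction. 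You should make explicit that what your argument gives is $S_\ell = C_\ell = D'$, i.e.\ that the ``$D$'' in the lemma's conclusion must be read as the certificate recomputed for $S'$, not the certificate passed into $\listtrees(S,D)$; as stated, the equality $S_\ell = D$ simply does not hold.
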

\begin{proof}
  Follow the rightmost spine in the subtree rooted at $S'$. An edge of
  $D$ is chosen after that all the external and back edges are
  deleted. Now, adding that edge to the partial solution, may bring
  new external and back edges. But they are again deleted along the
  path that follows the right spine. In other words, the path from $S'$
  to its rightmost leaf includes only edges from $D$ when branching
  to the left, and removes the external and back edges when branching
  to the right.
\end{proof}

Nakano and Uno
\cite{nakano} have introduced this nice trick.
Classify a binary node $S$ in the recursion tree as even (resp., odd)
if it has an even (resp., odd) number of ancestor nodes that are
binary.  Consider the simple modification to \listtrees\ when
$S$ is binary: if $S$ is even then output $D$ \emph{immediately
  before} the two recursive calls; otherwise ($S$ is odd), output $D$
\emph{immediately after} the two recursive calls. 

\begin{theorem}
\label{thm:final-delay}
Algorithm~\ref{alg:ListAllTrees2} can be implemented with delay $t(k) = k^2$.
\end{theorem}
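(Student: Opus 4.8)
The plan is to combine a \emph{worst-case} (not amortized) $O(k^2)$ bound on the cost of a single recursion-tree node with the even/odd emission scheme of Nakano and Uno~\cite{nakano}, which guarantees that along the Euler tour of the recursion tree an output is produced at most $O(1)$ binary nodes apart.

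First I would pin down the per-node cost. The key remark is that every certificate spans exactly $k$ vertices, so for a node $S$ with certificate $D$ and left-child certificate $D'$ one has $\gamma_D = |E[D]\cap C(S)| \le |E[D]| \le \binom{k}{2}$ and $\nu_{D'} = |E[D']|-k+1 \le \binom{k}{2}$, both $O(k^2)$. Feeding these into Lemmas~\ref{lem:choose_improved}, \ref{lem:mkdfs_improved} and~\ref{lem:unary}, a binary node costs $O(k+\gamma_D)$ for \chooseedge, $O(k+\nu_{D'})$ for $D':=\mkdfs(S')$, $O(k)$ for $D'':=\unary(D)$, $O(1)$ for \edel/\idel, and $O(k)$ for the parking-list cleanup of Section~\ref{sub:internaledges} — hence $O(k^2)$ overall; a unary node performs only \treecut, \promote\ and an empty cleanup, hence $O(1)$.

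Next I would set up the emission structure. Collapsing every maximal chain of unary nodes to an edge turns the recursion tree into a binary tree with $s_i$ leaves and, by Corollary~\ref{cor:right-branch}, $s_i-1$ internal nodes, each a binary recursion node. By Lemma~\ref{lemma:rightmost_leaf} the certificate $D$ at a binary node is precisely the $k$-subtree that would be reported at the rightmost leaf of its left subtree, so emitting $D$ once at the binary node itself, plus emitting the unique globally rightmost leaf directly, reports each of the $s_i$ subtrees exactly once. With the even/odd rule — emit on entry if the binary-depth is even, on exit if it is odd — a standard case analysis of the Euler tour shows that consecutive emissions are separated by only $O(1)$ internal nodes and $O(1)$ edges of the collapsed tree: descending a left spine one emits at every second node on the way in, the symmetric statement holds when the spine is climbed, and any right subtree met in between is itself non-trivial and therefore emits just as frequently inside. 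The same bound applies to the stretch before the first emission, since the first binary node encountered has binary-depth $0$.

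Finally I would put the pieces together: an edge of the collapsed tree is a chain of unary nodes, and since the recursion tree is $k$-left-bounded (Lemma~\ref{lem:kleftbounded}) such a chain has fewer than $k$ nodes; hence between two consecutive emissions the recursion visits $O(1)$ binary nodes and $O(k)$ unary nodes, for a total cost $O(1)\cdot O(k^2)+O(k)\cdot O(1)=O(k^2)$. I expect the main obstacle to be making the Euler-tour case analysis for the even/odd scheme fully rigorous, i.e.\ bounding the number of binary nodes between consecutive emissions by an absolute constant. A secondary, more cosmetic point is to account for the work Algorithm~\ref{alg:ListAllTrees2} spends between successive top-level vertices $v_i$: the initial \mkdfs\ is $O(k^2)$, deleting a residual component of fewer than $k$ vertices touches $O(k^2)$ edges, and the $\vdel(v_i)$ operations contribute $\sum_i \deg(v_i)=O(m)=O(sk)$ in total (amortizing to $O(k)$ per reported subtree), so — once these vertex deletions are handled lazily — none of this affects the $t(k)=O(k^2)$ delay.
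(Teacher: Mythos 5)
Your proposal is correct and takes essentially the same route as the paper: combine the Nakano--Uno even/odd emission rule (which guarantees $O(1)$ binary nodes between consecutive outputs) with a \emph{worst-case} $O(k^2)$ per-binary-node bound, obtained from $\gamma_D,\nu_{D'}\le\binom{k}{2}$, and with $O(k)$ unary nodes per inter-output stretch from $k$-left-boundedness; the paper compresses all of this into a single sentence. You also correctly flag something the paper's one-line proof silently skips over: the work done \emph{between} top-level iterations of Algorithm~\ref{alg:ListAllTrees2} --- the $\vdel(v_i)$ calls (each potentially $\Theta(n)$) and the deletion of small residual components --- sums to $O(m)$ overall but is not bounded per step, so the stated delay only holds if that cleanup is deferred (e.g.\ lazy vertex deletion) or absorbed into a one-time $O(m)$ preprocessing charge; making that explicit tightens an argument the paper leaves implicit.
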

\begin{proof}
Between any two $k$-subtrees that are listed one after the other, we
traverse $O(1)$ binary nodes and $O(k)$ unary nodes, so the total cost
is $O(k^2)$ time in the worst case.
\end{proof}

\chapter{Listing k-subgraphs}
\label{chapter:ListingKSubgraphs}

When considering an undirected connected graph $G$ with $n$ vertices
and $m$ edges, we solve the problem of listing all the connected
induced subgraphs of $G$ with $k$ vertices.
Figure~\ref{fig:inducedsubgraphs} shows an example graph $G_1$ and its
$k$-subgraphs when $k=3$.  We solve this problem optimally, in time
proportional to the size of the input graph $G$ plus the size of the
edges in the $k$-subgraphs to output.

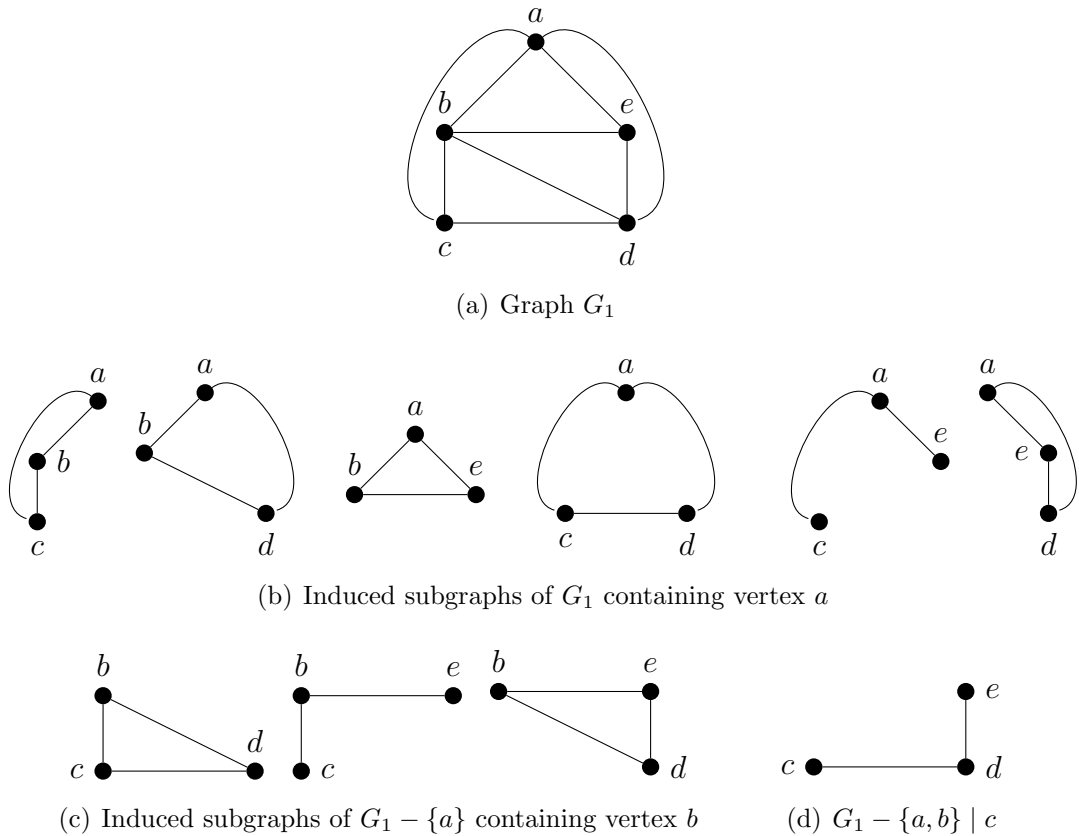
\begin{figure}[t]
\centering
\subfigure[Graph $G_1$]{
\centering
\begin{tikzpicture}[shorten >=1pt,->,scale=1.2]
  \tikzstyle{vertex}=[shape=circle,draw,thick,fill=black,minimum size=2pt,inner sep=2pt]
  \node[vertex,label=above:$a$] (A) at (0,0) {};
  \node[vertex,label=above:$b$] (B) at (-1,-1)   {};
  \node[vertex,label=below:$c$] (C) at (-1,-2)  {};
  \node[vertex,label=below:$d$] (D) at (1,-2)  {};
  \node[vertex,label=above:$e$] (E) at (1,-1)  {};
  \draw (A) -- (B) -- (C) -- (D) -- (E) -- (A) -- cycle;
  \draw (B) -- (E) -- cycle;
  \draw (B) -- (D) -- cycle;
  \path (A) edge [-,bend right=100] (C);
  \path (A) edge [-,bend left=100] (D);
\end{tikzpicture}
} 

\subfigure[Induced subgraphs of $G_1$ containing vertex $a$]{
\begin{tikzpicture}[shorten >=1pt,->,scale=0.8]
  \tikzstyle{vertex}=[shape=circle,draw,thick,fill=black,minimum size=2pt,inner sep=2pt]
  \node[vertex,label=above:$a$] (A) at (0,0) {};
  \node[vertex,label=right:$b$] (B) at (-1,-1)   {};
  \node[vertex,label=below:$c$] (C) at (-1,-2)  {};
  \draw (A) -- (B) -- (C) -- cycle;
  \path (A) edge [-,bend right=100] (C);
\end{tikzpicture}
\begin{tikzpicture}[shorten >=1pt,->,scale=0.8]
  \tikzstyle{vertex}=[shape=circle,draw,thick,fill=black,minimum size=2pt,inner sep=2pt]
  \node[vertex,label=above:$a$] (A) at (0,0) {};
  \node[vertex,label=above:$b$] (B) at (-1,-1)   {};
  \node[vertex,label=below:$d$] (D) at (1,-2)  {};
  \draw (A) -- (B) -- cycle;
  \draw (B) -- (D) -- cycle;
  \path (A) edge [-,bend left=100] (D);
\end{tikzpicture}
\begin{tikzpicture}[shorten >=1pt,->,scale=0.8]
  \tikzstyle{vertex}=[shape=circle,draw,thick,fill=black,minimum size=2pt,inner sep=2pt]
  \node[vertex,label=above:$a$] (A) at (0,0) {};
  \node[vertex,label=above:$b$] (B) at (-1,-1)   {};
  \node[vertex,label=above:$e$] (E) at (1,-1)  {};
  \node[color=white] (D) at (1,-2)  {};
  \draw (A) -- (B) -- cycle;
  \draw (B) -- (E) -- cycle;
  \draw (E) -- (A) -- cycle;
\end{tikzpicture}
\begin{tikzpicture}[shorten >=1pt,->,scale=0.8]
  \tikzstyle{vertex}=[shape=circle,draw,thick,fill=black,minimum size=2pt,inner sep=2pt]
  \node[vertex,label=above:$a$] (A) at (0,0) {};
  \node[vertex,label=below:$c$] (C) at (-1,-2)  {};
  \node[vertex,label=below:$d$] (D) at (1,-2)  {};
  \draw (C) -- (D) -- cycle;
  \path (A) edge [-,bend right=100] (C);
  \path (A) edge [-,bend left=100] (D);
\end{tikzpicture}
\begin{tikzpicture}[shorten >=1pt,->,scale=0.8]
  \tikzstyle{vertex}=[shape=circle,draw,thick,fill=black,minimum size=2pt,inner sep=2pt]
  \node[vertex,label=above:$a$] (A) at (0,0) {};
  \node[vertex,label=below:$c$] (C) at (-1,-2)  {};
  \node[vertex,label=above:$e$] (E) at (1,-1)  {};
  \draw (E) -- (A) -- cycle;
  \path (A) edge [-,bend right=100] (C);
\end{tikzpicture}
\begin{tikzpicture}[shorten >=1pt,->,scale=0.8]
  \tikzstyle{vertex}=[shape=circle,draw,thick,fill=black,minimum size=2pt,inner sep=2pt]
  \node[vertex,label=above:$a$] (A) at (0,0) {};
  \node[vertex,label=below:$d$] (D) at (1,-2)  {};
  \node[vertex,label=left:$e$] (E) at (1,-1)  {};
  \draw (D) -- (E) -- (A) -- cycle;
  \path (A) edge [-,bend left=100] (D);
\end{tikzpicture}
}

\subfigure[Induced subgraphs of $G_1-\{a\}$ containing vertex $b$]{
\begin{tikzpicture}[shorten >=1pt,->,scale=1.0]
  \tikzstyle{vertex}=[shape=circle,draw,thick,fill=black,minimum size=2pt,inner sep=2pt]
  \node[vertex,label=above:$b$] (B) at (-1,-1)   {};
  \node[vertex,label=left:$c$] (C) at (-1,-2)  {};
  \node[vertex,label=above:$d$] (D) at (1,-2)  {};
  \draw (B) -- (C) -- (D) -- cycle;
  \draw (B) -- (D) -- cycle;
\end{tikzpicture}
\begin{tikzpicture}[shorten >=1pt,->,scale=1.0]
  \tikzstyle{vertex}=[shape=circle,draw,thick,fill=black,minimum size=2pt,inner sep=2pt]
  \node[vertex,label=above:$b$] (B) at (-1,-1)   {};
  \node[vertex,label=right:$c$] (C) at (-1,-2)  {};
  \node[vertex,label=above:$e$] (E) at (1,-1)  {};
  \draw (B) -- (C) -- cycle;
  \draw (B) -- (E) -- cycle;
\end{tikzpicture}
\begin{tikzpicture}[shorten >=1pt,->,scale=1.0]
  \tikzstyle{vertex}=[shape=circle,draw,thick,fill=black,minimum size=2pt,inner sep=2pt]
  \node[vertex,label=above:$b$] (B) at (-1,-1)   {};
  \node[vertex,label=right:$d$] (D) at (1,-2)  {};
  \node[vertex,label=above:$e$] (E) at (1,-1)  {};
  \draw (D) -- (E) -- cycle;
  \draw (B) -- (E) -- cycle;
  \draw (B) -- (D) -- cycle;
\end{tikzpicture}
}
\hspace{1.0em}
\subfigure[$G_1-\{a,b\} \mid c$]{
\begin{tikzpicture}[shorten >=1pt,->,scale=1.0]
  \tikzstyle{vertex}=[shape=circle,draw,thick,fill=black,minimum size=2pt,inner sep=2pt]
  \node[vertex,label=left:$c$] (C) at (-1,-2)  {};
  \node[vertex,label=right:$d$] (D) at (1,-2)  {};
  \node[vertex,label=right:$e$] (E) at (1,-1)  {};
  \draw (C) -- (D) -- (E) -- cycle;
\end{tikzpicture}
}

\caption{Example graph $G_1$ and its $3$-subgraphs}
\label{fig:inducedsubgraphs}

\end{figure}

\medskip

There exist vast practical applications of using $k$-subgraphs as
mining pattern in diverse fields. In general, determining subgraphs
that appear frequently in one or more networks is considered relevant
in the study of biological \cite{alm2003,alon2003,junker2008,milo},
social \cite{hay2008resisting,hogg2004enhancing,wei2006method} and
other complex networks~\cite{arenas2008motif,boccaletti2006complex,
costa2007characterization,myers2003software}.  These patterns,
frequently called motifs, can provide insight to the function and
design principles of these networks.

In the particular case of biological networks, Wernicke
\cite{wernicke2006efficient} has introduced the ESU algorithm as a
base to find motifs. This algorithm solves the problem of listing the
$k$-subgraphs of a network. The author further modifies the algorithm
to sample $k$-subgraphs and is able to efficiently find network
motifs. Nevertheless, a theoretical analysis of the algorithm is not
performed and it does not seem to be possible to implement the
algorithm in a optimal output-sensitive sensitive way.

Researchers focusing on finding motifs in graphs face the problem of
counting or estimating the occurrences of each structure and thus face
the graph isomorphism problem. For this reason, there appears to be
little research done on efficiently listing the patterns themselves. 

In \cite{avis}, Avis and Fukuda propose the application of the general
reverse search method and obtain an algorithm taking $O(m n \eta)$
time, where $\eta$ is the number of $k$-subgraphs in the input graph.

\medskip

Equivalent to the problem of listing $k$-subgraphs is to list the
\emph{vertices} in the $k$-subgraphs of $G$. Although it is trivial to
reduce one problem to the other, note that an optimal output-sensitive
solution for the problem of listing $k$-subgraphs does not imply an
optimal solution for the problem of listing its vertices, as in the
latter case is sufficient to output the vertices in the $k$-subgraph
(instead of the edges).

Interestingly, for the particular case of when graph $G$ is a tree,
the problem of listing $k$-subgraphs is equivalent to the problem of
listing $k$-subtrees. We achieve the same time bound as achieved in
Chapter~\ref{chapter:ListingKSubtrees}. For this particular case, the
algorithm proposed in \cite{wasa2012} is able to enumerate
$k$-subgraphs in $O(1)$ per $k$-subgraph.

\medskip

We present our solution based on the binary partition method (Section
\ref{subsection:BinaryPartitionMethod}). We divide the problem of
listing all the $k$-subgraphs in two subproblems by choosing a vertex
$v \in V$: (i) we list the $k$-subgraphs that contain $v$, and (ii) those that
do not contain $v$. We proceed recursively on these subproblems until
there is just one $k$-subgraph to be listed. This method induces a
binary recursion tree, and all the $k$-subgraphs are listed when
reaching the leaves of the recursion tree.

As previously mentioned, in order to reach an output-sensitive
algorithm, we maintain a certificate that allows us to determine
efficiently if there exists at least one $k$-subgraph to be listed at
any point in the recursion.  Furthermore we select the vertex $v$,
that divides the problem into two subproblems, in a way that
facilitates the maintenance of the certificate.

\newpage

\section{Preliminaries}

Given a graph $G = (V,E)$ and $V' \subseteq V$, $G[V']=(V',E[V'])$
denotes the subgraph \emph{induced} by the vertices $V'$. Note that we
use the shorthand $E[V'] = \{ (u,v) \in E \mid u,v \in V'\}$. A
$k$-subgraph is a connected induced subgraph $G[V']$ such that
$|V'|=k$ and $G[V']$ is connected. We denote by $\setofisg_k(G)$ the
set of $k$-subgraphs in $G$. Formally: $$\setofisg_k(G) = \{ G[V']
\mid V'\subseteq V,~|V'|=k \text{ and } G[V'] \text{ is connected}
\}$$ For a vertex $u \in V$, $N(u)$ denotes the \emph{neighborhood}
of $u$. For a vertex set $S$, $N(S)$ is the union of neighborhoods of
the vertices in $S$ while $N^*(S)=N(S)\setminus{S}$. Additionally, for
a vertex set $S$, we define the cutset $C(S) \subseteq E$ such that
$(u,v) \in C(S)$ iif $u \in S$ and $v \in V[G]\setminus{S}$.

In this chapter we focus on the problem of efficiently listing
$k$-subgraphs.

\begin{problem}
	\label{prob:listinducedsubgraphs}
	Given a simple undirected and connected graph $G=(V,E)$ and an
	integer $k$, list all subgraphs $G[V'] \in \setofisg_k(G)$.
\end{problem}

An algorithm for this problem is said to be \emph{optimally
output-sensitive} if the time taken to solve
Problem~\ref{prob:listinducedsubgraphs} is $O(m+\sum_{G[V'] \in
\setofisg_k(G)}{|E[V']|})$. Thus, an improvement of the result of Avis
and Fukuda \cite{avis}. This algorithm is optimal in the sense that it
takes time proportional to reading the input plus listing the edges in
each of the induced subgraphs with $k$ vertices.  Noting that the
input graph $G$ is connected, we have that $O(m+\sum_{G[V'] \in
\setofisg_k(G)}{|E[V']|})=O(\sum_{G[V'] \in \setofisg_k(G)}{|E[V']|})$
since each edge $e \in E$ belongs to at least one $k$-subgraph.

\medskip

We dedicate the rest of the paper to proving the main result of the
chapter.

\begin{theorem}
	\label{thm:main}
	Problem~\ref{prob:listinducedsubgraphs} can be solved in
	$O(\sum_{G[V'] \in \setofisg_k(G)}{|E[V']|})$ time.
\end{theorem}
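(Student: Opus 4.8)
The plan is to transplant the three ingredients developed for $k$-subtrees — a binary partition, a certificate of non-emptiness, and an adaptive amortization — with the twist that the output is now measured in \emph{edges}, i.e.\ by $\sum_{G[V'] \in \setofisg_k(G)} |E[V']|$, rather than in vertices.

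\textbf{Binary partition and the recursion tree.} Fix an ordering $V = \langle v_1, \ldots, v_n \rangle$. For each $v_i$ in turn, list all $k$-subgraphs containing $v_i$ but none of $v_1, \ldots, v_{i-1}$, and then $\vdel(v_i)$; if the residual connected component of $v_i$ has fewer than $k$ vertices, delete its vertices and skip it. Inside, the recursion $\listsubgraphs(S, D)$ carries a forced vertex set $S$ (initially $\{v_i\}$) and a certificate $D$; it selects a vertex $w \in N^*(S)$ with \choosevertex\ and branches into $\listsubgraphs(S \cup \{w\}, \cdot)$ (left) and $\listsubgraphs(S, \cdot)$ in $G - w$ (right), stopping when $|V[S]| = k$. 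Each root-to-leaf path performs exactly $k-1$ left branches (one per added vertex), so the tree is $(k{-}1)$-left-bounded, and its leaves are in bijection with the $\eta_i$ listed $k$-subgraphs. Correctness (each $k$-subgraph output once) follows as in Lemma~\ref{lem:correctness}, and — reproving the analogue of Lemma~\ref{lem:kleftbounded}, that whenever some $k$-subgraph descending from $S$ avoids $w$ there is also one containing $w$ — Corollary~\ref{cor:right-branch} transfers verbatim: $O(\eta_i)$ binary nodes and $O(\eta_i k)$ unary nodes.

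\textbf{Certificate and pivot selection.} The certificate $D$ for $S$ is a connected vertex superset $V[S] \subseteq V[D]$ with $|V[D]| = k$, stored together with a truncated DFS structure from the collapsed $S$ (so that $G[D]$ is itself a witnessing $k$-subgraph) and a flag recording whether $N^*(S) \subseteq V[D]$. Exactly as in the case analysis behind \chooseedge, \choosevertex\ returns an ``external'' vertex $w \in N^*(S) \setminus V[D]$ if one exists — then $G[D]$ is a $k$-subgraph avoiding $w$, the node is binary, and we rebuild the certificate for both children; otherwise $N^*(S) \subseteq V[D]$ and we return the vertex of $N^*(S)$ that is last in the canonical DFS order of $D$, for which, as in Lemma~\ref{lemma:choose_cases}, the node is unary, the forced $w$ already lies in $V[D]$, and $D$ persists as the certificate of $S \cup \{w\}$ after only an $O(1)$ flag update (the role of \promote\ and \isunary). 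To avoid rescanning long adjacency lists, edges with both endpoints in $V[S]$ are removed on the fly into global parking lists — the device of Section~\ref{sub:internaledges} — and reinstated on return; on the right child we recompute only the flag in $O(k)$ time via an operation analogous to \unary. Net effect: a unary node costs $O(1)$ and a binary node costs $O(k + |E[D]|)$, essentially the cost of reading the induced subgraph on the certificate, plus a controlled slack from $C(S)$.

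\textbf{Edge-sensitive amortization and the obstacle.} It remains to bound $\sum_{\text{binary }S} O(k + |E[D_S]|)$ by $O(\sum_{G[V']} |E[V']|)$. The $O(k)$ terms and the $O(1)$ cost of the $O(\eta_i k)$ unary nodes together contribute $O(\eta_i k)$ per root, which is absorbed because every listed $k$-subgraph already has at least $k-1$ edges; the top-level deletions, component scans and initial certificate builds sum to $O(m) = O(\sum_{V'} |E[V']|)$, since $G$ is connected and each edge lies in some $k$-subgraph. The $|E[D_S]|$ terms are handled by an analogue of Lemma~\ref{lemma:rightmost_leaf}: following the right spine below a binary node reconstructs exactly the certificate, so one can associate with each binary node a descendant leaf whose $k$-subgraph coincides with the recomputed certificate there; checking that each listed $k$-subgraph is associated with at most one binary node (``go left, then keep going right'' pins down the ancestor uniquely) lets us charge $|E[D_S]| = |E[V']|$ to that unique $G[V']$, giving $\sum_{\text{binary }S} |E[D_S]| = O(\sum_{V'} |E[V']|)$. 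I expect the main obstacle to be exactly this last amortization together with the certificate design it rests on: making ``$N^*(S) \subseteq V[D]$ at a unary node implies the forced vertex is already in $D$ and the DFS structure is maintainable in $O(1)$'' precise, ensuring a binary node really is served in $O(k + |E[D]|)$ (which forces the parking-list treatment so no work is ever spent on an edge outside the output), and verifying that the lower bound ``a certificate implies enough distinct descendant $k$-subgraphs'' — the edge-measure counterpart of Lemma~\ref{lem:cyclomatic} — is strong enough to absorb the $|E[D]|$ charges despite the $(k{-}1)$-fold reuse of leaves.
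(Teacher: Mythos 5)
Your proposal transplants the binary/unary dichotomy of Lemma~\ref{lemma:choose_cases} from the $k$-subtree setting and builds the entire accounting on top of it, but that dichotomy is false for $k$-subgraphs, and this is the gap. In the edge-partition of Chapter~\ref{chapter:ListingKSubtrees}, when all cut edges are tree edges of the DFS certificate, removing the last one provably destroys every remaining solution, so the node is unary. For vertex partitions this fails: take $V=\{s,a,b,c\}$, $E=\{(s,a),(s,b),(a,b),(a,c),(b,c)\}$, $k=3$, $S=\{s\}$, certificate $V[D]=\{s,a,b\}$ from the truncated DFS $s\to a\to b$. Then $N^*(S)=\{a,b\}\subseteq V[D]$, so \choosevertex\ returns an internal vertex, yet after deleting it the residual component of $s$ still has three vertices, the certificate is rebuilt with $|C|=k$, and the right branch \emph{is} taken. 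So ``$N^*(S)\subseteq V[D]$'' does not make the node unary, and it cannot be made precise because it is not true. Consequently these nodes incur the full $\cright$ recomputation cost, which is $\Theta(|E[C]|)$ and not the $O(k)$ flag update you budgeted; your amortization does not see them at all, since they are neither ``binary'' (external pivot, certificate rebuilt, rightmost-leaf charge) nor truly cheap unary nodes.

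The paper avoids this by not classifying on external versus internal at all: it uses the cost scheme of Eq.~\eqref{eq:charging} ($O(|E[S]|)$ at leaves, $O(1)$ at every other node) and proves, operation by operation, that the work fits. The hard case you are missing is exactly Lemma~\ref{lem:cright_internal_time}: when the pivot is internal and $\cright$ must re-grow part of the truncated multi-source DFS, the $O(|E[C]|)$ cost is charged to a \emph{descendant} leaf; in case~(a), when the rebuilt certificate has $k$ vertices, it is the leaf that outputs $G[C]$ itself, with an argument that the path from the current node to that leaf alternates internal-left and external-right steps and hence contains no further internal-right step, so no leaf is charged twice; in case~(b), when the rebuilt certificate is smaller and the right branch is suppressed, a specific leaf $X$ with $E[C]\subset E[X]$ is constructed and charged. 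Your rightmost-leaf device (borrowed from Lemma~\ref{lemma:rightmost_leaf}) neither identifies the right target leaf nor guarantees injectivity of the charge across the internal-pivot binary nodes, and the ``edge-measure counterpart of Lemma~\ref{lem:cyclomatic}'' you appeal to is not what the paper uses and is not supplied. Also note that, in the paper's algorithm, external-pivot nodes do \emph{not} rebuild the certificate in $O(|E[D]|)$ time: $\cleft$ and $\cright$ are $O(1)$ there (Lemmas~\ref{lem:cleft_time} and~\ref{lem:cright_external_time}), with \choosevertex\ amortized via parking lists; your $O(k+|E[D]|)$ per binary node is an avoidable overspend even where your accounting applies.
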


\section{Top-level algorithm}

Given a graph $G = (V,E)$, we solve
Problem~\ref{prob:listinducedsubgraphs} by using the natural approach
of taking an ordering of the vertices $v_1, v_2, \ldots v_n$ and
listing all the connected induced subgraphs with $k$ vertices that
include a vertex $v_i$ but do not include any vertex $v_j$ with $j<i$.

Additionally, we use a \emph{certificate $C$}: an adequate data
structure that, given a $k'$-subgraph $G[S]$ with $k' \le k$
vertices, ensures that there exists a connected vertex set with $k$
vertices that includes $S$. We denote this relation as $S \sqsubseteq
C$. As a matter of convenience, we use $C$ to represent both the data
structure and the connected set of vertices of size $k$. If there does
not exist such a vertex set with size $k$, $C$ includes largest possible
connected vertex set that includes $S$ (and thus $|C|<k$).

\medskip

To efficiently implement this approach, we make use of the following
operations:

\begin{itemize}
\item $\vdel(u)$ deletes a vertex $u \in V$ and all its incident
	edges.
\item $\undel(u)$ restores $G$ to its previous state, before operation
	$\vdel(u)$.
\item $\certificate(S)$ returns a certificate $C$ of $S$ (as defined
	above).
\end{itemize}

\begin{algorithm}[t]
\begin{algorithmic}[1]
	\FOR{$i=1,2,\ldots,n-1$}	\label{line:numvertices}
		\STATE $S:= \langle v_i \rangle$ 	\label{line:initS}
		\STATE $C:= \certificate(S)$		\label{line:initC}
		\IF{$|C|=k$} 				\label{line:checkC}
			\STATE $\listsubgraphs(S,C)$	\label{line:list}
			\STATE $\vdel(v_i)$		\label{line:del}
		\ELSE					\label{line:else}
			\FOR{$u \in C$}			\label{line:uinC}
				\STATE $\vdel(u)$	\label{line:alldel}
			\ENDFOR				\label{line:deluinC}
		\ENDIF	 \label{line:endif}
	\ENDFOR		 \label{line:endfor}
\end{algorithmic}
\caption{$ListSubgraphs$( $G=(V,E)$, $k$ )\label{alg:ListSubgraphs}}
\end{algorithm}

Let us now introduce Algorithm~\ref{alg:ListSubgraphs}. We start by
taking the vertices in $V$ in order (line~\ref{line:numvertices}) and,
for each $v_i$ in $V$, we initialize an ordered set of nodes $S$ with
$v_i$ (line~\ref{line:initS}) and compute its certificate $C$
(line~\ref{line:initC}). If $C$ contains $k$ vertices
(lines~\ref{line:checkC}-\ref{line:else}) then there is at least a
$k$-subgraph that includes $v_i$. In this case, we list all the
$k$-subgraphs that include $v_i$ (line~\ref{line:list}) and remove
$v_i$ from $G$ (line~\ref{line:del}) to avoid listing multiple times
the same $k$-subgraph. If the certificate $C$ contains less than $k$
vertices (lines~\ref{line:else}-\ref{line:endif}) then there is no
$k$-subgraph that includes $v_i$. Furthermore, by the definition of
$C$, there is no $k$-subgraph that includes each vertex $u \in C$ and
thus $u$ can be deleted from $G$
(lines~\ref{line:uinC}-\ref{line:deluinC}).

In the next section we present a recursive algorithm that, given an
ordered set of vertices $S$ (initially $S= \langle v_i \rangle$) and a
certificate $C$ of $S$, lists all the $k$-subgraphs that include $S$
and thus implements $\listsubgraphs(S,C)$ (line~\ref{line:list}).

\begin{lemma}
	\label{lem:toplev_correctness}
	Algorithm~\ref{alg:ListSubgraphs} lists each $k$-subgraphs in
	$G$ once and only once.
\end{lemma}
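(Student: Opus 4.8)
The plan is to partition the set of all $k$-subgraphs of $G$ according to the \emph{minimum-index} vertex each one contains: for a $k$-subgraph $G[V']$, let $v_{i}$ be the vertex of smallest index in $V'$. I would show that the $i$-th iteration of the main loop lists precisely the $k$-subgraphs whose minimum-index vertex is $v_{i}$, from which the claim follows, since every $k$-subgraph (having $k\ge 2$ vertices) has a well-defined minimum-index vertex, necessarily of index at most $n-1$.

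The heart of the argument is an invariant on the \emph{residual graph} $G_{i}$ at the start of iteration $i$ (that is, $G$ with all vertices deleted during iterations $1,\dots,i-1$ removed, be it by the \emph{if}-branch or the \emph{else}-branch): I claim $G_{i}$ contains every $k$-subgraph $G[V']$ whose minimum-index vertex has index $\ge i$. I would prove this by induction on $i$, the base case $G_{1}=G$ being trivial. For the step there are two cases. If iteration $i$ takes the \emph{if}-branch it deletes only $v_{i}$, and a $k$-subgraph with minimum index $\ge i+1$ does not contain $v_{i}$, so it survives. If iteration $i$ takes the \emph{else}-branch, then $|C|<k$ and, by the certificate's definition, $C$ is the \emph{maximal} connected vertex set containing $v_{i}$ in $G_{i}$, i.e.\ the whole connected component of $v_{i}$; since $G[V']$ is connected and (by the inductive hypothesis, its minimum index being $\ge i$) lies inside $G_{i}$, either $V'\cap C=\emptyset$ or $V'\subseteq C$ — the latter impossible because $|V'|=k>|C|$ — so $G[V']$ survives the deletion of $C$.

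Given the invariant, completeness is immediate: for a $k$-subgraph $G[V']$ with minimum-index vertex $v_{i}$, at iteration $i$ the set $V'$ lies in $G_{i}$, is connected, has $k$ vertices and contains $v_{i}$, hence $\certificate(\langle v_{i}\rangle)$ returns a set $C$ with $|C|=k$ and the algorithm calls $\listsubgraphs(\langle v_{i}\rangle,C)$. Invoking the correctness of $\listsubgraphs$, to be established in the next section — namely that $\listsubgraphs(S,C)$ outputs exactly the $k$-subgraphs of the current residual graph that contain $S$, each once — $G[V']$ is listed once during iteration $i$. For the ``only once'' part across iterations: $G[V']$ is not produced at any $j<i$, since those calls only output $k$-subgraphs containing $v_{j}\notin V'$; and it is not produced at any $j>i$, because iteration $i$ took the \emph{if}-branch and then deleted $v_{i}$, so $v_{i}\notin G_{j}$ and no later $\listsubgraphs$ call can output a subgraph containing $v_{i}$. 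The \emph{else}-branch never outputs anything.

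I expect the main obstacle to be the \emph{else}-branch reasoning in the invariant: it hinges on the precise semantics of the certificate, that when $|C|<k$ the set $C$ is the \emph{entire} connected component of $v_{i}$ in the residual graph, so that discarding all of $C$ provably throws away only vertices belonging to no $k$-subgraph. Pinning down this maximality property from the certificate definition in the preliminaries, and making explicit that later deletions can never damage a $k$-subgraph we still owe, is the delicate step; the remainder is routine bookkeeping on the recursion and on the vertex deletions.
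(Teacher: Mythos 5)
Your proposal is correct and follows essentially the same route as the paper's own (much terser) proof: both partition the $k$-subgraphs by the first vertex they contain, argue that deleting $v_i$ after the \emph{if}-branch prevents re-listing, and argue that the \emph{else}-branch may safely delete the whole connected component because it contains no $k$-subgraph. Your inductive invariant on the residual graph $G_i$ is a more careful write-up of the paper's one-line assertion that ``none of the vertices in its connected [component] belong to a $k$-subgraph,'' but the underlying idea is identical.
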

\begin{proof}
	It follows directly from the definition of certificate that
	for each vertex $v_i \in V$, the algorithm calls
	$\listsubgraphs(\langle v_i \rangle ,C)$ if $v_i$ belongs to
	at least one $k$-subgraph. On line~\ref{line:del}, $v_i$ is
	removed and thus no other $k$-subgraph that includes $v_i$ is
	listed. Furthermore, if $v_i$ does not belong to any
	$k$-subgraph, then none of the vertices in its connected
	belong to a $k$-subgraph and thus can also be deleted
	(lines~\ref{line:uinC}-\ref{line:deluinC}).
\end{proof}

\section{Recursion}

Let us now focus on the problem of listing all the $k$-subgraphs that
include a vertex $v_i$. In this section we use the follow operations
as black boxes and we will detail them in the next sections where their
implementation is fundamental to the analysis of the time complexity.

\begin{itemize}
\item $\choosevertex(S,C)$ given a connected vertex set $S$ and
	certificate $C$, returns a vertex $v \in N^*(S)$. We will use
	the certificate $C$ to select a vertex $v$ that facilitates
	updating $C$.
\item $\cleft(C,v)$ updates the certificate $C$ such that $C$ is a
	certificate of $S \cup \{v\}$. Returns a data structure $I$ with
	the modifications made in $C$.
\item $\cright(C,v)$ updates the certificate $C$ in the case of
	removal of $v$ from graph $G$. Note that, if $v$ does not
	belong to the certificate C, no modifications are needed.
	Returns a data structure $I$ with the modifications made in
	$C$.
\item $\restore(C,I)$ restores the certificate $C$ to its previous
	state, prior to the modifications in $I$.
\end{itemize}

\begin{algorithm}[t]
\begin{algorithmic}[1]
		\IF{$|S|=k$}	\label{line:ifbase}
			\STATE $\print(E[S])$ \label{line:print}
			\STATE $\return$ \label{line:return}
		\ENDIF	\label{line:endifbase}
		\STATE $v := \choosevertex(S)$ \label{line:choose}
		\smallskip
		\STATE $I := \cleft(C,v)$ \label{line:cleft}
		\STATE $\listsubgraphs(S \cup \{v\}, C)$ \label{line:lrecursion}
		\STATE $\restore(C,I)$ \label{line:restore1}

		\smallskip
		\STATE $\vdel(v)$ \label{line:vdel}
		\STATE $I := \cright(C,v)$ \label{line:cright}
		\IF { $|C|=k$ } \label{line:isunary} 
			\STATE $\listsubgraphs(S, C)$ \label{line:rrecursion}
		\ENDIF
		\STATE $\undel(v)$ \label{line:undel}
		\STATE $\restore(C,I)$ \label{line:restore2}
\end{algorithmic}
\caption{$\listsubgraphs(S, C)$  \hfill Invariant: $S$ is connected
and $S \sqsubseteq C$ \label{alg:ListSubgraphs2}}
\end{algorithm}

Algorithm~\ref{alg:ListSubgraphs2} follows closely the binary
partition method while maintaining the invariant that $C$ is a
certificate of $S$. This is of central importance to reach an
efficient algorithm since it allows us to avoid recursive calls that
do not output any $k$-subgraphs.

Given a connected vertex set $S$, we take a vertex $v \in N^*(S)$
(line~\ref{line:choose}). We then perform \emph{leftward branching}
(lines~\ref{line:cleft}-\ref{line:restore1}), recursively listing all
the $k$-subgraphs in $G$ that include the vertices in $S$ and the
vertex $v$. Note that by the definition of $\choosevertex(S)$, $v$ is
connected to $S$ and thus we maintain the invariant that vertices in
$S$ are connected. Since $S$ is ordered, we define the operation $S \cup
\{v\}$ to append vertex $v$ to the end of $S$ and thus the vertices in
$S$ are kept in order of insertion. We use the operation $\cleft(S,v)$
(line~\ref{line:cleft}) to maintain the invariant $S \sqsubseteq C$ in
the recursive call (line~\ref{line:lrecursion}). In line
\ref{line:restore1} we restore $C$ to its state before the recursive
call.

Successively, we perform \emph{rightward branching}
(lines~\ref{line:vdel}-\ref{line:restore2}), listing all $k$-subgraphs
that include the vertices in $S$ but \emph{do not} include vertex $v$.
After removing $v$ from the graph (line~\ref{line:vdel}), we update the
certificate $C$ (line~\ref{line:cright}). If it was possible to update
the certificate (i.e. $|C|=k$), we recursively list the $k$-subgraphs
that include $S$ in the residual graph $G - \{v\}$
(line~\ref{line:rrecursion}). When returning from the recursive call,
we restore the graph (line~\ref{line:undel}) and the certificate
(line~\ref{line:restore2}) to its previous state.

When the vertex set $S$ has size $k$ we reach the \emph{base case} of
the recursion (lines~\ref{line:ifbase}-\ref{line:endifbase}). Since
$S$ is connected, it induces a $k$-sugraph $G[S]$ which we output
(line~\ref{line:print}). Note that in this case we do not proceed with
the recursion (line~\ref{line:return}).

\begin{lemma}
	\label{lem:leftbranch}
	Given the existence of a certificate
	$C$ of $S$ and $v = \choosevertex(S,C)$, there exists a
	$k$-subgraph that includes $S\cup\{v\}$ and thus a left branch
	always produces one or more $k$-subgraphs.
\end{lemma}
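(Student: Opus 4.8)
The plan is to show that the vertex $v$ returned by $\choosevertex(S,C)$ can always be ``promoted'' into the certificate, so that $C$ (with $v$ now moved from the frontier into the core) still certifies a connected $k$-vertex superset of $S \cup \{v\}$. First I would recall what the certificate guarantees: $S \sqsubseteq C$ means $C$ is a connected vertex set of size exactly $k$ with $S \subseteq C$ (we are in the branch $|C| = k$, since otherwise $\listsubgraphs$ is not called). Since $\choosevertex$ must return some $v \in N^*(S)$, and by the invariant $S \subsetneq C$ because $|S| < k$ (we are past the base case of line~\ref{line:ifbase}), the natural choice — which I expect the later sections to make precise via $\cleft$ — is to pick $v$ among $N^*(S) \cap C$, i.e.\ a frontier vertex of $S$ that already lies in the certified set $C$. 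Such a $v$ exists: because $C$ is connected and contains $S$ as a proper subset, there is at least one vertex of $C \setminus S$ adjacent to $S$, and that vertex is in $N^*(S) \cap C$.

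Given this, the argument is immediate: take $v \in N^*(S) \cap C$. Then $S \cup \{v\} \subseteq C$, the set $C$ is still connected, $|C| = k$, and $S \cup \{v\}$ is connected (it is $S$ — connected by invariant — together with $v$, which is adjacent to $S$ by $v \in N^*(S)$). Hence $C$ witnesses $S \cup \{v\} \sqsubseteq C$, so $\cleft(C,v)$ has a valid certificate to return, and $G[C]$ is a connected induced subgraph on $k$ vertices containing $S \cup \{v\}$. Since the recursion on the left branch eventually reaches a leaf with $|S'| = k$ for $S' \supseteq S \cup \{v\}$ — and at least the one corresponding to $C$ is reachable because at every step a frontier vertex inside the current certificate can be chosen — at least one $k$-subgraph is output.

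The main obstacle is really a matter of commitment to the right definition of $\choosevertex$: the lemma is only true if $\choosevertex(S,C)$ is required to return a vertex of $N^*(S)$ that is \emph{inside} the current certificate $C$ (an arbitrary neighbour in $N^*(S) \setminus C$ need not extend to a $k$-subgraph containing it). So the crux of the proof is the existence claim: in a connected set $C$ of size $k > |S|$ with $S \subseteq C$, there is always a vertex of $C \setminus S$ with a neighbour in $S$ — which follows from connectivity of $G[C]$ by taking, for any $w \in C \setminus S$, a path in $G[C]$ from $w$ to $S$ and letting $v$ be the last vertex of that path outside $S$. Everything else (connectedness of $S \cup \{v\}$, $|C| = k$ unchanged, $S \cup \{v\} \subseteq C$) is routine. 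I would close by noting this is exactly why $\choosevertex$ is designed to consult $C$, foreshadowing the efficient maintenance of the invariant in the next section.
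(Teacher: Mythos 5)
Your proposal rests on a false premise. You assert that the lemma ``is only true if $\choosevertex(S,C)$ is required to return a vertex of $N^*(S)$ that is \emph{inside} the current certificate $C$,'' and that ``an arbitrary neighbour in $N^*(S)\setminus C$ need not extend to a $k$-subgraph containing it.'' That is wrong. If $v\in N^*(S)$, then $S\cup\{v\}$ is connected (you observed this) and it lies in the same connected component of the residual graph as $S$, which contains $C$ and hence has at least $k$ vertices. A connected vertex set of size at most $k$ inside a connected component of size at least $k$ can always be grown greedily to a connected set of exactly $k$ vertices, so $S\cup\{v\}$ always extends to a $k$-subgraph, regardless of whether $v$ is inside or outside $C$. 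This is essentially the whole of the paper's proof: the certificate guarantees a connected component of size $\ge k$ containing $S$; the left branch deletes nothing, so that component is unchanged; since $v\in N^*(S)$, $S\cup\{v\}$ is connected and still inside it.

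The misconception is not cosmetic. If you look ahead to the paper's definition of $\choosevertex(S,C)$, it \emph{prefers} an external vertex $v\notin C$ and only falls back to an internal one (the last root of $F$) when no external neighbour exists — the opposite of the restriction you propose. Your argument would therefore fail to cover exactly the case the algorithm uses by default. The existence claim you prove (that $C$ connected with $S\subsetneq C$ implies some $v\in N^*(S)\cap C$ exists) is correct but is not what the lemma asks for; what is needed, and what the paper gives, is that \emph{any} $v\in N^*(S)$ yields a residual graph still containing a $k$-subgraph through $S\cup\{v\}$.
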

\begin{proof}
	It follows from the definition of certificate that vertices in
	$S$ are in a connected component of size at least $k$. Noting
	that in a left branch we do not remove a vertex, $S = S+\{v\}$
	is still in a connected component of size at least $k$. Thus,
	the recursive call on line $\ref{line:lrecursion}$ will list
	at least one $k$-subgraph.
\end{proof}

\begin{lemma}
	\label{lem:invariant}
	At each recursive call $\listsubgraphs(S,C)$, we maintain the
	following invariants: (i) $S$ is connected, (ii) $C$ is a
	valid certificate of $S$ and (iii) $\listsubgraphs(S,C)$ is
	only invoked if there exists a $k$-subgraph that includes $S$.
\end{lemma}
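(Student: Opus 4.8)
The plan is to argue by induction on the depth of the recursion tree whose roots are the top-level invocations of \listsubgraphs\ in Algorithm~\ref{alg:ListSubgraphs}. Since the three invariants reinforce one another, I would prove all three simultaneously. For the base case, consider a top-level call $\listsubgraphs(\langle v_i \rangle, C)$ on line~\ref{line:list}. Here $S = \langle v_i \rangle$ is a single vertex, hence trivially connected, which gives (i); $C = \certificate(S)$ was just computed on line~\ref{line:initC}, so $S \sqsubseteq C$ by the specification of $\certificate$, which gives (ii); and the call is guarded by the test $|C| = k$ on line~\ref{line:checkC}, which by the definition of a certificate means the connected component of $v_i$ has at least $k$ vertices, so some $k$-subgraph contains $v_i = S$, which gives (iii).

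For the inductive step, assume the invariants hold for a call $\listsubgraphs(S, C)$ and show they hold for each recursive call it spawns. If $|S| = k$ the call returns on line~\ref{line:return} without recursing, so there is nothing to prove. Otherwise $v := \choosevertex(S,C)$ is selected. The leftward call on line~\ref{line:lrecursion} takes $S \cup \{v\}$ as its vertex set: since $v \in N^*(S)$ by the specification of \choosevertex\ and $S$ is connected by the inductive hypothesis, $S \cup \{v\}$ is connected, giving (i); the operation $\cleft(C,v)$ on line~\ref{line:cleft} updates $C$ so that, by its contract, $C$ becomes a valid certificate of $S \cup \{v\}$, giving (ii); and Lemma~\ref{lem:leftbranch} guarantees the existence of a $k$-subgraph containing $S \cup \{v\}$, giving (iii). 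For the rightward call on line~\ref{line:rrecursion}, the vertex set $S$ is unchanged, so (i) holds by hypothesis; after $\vdel(v)$ the operation $\cright(C,v)$ turns $C$ into a valid certificate of $S$ in the residual graph $G - \{v\}$ by its contract; and the call is guarded by the test $|C| = k$ on line~\ref{line:isunary}, so there is a $k$-subgraph of $G - \{v\}$, and \emph{a fortiori} of $G$, that contains $S$, giving (ii) and (iii).

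The point I would be most careful about is the bookkeeping of the graph state against which ``valid certificate'' is measured: the leftward branch reasons about the ambient graph $G$ as seen on entry to the call, whereas the rightward branch reasons about $G - \{v\}$, and on return the calls to \restore\ on lines~\ref{line:restore1} and~\ref{line:restore2}, together with $\undel(v)$, must put both $G$ and $C$ back into exactly the state the caller expects before the next step executes. To keep the induction honest I would first pin down the contracts of $\choosevertex$, $\cleft$, $\cright$ and $\restore$ precisely --- $\choosevertex$ returns a vertex of $N^*(S)$; $\cleft$ converts a certificate of $S$ into one of $S \cup \{v\}$; $\cright$ converts a certificate of $S$ in $G$ into one of $S$ in $G - \{v\}$; and $\restore$ reverts the most recent modification recorded in $I$ --- and then the argument above goes through verbatim. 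The concrete implementations of these operations, which are deferred to the following sections, only need to be verified against these contracts, so this lemma's proof does not depend on them.
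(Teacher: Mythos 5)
Your proof is correct and follows essentially the same route as the paper's: item (i) from the fact that $\choosevertex$ returns a vertex of $N^*(S)$ and $S$ is unchanged on the right branch, item (ii) from the contracts of $\cleft$, $\cright$, and $\restore$, and item (iii) from Lemma~\ref{lem:leftbranch} on the left and the $|C|=k$ guard on the right. You additionally make the induction explicit with a base case at the top-level call and flag the bookkeeping concern about graph state, which the paper leaves implicit, but the substance of the argument is the same.
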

\begin{proof}
	(i)~By the definition of operation $\choosevertex(S,C)$, $v
	\in N^*(S)$ (line~\ref{line:choose}) and thus, on the
	recursive call of line~\ref{line:lrecursion} (corresponding to
	a left branch), $S \cup \{v\}$ is still connected. On the
	recursive call of line \ref{line:rrecursion} (corresponding to
	a right branch), $S$ is not altered. Thus we maintain the
	invariant that $S$ is connected.  (ii)~Before both recursive
	calls of lines~\ref{line:choose} and~\ref{line:lrecursion},
	the operations that maintain the certificate are called (lines
	\ref{line:cleft} and \ref{line:cright}). After returning from
	the recursion the certificate is restored to its previous
	state (lines \ref{line:restore1} and \ref{line:restore2}) and
	thus the invariant $S \sqsubseteq C$ is maintained.  (iii)~By
	Lemma~\ref{lem:leftbranch}, on the recursive call of
	line~\ref{line:lrecursion} the invariant is trivially
	maintained. On the case of right branching, recursive call of
	line~\ref{line:rrecursion}, we verify that $S$ still belongs
	to a connected component of size at least $k$
	(line~\ref{line:isunary}).
\end{proof}

\begin{lemma}
	\label{lem:correctness_sub}
	A call to $\listsubgraphs(\langle v_i \rangle, C)$ lists the
	$k$-subgraphs that include $v_i$.
\end{lemma}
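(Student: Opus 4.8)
The plan is to prove, by induction on $k-|S|$, the more precise statement that a call $\listsubgraphs(S,C)$ satisfying its invariant outputs exactly the $k$-subgraphs $G[V']$ with $S\subseteq V'$ that survive in the residual graph (i.e.\ avoid the already-deleted vertices $v_1,\dots,v_{i-1}$), each of them once; the lemma is then the special case $S=\langle v_i\rangle$. For the base case $|S|=k$, invariant~(i) of Lemma~\ref{lem:invariant} tells us $S$ is connected, so $G[S]$ is the \emph{unique} $k$-subgraph with $S\subseteq V'$, and the algorithm prints it once on line~\ref{line:print} and returns.

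For the inductive step, assume $|S|<k$ and let $v:=\choosevertex(S)\in N^*(S)$. Every target $k$-subgraph $G[V']$ with $S\subseteq V'$ either has $v\in V'$ or $v\notin V'$, and these two cases are disjoint. If $v\in V'$ then $S\cup\{v\}\subseteq V'$; the call $\listsubgraphs(S\cup\{v\},C)$ on line~\ref{line:lrecursion} has a valid invariant because $\cleft$ makes $C$ a certificate of $S\cup\{v\}$ (Lemma~\ref{lem:invariant}), and by Lemma~\ref{lem:leftbranch} the required $k$-subgraph exists, so by the inductive hypothesis this call lists every such $G[V']$ exactly once. If $v\notin V'$ then $G[V']$ is precisely a $k$-subgraph containing $S$ in $G-\{v\}$; after $\vdel(v)$ and $\cright(C,v)$ the object $C$ is, by the specification of the certificate, a largest connected vertex set containing $S$ in $G-\{v\}$ truncated at size $k$, so $|C|=k$ holds if and only if such a $k$-subgraph exists. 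Hence the test on line~\ref{line:isunary} invokes $\listsubgraphs(S,C)$ exactly when there is something to report, and by the inductive hypothesis that call lists every target with $v\notin V'$ once; when $|C|<k$ there are no such targets and skipping is correct.

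Because the two branches cover complementary cases and, by induction, neither repeats a subgraph, no $G[V']$ is emitted twice; the restore operations on lines~\ref{line:restore1}, \ref{line:undel} and~\ref{line:restore2} guarantee that the right branch really sees $G-\{v\}$ and that $G$ and $C$ are returned to their entry state, so the induction is well founded. I expect the only delicate point to be the completeness of the right branch: the argument hinges on $\certificate$ (and its incremental update $\cright$) returning a \emph{largest} connected set through $S$, capped at $k$ --- precisely the property established when those operations are implemented in the following sections. Granting that specification, everything else is a routine structural induction mirroring Lemma~\ref{lem:correctness} of the previous chapter.
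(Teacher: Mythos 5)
Your proof is correct and follows essentially the same approach as the paper's: partition the target $k$-subgraphs on whether they contain $v=\choosevertex(S)$, argue the left branch covers the $v\in V'$ case and the right branch (guarded by the $|C|=k$ test after $\cright$) covers the $v\notin V'$ case, and use disjointness plus the restore operations to rule out duplicates. You merely make explicit the structural induction on $k-|S|$ that the paper leaves implicit, which is a welcome tightening rather than a different route.
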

\begin{proof}
	A call to $\listsubgraphs(S, C)$ recursively lists all
	$k$-subgraphs that include $v \in N^*(S)$
	(line~\ref{line:rrecursion}). By Lemma~\ref{lem:invariant} we
	maintain a certificate $C$ that allows us to verify if there
	exist any $k$-subgraphs that include $S$ in the residual graph
	$G - \{v\}$ (line~\ref{line:isunary}). If that is the case, we
	invoke the right branch of the recursion
	(line~\ref{line:rrecursion}). When $|S|=k$, we reach the base
	case of the recursion and $S$ induces a $k$-subgraph that we
	output (line~\ref{line:print}). Thus, we list all the
	$k$-subgraphs. Noting that we remove vertex $v$ on the right
	branch of the recursion and that the set of $k$-subgraphs that
	include $v$ is disjoint from the set of subgraphs that does
	not include $v$, the same $k$-subgraph is not listed twice.
\end{proof}

\section{Amortization strategy}

Let us consider the recursion tree $R_i$ of $\listsubgraphs(S, C)$. We
denote a node $r \in R_i$ by the arguments $\langle S,C \rangle$ of
$\listsubgraphs(S, C)$. At the root $r_i = \langle S,C \rangle$ of
$R_i$, $S=\langle v_i \rangle$.  A \emph{leaf} node in the recursion
tree corresponds to the base case of the recursion (and thus has no
children). We say that an internal node $r \in R$ is \emph{binary} if
and only if both the left and right branch recursive calls are
executed.  The node $r$ is unary otherwise (by
Lemma~\ref{lem:leftbranch} left branching is always performed). 

We propose a time cost charging scheme for each node $r = \langle S, C
\rangle \in R_i$. This cost scheme drives the definition of the
certificate and implementation of the update operations.

\begin{equation}
	\label{eq:charging}
T(r)=
\begin{cases}
	O\left(|E[S]|\right) & \text{if $r$ is $\mathit{leaf}$} \\
	O\left(1\right) & \text{otherwise} \\
\end{cases}
\end{equation}

From the correctness of the algorithm, it follows that the leaves of
the recursion tree $R_i$ are in one-to-one correspondence to the set of
$k$-subgraphs that include $v_i$. Additionally, note that $E[S]$ are
the edges in those $k$-subgraphs.

\begin{lemma}
	\label{lem:total_cost}
	The total time over all the nodes in the recursion trees $R_1,
	\ldots, R_n$ given the charging scheme in
	Eq.~\ref{eq:charging} is $O(\sum_{G[V'] \in
	\setofisg_k(G)}{|E[V']|})$
\end{lemma}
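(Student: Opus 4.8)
The plan is to charge the total running time to the leaves of the recursion trees $R_1,\dots,R_n$: by Equation~\ref{eq:charging} every non-leaf node costs only $O(1)$ while a leaf whose induced vertex set is $V'$ costs $O(|E[V']|)$, so it suffices to show that the leaves account for $\sum_{G[V']\in\setofisg_k(G)}|E[V']|$ and that the aggregate $O(1)$ cost of the internal nodes is dominated by this quantity.

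First I would pin down the leaves. Each $G[V']\in\setofisg_k(G)$ has a unique vertex of minimum index, say $v_i$; then $V'$ contains $v_i$ and no $v_j$ with $j<i$, so by Lemma~\ref{lem:correctness_sub} and the one-to-one correspondence between the leaves of $R_i$ and the $k$-subgraphs reported by $\listsubgraphs(\langle v_i\rangle,\cdot)$, the leaf at which $G[V']$ is output lies in $R_i$ and in no other tree. Hence the leaves of $R_1,\dots,R_n$ are in bijection with $\setofisg_k(G)$, and summing their costs gives exactly $O(\sum_{G[V']\in\setofisg_k(G)}|E[V']|)$. Writing $s=|\setofisg_k(G)|$ and letting $s_i$ be the number of leaves of $R_i$, we have $\sum_i s_i=s$.

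Next I would bound the number of internal nodes of each $R_i$ by $O(s_i k)$, mirroring Corollary~\ref{cor:right-branch}. Two structural facts are needed. (a) The recursion tree is ``$k$-left-bounded'': a left recursive call (line~\ref{line:lrecursion}) appends one vertex to $S$, a right call (line~\ref{line:rrecursion}) leaves $S$ unchanged, $|S|=1$ at the root, and the recursion stops precisely when $|S|=k$, so every root-to-leaf path uses exactly $k-1$ left edges. (b) Every binary node has a descending leaf on each side: the left call is always executed and its subtree contains a leaf by Lemma~\ref{lem:leftbranch}, while the right call is executed only when the refreshed certificate satisfies $|C|=k$, in which case (Lemma~\ref{lem:invariant}) there is a $k$-subgraph including $S$ in $G-\{v\}$, so the right subtree also contains a leaf. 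From~(b), compacting each maximal chain of unary nodes into a single edge yields a binary tree with $s_i$ leaves, hence at most $s_i-1$ binary nodes. For the unary nodes I would use a charging argument: charge a unary node $r$ to the leftmost leaf reachable from $r$ by following left children only (which is well defined since every internal node has a left child); for a fixed leaf $\ell$, every unary node charging to it uses a distinct left edge from the suffix of consecutive left edges on the root-to-$\ell$ path, and there are at most $k-1$ left edges on that path, so $\ell$ is charged at most $k-1$ times, giving at most $s_i(k-1)$ unary nodes. Thus $R_i$ has $O(s_i k)$ internal nodes.

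Finally, summing over $i$, the internal nodes of all trees contribute $O(\sum_i s_i k)=O(sk)$ time. Since each $k$-subgraph is connected on $k\ge 2$ vertices, $|E[V']|\ge k-1$ for every $G[V']\in\setofisg_k(G)$, whence $\sum_{G[V']\in\setofisg_k(G)}|E[V']|\ge s(k-1)=\Omega(sk)$; the internal-node cost is therefore absorbed by the leaf cost, and the total time is $O(\sum_{G[V']\in\setofisg_k(G)}|E[V']|)$, as claimed. The delicate point I anticipate is re-deriving the analogue of Corollary~\ref{cor:right-branch} for Algorithm~\ref{alg:ListSubgraphs2} — in particular fact~(b), which hinges on the certificate test $|C|=k$ genuinely witnessing a $k$-subgraph in the residual graph $G-\{v\}$ — and then running the charging argument cleanly; the remainder is bookkeeping.
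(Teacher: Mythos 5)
Your proposal follows the paper's own argument essentially step for step: one-to-one correspondence between leaves and $k$-subgraphs, bounding binary nodes by the number of leaves, bounding unary nodes via the $k$-left-bounded property (at most $k$ left branches per root-to-leaf path), and absorbing the $O(sk)$ internal-node cost into the leaf cost using $|E[V']|\ge k-1$. The only difference is that you spell out the charging argument for unary nodes and the bijection across $R_1,\dots,R_n$ more explicitly, whereas the paper states these facts more tersely; the substance is the same.
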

\begin{proof}
	Since there is a one-to-one correspondence from the leaves in
	$R_1, \ldots, R_n$ to the subgraphs in $\setofisg_k(G)$ then
	the sum of the cost over all the leaves in the recursion tree
	is \mbox{$O(\sum_{G[V'] \in \setofisg_k(G)}{|E[V']|})$}.
	Furthermore, the total number of binary nodes in $R$ is
	bounded by the number of leaves.  Additionally, when
	considering any root to leaf path in $R$ there are at most $k$
	left branches.  This implies that there exist $O(k
	|\setofisg_k(G)|)$ internal nodes in $R$.  Since any graph
	$G[V'] \in \setofisg_k(G)$ is connected, $|E[V']| \ge k-1$ and
	$O(k |\setofisg_k(G)| + \sum_{G[V'] \in
	\setofisg_k(G)}{|E[V']|}) = O(\sum_{G[V'] \in
	\setofisg_k(G)}{|E[V']|})$.
\end{proof}

This target cost will drive the design of the operations on the graph
and the certificate. Although, on internal nodes of $R$, we are not
able to reach $O(1)$ time  we will show that the time cost can be
amortized according to Eq.~\ref{eq:charging}.

\section{Certificate}

A certificate $C$ of a $k'$-subgraph $G[S]$ is a data structure that uses
a truncated multi-source depth-first-search tree and classifies a
vertex $v \in N^*(S)$ as \emph{internal} (to the certificate $C$) or
\emph{external}. We say that a vertex $v \in N^*(S)$ is internal if $v
\in C$ and external otherwise. 

Given a $k'$-subgraph $G[S]$ with $S = ⟨v_1,v_2,\ldots,v_{k'}⟩$, the
multi-source DFS tree $C$ contains the vertices in $S$, which are
conceptually treated as a collapsed vertex with the ordered cutlist
$\hat C(S)$ as adjacency list. Recall that $\hat C(S)$ contains the
edges in the cutset $C(S)$ ordered by the rank of their endpoints in
$S$. If two edges have the same endpoint vertex $v \in S$, we use the
order in which they appear in the adjacency list of $v$ to break the
tie.  Also, all the vertices in $S$ are conceptually marked as visited
at the beginning of the DFS, so $u_j$ is never part of the DFS tree
starting from $u_i$ for any two distinct $u_i,u_j \in S$. Hence the
adopted terminology of multi-source.

The multi-source DFS tree is truncated when it reaches $k$ vertices. In
this case we say that $|C|=k$ and know that there exists a
$k$-subgraph that that includes all vertices in $S$. This is the main
goal of maintaining the certificate.

The data structure that implements the certificate $C$ of $S$ is a
partition $C = F \cup B$ where $F$ is a forest storing the edges of
the multi-source DFS whose both endpoints are in $V[C]-V[S]$. It
represents the remaining forest of the multi-source DFS tree when the
collapsed vertices in $S$ are removed. Moreover, $B$ is a vector that
allows to efficiently test if a vertex $v$ belongs to $C$.

\begin{enumerate}
	\item[(i)] We store the forest as a sorted doubly-linked list
		of the roots of the trees in the forest. The order of
		this list is that induced by $\hat C(S)$: a root $r$
		precedes a root $t$ if the edge in $(x,r) \in \hat
		C(S)$ precedes $(y,t) \in \hat C(S)$. We also keep a
		pointer to the last leaf, in the order of the
		multi-source DFS visit, of the last tree in $F$.
	\item[(ii)] We maintain a vector $B$ where $B[v]=\true$ if and
		only if $v\in C$.
\end{enumerate}

Furthermore, the certificate implements the following operations:

\begin{itemize}
\item Let $r$ be the root of the last tree in $F$. Operation
	$\promote(C)$ removes $r$ from $F$ and appends the children of
	$r$ to $F$ (and thus the children of $r$ become roots of trees
	in $F$).
\item Let $l$ be the last leaf of the last tree $T$ in $F$ (in the
	order of the DFS visit). Operation $\removelastleaf(C)$
	removes $l$ from $T$. Note that $l$ can also the last root in
	$F$, in this case this root is removed since $T$ becomes
	empty.
\end{itemize}

\begin{lemma}
	\label{lem:op_promote_removelastleaf}
	Operations $\promote(C)$ and $\removelastleaf(C)$ can be
	performed in $O(1)$ time.
\end{lemma}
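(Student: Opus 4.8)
The plan is to pin down the pointer-level representation of the forest $F$ so that each of the two operations becomes a constant number of list splices plus a single update of $B$. Beyond the doubly-linked list of roots (with a pointer to its last element), every vertex $x$ of $F$ should store its parent, its number of children $\fdeg(x)$, and doubly-linked sibling pointers within its parent's child list, exactly as in the certificate of Chapter~\ref{chapter:ListingKSubtrees} (cf.\ Lemma~\ref{lem:certificate-timespace}). The key extra ingredient is to maintain the whole multi-source DFS order of $V[F]$ as a doubly-linked list $\Pi$, which $\certificate(S)$ builds for free during its truncated DFS; the stated ``pointer to the last leaf of the last tree in $F$'' is then just the tail of $\Pi$, which is legitimate since the last vertex in the preorder of any tree (hence of the last tree, whose preorder is a suffix of $\Pi$) is necessarily a leaf. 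Finally I would let each vertex keep a pointer to its occurrence in $\Pi$, set to \NULL\ exactly when the vertex is not in $V[F]$, so that ``$x$ is a root of $F$'' can be decided in $O(1)$ by testing whether $\mathrm{parent}(x)$ lies in $V[F]$.

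For $\removelastleaf(C)$ I would take $l$ to be the tail of $\Pi$. If $\mathrm{parent}(l)\notin V[F]$ then $l$ is a one-vertex tree and, being last in $\Pi$, is the last root, so it is deleted from the root list through the last-root pointer; otherwise $l$ is unlinked from $\mathrm{parent}(l)$'s child list by a constant number of sibling-pointer updates and $\fdeg(\mathrm{parent}(l))$ is decremented. Then $l$ is popped from $\Pi$ and we set $B[l]:=\false$. It remains to check that the new tail of $\Pi$ is the new last leaf: the preorder-predecessor of $l$ is either $\mathrm{parent}(l)$, which has just become childless, or the last preorder vertex of the subtree of $l$'s preceding sibling; in both cases it is a leaf, so the invariant is restored in $O(1)$.

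For $\promote(C)$ let $r$ be the last element of the root list. I would splice the (already correctly ordered) sibling list of $r$'s children into the root list in place of $r$, move the last-root pointer to $r$'s last child or, if $r$ is childless, to $r$'s predecessor in the root list; splice $r$ out of $\Pi$, using that $r$'s successor in $\Pi$ is its first child and its predecessor is the last vertex of the preceding tree; and set $r$'s $\Pi$-occurrence to \NULL. Each child $r_i$ keeps its subtree, its $\fdeg$ and its sibling pointers unchanged, and its now-stale parent pointer is harmless because the only use of parent pointers in these operations is the $V[F]$-membership test above, which on the stale value $r\notin V[F]$ correctly reports ``root''. The tail of $\Pi$ is unchanged when $r$ has children — it lies in the subtree of $r$'s last child, still the last tree — and moves to $r$'s $\Pi$-predecessor otherwise. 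All updates are again $O(1)$.

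I expect the only real obstacle to be the maintenance of the ``last leaf'' pointer: the naive choice of a per-vertex ``last leaf of my subtree'' field forces $\removelastleaf$ to propagate the change up a possibly long chain of single-child ancestors, destroying the $O(1)$ bound, whereas the explicit DFS-order list $\Pi$ confines every modification of both operations to the tail region of $\Pi$ and of the root list. The secondary point needing care is avoiding the $\Theta(c)$ cost of re-rooting the $c$ children of $r$ in $\promote$, which is why their parent pointers are deliberately left stale and rootness is read off $V[F]$-membership rather than an explicit flag.
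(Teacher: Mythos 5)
Your proof is correct and follows the same high-level representation the paper uses (first-child/next-sibling encoding of $F$, a doubly-linked root list, and a ``last leaf'' pointer), but it supplies a non-trivial detail that the paper's own two-sentence proof glosses over: how the last-leaf pointer is \emph{kept} correct in $O(1)$ across successive $\removelastleaf$ and $\promote$ calls. The paper asserts that the last leaf ``only has to be removed,'' without explaining how to find its replacement, and you are right that a per-vertex ``last leaf of my subtree'' field would force a potentially long chain of updates. Your $\Pi$ list (the full multi-source DFS preorder as a doubly-linked list, built for free inside $\certificate(S)$, with per-vertex pointers into it) is exactly the missing mechanism: removing a vertex or splicing one out keeps the tail of $\Pi$ equal to the last-visited leaf of the last tree, and all updates stay confined to a constant neighborhood of that tail. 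Your second observation --- leaving the $c$ children's parent pointers deliberately stale under $\promote$ and reading ``rootness'' off $V[F]$-membership instead --- is also a correct and necessary refinement, since naively re-parenting the children to \NULL\ would cost $\Theta(c)$. In short, same route, but you have actually closed the $O(1)$-maintenance argument that the paper takes for granted.
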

\begin{proof}
	Operation $\promote(C)$. By maintaining the trees in $F$ using
	the classic \emph{first-child next-children} encoding of
	trees, it is possible to take the linked list of children of
	$r$ and append it to $F$ in constant time.
	Operation $\removelastleaf(C)$ can also be implemented in
	constant time since we maintain a pointer to the last leaf $l$
	and only have to remove it from the certificate.
\end{proof}

As we will show in Section~\ref{subsec:maintain-certificate}, this
implementation of the certificate allows it to be efficiently updated.
Additionally, we show that it can be computed in time proportional to
the number of edges in the certificate $C$.

\begin{lemma}
	\label{lem:op_certificate}
	Operation $C = \certificate(\langle v_i \rangle)$ can be
	performed in $O(|E[C]|)$ time.
\end{lemma}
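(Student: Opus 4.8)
I need to show that computing $C = \certificate(\langle v_i \rangle)$ takes $O(|E[C]|)$ time, where $C$ is the certificate — a truncated multi-source DFS tree that grows until it has $k$ vertices (or until it exhausts the connected component of $v_i$).

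**The structure of the argument.** The certificate is built by running a DFS-like exploration starting from $v_i$, treating the already-visited set $S$ (initially just $\{v_i\}$) as collapsed, and stopping as soon as $k$ distinct vertices have been reached. The claim is essentially that the truncated DFS visits only $O(|E[C]|)$ edges, where $E[C]$ is the set of edges induced by the $\le k$ vertices of the certificate.

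**Plan of the proof.**

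The plan is to argue that every edge traversed during the construction of $C$ has both endpoints in $V[C]$, so the total number of edge-traversals is at most $2|E[C]|$, and each traversal costs $O(1)$ with standard adjacency-list representation, plus $O(1)$ per vertex to build the forest/vector bookkeeping of points (i) and (ii).

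\begin{proof}
We build $C$ by running a (multi-source) truncated DFS from $v_i$, treating $S=\langle v_i\rangle$ as a single collapsed source: we visit vertices depth-first, marking each newly discovered vertex, and we stop as soon as $k$ distinct vertices (including $v_i$) have been marked, or earlier if no unvisited neighbour remains. Let $V[C]$ be the set of marked vertices at the end; by construction $|V[C]|\le k$ and $|V[C]|=k$ exactly when the truncation fired.

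First I claim that every edge $(u,w)$ explored during this traversal has both $u,w\in V[C]$. Indeed, the current vertex $u$ from whose adjacency list we scan $(u,w)$ is already marked, hence $u\in V[C]$. For the other endpoint $w$: if $w$ was already marked when the edge is scanned, then $w\in V[C]$ trivially; if $w$ was unmarked, then scanning $(u,w)$ causes $w$ to be marked and explored (unless the truncation fires exactly at this step, in which case $w$ is still the vertex that completes $V[C]$), so again $w\in V[C]$. Consequently, the edges we ever look at form a subset of $E[C]=\{(u,w)\in E\mid u,w\in V[C]\}$, and each such edge is examined at most once from each endpoint, i.e.\ at most twice in total.

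Using the standard adjacency-list representation of $G$ (with the ordered cutlist obtained implicitly by scanning $\adj(v_i)$, and then the adjacency lists of subsequently discovered vertices, in order), each edge examination costs $O(1)$: we test the mark bit of the other endpoint, and either follow the edge or skip it. Thus the traversal itself costs $O(|E[C]|)$ time. Finally, we build the representation of $C$ from the DFS: the forest $F$ of tree edges among $V[C]-V[S]$ is produced directly by the DFS using the first-child/next-sibling encoding (point (i)), with the roots ordered by the order in which their parent edges appear in the cutlist $\hat C(S)$ — which is exactly the order the DFS discovers them — and we set $B[w]:=\true$ for each $w\in V[C]$ and keep the pointer to the last leaf visited (point (ii)); all of this is $O(1)$ amortised work per discovered vertex and per tree edge, hence $O(|V[C]|)=O(|E[C]|)$ additional time since $G[C]$ is connected and so $|E[C]|\ge |V[C]|-1$. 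Altogether the cost is $O(|E[C]|)$.
\end{proof}

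**Where the difficulty lies.** The one subtle point — and the step I would be most careful about — is the edge scanned at the very moment the truncation fires: the vertex $w$ being discovered there is the $k$-th vertex, so it does belong to $V[C]$, and no further edges are scanned afterwards; one must also make sure that, since we never scan $\adj(w)$ for this last-discovered $w$, we never look at an edge leaving $V[C]$. This is why the bound is in terms of $|E[C]|$ and not something larger like the sum of degrees of $V[C]$. The rest is routine: the forest/vector bookkeeping is $O(1)$ per node and per edge by Lemma~\ref{lem:op_promote_removelastleaf}'s encoding, and connectivity of $G[C]$ absorbs the $O(|V[C]|)$ term into $O(|E[C]|)$.
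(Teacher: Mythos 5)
Your proof is correct and follows essentially the same route as the paper's: run a truncated DFS from $v_i$ stopping at $k$ vertices, observe that every edge scanned is either a tree edge (whose far endpoint is thereby marked) or a back edge (whose endpoints are both already marked), so all scanned edges lie in $E[C]$ and are touched at most twice, and the $O(|V[C]|)$ bookkeeping for $F$ and $B$ is absorbed into $O(|E[C]|)$ because $G[C]$ is connected. The only addition you make beyond the paper's terse argument is the explicit handling of the boundary case when the $k$-th vertex is discovered, which is a useful clarification but not a different approach.
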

\begin{proof}
	We start performing a classic DFS visit starting from $v_i$
	and adding the tree edges visited to $F$.
	When $k$ vertices are collected, we stop the visit. On vertex
	$x$, we visit edges $e=(x,y)$ that either collect a vertex $y$
	not previously visited and we set $B[y]=\true$ or connect $x$
	to a vertex $y$ already visited ($e$ is a back edge). In both
	cases $e \in E[C]$ and we do not visit such edge $e$ more than
	twice (once of each endpoint).
\end{proof}

\usetikzlibrary{snakes}
\tikzstyle{vertex}=[circle,fill=black,minimum size=5pt,inner sep=0pt]
\tikzstyle{edge} = [draw,thin,-]
\tikzstyle{triangle} = [triangle]

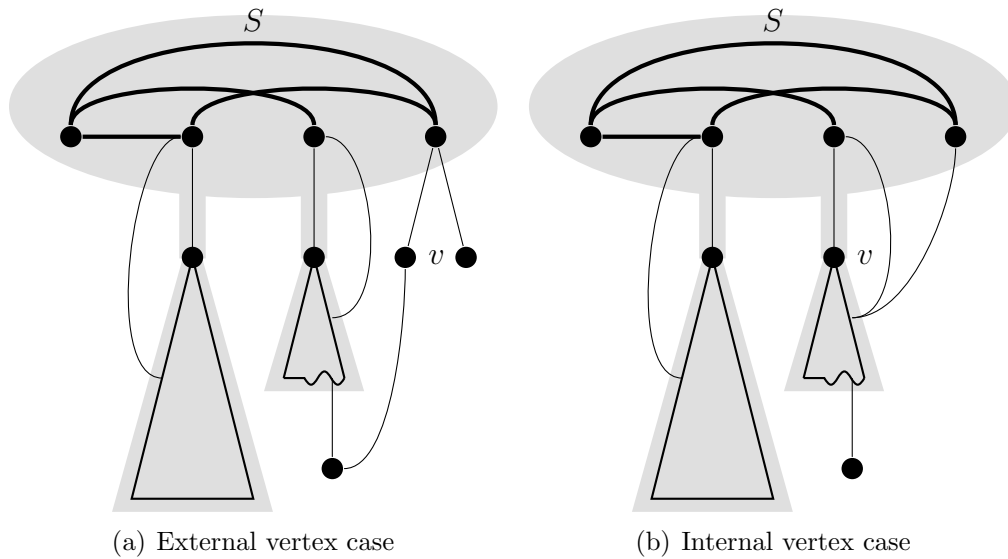
\begin{figure}[t]

\centering

\subfigure[External vertex case]{
\label{fig:external}

\begin{tikzpicture}[style=thick,scale=0.8] 

\draw[color=gray!25,fill=gray!25] (-3,-1.5) ellipse (4 and 1.5);
\draw[color=gray!25,fill=gray!25] (-4,-3.5) -- (-5.3,-8.2) -- (-2.7,-8.2) -- cycle;
\draw[color=gray!25,fill=gray!25] (-2,-3.5) -- (-2.8,-6.2) -- (-1.2,-6.2) -- cycle;
\draw[color=gray!25,fill=gray!25] (-4.2,-2) -- (-4.2,-4) -- (-3.8,-4) -- (-3.8,-2) -- cycle;
\draw[color=gray!25,fill=gray!25] (-2.2,-2) -- (-2.2,-4) -- (-1.8,-4) -- (-1.8,-2) -- cycle;

\node[vertex] (a) at (-6,-2) {$a$};
\node[vertex] (b) at (0,-2) {$a$};
\node[vertex] (c) at (-4,-2) {$a$};
\node[vertex] (g) at (-2,-2) {$a$};
\node[vertex] (k) at (-2,-4) {$a$};
\node[vertex] (m) at (-4,-4) {$a$};

\path[edge,ultra thick] (a) .. controls +(up:2cm) and +(up:2cm) .. node [auto] {$S$} (b);
\path[edge,ultra thick] (a) -- (c);
\path[edge,ultra thick] (b) .. controls +(up:1cm) and +(up:1cm) .. (c);
\path[edge,ultra thick] (a) .. controls +(up:1cm) and +(up:1cm) .. (g);

\draw (-4,-4) -- (-5,-8) -- (-3,-8) -- cycle;

\draw (-2,-4) -- (-2.5,-6);
\draw[snake=coil,segment aspect=0.0] (-1.5,-6) -- (-2.5,-6);
\draw (-1.5,-6) -- (-2,-4);

\node[vertex,label=right:$v$] (d) at (-0.5,-4) {$a$};
\node[vertex] (e) at (0.5,-4) {$a$};
\node[vertex] (f) at (-1.7,-7.5) {$a$};

\path[edge] (b) -- (d);
\path[edge] (b) -- node [right] {} (e);
\path[edge] (d) .. controls +(down:1cm) and +(right:1cm) .. (f);
\path[edge] (c) -- (-4,-4);
\path[edge] (g) -- (-2,-4);

\path[edge] (-1.7,-6) -- (f);
\path[edge] (-1.7,-5) .. controls +(right:1cm) and +(right:1cm) .. (g);
\path[edge] (-4.5,-6) .. controls +(left:1cm) and +(left:1cm) .. (c);
\end{tikzpicture} 
}
%
\subfigure[Internal vertex case]{
\label{fig:backedge}

\begin{tikzpicture}[style=thick,scale=0.8] 

\draw[color=gray!25,fill=gray!25] (-3,-1.5) ellipse (4 and 1.5);
\draw[color=gray!25,fill=gray!25] (-4,-3.5) -- (-5.3,-8.2) -- (-2.7,-8.2) -- cycle;
\draw[color=gray!25,fill=gray!25] (-2,-3.5) -- (-2.8,-6.2) -- (-1.2,-6.2) -- cycle;
\draw[color=gray!25,fill=gray!25] (-4.2,-2) -- (-4.2,-4) -- (-3.8,-4) -- (-3.8,-2) -- cycle;
\draw[color=gray!25,fill=gray!25] (-2.2,-2) -- (-2.2,-4) -- (-1.8,-4) -- (-1.8,-2) -- cycle;

\node[vertex] (a) at (-6,-2) {$a$};
\node[vertex] (b) at (0,-2) {$a$};
\node[vertex] (c) at (-4,-2) {$a$};
\node[vertex] (g) at (-2,-2) {$a$};
\node[vertex,label=right:$v$] (k) at (-2,-4) {$a$};
\node[vertex] (m) at (-4,-4) {$a$};

\path[edge,ultra thick] (a) .. controls +(up:2cm) and +(up:2cm) .. node [auto] {$S$} (b);
\path[edge,ultra thick] (a) -- (c);
\path[edge,ultra thick] (b) .. controls +(up:1cm) and +(up:1cm) .. (c);
\path[edge,ultra thick] (a) .. controls +(up:1cm) and +(up:1cm) .. (g);

\draw (-4,-4) -- (-5,-8) -- (-3,-8) -- cycle;

\draw (-2,-4) -- (-2.5,-6);
\draw[snake=coil,segment aspect=0.0] (-1.5,-6) -- (-2.5,-6);
\draw (-1.5,-6) -- (-2,-4);

\node[vertex] (f) at (-1.7,-7.5) {$a$};

\path[edge] (c) -- (-4,-4);
\path[edge] (g) -- (-2,-4);

\path[edge] (-1.7,-6) -- (f);
\path[edge] (-1.7,-5) .. controls +(right:1cm) and +(right:1cm) .. (g);
\path[edge] (-4.5,-6) .. controls +(left:1cm) and +(left:1cm) .. (c);
\path[edge] (-1.7,-5) .. controls +(right:1cm) and +(down:1cm) .. node [right] {} (b);

\end{tikzpicture} 
}

\caption{Choosing vertex $v \in N^*(S)$. The certificate $C$ is shadowed.}

\label{fig:choosevertex}
\end{figure}

Also with the goal of efficiently updating the certificate, we define
a specialization of the operation $v = \choosevertex(S,C)$
(illustrated in Figure~\ref{fig:choosevertex}). Recall that the only
requirement for the correctness of the algorithm is that $v \in
N^*(S)$.  By using the classification of $v$ as internal or external,
we are able to select a vertex that allows efficiently updating the
certificate.

\begin{itemize}
	\item $\choosevertex(S,C)$ returns a vertex $v \in N^*(S)$
		such that: $v$ is external if such vertex $v \notin C$
		exists; otherwise return the last root of $F$ (in this
		case $v$ is internal).
\end{itemize}

\subsection{Maintaining the certificate}
\label{subsec:maintain-certificate}

Let us now show how to maintain the certificate in the time bounds
defined in Eq.~\ref{eq:charging}. In the case of operation
$\cleft(C,v)$ with $v = \choosevertex(S,C)$, we can update the
certificate for the left branch in constant time. When $v$ is an
external vertex, it is sufficient to remove the last leaf (in DFS
order) of the last tree of $F$ while appending vertex $v$ to $S$. If
$v$ is internal, and thus already part of $C$, the vertices in the
certificate remain the same and it is enough to update the data
structure that represents $C$.

\begin{lemma}
	\label{lem:cleft_time}
	Operation $I = \cleft(C,v)$ can be performed in $O(1)$ time.
\end{lemma}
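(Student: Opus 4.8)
The plan is to prove Lemma~\ref{lem:cleft_time} by a case analysis that mirrors the definition of $v=\choosevertex(S,C)$, reducing each case to a constant number of the $O(1)$-time primitives $\promote$ and $\removelastleaf$ guaranteed by Lemma~\ref{lem:op_promote_removelastleaf}, together with a constant number of writes to the membership vector $B$. The whole argument is essentially: \emph{$\cleft$ never has to rescan an adjacency list or a subtree}, so it inherits the $O(1)$ bound of those two primitives.

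First I would handle the external case, where $v\in N^*(S)$ but $v\notin C$. Appending $v$ to $S$ increases $|V[S]|$ by one, so to keep the certificate at exactly $k$ vertices we must shed one vertex from the forest part $F$; the vertex to drop is the last leaf $\ell$ (in the order of the multi-source DFS visit) of the last tree of $F$, because $v$ is placed at the tail of the source list $S\cup\{v\}$ and hence its cut edges come last in $\hat C(S\cup\{v\})$, so the truncated multi-source DFS of $S\cup\{v\}$ is exactly that of $S$ with its final leaf cut away. Operationally $\cleft$ then performs one $\removelastleaf(C)$, updates the $O(1)$ entries of $B$ affected (setting $B[\ell]:=\false$, and $B[v]:=\true$ since now $v\in V[C]$), and stores $\ell$ with its position together with the touched entries of $B$ in the returned delta structure $I$; by Lemma~\ref{lem:op_promote_removelastleaf} the call is $O(1)$ and the rest is a constant number of pointer/array writes. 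Second I would handle the internal case, where no external vertex exists and $\choosevertex$ returns $v$ as the root $r$ of the last tree of $F$: moving $v$ out of $F$ and into the tail of $S$ leaves $V[C]$ unchanged, and the only structural edit to $F$ is to promote $v$'s children to roots — appended after the remaining roots, consistently with $v$ being last in the source order — which is precisely $\promote(C)$, again $O(1)$ by Lemma~\ref{lem:op_promote_removelastleaf}, with $I$ recording the constantly many modified links.

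The step that needs the most care is not the running-time bookkeeping, which is routine once Lemma~\ref{lem:op_promote_removelastleaf} is available, but justifying that these constant-time edits genuinely yield the truncated multi-source DFS of $S\cup\{v\}$, i.e.\ that the invariant $S\cup\{v\}\sqsubseteq C$ is preserved. Strictly this is a correctness claim rather than part of the time bound, and in the write-up I would state it as such (it is the analogue of the tree-edge case of the invariant argument in the $k$-subtrees chapter); but I would still sketch why appending $v$ at the end of the source order leaves the remaining DFS forest and its order untouched, since that is the structural fact that legitimises performing no linear-time work inside $\cleft$.
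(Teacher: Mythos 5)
Your proof is correct and takes essentially the same route as the paper: split on whether $v=\choosevertex(S,C)$ is external or internal, and reduce each branch to a single call of $\removelastleaf$ or $\promote$, respectively, invoking Lemma~\ref{lem:op_promote_removelastleaf} for the $O(1)$ bound. The additional care you take in justifying why the local edits really produce the truncated multi-source DFS of $S\cup\{v\}$, and in recording the $B$-vector writes in $I$, is consistent with the paper's treatment (the $B$-update convention is stated in the proof of Lemma~\ref{lem:cright_internal_time}).
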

\begin{proof}
	Given a certificate $C$ of $S$, it is possible to update $C$
	so that $C$ becomes the certificate of $S \cup \{v\}$ in
	$O(1)$ time. If $v$ is an external vertex (i.e.~$v \notin C$),
	when we append $v$ to $S$, the data structure $C$ is still
	valid but has one vertex too many (i.e.~it is a certificate
	that there exists a $k+1$-subgraph) and thus we call
	$\removelastleaf(C)$ to decrease the size by one while
	maintaining the property that $C$ is still a multi-source DFS
	tree. In the case that $v$ is an internal vertex, the vertices
	in the certificate $C$ are still the same but we have to
	update $F$ to reflect that the size of $V[S+\{v\}]-V[C]$ is
	now smaller. Since $v$ is the root of the last tree in $F$ it
	is enough to call operation $\promote(C)$ while maintaining
	the invariant that $C$ is a multi-source truncated DFS from
	$S$. The modifications to $C$ are stored in $I$ in order to
	restore the certificate $C$ to its previous state using
	operation $\restore(C,I)$.
\end{proof}

Performing the operation $\cright(C,v)$ when $v$ is external to the
certificate is also easy as the certificate $C$ remains valid after
the removal of vertex $v$.

\begin{lemma}
	\label{lem:cright_external_time}
	Operation $I = \cright(C,v)$ can be performed in $O(1)$ time when
	$v$ is external.
\end{lemma}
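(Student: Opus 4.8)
The plan is to show that when $v$ is external the certificate needs no update whatsoever, so that $\cright(C,v)$ merely returns an empty modification record $I$ and runs in constant time. Recall that, by the invariant $S \sqsubseteq C$, the structure $C$ is precisely the truncated multi-source DFS tree of $S$ obtained by scanning the ordered cutlist $\hat C(S)$ (and the adjacency lists) in the fixed order, stopping the instant $k$ vertices are collected; and $v$ external means $v \in N^*(S)$ but $v \notin C$. Since the precondition of $\listsubgraphs(S,C)$ guarantees $|C|=k$, the claim to establish is that this same DFS, run on $G-\{v\}$, produces exactly the same forest $F$ and membership vector $B$.

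The key observation is that the DFS executed on $G$ never traverses any edge incident to $v$. Indeed, if at some moment it examined an edge $(x,v)$ with $v$ not yet visited, it would make $v$ a tree vertex, contradicting $v\notin C$; and if it examined $(x,v)$ with $v$ already visited, then again $v\in C$, a contradiction. Hence the only way $v$ can remain outside the tree is that the truncated DFS halted --- having already gathered $k$ vertices --- before reaching, in the scanning order, any edge incident to $v$ (in particular before the position in $\hat C(S)$ of the edge joining $v$ to $S$). Consequently, deleting $v$ together with its incident edges removes only edges the DFS never touched, so the run on $G-\{v\}$ visits the same vertices in the same order and yields the identical $C = F\cup B$. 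Thus $C$ is already the correct certificate of $S$ in $G-\{v\}$, $\cright(C,v)$ leaves $F$ and $B$ unchanged, returns an empty $I$, and the later $\restore(C,I)$ is likewise trivial; all of this is $O(1)$ given the pointer-based representation of Lemma~\ref{lem:op_promote_removelastleaf}.

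The only real subtlety I expect is the halting argument of the second paragraph: one must pin down that ``$v$ external'' forces the truncation to have occurred strictly before the cutlist/adjacency-list position of every edge incident to $v$, which relies on the DFS processing $\hat C(S)$ and each adjacency list in a deterministic order and terminating the moment the $k$th vertex is discovered. Everything else --- the constant-size bookkeeping and the emptiness of $I$ --- is immediate.
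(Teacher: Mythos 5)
Your proof is correct and matches the paper's approach: the paper's own proof is the one-liner that since $v\notin C$, removing $v$ has no impact on the certificate, so $C$ remains a valid certificate of $S$ in $G-\{v\}$ and $I$ is empty. You simply expand the ``no impact'' claim into the fully explicit argument that the truncated multi-source DFS that produced $C$ never examined any edge incident to $v$ (else $v$ would be a tree vertex or an already-visited vertex, contradicting $v\notin C$), so running the same DFS on $G-\{v\}$ reproduces $C$ verbatim; this is a faithful, if more detailed, rendering of the same idea.
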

\begin{proof}
	Since $v \notin C$, removing the vertex $v$ has no impact in
	the certificate and therefore $C$ is still a valid certificate
	of $S$ in the graph $G - \{v\}$. In this case $I$ is empty.
\end{proof}

To perform the operation $\cright(C,v)$ when $v$ is internal is more
complex since we are removing from the graph $G$ a vertex that is part
of the certificate. By the definition of operation $\choosevertex(C)$,
this operation invalidates the last tree of the forest $F$ (other
trees have been completely explored by the DFS and, since the graph is
undirected, there are no cross edges between the trees in $F$).
Nevertheless, noting that there are no vertices in $N^*(S)$ that do
not belong to the certificate, we prove that we can recompute part of
the multi-source DFS tree in the budget defined in
Eq.~\ref{eq:charging}. When this recomputation is able to update the
certificate $C$ with $|C|=k$, we prove that we can charge its cost on
the leaf that outputs the $k$-subgraph $G[C]$. When the recomputation
leads to a $C$ with $|C|<k$, we prove that there still exists a leaf
where we can charge the cost of recomputing the certificate.
Furthermore, we prove that the same leaf is not charged twice and thus
we respect the cost scheme defined in Eq.~\ref{eq:charging}.

\begin{lemma}
	\label{lem:cright_internal_time} Operation $I = \cright(C,v)$,
	when $v$ is internal, can be performed in the time defined by
	Eq.~\ref{eq:charging}.
\end{lemma}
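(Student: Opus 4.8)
The plan is to show that although $v=\choosevertex(S,C)$ lies inside the certificate, deleting it corrupts only a localised part of $C$, that this part can be rebuilt at a cost proportional to the number of edges it touches, and that this cost can be charged to the leaves of the recursion tree within the budget of Eq.~\ref{eq:charging}.

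\textbf{Localising the damage and rebuilding.} By the definition of \choosevertex, an internal $v$ is the root of the \emph{last} tree $T$ of the forest $F$. Every tree of $F$ before $T$ has been completely explored by the multi-source DFS and, $G$ being undirected, there are no cross edges between trees of $F$; hence $S$ together with the trees preceding $T$ is still a correct prefix of a multi-source DFS of $G-\{v\}$ from $S$. So I would keep $S$ and those trees marked as visited, discard $T$, and \emph{resume} the multi-source DFS along the ordered cutlist $\hat C(S)$ of $G-\{v\}$ from the point just past the parent cutset edge of $v$ (recoverable from $F$), halting the instant $k$ vertices have been collected; let $C'$ be the resulting certificate, $|C'|\le k$, with $\restore$'s log $I$ built alongside. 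Exactly as in the proof of Lemma~\ref{lem:op_certificate}, halting at $k$ vertices forces every scanned edge to have both endpoints in $C'$ and to be scanned at most twice, so the rebuild costs $O(|E[C']|)$; moreover it only touches edges in the ``re-explored region'' $R$ (the part of $T$ re-attached to $S$ or to an earlier tree, plus any genuinely new vertices), so the sharper bound is $O(1+|E[R]|)$ with $R\subseteq C'$.

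\textbf{Charging.} If $|C'|=k$ then $G[C']$ is a $k$-subgraph containing $S$ but not $v$, so by correctness it is output at a unique leaf $\ell$ in the \emph{right} subtree of the current node, where the vertex set is exactly $C'$, and I charge the rebuild to $\ell$. If $|C'|<k$ then $C'$ is the whole connected component of $S$ in $G-\{v\}$, so every $k$-subgraph containing $S$ must contain $v$; since the node is valid such a $k$-subgraph exists, hence the \emph{left} subtree (rooted at $\langle S\cup\{v\},\cleft(C,v)\rangle$, explored before the $\cright$) is non-empty, and I may fix a connected $k$-set $W\supseteq C'\cup\{v\}$ whose subgraph $G[W]$ is output at a leaf $\ell$ there, with $|E[W]|\ge|E[C']|$, and charge the rebuild to $\ell$. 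In either case the charge is $O(|E[W']|)$ where $W'$ is the vertex set output at $\ell$, so the claim reduces to showing that every leaf receives a total charge of $O(|E[W']|)$.

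\textbf{The main obstacle: bounding the total charge at a leaf.} This is the heart of the matter. Once a rebuild with $|C'|=k$ has occurred at a node, the current component equals $C'$, so along the path to the charged leaf \choosevertex\ never finds an external vertex, every step is a left step whose $\cleft$ is a single $O(1)$ \promote, and no further $|C'|=k$ rebuild is billed toward that leaf — giving at most one such charge per leaf. The delicate case is a long (near-)unary path whose leaf $\ell$ outputs some $G[W]$ and along which many nodes perform an internal $\cright$ with $|C'|<k$, each necessarily charged to $\ell$ (since $W$ is forced once $|C'|<k=|W|$). Here one must establish two things: each such node simultaneously adds a distinct vertex of $W$ to $S$, so there are $O(k)=O(|E[W]|)$ of them and their $O(1)$ bookkeeping terms are within budget; and the re-explored regions $R$ billed to $\ell$ can be taken (essentially) edge-disjoint — the region discarded and re-attached at one node is not re-traversed at a later node of the same path — so $\sum|E[R]|=O(|E[W]|)$. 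One also checks that a leaf cannot receive both a $|C'|=k$ charge and a $|C'|<k$ charge. Granting this, summing over all internal $\cright$ calls and invoking Lemma~\ref{lem:leftbranch} together with the bijection between the leaves of $R_1,\dots,R_n$ and $\setofisg_k(G)$ yields a total of $O\big(\sum_{G[V']\in\setofisg_k(G)}|E[V']|\big)$, i.e.\ the cost respects Eq.~\ref{eq:charging}.
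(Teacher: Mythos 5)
Your proposal follows the same overall strategy as the paper's proof: observe that only the last tree $T$ of $F$ is corrupted by removing $v$ (trees before $T$ are closed, no cross edges), resume the multi-source DFS to rebuild, pay $O(|E[C']|)$, and then charge this to a leaf of the recursion tree — distinguishing whether the rebuild reaches $k$ vertices or not. So far, so good, and the case split matches the paper's cases (a) and (b).

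However, there is a genuine gap in the charging argument, and you yourself flag it. In the final paragraph you write that to finish one ``must establish'' (i) that each internal-$\cright$ node on a long near-unary path adds a distinct vertex of $W$, (ii) that the re-explored regions $R$ billed to the same leaf are essentially edge-disjoint, and (iii) that no leaf receives both a $|C'|=k$ charge and a $|C'|<k$ charge — and then you say ``Granting this, \ldots''. None of these three claims is proved, and (ii) in particular is exactly the crux: without it the total charge at a leaf $\ell$ outputting $G[W]$ could exceed $O(|E[W]|)$. The paper sidesteps this precisely by \emph{not} charging to an arbitrary $W \supseteq C'\cup\{v\}$: it commits to a specific target leaf $X$ built from $S$, $v$, the recaptured vertices $\alpha$, the newly discovered vertices $\gamma$, and just enough of the unrecaptured $\beta$ to reach size $k$, and then argues from the recursion order that this particular leaf lies on a right branch of a descendant of the left child, so it is charged only once (with a separate treatment of the degenerate case $\beta=\gamma=\emptyset$). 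Your argument needs an analogous commitment to a concrete $W$ together with a proof of why that leaf is not re-charged; leaving the choice of $W$ free and appealing to unproved disjointness does not close the proof.

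A smaller issue: in the $|C'|=k$ case you justify uniqueness of the charge by asserting ``the current component equals $C'$''; this need not hold, since the DFS is truncated at $k$ vertices and the component of $S$ in $G-\{v\}$ may be larger. The correct argument (which the paper gives) is that the root-to-leaf path to the leaf outputting $G[C']$ consists, by the way $\choosevertex$ prioritises external vertices, only of left-internal and right-external steps; hence no right-internal step — and thus no new internal-$\cright$ charge — occurs on that path. Your conclusion is right, but the stated premise is not.
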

\begin{proof}
	In this operation we have removed vertex $v$ which, by
	definition of $\choosevertex(C)$, is the last root of the
	last tree $T$ of $F$. We note that, when removing $v$, the
	tree $T$ has been split into several branches $T'_1, T'_2,
	\ldots, T'_i$, one for each of the $i$ children of $v$ in $T$.
	Additionally, by the definition of $\choosevertex(C)$, we
	remark that there are no edges $e=(x,y)$ where $x \in S$ and
	$y \notin V[C]-V[S]$ and thus every edge leaving from a vertex in
	$S$ leads to a vertex in the forest. Furthermore, by the
	properties of the multi-source DFS tree, we know that the
	trees $T'_1,\ldots, T'_{i-1}$ are closed while the tree $T'_i$
	is possibly open (i.e.~can lead to vertices not previously in
	the certificate).

	We update the certificate by performing the multi-source DFS
	tree much like in Lemma~\ref{lem:op_certificate} but with
	small twist. (i) If we reach a vertex in a tree of $F$
	different from $T$ we reuse the portion of the DFS visit
	already performed. This tree of $F$ is still valid as there
	exist no cross edges between tree (a key property and the main
	reason why we use a DFS). Thus, we proceed w.l.o.g.~assuming
	that $F$ is composed only by its last tree $T$; (ii) If we
	reach a vertex in the trees $T'_1, \ldots, T'_{i}$ we keep
	performing the visit until $k$ vertices are visited or we have
	no edges left to explore. The modifications performed are kept
	in $I$ and when a vertex $x$ is included in (resp.~removed
	from) the certificate $B[x]$ is set to $\true$ (resp.~$\false$).
	
	The existence of a certificate $C$ with $|C|=k$, depends of
	how many of the trees $T'_1, \ldots, T'_{i-1}$ we are able to
	``recapture''. Only if we are able to reach the tree $T'_{i}$,
	we will be able to include extra vertices not previously in
	the certificate (to account for the removal of $v$ and
	possible the disconnection of $T'_1, \ldots, T'_{i-1}$). For
	the purposes of this proof, let us consider two different
	cases: (a) when the visit reaches $k$ vertices and, (b) when
	the visit does not reach $k$ vertices.  Note that in case (b),
	the right branch of the recursion tree will not be performed.

	\usetikzlibrary{snakes}
\tikzstyle{vertex}=[circle,fill=black,minimum size=5pt,inner sep=0pt]
\tikzstyle{edge} = [draw,thin,-]
\tikzstyle{triangle} = [triangle]

\begin{figure}[t]

\centering

\subfigure[Certificate $C$ before update]{

\begin{tikzpicture}[style=thick,scale=0.8] 

\draw[color=gray!25,fill=gray!25] (-3,-1.5) ellipse (4 and 1.5);
\draw[color=gray!25,fill=gray!25] (-2,-3.5) -- (-3.5,-6.5) -- (-0.5,-6.5) -- cycle;
\draw[color=gray!25,fill=gray!25] (-2.2,-2) -- (-2.2,-4) -- (-1.8,-4) -- (-1.8,-2) -- cycle;

\node[vertex] (a) at (-6,-2) {$a$};
\node[vertex] (b) at (0,-2) {$a$};
\node[vertex] (c) at (-4,-2) {$a$};
\node[vertex] (g) at (-2,-2) {$a$};
\node[vertex,label=right:$v$] (k) at (-2,-4) {$a$};
\node[vertex] (m) at (-2.5,-5) {$a$};
\node[vertex] (n) at (-1.5,-5) {$a$};

\path[edge,ultra thick] (a) .. controls +(up:2cm) and +(up:2cm) .. node [auto] {$S$} (b);
\path[edge,ultra thick] (a) -- (c);
\path[edge,ultra thick] (b) .. controls +(up:1cm) and +(up:1cm) .. (c);
\path[edge,ultra thick] (a) .. controls +(up:1cm) and +(up:1cm) .. (g);

\draw (m) -- (-3.15,-6.3);
\draw[segment aspect=0.0] (-2,-6.3) -- (-3.15,-6.3);
\draw (-2,-6.3) -- (m);

\draw (n) -- (-1.8,-6);
\draw[snake=coil,segment aspect=0.0] (-1,-6) -- (-1.8,-6);
\draw (-1,-6) -- (n);

\node[vertex] (f) at (-1,-7) {$a$};
\node[vertex] (r) at (-1,-8) {$a$};

\path[edge] (g) -- (-2,-4);

\path[edge] (k) -- (m);
\path[edge] (k) -- (n);
\path[edge] (-1,-6) -- (f);
\path[edge] (r) -- (f);
\path[edge] (m) .. controls +(left:1cm) and +(left:1cm) .. (g);
\path[edge] (n) .. controls +(right:1cm) and +(down:1cm) .. node [right] {} (b);

\end{tikzpicture} 
}
%
\subfigure[Certificate $C$ after update]{

\begin{tikzpicture}[style=thick,scale=0.8] 

\draw[color=gray!25,fill=gray!25] (-3,-1.5) ellipse (4 and 1.5);
\draw[color=gray!25,fill=gray!25] (-2,-4) -- (-3.1,-6.5) -- (-1,-6.5) -- cycle;
\draw[color=gray!25,fill=gray!25] (-2.2,-2) -- (-2.2,-5) -- (-1.8,-5) -- (-1.8,-2) -- cycle;

\draw[color=gray!25,fill=gray!25] (0,-4) -- (-0.8,-6.2) -- (0.8,-6.2) -- cycle;
\draw[color=gray!25,fill=gray!25] (-0.2,-2) -- (-0.2,-5) -- (0.2,-5) -- (0.2,-2) -- cycle;
\draw[color=gray!25,fill=gray!25] (0.3,-6.2) -- (0.3,-7.2) -- (0.8,-7.2) -- (0.8,-6.2) -- cycle;

\node[vertex] (a) at (-6,-2) {$a$};
\node[vertex] (b) at (0,-2) {$a$};
\node[vertex] (c) at (-4,-2) {$a$};
\node[vertex] (g) at (-2,-2) {$a$};
\node[vertex,color=lightgray,label=right:$v$] (k) at (-1,-4) {$a$};
\node[vertex] (m) at (-2,-5) {$a$};
\node[vertex] (n) at (-0,-5) {$a$};

\path[edge,ultra thick] (a) .. controls +(up:2cm) and +(up:2cm) .. node [auto] {$S$} (b);
\path[edge,ultra thick] (a) -- (c);
\path[edge,ultra thick] (b) .. controls +(up:1cm) and +(up:1cm) .. (c);
\path[edge,ultra thick] (a) .. controls +(up:1cm) and +(up:1cm) .. (g);

\draw (m) -- (-2.65,-6.3);
\draw[segment aspect=0.0] (-1.5,-6.3) -- (-2.65,-6.3);
\draw (-1.5,-6.3) -- (m);

\draw (n) -- (0.5,-6);
\draw[snake=coil,segment aspect=0.0] (0.5,-6) -- (-0.5,-6);
\draw (-0.5,-6) -- (n);

\node[vertex] (f) at (0.5,-7) {$a$};
\node[vertex] (r) at (0.5,-8) {$a$};

\path[edge,color=lightgray] (g) -- (k);
\path[edge,color=lightgray] (k) -- (m);
\path[edge,color=lightgray] (k) -- (n);
\path[edge] (0.5,-6) -- (f);
\path[edge] (r) -- (f);
\path[edge] (m) .. controls +(up:1cm) and +(down:1cm) .. (g);
\path[edge] (n) .. controls +(up:1cm) and +(down:1cm) .. node [right] {} (b);

\end{tikzpicture} 
}

\caption{Case (a) of $\cright(C,v)$ when $v$ is internal}

\label{fig:case_a_internal}
\end{figure}

\begin{figure}[t]

\centering

\subfigure[Certificate $C$ before update]{

\begin{tikzpicture}[style=thick,scale=0.8] 

\draw[color=gray!25,fill=gray!25] (-3,-1.5) ellipse (4 and 1.5);
\draw[color=gray!25,fill=gray!25] (-2,-3.5) -- (-3.5,-6.5) -- (-0.5,-6.5) -- cycle;
\draw[color=gray!25,fill=gray!25] (-2.2,-2) -- (-2.2,-4) -- (-1.8,-4) -- (-1.8,-2) -- cycle;

\node[vertex] (a) at (-6,-2) {$a$};
\node[vertex] (b) at (0,-2) {$a$};
\node[vertex] (c) at (-4,-2) {$a$};
\node[vertex] (g) at (-2,-2) {$a$};
\node[vertex,label=right:$v$] (k) at (-2,-4) {$a$};
\node[vertex] (m) at (-2.5,-5) {$a$};
\node[vertex] (n) at (-1.5,-5) {$a$};

\path[edge,ultra thick] (a) .. controls +(up:2cm) and +(up:2cm) .. node [auto] {$S$} (b);
\path[edge,ultra thick] (a) -- (c);
\path[edge,ultra thick] (b) .. controls +(up:1cm) and +(up:1cm) .. (c);
\path[edge,ultra thick] (a) .. controls +(up:1cm) and +(up:1cm) .. (g);

\draw (m) -- (-3.15,-6.3);
\draw[segment aspect=0.0] (-2,-6.3) -- (-3.15,-6.3);
\draw (-2,-6.3) -- (m);

\draw (n) -- (-1.8,-6);
\draw[snake=coil,segment aspect=0.0] (-1,-6) -- (-1.8,-6);
\draw (-1,-6) -- (n);

\node[vertex] (f) at (-1,-7) {$a$};
\node[vertex] (r) at (-1,-8) {$a$};

\path[edge] (g) -- (-2,-4);

\path[edge] (k) -- (m);
\path[edge] (k) -- (n);
\path[edge] (-1,-6) -- (f);
\path[edge] (r) -- (f);
\path[edge] (n) .. controls +(right:1cm) and +(down:1cm) .. node [right] {} (b);

\end{tikzpicture} 
}
%
\subfigure[Certificate $C$ after update]{

\begin{tikzpicture}[style=thick,scale=0.8] 

\draw[color=gray!25,fill=gray!25] (-3,-1.5) ellipse (4 and 1.5);

\draw[color=gray!25,fill=gray!25] (0,-4) -- (-0.8,-6.2) -- (0.8,-6.2) -- cycle;
\draw[color=gray!25,fill=gray!25] (-0.2,-2) -- (-0.2,-5) -- (0.2,-5) -- (0.2,-2) -- cycle;
\draw[color=gray!25,fill=gray!25] (0.3,-6.2) -- (0.3,-8.2) -- (0.8,-8.2) -- (0.8,-6.2) -- cycle;

\node[vertex] (a) at (-6,-2) {$a$};
\node[vertex] (b) at (0,-2) {$a$};
\node[vertex] (c) at (-4,-2) {$a$};
\node[vertex] (g) at (-2,-2) {$a$};
\node[vertex,color=lightgray,label=right:$v$] (k) at (-1,-4) {$a$};
\node[vertex] (m) at (-2,-5) {$a$};
\node[vertex] (n) at (-0,-5) {$a$};

\path[edge,ultra thick] (a) .. controls +(up:2cm) and +(up:2cm) .. node [auto] {$S$} (b);
\path[edge,ultra thick] (a) -- (c);
\path[edge,ultra thick] (b) .. controls +(up:1cm) and +(up:1cm) .. (c);
\path[edge,ultra thick] (a) .. controls +(up:1cm) and +(up:1cm) .. (g);

\draw (m) -- (-2.65,-6.3);
\draw[segment aspect=0.0] (-1.5,-6.3) -- (-2.65,-6.3);
\draw (-1.5,-6.3) -- (m);

\draw (n) -- (0.5,-6);
\draw[snake=coil,segment aspect=0.0] (0.5,-6) -- (-0.5,-6);
\draw (-0.5,-6) -- (n);

\node[vertex] (f) at (0.5,-7) {$a$};
\node[vertex] (r) at (0.5,-8) {$a$};

\path[edge,color=lightgray] (g) -- (k);
\path[edge,color=lightgray] (k) -- (m);
\path[edge,color=lightgray] (k) -- (n);
\path[edge] (0.5,-6) -- (f);
\path[edge] (r) -- (f);
\path[edge] (n) .. controls +(up:1cm) and +(down:1cm) .. node [right] {} (b);

\end{tikzpicture} 
}

\caption{Case (b) of $\cright(C,v)$ when $v$ is internal}

\label{fig:case_b_internal}
\end{figure} 

	\begin{itemize}
	\item[(a)] Since the updated certificate $C$ includes $k$
		vertices and noting that we did not explore edges
		external to $C$, this operation takes $O(|E[C]|)$ time.
		In this case we perform a rightward branch. We charge
		this cost to the leaf corresponding to the output of
		$k$-subgraph $G[C]$. Given the order $\choosevertex(C)$
		selects the vertices to recurse, the path in the
		recursion tree $R$ from the current node of recursion
		$r_x \in R$ to the leaf corresponding to the output of
		$C$ is a sequences of left branches with $v$ being an
		internal vertex interleaved with right branches with
		$v$ being an external vertex. Noting that from $r_x$
		to this leaf we do not perform any right branch where
		$v$ is internal, any leaf is charged at most once.
		This case is illustrated in
		Figure~\ref{fig:case_a_internal}.

	\item[(b)] When $|C|<k$ the visit still takes $O(|E[C]|)$ time
		but, since $G[C]$ is not a $k$-subgraph, we are not able
		to amortize it in the same way as in (a). Let us
		assume that the vertices $\alpha$ of the trees $T'_1,
		\ldots T'_i$ were ``recaptured'', the vertices $\beta$
		of the trees $T'_1, \ldots T'_i$ were \emph{not}
		``recaptured'' and that the vertices $\gamma$ not
		previously in $C$ were visited.  Note that $|\alpha| +
		|\gamma| + |S| < k$.  We are able to charge the
		$O(|E[C]|)$ cost to a leaf (and thus $k$-subgraph) $X$
		that includes every edge in $E[C]$. The $k$-subgraph
		$X$ is formed in the following way:
		(i) $X$ includes the vertices in $S$;
		(ii) $X$ includes the vertex $v$;
		(iii) includes the vertices in $\alpha$ and $\gamma$;
		(iv) Since $v$ is part of $X$, all the vertices in $\beta$
		are connected to the vertices in $X$ and thus we
		include a sufficient number of those in order to have
		$|X|=k$. Note that $E[C] \subset E[X]$. When $\gamma
		\neq \emptyset$, the leaf corresponding to $X$ is in a
		right branch of one of the descendants of the left
		branch of the current node of the recursion and thus
		is not charged more than once. The limit case when
		$\beta=\emptyset$ implies that no new nodes were
		visited in the update of the certificate
		(i.e.~$\gamma=\emptyset$) and thus the certificate $C$
		previous to the update is the only leaf in the current
		branch of the recursion. In this limit case is easy to
		avoid recomputing the certificate in any descendant.
		Therefore, the cost of operation $\cright(C,v)$ can be
		charged on the leaves of the recursion according to
		the cost defined in Eq.~\ref{eq:charging}.
	\end{itemize}
\end{proof}

\begin{lemma}
	\label{lem:restore_time}
	Operation $\restore(C,I)$, can be
	performed in the time defined by Eq.~\ref{eq:charging}.
\end{lemma}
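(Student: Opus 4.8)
The plan is to observe that $\restore(C,I)$ does nothing more than replay the modification log $I$ in reverse, so its cost is $O(|I|)$ provided each recorded step is individually reversible in $O(1)$ time. First I would argue this $O(1)$-reversibility: the only primitives that ever touch $C = F \cup B$ are $\promote$ and $\removelastleaf$ (inside $\cleft$) and the incremental forest/$B$ edits performed during the partial re-DFS (inside $\cright$ when $v$ is internal). Each such step changes only a constant number of pointers in the first-child/next-sibling encoding of $F$, or a single entry of $B$, so by storing the overwritten pointers and the old boolean values in $I$ we can invert every step in $O(1)$ time; hence $\restore(C,I)$ runs in $O(|I|)$ time.

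Next I would bound $|I|$ at each of the two call sites in Algorithm~\ref{alg:ListSubgraphs2}. For the left-branch restore (line~\ref{line:restore1}), $I$ is the log produced by $\cleft(C,v)$, which by Lemma~\ref{lem:cleft_time} runs in $O(1)$ time, so $|I| = O(1)$ and the restore is $O(1)$, within the internal-node budget of Eq.~\ref{eq:charging}. For the right-branch restore (line~\ref{line:restore2}): if $v$ is external then by Lemma~\ref{lem:cright_external_time} the log $I$ is empty and the restore is trivially $O(1)$; if $v$ is internal then, reading off the proof of Lemma~\ref{lem:cright_internal_time}, the recomputation visits $O(|E[C]|)$ edges and performs $O(1)$ forest/$B$ edits per visited edge, so $|I| = O(|E[C]|)$ and $\restore(C,I)$ costs $O(|E[C]|)$.

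For that last (heavy) case I would reuse the charging argument already carried out for $\cright$ in Lemma~\ref{lem:cright_internal_time}: in both of its cases (a) and (b) the $O(|E[C]|)$ cost is assigned to the leaf that outputs a $k$-subgraph $X$ with $E[C] \subseteq E[X]$, whose budget in Eq.~\ref{eq:charging} is $\Theta(|E[X]|) = \Omega(|E[C]|)$, so piggybacking this second $O(|E[C]|)$ term on the same leaf only changes a constant factor. The ``no leaf charged twice'' part survives unchanged, since each $\restore$ on line~\ref{line:restore2} pairs with exactly one $\cright$ at the same recursion node, so a leaf receiving the $\cright$ charge from a node receives the $\restore$ charge for the same node and from no other. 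The main obstacle, such as it is, is not the counting but pinning down that the partial re-DFS inside $\cright$ can be journalled with only $O(1)$ extra bookkeeping per touched edge, so that $|I|$ is genuinely $O(|E[C]|)$; once that is granted, the lemma follows immediately from Lemmas~\ref{lem:cleft_time}, \ref{lem:cright_external_time}, \ref{lem:cright_internal_time} together with the re-invoked charging scheme.
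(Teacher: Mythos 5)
Your proof takes the same route as the paper: represent $C$ with persistent (undoable) linked-list structures so that $\restore(C,I)$ costs $O(|I|)$, then observe that $|I|$ is bounded by the time spent filling it in $\cleft$/$\cright$, which is already within the budget of Eq.~\ref{eq:charging}, so the restore inherits the same bound (and the same leaf-charging in the internal-$\cright$ case). You simply unpack the per-call-site bookkeeping that the paper leaves implicit; the argument is identical in substance.
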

\begin{proof}
	We use standard data structures (i.e. linked lists) for the
	representation of certificate $C$. There exist persistent
	versions of these data structures that maintain a stack of
	modifications applied to them and that can restore its
	contents to their previous states. Given the modifications in
	$I$, these data structures take $O(|I|)$ time to restore the
	previous version of $C$. Given that the time taken to fill the
	modification in $I$ done on operations $\cleft(C,v)$ and
	$\cright(C,v)$ respects Eq.~\ref{eq:charging}, this implies
	that operation $\restore(C,I)$ can be performed in the same
	time bound.
\end{proof}

\section{Other operations}

In order to implement operation $\choosevertex(S,C)$ within the cost
established in Eq.~\ref{eq:charging}, we have to avoid visiting the
edges which are internal to the certificate $C$ multiple times. The
naive approach of finding the first edge that connects the vertices in
$S$ to an external vertex can take up to $O(k^2)$ time, as we possibly
have to visit all edges internal to $C$.

Let us consider the set of nodes $N$ in the recursion tree $R_i$ that
have the same vertices in the certificate as the node $r = (S,C) \in
R_i$.  The recursion nodes in $N$ correspond to a sequence of right
branches in external vertices and left branches on internal vertices.

We make use of a data structure which we call \emph{parking lists},
where we place an edge $e$ internal to certificate $C$ when visiting
it for the first time. Then, we remove $e$ from the graph $G$ to avoid
visiting it additional times. This allows us, for the recursion nodes
in $N$ whose vertices in the certificate $C$ are the same, to only
visit each edge in $E[C]$ once. We then prove that this cost can be
amortized according to Eq.~\ref{eq:charging}.  The \emph{parking
lists} are a set of lists $P_0, P_1, \ldots, P_n$ corresponding to
each $v_i \in V$. When implementing operation $\choosevertex(S,C)$, if
we visit an edge $e=(x,y)$ from the graph $G$ such that $x,y \in C$ we
append $e$ to $P_x$ and remove it from $\adj(x)$. Note that, for
the purpose of simplification, this modification is restricted to
operation $\choosevertex(S,C)$ and for other operations, such as the
ones necessary to maintain the certificate, internal edges are still
present (i.e.~the adjacency list of vertex $u \in V$ is $\adj(u) \cup
P_u$).

\begin{lemma}
	\label{lem:choose_time}
	Operation $\choosevertex(S,C)$ can be performed within the
	time cost defined in Eq~\ref{eq:charging}.
\end{lemma}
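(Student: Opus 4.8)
The plan is to implement $\choosevertex(S,C)$ by scanning the adjacency lists $\adj(u)$ of the vertices $u\in S$ in order and, for each edge $(u,y)$ encountered, doing one of two things: if $y\in V[C]$ the edge is \emph{internal} to the certificate, so we \emph{park} it — delete it from the graph and append it to the parking list $P_u$, with a cross reference exactly as in Section~\ref{sub:internaledges} — and continue; if instead $y\notin V[C]$, then $y\in N^*(S)$ is an external vertex and we return it. If all the lists are exhausted without meeting an external vertex we return the last root of $F$, which is internal. Parked edges stay logically in the adjacency lists for every other operation (certificate maintenance uses $\adj(u)\cup P_u$), so this returns a vertex exactly matching the specification of $\choosevertex$, and a single call costs $O(1+p)$, where $p$ is the number of edges parked during that call.

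First I would dispose of the $O(1)$ per-call term. Operation $\choosevertex$ is invoked only at internal nodes of the recursion trees $R_1,\dots,R_n$, and by the counting already carried out in the proof of Lemma~\ref{lem:total_cost} there are $O(k\,|\setofisg_k(G)|)$ such nodes, so these terms sum to $O(k\,|\setofisg_k(G)|)=O(\sum_{G[V']\in\setofisg_k(G)}|E[V']|)$, the last step because each $k$-subgraph is connected and thus has at least $k-1$ edges.

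The main obstacle is amortizing the parking term $p$ over the leaves. Fix a maximal path $N$ of recursion nodes all carrying the same certificate vertex set $V[C]$. At a non-leaf node of $N$, $\choosevertex$ returns either an internal vertex — in which case $\cleft$ merely calls $\promote$, so the left child retains the vertex set $V[C]$ — or an external vertex $v$ — in which case $\cright$ is a no-op and the right child retains $V[C]$ (the right branch is indeed taken, since deleting an external vertex does not shrink the certificate), while the left child, on which $\cleft$ calls $\removelastleaf$, leaves $N$. Hence each non-leaf node of $N$ has exactly one child in $N$, so $N$ really is a path; moreover such a child always exists unless $|S|=k$ (using Lemma~\ref{lem:leftbranch} for the left branch and $|C|=k$ for the right one), so $N$ terminates at a leaf $r_\ell$, where $V[S_{r_\ell}]=V[C]$ and therefore $E[S_{r_\ell}]=E[C]$ and $G[S_{r_\ell}]=G[C]$ is precisely the $k$-subgraph output at that leaf.

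Finally I would note that every edge parked by a $\choosevertex$ call along $N$ has both endpoints in $V[C]$, hence lies in $E[C]$, and that once parked it is removed from $G$ and stays removed at least until we backtrack out of the subtree rooted at the top node of $N$ (the removal being undone by a cleanup step on return from the left-branch recursion, just as in Section~\ref{sub:internaledges}); consequently each edge of $E[C]$ is parked only $O(1)$ times over all of $N$, so the parking terms over $N$ sum to $O(|E[C]|)=O(|E[S_{r_\ell}]|)$. Charging this to the terminal leaf $r_\ell$, and observing that distinct maximal paths $N$ end at distinct leaves so that no leaf is charged twice (the cleanup step being charged the same way), the total time in $\choosevertex$ is $O\big(k\,|\setofisg_k(G)|+\sum_{\text{leaves}}|E[S]|\big)=O(\sum_{G[V']\in\setofisg_k(G)}|E[V']|)$, which is the bound prescribed by Eq.~\ref{eq:charging}.
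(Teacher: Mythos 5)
Your proposal takes essentially the same approach as the paper's own proof: the same parking-list implementation with the $O(1)$ membership test via $B$, the same amortization over a maximal path $N$ of recursion nodes sharing the certificate vertex set $V[C]$, and the same charging of the $O(|E[C]|)$ parking cost to the unique leaf that outputs $G[C]$. You supply more justification than the paper does for why $N$ really is a path terminating at that leaf and you separately dispose of the per-call $O(1)$ overhead, but the underlying argument is the same.
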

\begin{proof}
	For each vertex $x \in S$ we take each edge $e = (x,y) \in
	\adj(x)$. When vertex $y \notin C$, we return $y$. This can be
	tested in constant time by checking if vector $B[y]=\false$.
	When all edges are visited without finding such vertex $y$, we
	return the root of the last tree in forest $F$.

	During this process, when we take and edge $e=(x,y)$ with $y
	\in C$, we remove $e$ from $adj(x)$ and set $P_x := P_x \cup
	\{e\}$. Furthermore, on the implementation of operation
	$\removelastleaf(C)$, when removing vertex $l$ from the
	certificate we update $\adj(l) = \adj(l) \cup P_x$. This
	operation can be done in constant time.

	Consider the set of nodes $N$, forming a path in the recursion
	tree $R_i$, where the vertices in their certificates $C$ are
	the same. By using this strategy, we do not visit any edge in
	$E[C]$ more than twice (once for each endpoint) and we can
	charge this cost to the leaf corresponding to the output of
	$C$. Since there is only one such path for the set of vertices
	in certificate $C$, the cost of this operation can be charged
	according to Eq.~\ref{eq:charging}.
\end{proof}

We are now left with the analysis of operations~$\del(v)$ and
$\undel(v)$ in a node $\langle S, C \rangle$ of the recursion tree
$R_i$. One thing to note is that the adjacency list $\adj(v)$ can have
up to $n-1$ edges and, at first sight, we cannot afford to pay this
cost. A key insight is that each of the edges in $\adj(v)$ is
connected to the certificate $C$ and thus it implies that $e$ is part
of some $k$-subgraphs. We now prove that we can amortize the cost of
$\del(v)$ in those subgraphs.

\begin{lemma}
	\label{lem:del_time}
	Operations~$\del(v)$ and $\undel(v)$ can be performed within
	the time cost defined in Eq~\ref{eq:charging}.
\end{lemma}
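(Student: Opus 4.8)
The plan is to bill the cost of $\del(v)$ and $\undel(v)$ against the leaves of the recursion trees $R_1,\dots,R_n$ so that the total fits inside the global budget $\sum_r T(r)$ of Eq.~\ref{eq:charging}: recall that an individual internal‑node operation is allowed to exceed its own $O(1)$ quota as long as the surplus is absorbed by leaves, each of which can pay $O(|E[S]|)$. A call $\del(v)$ on line~\ref{line:vdel} costs $\Theta(1+\deg(v))$ in the current residual graph, and $\undel(v)$ on line~\ref{line:undel} costs the same while retracing exactly the edges that $\del(v)$ touched, so it suffices to amortize the $\deg(v)$ term of the deletions. First I would collect the structural facts available at such a node $\langle S,C\rangle$ of $R_i$: by Lemma~\ref{lem:invariant} we have $S\sqsubseteq C$ with $|C|=k$, hence $S$ lies in a connected component of the residual graph with at least $k$ vertices, and $v=\choosevertex(S,C)\in N^*(S)$ is adjacent to $S$; consequently, for every edge $(v,u)\in\adj(v)$, the set $S\cup\{v,u\}$ is connected and sits inside that component.

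Next I would turn each scanned edge $(v,u)$ into a charge on a leaf (a listed $k$-subgraph) that contains both endpoints $v$ and $u$. When $|S\cup\{v,u\}|\le k$, this connected set extends to a $k$-subgraph $G[V']$ of the component with $S\cup\{v,u\}\subseteq V'$; since $G[V']\supseteq S\cup\{v\}$ it is produced inside the left subtree of $\langle S,C\rangle$ (nonempty by Lemma~\ref{lem:leftbranch}), and I charge $(v,u)$ to the first such leaf in left‑to‑right order. When $|S\cup\{v,u\}|=k+1$ — which forces $|S|=k-1$ and $u\notin S\cup\{v\}$, so the left subtree degenerates to the single leaf $G[S\cup\{v\}]$ — I instead charge every $(v,u)$ with $u\in S\cup\{v\}$ to that leaf, and every remaining $(v,u)$ to a $k$-subgraph of the component that contains the edge $(v,u)$ (such a $k$-subgraph exists because $(v,u)$ lies in a component with $\ge k$ vertices, and it is listed somewhere by Lemma~\ref{lem:toplev_correctness}). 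In every case the leaf receiving the charge contains $v$, $u$, and the edge between them, so a leaf $G[V']$ is only charged by scans of edges of $E[V']$.

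The final step is the accounting. Fix a leaf $G[V']$: each unit of charge on it comes from a scan of an edge of $E[V']$ during some $\del(v)$ with $v\in V'$. For the non‑degenerate charges, for a fixed $v\in V'$ there is at most one deletion routing charge to $G[V']$ — along the unique root‑to‑leaf path reaching $G[V']$ the vertex $v$ is committed to the partial solution at exactly one node, and the nodes of $R_i$ that choose $v$ have pairwise incomparable left subtrees, so their charges land on disjoint leaf sets — and it contributes at most $\deg_{G[V']}(v)$ units; summing over $v\in V'$ gives at most $\sum_{v\in V'}\deg_{G[V']}(v)=2|E[V']|=O(|E[V']|)$, precisely the budget Eq.~\ref{eq:charging} allots, to which one adds $O(1)$ per internal node for the constant overhead of $\del$/$\undel$. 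I expect the degenerate $|S|=k-1$ case to be the main obstacle: there the left subtree yields a single $k$-subgraph that need not contain the remaining scanned edges, so one must argue both that every such edge still appears in a globally listed $k$-subgraph and — the delicate point — that a fixed leaf $G[V']$ is not charged by $\Omega(n)$ distinct deletions; the clean way to settle this is to canonicalize the target $k$-subgraph so that it depends only on the scanned edge $(v,u)$ and on $V'$, which again forces at most one charging deletion per vertex of $V'$ and closes the amortization.
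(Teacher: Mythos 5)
You route the cost of $\del(v)$ to leaves via a degree-sum, whereas the paper's proof charges each deleted edge $(v,w)$ at $O(1)$ to an \emph{internal node}: the first node $\langle S',C'\rangle$ of $R_i$ with $S'=S\cup\{v\}\cup\{w\}$, asserting this node receives a single charge. These are genuinely different accountings against Eq.~\ref{eq:charging}: the paper spends the $O(1)$ per-internal-node allotment; you spend the $O(|E[V']|)$ per-leaf allotment after the handshake sum $\sum_{v\in V'}\deg_{G[V']}(v)=2|E[V']|$. Your non-degenerate argument is correct and in fact more explicit than the paper's: the nodes of $R_i$ that choose a fixed $v$ form an antichain, so a fixed leaf $G[V']$ lies in the left subtree of at most one of them, which can contribute at most $\deg_{G[V']}(v)$ charge units via edges $(v,u)$ with $u\in V'$.

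The case $|S|=k-1$ with $u\notin S\cup\{v\}$, which you single out, is a genuine gap. (For what it is worth, the paper's own one-line argument has the same hole: there the claimed recipient $S'=S\cup\{v\}\cup\{w\}$ would have $k+1$ vertices and is not a node of $R_i$.) Your escape hatch --- charge $(v,u)$ to ``a $k$-subgraph of the component that contains $(v,u)$, listed somewhere'' --- is not yet an argument: it does not pin down \emph{which} leaf receives the unit, and without a specific rule the same leaf could be targeted by many degenerate deletions, one for each $|S|=k-1$ node of $R_i$ that chooses $v$ and still has $(v,u)$ in its residual graph, overrunning the leaf's $O(|E[V']|)$ budget. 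Your closing remark about canonicalizing the target is the right instinct, but it is only a placeholder; a complete proof needs a concrete assignment of each degenerate triple $(S,v,u)$ to a recipient leaf of $R_i$ containing $(v,u)$, together with a count showing that each leaf collects $O(|E[V']|)$ charge overall.
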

\begin{proof}
	Consider the deletion of each edge incident in $v$,  $e =
	(v,w) \in \adj(v)$.  Note that any vertex $w$ is connected to
	$S$ though $v$ and thus there will be internal nodes $\langle
	S',C' \rangle$ of the recursion tree with $S' = S \cup \{v\}
	\cup \{w\}$. We charge the $O(1)$ cost of removing edge $e$ on
	the first internal node $\langle S',C' \rangle \in R_i$.
	Since, for a given $S$ we remove $(v,w)$ only once, this
	internal node is not charged multiple times. Operation
	$\undel(v)$ can be performed in the same time bound.
\end{proof}

\begin{lemma}
	\label{lem:complete_parking_lists_time}
	We can maintain the parking lists in such way that they
	include every edge internal to the certificate within the time
	cost defined in Eq~\ref{eq:charging}.
\end{lemma}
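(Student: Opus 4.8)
The plan is to fix a global invariant describing the contents of the parking lists and then to check that it is preserved by every operation that scans an adjacency list, always staying within the budget of Eq.~\ref{eq:charging}. The invariant I would carry at each recursion node $\langle S,C\rangle$ is: (i) $P_u=\emptyset$ whenever $u\notin V[C]$; (ii) for $u\in V[C]$, the original adjacency list of $u$ is partitioned into the current $\adj(u)$ and $P_u$, with a cross reference between the two stored occurrences of each parked edge; and (iii) $P_u$ contains \emph{exactly} the edges $(u,w)$ with $w\in V[C]$ that are not structural edges of the certificate (i.e.\ not tree edges of the forest $F$, nor a parent edge joining a root of $F$ to $V[S]$). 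Part (iii) is precisely the completeness claimed by the lemma; it strengthens Lemma~\ref{lem:choose_time}, which only guarantees that internal edges \emph{incident to $S$} have been parked.

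First I would establish the invariant wherever the certificate is (re)built from scratch. The call $\certificate(\langle v_i\rangle)$ already visits every edge of $E[C]$ during its truncated multi-source DFS in $O(|E[C]|)$ time (Lemma~\ref{lem:op_certificate}); I would move each non-structural internal edge, as it is met, into the two relevant parking lists, which adds no asymptotic cost to that DFS. The same applies to the partial DFS recomputation performed by $\cright(C,v)$ when $v$ is internal (Lemma~\ref{lem:cright_internal_time}): that routine also scans only edges of $E[C]$, so parking them is absorbed into its $O(|E[C]|)$ bound and is charged to the same leaf (case (a)) or the leaf $X$ (case (b)) used there.

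Next I would propagate the invariant through the incremental operations, recording every parking-list change in the same modification record $I$ that already carries the certificate changes, so that $\restore(C,I)$ still costs $O(|I|)$ (Lemma~\ref{lem:restore_time}). For $\cleft(C,v)$ with $v$ internal, $\promote(C)$ merely moves $v$ from $F$ into $S$ and makes its children roots; no vertex leaves $V[C]$, the internal edges incident to $v$ stay internal and remain parked, and only an $O(1)$ relabelling of the affected structural edge is needed. For $\cleft(C,v)$ with $v$ external, $\removelastleaf(C)$ evicts the last leaf $l$ from $V[C]$, so by (i) the list $P_l$ must be emptied back into $\adj(l)$; this is exactly the $\adj(l):=\adj(l)\cup P_l$ step already folded into $\removelastleaf$ (Lemma~\ref{lem:op_promote_removelastleaf}). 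For $\cright(C,v)$ with $v$ external nothing changes (Lemma~\ref{lem:cright_external_time}), and \vdel\ and \undel\ at the top level are handled as in Lemma~\ref{lem:del_time}.

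The step I expect to be the main obstacle is the arrival of a brand-new external vertex $v$ into $V[C]$ on a left branch: the edges $(v,w)$ with $w\in V[C]$ then become internal and, to keep part (iii), must be parked — yet $\adj(v)$ may have $\Theta(n)$ entries and $\choosevertex$ found $v$ without ever scanning $\adj(v)$. I would resolve this exactly as in the proof of Lemma~\ref{lem:choose_time}: consider a maximal path $N$ of recursion nodes whose certificates share the same vertex set (a run of right branches on external vertices and left branches on internal vertices). Along $N$ the vertex set of $C$ is fixed, each edge of $E[C]$ is scanned, parked, and possibly un-parked $O(1)$ times in total, and this $O(|E[C]|)$ cost is charged to the leaf that outputs $G[C]$ (or to the leaf $X$ of Lemma~\ref{lem:cright_internal_time} when $|C|<k$), with every leaf receiving $O(1)$ such charges; all remaining per-node work is $O(1)$. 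This matches the cost function of Eq.~\ref{eq:charging}, and hence, exactly as in Lemma~\ref{lem:total_cost}, the totals sum to $O(\sum_{G[V']\in\setofisg_k(G)}|E[V']|)$; persistence of the underlying linked lists keeps the space at $O(m)$.
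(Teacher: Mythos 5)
You correctly isolate the hard case — a brand-new external vertex $v$ entering $V[C]$ on a left branch, whose edges to $V[C]$ must be parked even though $\choosevertex$ never scanned $\adj(v)$ — but the amortization you then invoke does not cover it. The charging scheme of Lemma~\ref{lem:choose_time} amortizes $O(|E[C]|)$ work over a maximal run $N$ of recursion nodes sharing the same $V[C]$, charging one leaf. That is fine for the scans that $\choosevertex$ performs, which only touch edges of $E[C]$ (at most $O(k^2)$). But identifying which edges of $\adj(v)$ are internal when $v$ first enters $V[C]$ requires looking at all of $\adj(v)$, which may have $\Theta(n)$ entries — not bounded by $|E[C]|$ and far exceeding the $O(|E[S]|) = O(k^2)$ budget that Eq.~\ref{eq:charging} permits a single leaf. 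Charging that scan to the leaf that outputs $G[C]$ therefore overshoots the budget.

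The paper handles this step with a different amortization, the per-neighbor charging of Lemma~\ref{lem:del_time}, and this is the idea missing from your argument: the $O(1)$ cost of visiting each $e=(v,w)\in\adj(v)$ is charged not to one leaf but to the first recursion node $\langle S',C'\rangle$ with $S' = S\cup\{v\}\cup\{w\}$, which must exist because $w$ becomes reachable from $S$ through $v$ once $v\in S$. Since $v$ is external and is deleted on the right branch, no such node is charged twice for the same $(v,w)$, so the total charge per internal node is $O(1)$, fitting Eq.~\ref{eq:charging}. You mention Lemma~\ref{lem:del_time} for the top-level \vdel/\undel\ but do not carry that scheme into the step where it is actually needed; substituting it there would close the gap and make your invariant-based presentation go through.
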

\begin{proof}
	When we add one external vertex $v$ to the certificate, we can
	visit every edge incident in $v$ and add it to $P_v$ if it is
	internal to the certificate. By a similar argument to the
	one used in Lemma~\ref{lem:del_time}, every edge $e = (v,w)
	\in \adj(v)$ can be visited. We can charge $O(1)$ cost of
	visiting $e$ to the first internal node $\langle S',C' \rangle
	\in R_i$ where internal node $\langle S',C' \rangle$ is the
	first node of the recursion tree with $S' = S \cup \{v\} \cup
	\{w\}$. Since $v$ is external and $v$ is removed on the right
	branch of the recursion, the edges incident in $v$ are not
	visited more than once.
	When $v$ is internal, every edge internal to the certificate
	has already been discovered by operation $\choosevertex(C)$.
\end{proof}

Having analyzed each operation performed in a node of the recursion
tree, we are able to prove the following lemma.

\begin{lemma}
	\label{lem:recursion_tree_node_time}
	The operations a node $r$ of the recursion tree $R$ can be
	perform in the time defined in Eq.~\ref{eq:charging}:
	$$
	T(r)=
		\begin{cases}
			O\left(|E[S]|\right) & \text{if $r$ is $leaf$} \\
			O\left(1\right) & \text{otherwise} \\
		\end{cases}
	$$
\end{lemma}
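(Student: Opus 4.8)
The plan is to prove Lemma~\ref{lem:recursion_tree_node_time} by simply assembling the per-operation bounds already established in this section, observing that the work done at any node of $R$ is exactly the sum of the costs of the primitive operations invoked by Algorithm~\ref{alg:ListSubgraphs2} at that node. I would begin by reading off the body of \listsubgraphs: on a leaf node (i.e.\ when $|S|=k$) the only work is $\print(E[S])$, which by definition takes $O(|E[S]|)$ time, matching the first branch of Eq.~\ref{eq:charging}. On an internal node the calls performed are $\choosevertex(S,C)$, $\cleft(C,v)$, a \restore, then $\vdel(v)$, $\cright(C,v)$, the size test $|C|=k$, $\undel(v)$, and a second \restore; the recursive calls themselves are charged to their own nodes and are not counted here.

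Next I would invoke the relevant lemmas one by one to bound each of these. By Lemma~\ref{lem:choose_time}, $\choosevertex(S,C)$ runs within the budget of Eq.~\ref{eq:charging}; by Lemma~\ref{lem:cleft_time}, $\cleft(C,v)$ is $O(1)$; by Lemmas~\ref{lem:cright_external_time} and~\ref{lem:cright_internal_time}, $\cright(C,v)$ is within budget regardless of whether $v$ is external or internal; by Lemma~\ref{lem:restore_time} each \restore is within budget; by Lemma~\ref{lem:del_time}, $\vdel(v)$ and $\undel(v)$ are within budget; and the test $|C|=k$ is $O(1)$ since the certificate maintains its size explicitly. Adding these contributions, every internal node incurs $O(1)$ charged directly to itself plus a bounded number of charges routed, via the amortization arguments of those lemmas, to leaves of $R$. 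Summed over the node, this is exactly the statement of Eq.~\ref{eq:charging}: $O(|E[S]|)$ at a leaf and $O(1)$ (plus amortized-away charges) otherwise.

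The one subtlety worth spelling out — and what I would treat as the \emph{main obstacle} — is that the bounds of Lemmas~\ref{lem:cright_internal_time}, \ref{lem:del_time}, \ref{lem:choose_time} and~\ref{lem:complete_parking_lists_time} are not literally ``$O(1)$ at this node'': they are amortized bounds in which the genuine cost (which can be $\Theta(|E[C]|)$ or $\Theta(\deg v)$) is charged onto a leaf, i.e.\ a $k$-subgraph, that has been shown to exist and to be charged at most once. So the proof must be phrased in the amortized sense: the charging scheme of Eq.~\ref{eq:charging} is \emph{consistent} because each of those four lemmas already fixed, for its operation, a distinct leaf to absorb the excess, and none of those leaves is charged twice by the \emph{same} operation on the \emph{same} $S$. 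What remains to check is only that the charges from \emph{different} operations at the node, and from different nodes, do not collide beyond an $O(1)$ factor — but each lemma's charging target is determined by (and lives inside) a different structural object (the output $k$-subgraph $G[C]$, or the first internal recursion node with $S'=S\cup\{v\}\cup\{w\}$), so only a constant number of operation-types ever charge the same leaf, and the total over all of $R_1,\dots,R_n$ telescopes to $O(\sum_{G[V']\in\setofisg_k(G)}|E[V']|)$ by Lemma~\ref{lem:total_cost}. I would therefore conclude by stating that, node by node, the work fits the $T(r)$ profile of Eq.~\ref{eq:charging}, with the understanding that the ``$O(1)$'' for internal nodes is the amortized cost after the charges dictated by the cited lemmas have been dispatched.
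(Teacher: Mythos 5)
Your proof is correct and follows essentially the same route as the paper: enumerate the operations performed at a node of Algorithm~\ref{alg:ListSubgraphs2} and assemble the per-operation bounds of Lemmas~\ref{lem:choose_time}, \ref{lem:cleft_time}, \ref{lem:cright_external_time}, \ref{lem:cright_internal_time}, \ref{lem:del_time}, \ref{lem:restore_time}, treating the amortized charges as dispatched to leaves. One small point you treat too lightly: at a leaf the claim that $\print(E[S])$ ``by definition takes $O(|E[S]|)$ time'' hides the question of how $E[S]$ is actually enumerated without re-scanning adjacency lists; the paper makes this explicit by invoking Lemma~\ref{lem:complete_parking_lists_time} to argue that all edges of $E[S]$ are already available in the parking lists, which is what makes the leaf cost genuinely $O(|E[S]|)$.
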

\begin{proof}
	Directly from Lemmas~\ref{lem:choose_time},
	\ref{lem:cleft_time}, \ref{lem:cright_external_time},
	\ref{lem:cright_internal_time}, \ref{lem:del_time} and
	\ref{lem:restore_time}, each operation performed on an
	internal node of the recursion tree can be performed within
	the time cost defined. Furthermore, on the leaves of the
	recursion tree, we only have to output the edges on $E[S]$,
	which can be recorded in the parking lists by
	Lemma~\ref{lem:complete_parking_lists_time}.
\end{proof}

With Lemma~\ref{lem:recursion_tree_node_time} and
Lemma~\ref{lem:total_cost} we are able to prove
Lemma~\ref{lem:algo_time} which completes the proof of
Theorem~\ref{prob:listinducedsubgraphs}.

\begin{lemma}
	\label{lem:algo_time}
	Algorithm~\ref{alg:ListSubgraphs} lists all $k$-subgraphs in
	$O(\sum_{G[V'] \in \setofisg_k(G)}{|E[V']|})$
\end{lemma}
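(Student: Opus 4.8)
The plan is to obtain Lemma~\ref{lem:algo_time} by combining the correctness statement of Lemma~\ref{lem:toplev_correctness} with the per-node timing bound of Lemma~\ref{lem:recursion_tree_node_time} and the global charging argument of Lemma~\ref{lem:total_cost}, and then separately accounting for the work performed in the main loop of Algorithm~\ref{alg:ListSubgraphs} that lies outside the recursive calls, namely the invocations of $\certificate$ and the vertex deletions. Correctness is already available: by Lemma~\ref{lem:toplev_correctness} every $k$-subgraph of $G$ is reported exactly once, so it remains only to bound the running time. I would split the running time into (a)~the work done directly inside the loop of Algorithm~\ref{alg:ListSubgraphs}, and (b)~the total work done inside the recursive calls $\listsubgraphs(\langle v_i\rangle, C)$.

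For part~(b), fix an index $i$ and let $R_i$ be the recursion tree of $\listsubgraphs(\langle v_i\rangle, C)$. Lemma~\ref{lem:recursion_tree_node_time} bounds the work at each node $r \in R_i$ by the quantity $T(r)$ of Eq.~\ref{eq:charging}, and Lemma~\ref{lem:total_cost} gives $\sum_{i=1}^{n}\sum_{r\in R_i}T(r)=O\!\left(\sum_{G[V'] \in \setofisg_k(G)}|E[V']|\right)$, which is exactly the claimed bound. So the argument reduces to showing that part~(a) fits within the same budget.

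For part~(a), iteration $i$ performs one call $C:=\certificate(\langle v_i\rangle)$, costing $O(|E[C]|)$ by Lemma~\ref{lem:op_certificate}, followed by either $\vdel(v_i)$ (when $|C|=k$) or $\vdel(u)$ for every $u\in C$ (when $|C|<k$). The deletions are harmless: at the top level each vertex of $G$ is removed at most once over the whole execution, so their cumulative cost is $\sum_v O(\deg(v)) = O(m+n) = O(m)$ since $G$ is connected. The certificate constructions split into two cases. When $|C|<k$, the entire set $C$ is deleted in the same iteration, so the sets $C$ arising across such iterations are pairwise disjoint, hence so are their edge sets $E[C]$, and their costs sum to $O(m)$. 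When $|C|=k$, the returned $C$ is a connected vertex set of size $k$ containing $v_i$ and no vertex $v_j$ with $j<i$; thus $G[C]$ is a genuine $k$-subgraph of $G$, and by Lemma~\ref{lem:correctness_sub} it is among those reported by $\listsubgraphs(\langle v_i\rangle, C)$. Since distinct iterations yield $k$-subgraphs with distinct smallest-index vertex, $\sum_{i:\,|C|=k}O(|E[C]|)=O\!\left(\sum_{G[V'] \in \setofisg_k(G)}|E[V']|\right)$. Finally $O(m)$ is absorbed because $G$ is connected and every edge therefore lies in at least one $k$-subgraph, so $O(m)=O\!\left(\sum_{G[V'] \in \setofisg_k(G)}|E[V']|\right)$; summing (a) and (b) completes the proof.

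I expect the main subtlety to be the treatment of the top-level $\certificate$ calls in the $|C|=k$ case: one must observe that the certificate returned there is not merely a witness of existence but an actual $k$-subgraph that the ensuing recursive call will output, so its $O(|E[C]|)$ construction cost can be charged against the size of the output rather than against the input. The remaining steps are bookkeeping built on Lemmas~\ref{lem:toplev_correctness}, \ref{lem:correctness_sub}, \ref{lem:op_certificate}, \ref{lem:recursion_tree_node_time} and \ref{lem:total_cost}.
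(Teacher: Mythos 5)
Your proposal is correct and takes essentially the same route as the paper: bound the recursive work by Lemmas~\ref{lem:recursion_tree_node_time} and~\ref{lem:total_cost}, bound the top-level $\certificate$ calls by Lemma~\ref{lem:op_certificate}, charge the $|C|=k$ construction cost to the leaf that outputs $G[C]$, observe that the $|C|<k$ constructions and the deletions together cost $O(m)$ because the affected vertices are removed from $G$, and finally absorb $O(m)$ using connectivity. You fill in a few details the paper states more tersely (the pairwise disjointness of the small certificates and the distinct-smallest-index argument for the large ones), but the decomposition and the charging scheme are the same.
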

\begin{proof}
	By Lemmas~\ref{lem:total_cost} and
	\ref{lem:recursion_tree_node_time}, the total time spent in
	$\listsubgraphs(S, C)$ (line~\ref{line:list}) is
	$O(\sum_{G[V'] \in \setofisg_k(G)}{|E[V']|})$. By
	Lemma~\ref{lem:op_certificate}, the cost of each call to $C =
	\certificate(S)$ (line~\ref{line:initC}) is $O(|E[C]|)$. In
	the case $|C|=k$ (lines~\ref{line:checkC}-\ref{line:del}),
	this cost can be amortized in the leaf corresponding to the
	output of $k$-subgraph $C$. When $|C|<k$
	(lines~\ref{line:else}-\ref{line:deluinC}), we remove every
	vertex in $C$ and incident edges, thus the total cost over
	these operations is $O(m)$. Noting that the input graph $G$ is
	connected, each edge belongs to at least a $k$-subgraph and
	the total time taken by Algorithm~\ref{alg:ListSubgraphs} is
	$O(\sum_{G[V'] \in \setofisg_k(G)}{|E[V']|})$

\end{proof}

\chapter{Listing cycles and st-paths}
\label{chapter:cycles}

Listing all the simple cycles (hereafter just called cycles) in a
graph is a classical problem whose efficient solutions date back to
the early 70s. For a graph with $n$ vertices and $m$ edges, containing
$\eta$ cycles, the best known solution in the literature is given by
Johnson's algorithm~\cite{Johnson1975} and takes $O((\eta+1)(m+n))$
time. This solution is surprisingly not optimal for undirected graphs:
to the best of our knowledge, no theoretically faster solutions have
been proposed in almost 40 years.

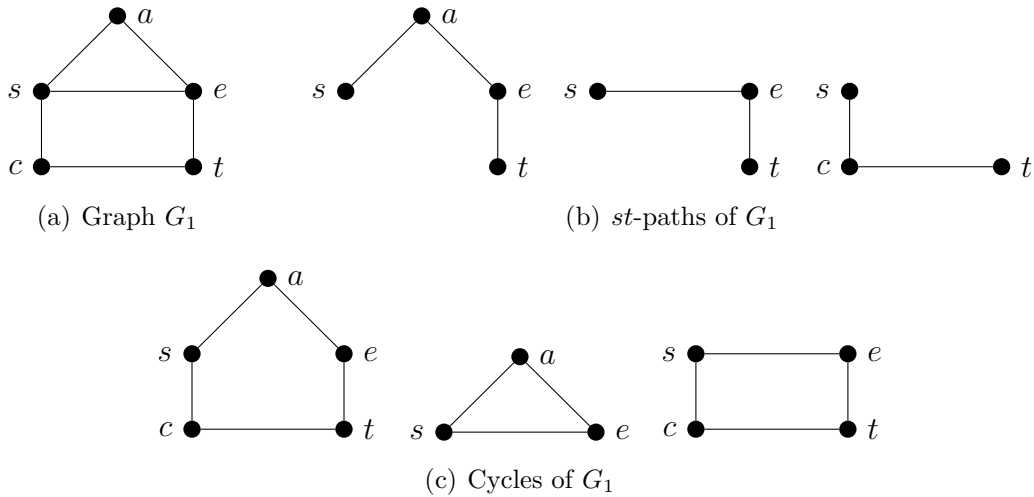
\begin{figure}[t]
\centering
\subfigure[Graph $G_1$]{
\begin{tikzpicture}[shorten >=1pt,->,scale=1.0]
  \tikzstyle{vertex}=[shape=circle,draw,thick,fill=black,minimum size=2pt,inner sep=2pt]
  \node[vertex,label=right:$a$] (A) at (0,0) {};
  \node[vertex,label=left:$s$] (B) at (-1,-1)   {};
  \node[vertex,label=left:$c$] (C) at (-1,-2)  {};
  \node[vertex,label=right:$t$] (D) at (1,-2)  {};
  \node[vertex,label=right:$e$] (E) at (1,-1)  {};
  \draw (A) -- (B) -- (C) -- (D) -- (E) -- (A) -- cycle;
  \draw (B) -- (E) -- cycle;
\end{tikzpicture}
} 
\quad
\subfigure[$st$-paths of $G_1$]{
\begin{tikzpicture}[shorten >=1pt,->,scale=1.0]
  \tikzstyle{vertex}=[shape=circle,draw,thick,fill=black,minimum size=2pt,inner sep=2pt]
  \node[vertex,label=right:$a$] (A) at (0,0) {};
  \node[vertex,label=left:$s$] (B) at (-1,-1)   {};
  \node[vertex,label=right:$t$] (D) at (1,-2)  {};
  \node[vertex,label=right:$e$] (E) at (1,-1)  {};
  \draw (B) -- (A) -- (E) -- (D) -- cycle;
\end{tikzpicture}
\begin{tikzpicture}[shorten >=1pt,->,scale=1.0]
  \tikzstyle{vertex}=[shape=circle,draw,thick,fill=black,minimum size=2pt,inner sep=2pt]
  \node[vertex,label=left:$s$] (B) at (-1,-1)   {};
  \node[vertex,label=right:$t$] (D) at (1,-2)  {};
  \node[vertex,label=right:$e$] (E) at (1,-1)  {};
  \draw (B) -- (E) -- (D) -- cycle;
\end{tikzpicture}
\begin{tikzpicture}[shorten >=1pt,->,scale=1.0]
  \tikzstyle{vertex}=[shape=circle,draw,thick,fill=black,minimum size=2pt,inner sep=2pt]
  \node[vertex,label=left:$s$] (B) at (-1,-1)   {};
  \node[vertex,label=left:$c$] (C) at (-1,-2)  {};
  \node[vertex,label=right:$t$] (D) at (1,-2)  {};
  \draw (B) -- (C) -- (D) -- cycle;
\end{tikzpicture}
}

\subfigure[Cycles of $G_1$]{
\begin{tikzpicture}[shorten >=1pt,->,scale=1.0]
  \tikzstyle{vertex}=[shape=circle,draw,thick,fill=black,minimum size=2pt,inner sep=2pt]
  \node[vertex,label=right:$a$] (A) at (0,0) {};
  \node[vertex,label=left:$s$] (B) at (-1,-1)   {};
  \node[vertex,label=left:$c$] (C) at (-1,-2)  {};
  \node[vertex,label=right:$t$] (D) at (1,-2)  {};
  \node[vertex,label=right:$e$] (E) at (1,-1)  {};
  \draw (A) -- (B) -- (C) -- (D) -- (E) -- (A) -- cycle;
\end{tikzpicture}
\begin{tikzpicture}[shorten >=1pt,->,scale=1.0]
  \tikzstyle{vertex}=[shape=circle,draw,thick,fill=black,minimum size=2pt,inner sep=2pt]
  \node[vertex,label=right:$a$] (A) at (0,0) {};
  \node[vertex,label=left:$s$] (B) at (-1,-1)   {};
  \node[vertex,label=right:$e$] (E) at (1,-1)  {};
  \draw (A) -- (B) -- (E) -- (A) -- cycle;
\end{tikzpicture}
\begin{tikzpicture}[shorten >=1pt,->,scale=1.0]
  \tikzstyle{vertex}=[shape=circle,draw,thick,fill=black,minimum size=2pt,inner sep=2pt]
  \node[vertex,label=left:$s$] (B) at (-1,-1)   {};
  \node[vertex,label=left:$c$] (C) at (-1,-2)  {};
  \node[vertex,label=right:$t$] (D) at (1,-2)  {};
  \node[vertex,label=right:$e$] (E) at (1,-1)  {};
  \draw (B) -- (C) -- (D) -- (E) -- cycle;
  \draw (B) -- (E) -- cycle;
\end{tikzpicture}
}

\caption{Example graph $G_1$, its st-paths and cycles}
\label{fig:cyclesstpathsexample}

\end{figure}

\subsubsection{Results}  

Originally introduced in \cite{ferreira2013}, we present the first
optimal solution to list all the cycles in an undirected graph~$G$.
Specifically, let $\setofcycles(G)$ denote the set of all these cycles
($|\setofcycles(G)| = \eta$).  For a cycle $c \in \setofcycles(G)$,
let $|c|$ denote the number of edges in~$c$. Our algorithm requires
$O(m + \sum_{c \in \setofcycles(G)}{|c|})$ time and is asymptotically
optimal: indeed, $\Omega(m)$ time is necessarily required to read $G$
as input, and $\Omega(\sum_{c \in \setofcycles(G)}{|c|})$ time is
necessarily required to list the output. Since $|c| \leq n$, the cost
of our algorithm never exceeds $O(m + (\eta+1) n)$ time.

Along the same lines, we also present the first optimal solution to
list all the simple paths from $s$ to $t$ (shortly, $st$-paths) in an
undirected graph $G$. Let $\setofpaths_{st}(G)$ denote the set of
$st$-paths in $G$ and, for an $st$-path $\pi \in \setofpaths_{st}(G)$,
let $|\pi|$ be the number of edges in $\pi$.  Our algorithm lists
all the $st$-paths in~$G$ optimally in $O(m + \sum_{\pi \in
  \setofpaths_{st}(G)}{|\pi|})$ time, observing that $\Omega(\sum_{\pi
  \in \setofpaths_{st}(G)}{|\pi|})$ time is necessarily required to
list the output.  

We prove the following reduction to relate
$\setofcycles(G)$ and $\setofpaths_{st}(G)$ for some suitable choices
of vertices $s,t$: if there exists an optimal algorithm to list the
$st$-paths in $G$, then there exists an optimal algorithm to list the
cycles in $G$.  Hence, we can focus on listing $st$-paths.

\subsubsection{Previous work}

The classical problem of listing all the cycles of a graph has been
extensively studied for its many applications in several fields,
ranging from the mechanical analysis of chemical
structures~\cite{Sussenguth65} to the design and analysis of reliable
communication networks, and the graph isomorphism
problem~\cite{Welch66}.
In particular, at the turn of the seventies several algorithms for
enumerating all cycles of an undirected graph have been proposed.
There is a vast body of work, and the majority of the algorithms
listing all the cycles can be divided into the following three classes
(see~\cite{Bezem87,Mateti76} for excellent surveys).

\begin{enumerate}
\item \textit{Search space algorithms.}
According to this approach, cycles are looked for in an appropriate
search space.  In the case of undirected graphs, the \emph{cycle
vector space} \cite{Diestel} turned out to be the most promising
choice: from a basis for this space, all vectors are computed and it
is tested whether they are a cycle. Since the algorithm introduced
in~\cite{Welch66}, many algorithms have been proposed: however, the
complexity of these algorithms turns out to be exponential in the
dimension of the vector space, and thus in $n$. For planar graphs, an
algorithm listing cycles in $O((\eta + 1)n)$ time was presented in
\cite{Syslo81}.

\item \textit{Backtrack algorithms.} 
By this approach, all paths are generated by backtrack and, for each
path, it is tested whether it is a cycle. One of the first algorithms
is the one proposed in~\cite{Tiernan70}, which is however exponential
in $\eta$. By adding a simple pruning strategy, this algorithm has
been successively modified in~\cite{Tarjan73}: it lists all the cycles
in $O(nm(\eta+1))$ time. Further improvements were proposed
in~\cite{Johnson1975,Szwarcfiter76,Read75}, leading to
$O((\eta+1)(m+n))$-time algorithms that work for both directed and
undirected graphs. 

\item \textit{Using the powers of the adjacency matrix.} 
This approach uses the so-called \emph{variable adjacency matrix},
that is, the formal sum of edges joining two vertices. A non-zero
element of the $p$-th power of this matrix is the sum of all walks of
length $p$: hence, to compute all cycles, we compute the $n$th power
of the variable adjacency matrix. This approach is not very efficient
because of the non-simple walks. Algorithms based on this approach
(e.g.\mbox{} \cite{Ponstein66,Yau67}) basically differ only on the way
they avoid to consider walks that are neither paths nor cycles.
\end{enumerate}

Almost 40 years after Johnson's algorithm~\cite{Johnson1975}, the
problem of efficiently listing all cycles of a graph is still an
active area of research
(e.g.~\cite{birmele2012,Halford04,Horvath04,Liu06,Sankar07,Wild08,Schott11}).  New
application areas have emerged in the last decade, such as
bioinformatics: for example, two algorithms for this problem have been
proposed in~\cite{Klamt06,Klamt09} while studying
biological interaction graphs. Nevertheless, no significant
improvement has been obtained from the theory standpoint: in
particular, Johnson's algorithm is still the theoretically most
efficient. His $O((\eta+1)(m+n))$-time solution is surprisingly not
optimal for undirected graphs as we show in this chapter.

\subsubsection{Difficult graphs for Johnson's algorithm}

\begin{figure}[t]
\centering
\definecolor{cqcqcq}{rgb}{0,0,0}
\begin{tikzpicture}[scale=1.0,line cap=round,line join=round,>=triangle 45,x=1.0cm,y=1.0cm]
\draw (-2,2)-- (-1,3);
\draw [dash pattern=on 5pt off 5pt] (-2,2)-- (-1,2.52);
\draw [dash pattern=on 5pt off 5pt] (-2,2)-- (-1,2);
\draw (-1,3)-- (0,2);
\draw [dash pattern=on 5pt off 5pt] (-1,2.52)-- (0,2);
\draw [dash pattern=on 5pt off 5pt] (-1,2)-- (0,2);
\draw (-2,2)-- (-1,1);
\draw (-1,1)-- (0,2);
\draw (0,2)-- (1,3);
\draw [dash pattern=on 5pt off 5pt] (0,2)-- (1.04,2.48);
\draw [dash pattern=on 5pt off 5pt] (0,2)-- (1,2);
\draw (0,2)-- (1,1);
\draw (1,3)-- (2,2);
\draw [dash pattern=on 5pt off 5pt] (1.04,2.48)-- (2,2);
\draw [dash pattern=on 5pt off 5pt] (1,2)-- (2,2);
\draw (1,1)-- (2,2);
\draw [dash pattern=on 5pt off 5pt] (-2,2)-- (-1,1.48);
\draw [dash pattern=on 5pt off 5pt] (-1,1.48)-- (0,2);
\draw [dash pattern=on 5pt off 5pt] (0,2)-- (1,1.44);
\draw [dash pattern=on 5pt off 5pt] (1,1.44)-- (2,2);
\draw [shift={(0,2)}] plot[domain=0:3.14,variable=\t]({1*2*cos(\t r)+0*2*sin(\t r)},{0*2*cos(\t r)+1*2*sin(\t r)});
\begin{footnotesize}
\fill [color=black] (-2,2) circle (1.5pt);
\draw[color=black] (-2,2) node[left] {$a$};
\fill [color=black] (-1,3) circle (1.5pt);
\draw[color=black] (-1,3) node[above] {$v_1$};
\fill [color=black] (-1,2) circle (1.5pt);
\fill [color=black] (-1,1) circle (1.5pt);
\draw[color=black] (-1,1) node[below] {$v_k$};
\fill [color=black] (-1,1.48) circle (1.5pt);
\fill [color=black] (-1,2.52) circle (1.5pt);
\fill [color=black] (0,2) circle (1.5pt);
\draw[color=black] (0,2) node[above] {$b$};
\fill [color=black] (1,3) circle (1.5pt);
\draw[color=black] (1,3) node[above] {$u_1$};
\fill [color=black] (1,2) circle (1.5pt);
\fill [color=black] (1.04,2.48) circle (1.5pt);
\fill [color=black] (1,1) circle (1.5pt);
\draw[color=black] (1,1) node[below] {$u_k$};
\fill [color=black] (2,2) circle (1.5pt);
\draw[color=black] (2,2) node[right] {$c$};
\fill [color=black] (1,1.44) circle (1.5pt);
\end{footnotesize}
\end{tikzpicture}
\caption{Diamond graph.}
\label{fig:johnsoncounter}
\end{figure}
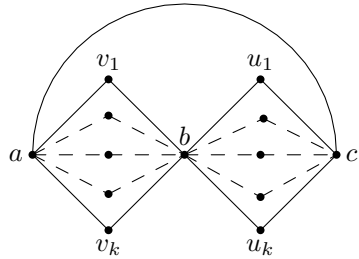

It is worth observing that the analysis of the time complexity of Johnson's algorithm
is not pessimistic and cannot match the one of our algorithm for
listing cycles.  For example, consider the sparse ``diamond'' graph
$D_n = (V, E)$ in Fig.~\ref{fig:johnsoncounter} with $n=2k+3$
vertices in $V = \{a,b,c, v_1, \ldots, v_k, u_1, \ldots, u_k\}$. There
are $m = \Theta(n)$ edges in $E = \{ (a,c)$, $(a,v_i)$, $(v_i,b)$,
$(b,u_i)$, $(u_i,c)$, for $1 \leq i \leq k\}$, and three kinds of
(simple) cycles:
(1)~$(a, v_i), (v_i, b), (b, u_j), (u_j, c), (c, a)$ for $1 \leq i, j
\leq k$;
(2)~$(a, v_i), (v_i, b), (b, v_j), (v_j, a)$ for $1 \leq i < j \leq
k$;
(3)~$(b, u_i), (u_i, c), (c, u_j), (u_j, b)$ for $1 \leq i < j \leq
k$,
totalizing $\eta = \Theta(n^2)$ cycles.
Our algorithm takes $\Theta(n + k^2) = \Theta(\eta) = \Theta(n^2)$
time to list these cycles.  On the other hand, Johnson's algorithm
takes $\Theta(n^3)$ time, and the discovery of the $\Theta(n^2)$ cycles
in~(1) costs $\Theta(k) = \Theta(n)$ time each: the backtracking
procedure in Johnson's algorithm starting at $a$, and passing through
$v_i$, $b$ and $u_j$ for some $i,j$, arrives at $c$: at that point, it
explores all the vertices $u_l$ $(l \neq i)$ even if they do not lead
to cycles when coupled with $a$, $v_i$, $b$, $u_j$, and $c$.

\section{Preliminaries}

Let $G=(V,E)$ be an undirected connected graph with $n=|V|$ vertices
and $m=|E|$ edges, without self-loops or parallel edges. For a vertex
$u \in V$, we denote by $N(u)$ the neighborhood of $u$ and by
$d(u)=|N(u)|$ its degree.  $G[V']$ denotes the subgraph \emph{induced}
by $V' \subseteq V$, and $G - u$ is the induced subgraph $G[ V
\setminus \{u\}]$ for $u \in V$. Likewise for edge $e \in E$, we adopt
the notation $G-e = (V,E \setminus \{e\})$. For a vertex $v \in V$,
the \emph{postorder} DFS number of $v$ is the relative time in which
$v$ was \emph{last} visited in a DFS traversal, i.e. the position of
$v$ in the vertex list ordered by the last visiting time of each
vertex in the DFS.

Paths are simple in $G$ by definition: we refer to a path $\pi$ by its
natural sequence of vertices or edges.  A path $\pi$ from $s$ to $t$,
or $st$-\emph{path}, is denoted by $\pi = s \leadsto t$. Additionally,
$\setofpaths(G)$ is the set of all paths in $G$ and
$\setofpaths_{s,t}(G)$ is the set of all $st$-paths in $G$.  When
$s=t$ we have cycles, and $\setofcycles(G)$ denotes the set of all
cycles in $G$. We denote the number of edges in a path $\pi$ by
$|\pi|$ and in a cycle~$c$ by $|c|$. In this chapter, we consider the following
problems.

\begin{problem}[Listing st-Paths]
	\label{prob:liststpaths}
	Given an undirected graph $G=(V,E)$ and two distinct vertices
	$s,t \in V$, output all the paths $\pi \in
	\setofpaths_{s,t}(G)$.
\end{problem} 

\begin{problem}[Listing Cycles]
	\label{prob:listcycles}
	Given an undirected graph $G=(V,E)$, output all the cycles $c
	\in \setofcycles(G)$.
\end{problem} 

\begin{figure}[t]
\centering
\begin{tikzpicture}
[nodeDecorate/.style={shape=circle,inner sep=1pt,draw,thick,fill=black},%
  lineDecorate/.style={-,dashed},%
  elipseDecorate/.style={color=gray!30},
  scale=0.6]
\fill [elipseDecorate] (5,10) circle (2);
\fill [elipseDecorate] (9,10) circle (2);
\fill [elipseDecorate] (2,10) circle (1);
\fill [elipseDecorate] (-1,10) circle (2);
\fill [elipseDecorate,rotate around={-55:(-1,8)}] (-1,7) circle (1);

\draw (5,10) circle (2);
\draw (9,10) circle (2);
\draw (2,10) circle (1);
\draw (-1,10) circle (2);
\draw (5,7) circle (1);
\draw (9,7) circle (1);
\draw [rotate around={55:(-1,8)}] (-1,7) circle (1);
\draw [rotate around={-55:(-1,8)}] (-1,7) circle (1);
\draw (13,10) circle (2);
\draw (13,7) circle (1);
\draw (-4,10) circle (1);
\begin{footnotesize}
\node (7) at (-2,7) [nodeDecorate,label=above:$s$] {};
\node (14) at (9,11) [nodeDecorate,label=above:$t$] {};
\end{footnotesize}
\foreach \nodename/\x/\y in {
  0/7/10, 1/5/11,
  2/3/10, 3/1/10, 4/-3/10, 5/-1/10, 6/-1/8, 7/-2/7, 8/-1/11,
  9/0/7,
  11/5/10, 12/4/9, 13/5/8, 14/9/11, 15/11/10, 16/9/9,
  17/9/7 , 18/9/8, 50/5/7, 51/13/11, 52/13/8, 53/13/7, 54/-4/10}
{
  \node (\nodename) at (\x,\y) [nodeDecorate] {};
}

\path
\foreach \startnode/\endnode in {6/7, 6/9, 5/6, 5/3, 5/8, 4/5, 3/2,
2/11, 1/11, 11/12, 11/0, 11/13, 13/50, 0/14, 14/15, 15/16, 16/18,
18/17, 15/51, 15/52, 51/52, 52/53, 54/4}
{
  (\startnode) edge[lineDecorate] node {} (\endnode)
};

\path
\foreach \startnode/\endnode/\bend in { 8/3/20, 6/3/20, 4/8/10,
12/13/20, 1/0/20, 2/1/20, 13/0/20, 0/18/10}
{
  (\startnode) edge[lineDecorate, bend left=\bend] node {} (\endnode)
};

\end{tikzpicture}
\caption{Block tree of $G$ with bead string $\sbeadstring$ in gray.}
\label{fig:beadstring}
\end{figure}

Our algorithms assume without loss of generality that the input graph
$G$ is connected, hence $m \ge n-1$, and use the decomposition of $G$
into biconnected components. Recall that an \emph{articulation point}
(or cut-vertex) is a vertex $u \in V$ such that the number of
connected components in $G$ increases when $u$ is removed. $G$ is
\emph{biconnected} if it has no articulation points. Otherwise, $G$
can always be decomposed into a tree of biconnected components, called
the \emph{block tree}, where each biconnected component is a maximal
biconnected subgraph of $G$ (see Fig.~\ref{fig:beadstring}), and
two biconnected components are adjacent if and only if they share an
articulation point.

\section{Overview and main ideas}
\label{sec:overview}

  While the basic approach is simple (see the binary partition in
  point~\ref{item:abstract:3}), we use a number of non-trivial ideas to
  obtain our optimal algorithm for an undirected (connected) graph $G$
  as summarized in the steps below.
  \begin{enumerate}
  \item Prove the following reduction. If there exists an optimal
    algorithm to list the $st$-paths in $G$, there exists an optimal
    algorithm to list the cycles in $G$. This relates
    $\setofcycles(G)$ and $\setofpaths_{st}(G)$ for some choices $s,t$.


  \item Focus on listing the $st$-paths. Consider the decomposition of
    the graph into biconnected components ({\bcc}s), thus forming a
    tree $T$ where two {\bcc}s are adjacent in $T$ iff they share an
    articulation point. Exploit (and prove) the property that if $s$
    and $t$ belong to distinct {\bcc}s, then $(i)$ there is a unique
    \emph{sequence} $\sbeadstring$ of adjacent {\bcc}s in $T$ through
    which each $st$-path must necessarily pass, and $(ii)$ each
    $st$-path is the concatenation of paths connecting the
    articulation points of these {\bcc}s in $\sbeadstring$.

  \item \label{item:abstract:3} Recursively list the $st$-paths in
    $\sbeadstring$ using the classical binary partition (i.e.\mbox{}
    given an edge $e$ in $G$, list all the cycles containing
    $e$, and then all the cycles not containing~$e$): now it suffices to
    work on the \emph{first} \bcc\ in $\sbeadstring$, and efficiently
    maintain it when deleting an edge $e$, as required by the binary
    partition.

  \item Use a notion of \emph{certificate} to avoid recursive calls
    (in the binary partition) that do not list new $st$-paths.  This
    certificate is maintained dynamically as a data structure
    representing the first \bcc\ in $\sbeadstring$, which guarantees
    that there exists at least one \emph{new} solution in the current
    $\sbeadstring$.

  \item Consider the binary recursion tree corresponding to the binary
    partition.  Divide this tree into \emph{spines}: a spine
    corresponds to the recursive calls generated by the edges $e$
    belonging to the same adjacency list in $\sbeadstring$.  The
    amortized cost for each listed $st$-path $\pi$ is $O(|\pi|)$ when
    there is a guarantee that the amortized cost in each spine $S$ is
    $O(\mu)$, where $\mu$ is a lower bound on the number of $st$-paths
    that will be listed from the recursive calls belonging to~$S$. The
    (unknown) parameter~$\mu$, which is different for each spine~$S$, and the
    corresponding cost $O(\mu)$, will drive the design of the proposed
    algorithms.
  \end{enumerate}

\subsection{Reduction to $st$-paths}
\label{sub:reduction-paths}

We now show that listing cycles reduces to listing $st$-paths while
preserving the optimal complexity.  

\begin{lemma}
  \label{lemma:reduction}
  Given an algorithm that solves Problem~\ref{prob:liststpaths} in
  \mbox{$O(m + \sum_{\pi \in \setofpaths_{s,t}(G)}{|\pi|})$}
  time, there exists an algorithm that solves
  Problem~\ref{prob:listcycles} in \mbox{$O(m + \sum_{c \in
  \setofcycles(G)}{|c|})$} time.
\end{lemma}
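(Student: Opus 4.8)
The plan is to turn cycle listing into a family of $st$-path listing calls on a sequence of strictly shrinking subgraphs, using the biconnected decomposition both to guarantee correctness and to amortize the cost. The first step is the standard observation that every cycle of $G$ lies entirely inside one biconnected component, and two distinct {\bcc}s share at most a cut vertex, so $\setofcycles(G)$ is the disjoint union of $\setofcycles(B)$ over the {\bcc}s $B$ of $G$; the decomposition is computed once in $O(m)$ time. Hence it suffices to list the cycles of a single biconnected graph $B$, and I recurse as follows. If $B$ has at most two vertices (a single edge, since $G$ is simple), it has no cycles and the call costs $O(1)$. Otherwise pick an arbitrary edge $e=(s,t)$ of $B$, with $s\neq t$: the cycles of $B$ through $e$ are in bijection with the $st$-paths of $B-e$ (remove $e$, resp. add $e$ back), and $B-e$ is connected because a $2$-connected graph on at least three vertices has no bridge. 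So I run the assumed optimal algorithm for Problem~\ref{prob:liststpaths} on $(B-e,s,t)$ and output $\pi\cup\{e\}$ for each listed $st$-path $\pi$. The cycles of $B$ avoiding $e$ are exactly the cycles of $B-e$, so I decompose $B-e$ into its {\bcc}s and recurse on each.

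Correctness that every cycle is listed once and only once follows by induction on $|E|$: the recursion strictly decreases the number of edges, ``through $e$'' and ``avoiding $e$'' partition the cycles of $B$, and single-edge {\bcc}s contribute nothing. I would also record the structural fact driving the analysis: a cycle $c$ is emitted at exactly one node of the recursion, namely the first node along its recursion path whose chosen edge belongs to $c$; such a node exists because $c$ (being $2$-connected) stays inside a single {\bcc} at every level until one of its edges is removed, and each heavy call does remove an edge.

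For the running time I would charge everything to the output. A call on a {\bcc} $B$ with $m_B=|E(B)|\ge 3$ edges spends $O(m_B)$ to build $B-e$ and decompose it, plus $O(m_B+\sum_\pi|\pi|)$ inside the $st$-path oracle on $B-e$. The crucial point is that in a $2$-connected graph on at least three vertices any two edges lie on a common cycle (subdivide $e$ and a target edge $f$, apply Menger to the still-$2$-connected result to get two internally disjoint paths between the two new vertices, and contract the subdivisions), so every edge of $B$ lies on a cycle of $B$ through $e$; therefore $\sum_{c\ \text{through}\ e}|c|\ge m_B$. Since each such $c$ is emitted only at this call, summing over all heavy calls gives $\sum_{\text{heavy}} O(m_B)=O(\sum_{c\in\setofcycles(G)}|c|)$, and likewise $\sum_\pi|\pi|=\sum_c(|c|-1)$ over the whole recursion, again $O(\sum_c|c|)$. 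The trivial $O(1)$ calls number $O(n)$ from the first decomposition plus $O(\sum_{\text{heavy}}|V(B)|)=O(\sum_{\text{heavy}}m_B)=O(\sum_c|c|)$ from the later ones. Adding the $O(m)$ to read $G$ yields the claimed $O(m+\sum_{c\in\setofcycles(G)}|c|)$ bound.

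The main obstacle is precisely this amortization: without the ``two edges on a common cycle'' fact and the uniqueness of the emitting call, the scheme would degrade to $\Theta(nm)$ (one oracle call per edge, each paying $\Theta(m)$ merely to scan its graph). A secondary point to handle carefully is that the $st$-path oracle must be applied to the arbitrary connected subgraphs $B-e$ produced along the way, with its time bound measured in the size of that subgraph; this is legitimate since Problem~\ref{prob:liststpaths} is stated for general undirected graphs with distinct endpoints, which $e=(s,t)$ provides.
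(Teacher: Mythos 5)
Your proof is correct and takes essentially the same route as the paper: decompose $G$ into biconnected components, pick one edge $e=(s,t)$ per component $B$, list the $st$-paths of $B-e$ (exactly the cycles of $B$ through $e$), recurse on the biconnected components of $B-e$ for the cycles avoiding $e$, and amortize via the fact that every edge of a biconnected $B$ lies on a cycle through $e$, so $\sum_{\pi}|\pi| = \Omega(|E_B|)$. The only cosmetic differences are that the paper takes $e$ to be a DFS back edge (a convenience, since a DFS is computed anyway) and maintains pending components in an explicit work list rather than by recursion, while you additionally supply the Menger/subdivision justification for the two-edges-on-a-common-cycle fact that the paper asserts without proof.
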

\begin{proof}
  Compute the biconnected components of $G$ and keep them in a list
  $L$. Each (simple) cycle is contained in one of the biconnected
  components and therefore we can treat each biconnected component
  individually as follows. While $L$ is not empty, extract a biconnected
  component $B=(V_{B},E_{B})$ from $L$ and repeat the following three
  steps: $(i)$ compute a DFS traversal of $B$ and take any back edge
  $b=(s,t)$ in $B$; $(ii)$ list all $st$-paths in $B-b$, i.e.~the
  cycles in $B$ that include edge~$b$; $(iii)$ remove edge $b$ from
  $B$, compute the new biconnected components thus created by removing
  edge~$b$, and append them to $L$. When $L$ becomes empty, all the
  cycles in $G$ have been listed.

  Creating $L$ takes $O(m)$ time. For every $B \in L$, steps $(i)$ and
  $(iii)$ take $O(|E_B|)$ time.  Note that step $(ii)$ always outputs
  distinct cycles in $B$ (i.e.~$st$-paths in $B-b$) in
  $O(|E_{B}|+\sum_{\pi \in \setofpaths_{s,t}(B-b)}{|\pi|})$ time.
  However, $B-b$ is then decomposed into biconnected components whose
  edges are traversed again. We can pay for the latter cost: for any
  edge $e \neq b$ in a biconnected component $B$, there is always a
  cycle in $B$ that contains both $b$ and $e$ (i.e.\mbox{} it is an
  $st$-path in $B-b$), hence $\sum_{\pi \in
    \setofpaths_{s,t}(B-b)}{|\pi|}$ dominates the term $|E_{B}|$,
  i.e.~$\sum_{\pi \in \setofpaths_{s,t}(B-b)}{|\pi|}=
  \Omega(|E_{B}|)$.  Therefore steps $(i)$--$(iii)$ take $O(\sum_{\pi
    \in \setofpaths_{s,t}(B-b)}{|\pi|})$ time. When $L$ becomes empty,
  the whole task has taken $O(m + \sum_{c \in \setofcycles(G)}{|c|})$
  time.
\end{proof}

\subsection{Decomposition in biconnected components}
\label{sec:decomposition}

We now focus on listing $st$-paths
(Problem~\ref{prob:liststpaths}). We use the decomposition of $G$ into
a block tree of biconnected components.  Given vertices $s,t$, define
its \emph{bead string}, denoted by $\sbeadstring$, as the unique
sequence of one or more adjacent biconnected components (the
\emph{beads}) in the block tree, such that the first one contains $s$
and the last one contains $t$ (see Fig.~\ref{fig:beadstring}): these
biconnected components are connected through articulation points,
which must belong to all the paths to be listed.

\begin{lemma}
  \label{lemma:beadstring}
  All the $st$-paths in $\setofpaths_{s,t}(G)$ are contained in the
  induced subgraph $G[\sbeadstring]$ for the bead string
  $\sbeadstring$. Moreover, all the articulation points in
  $G[\sbeadstring]$ are traversed by each of these paths.
\end{lemma}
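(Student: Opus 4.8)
The plan is to derive both statements directly from the tree structure of the block (block--cut) tree of $G$. First I would fix the setup: the block tree has one node per biconnected component (\bcc) and one node per articulation point, a \bcc\ being adjacent to an articulation point exactly when that point belongs to it, and this auxiliary graph is a tree. By definition, $\sbeadstring$ is the sequence of \bcc s $B_1, B_2, \ldots, B_k$ lying on the unique path of this tree between a \bcc\ containing $s$ and a \bcc\ containing $t$, with $s \in B_1$, $t \in B_k$, and consecutive beads $B_i, B_{i+1}$ sharing an articulation point $c_i$ for $1 \le i < k$. Uniqueness of $\sbeadstring$ is precisely the uniqueness of this path in a tree, so I would take it as given and use it freely.

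For the containment statement I would argue by contradiction. Suppose $\pi \in \setofpaths_{s,t}(G)$ visits a vertex $v$ that is not in $G[\sbeadstring]$. Then $v$ belongs to some \bcc\ $B'$ that is not a bead, and the path in the block tree from $B'$ towards $\sbeadstring$ meets a first articulation point $a$ lying on $\sbeadstring$; equivalently, $a$ is a cut vertex of $G$ whose removal detaches the branch containing $v$. Deleting $a$ from $G$ leaves $v$ in a connected component $C_v$ containing neither $s$ nor $t$, because in the block tree both paths $B' \leadsto B_1$ and $B' \leadsto B_k$ pass through $a$ --- unless $a \in \{s,t\}$. In every case a simple $st$-path cannot reach $v$ and then reach $t$: if $a \notin \{s,t\}$ this forces $\pi$ to use $a$ twice (once to enter $C_v$, once to leave it); if $a = s$ (resp. $a = t$) then after entering $C_v$ there is no route back out to the already-spent endpoint, so $\pi$ cannot terminate at $t$ (resp. cannot have started at $s$). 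This contradicts $\pi$ being simple, so $\pi$ is contained in $G[\sbeadstring]$.

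For the second statement I would first identify the articulation points of $G[\sbeadstring]$ as exactly $c_1, \ldots, c_{k-1}$: each bead $B_i$ is biconnected, so removing any vertex of $B_i$ other than $c_{i-1}$ or $c_i$ keeps $G[\sbeadstring]$ connected, whereas removing $c_i$ separates $B_1 \cup \cdots \cup B_i$ from $B_{i+1} \cup \cdots \cup B_k$. Now take any $\pi \in \setofpaths_{s,t}(G)$ and any such $c_i$. By the containment statement $\pi$ lies entirely in $G[\sbeadstring]$, and in $G[\sbeadstring] - c_i$ the vertices $s$ and $t$ lie in different connected components; since $\pi$ joins them, it must pass through $c_i$ (the cases $c_i = s$ or $c_i = t$ being trivial). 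Hence every articulation point of $G[\sbeadstring]$ is traversed by every $st$-path.

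I expect the only delicate point to be the edge-case bookkeeping when the separating articulation point $a$ coincides with $s$ or $t$, and in the second part when some $c_i$ coincides with $s$ or $t$. I would neutralize this by phrasing everything through the clean statement ``deleting a cut vertex disconnects the two sides of it'' rather than a fragile path-tracing argument; with that in hand, both claims are immediate consequences of the block tree being a tree and each bead being biconnected.
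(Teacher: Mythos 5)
Your proof is correct and rests on the same foundation as the paper's, namely the tree structure of the block-cut decomposition. In fact your containment argument is tighter than the one in the paper: the paper picks an edge $e=(u,v)$ leaving $\sbeadstring$ and asserts that $v$ cannot reach any vertex of $\sbeadstring$ in $G-e$, which is not literally true when the outer branch is attached to an articulation point by more than one edge (then $v$ can still reach that articulation point in $G-e$); your version --- locate the separating cut vertex $a$ and observe that a simple $st$-path would have to cross it twice, with the $a\in\{s,t\}$ cases handled explicitly --- closes that gap. The argument for the articulation points is essentially identical to the paper's.
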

\begin{proof}
  Consider an edge $e = (u,v)$ in $G$ such that $u \in \sbeadstring$
  and $v \notin \sbeadstring$. Since the biconnected components of a
  graph form a tree and the bead string $\sbeadstring$ is a path in
  this tree, there are no paths $v \leadsto w$ in $G-e$ for any $w \in
  \sbeadstring$ because the biconnected components in $G$ are maximal
  and there would be a larger one (a contradiction).
  Moreover, let $B_1, B_2, \ldots, B_r$ be the biconnected components
  composing $\sbeadstring$, where $s \in B_1$ and $t \in B_r$. If
  there is only one biconnected component in the path (i.e.~$r=1$),
  there are no articulation points in $\sbeadstring$.  Otherwise, all
  of the $r-1$ articulation points in $\sbeadstring$ are traversed by
  each path $\pi \in \setofpaths_{s,t}(G)$: indeed, the articulation
  point between adjacent biconnected components $B_i$ and $B_{i+1}$ is
  their only vertex in common and there are no edges linking $B_i$ and
  $B_{i+1}$.
\end{proof}

We thus restrict the problem of listing the paths in
$\setofpaths_{s,t}(G)$ to the induced subgraph $G[\sbeadstring]$,
conceptually isolating it from the rest of $G$. For the sake of
description, we will use interchangeably $\sbeadstring$ and
$G[\sbeadstring]$ in the rest of the chapter.

\subsection{Binary partition scheme}
\label{sec:basic-scheme}

We list the set of $st$-paths in $\sbeadstring$, denoted by
$\setofpaths_{s,t}(\sbeadstring)$, by applying the binary partition
method (where $\setofpaths_{s,t}(G) = \setofpaths_{s,t}(\sbeadstring)$
by Lemma~\ref{lemma:beadstring}): we choose an edge $e = (s,v)$
incident to~$s$ and then list all the $st$-paths that include $e$ and
then all the $st$-paths that do not include $e$. Since we delete some
vertices and some edges during the recursive calls, we proceed as follows.

\smallskip

\noindent{\it Invariant:} At a generic recursive step on vertex $u$
(initially, $u:=s$), let $\pi_s = s \leadsto u$ be the path discovered
so far (initially, $\pi_s$ is empty $\{\}$). Let $\beadstring$ be the
current bead string (initially, $\beadstring :=
\sbeadstring$). More precisely, $\beadstring$ is defined as follows:
$(i)$~remove from $\sbeadstring$ all the vertices in $\pi_s$ but $u$, and
the edges incident to $u$ and discarded so far; $(ii)$~recompute the
block tree on the resulting graph; $(iii)$~$\beadstring$ is the unique
bead string that connects $u$ to $t$ in the recomputed block tree.

\smallskip

\noindent{\it Base case:} When $u=t$, output the $st$-path $\pi_s$.

\smallskip

\noindent{\it Recursive rule:} Let $\setofpaths(\pi_s, u,
		\beadstring)$ denote the set of $st$-paths to be
		listed by the current recursive call. Then, it is the
		union of the following two disjoint sets, for an edge
		$e=(u,v)$ incident to~$u$:
\begin{itemize}
\item \emph{Left branching:} the $st$-paths in $\setofpaths(\pi_s \cdot e,
  v, \vbeadstring)$ that use $e$, where $\vbeadstring$ is the unique
  bead string connecting $v$ to $t$ in the block tree resulting from
  the deletion of vertex $u$ from $\beadstring$.
\item \emph{Right branching:} the $st$-paths in $\setofpaths(\pi_s, u,
  \beadstring')$ that do \emph{not} use~$e$, where $\beadstring'$ is
  the unique bead string connecting $u$ to $t$ in the block tree
  resulting from the deletion of edge $e$ from $\beadstring$.
\end{itemize}

\noindent
Hence, $\setofpaths_{s,t}(\sbeadstring)$ (and so
$\setofpaths_{s,t}(G)$) can be computed by invoking $\setofpaths(\{\}, s,
\sbeadstring)$. The correctness and completeness of the above approach
is discussed in Section~\ref{sec:intro-cert}.

At this point, it should be clear why we introduce the notion of bead
strings in the binary partition. The existence of the partial path
$\pi_s$ and the bead string $\beadstring$ guarantees that there surely
exists at least one $st$-path. But there are two sides of the coin
when using $\beadstring$.

\begin{enumerate}
\item One advantage is that we can avoid useless recursive calls:
If vertex $u$ has only one incident edge $e$, we just perform the left
branching; otherwise, we can safely perform both the left and right
branching since the \emph{first} bead in $\beadstring$ is always a
biconnected component by definition (thus there exists both an
$st$-path that traverses $e$ and one that does not).

\item \label{side_coin:2} The other side of the coin is that we have to maintain the
bead string $\beadstring$ as $\vbeadstring$ in the left branching and
as $\beadstring'$ in the right branching by
Lemma~\ref{lemma:beadstring}. Note that these bead strings are surely
non-empty since $\beadstring$ is non-empty by induction (we only
perform either left or left/right branching when there are solutions by
item~1).
\end{enumerate}

To efficiently address point~\ref{side_coin:2}, we need to introduce the notion of
certificate as described next.

\subsection{Introducing the certificate}
\label{sec:intro-cert}

Given the bead string $\beadstring$, we call the \emph{head} of
$\beadstring$, denoted by $\head$, the first biconnected component in
$\beadstring$, where $u \in \head$. Consider a DFS tree of
$\beadstring$ rooted at $u$ that changes along with $\beadstring$, and
classify the edges in $\beadstring$ as tree edges or back edges (there
are no cross edges since the graph is undirected).

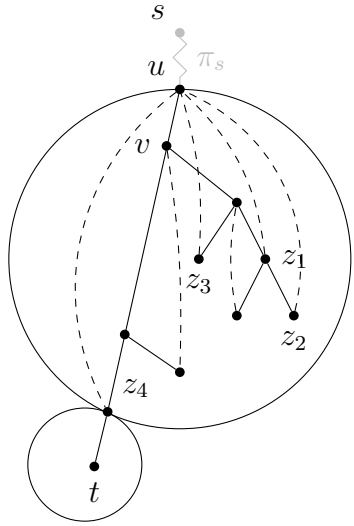
\begin{figure}[t]
	\centering
\begin{tikzpicture}
[nodeDecorate/.style={shape=circle,inner sep=1pt,draw,thick,fill=black},%
  lineDecorate/.style={-,dashed},%
  elipseDecorate/.style={color=gray!30},
  scale=0.25]
\draw (10,22) circle (9);
\draw (5,11.1) circle (3);

\node (s) at (10,34) [nodeDecorate,color=lightgray,label=above left:$s$] {};
\node (u) at (10,31) [nodeDecorate,label=above left:{ $u$}] {};

\node (tp) at (6.2,13.9) [nodeDecorate,label=above right:{ $z_4$}] {};
\node (t) at (5.5,11) [nodeDecorate,label=below:$t$] {};

\path {
	(s) edge[snake,-,color=lightgray] node {\quad\quad$\pi_s$} (u)
	(u) edge node {} (tp)
	(tp) edge node {} (t)
};

\node (a) at (9.3,28) [nodeDecorate,label=left:$v$] {};
\node (b) at (7.1,18) [nodeDecorate,] {};
\node (c) at (13,25) [nodeDecorate,] {};
\node (d) at (14.5,22) [nodeDecorate,label=right:$z_1$] {};
\node (e) at (11,22) [nodeDecorate,label=below:$z_3$] {};
\node (f) at (16,19) [nodeDecorate,label=below:$z_2$] {};
\node (g) at (13,19) [nodeDecorate,] {};
\node (h) at (10,16) [nodeDecorate,] {};

\path {
	(a) edge node {} (c)
	(c) edge node {} (d)
	(d) edge node {} (f)
	(c) edge node {} (e)
	(d) edge node {} (g)
	(b) edge node {} (h)
};

\path {
	(u) edge[dashed,bend left=-40] node {} (tp)
	(f) edge[dashed,bend left=-40] node {} (u)
	(g) edge[dashed,bend left=10] node {} (c)
	(e) edge[dashed,bend left=-10] node {} (u)
	(d) edge[dashed,bend left=-20] node {} (u)
	(a) edge[dashed,bend left=5] node {} (h)
};

\end{tikzpicture}
\caption{Example certificate of $B_{u,t}$ \label{fig:Certificate}}
\end{figure}

To maintain $\beadstring$ (and so $\head$) during the recursive calls,
we introduce a \emph{certificate} $C$ (see Fig.~\ref{fig:Certificate}): It is a suitable data structure that uses the above
classification of the edges in $\beadstring$, and supports the
following operations, required by the binary partition scheme.
\begin{itemize}
\item $\chooseedge(C,u)$: returns an edge $e = (u,v)$ with $v \in
	\head$ such that $\pi_s \cdot (u,v) \cdot u \leadsto t$ is an
	$st$-path such that $u \leadsto t$ is inside $\beadstring$.
	Note that $e$ always exists since $\head$ is biconnected.
	Also, the chosen $v$ is the last one in DFS postorder among the
	neighbors of $u$: in this way, the (only) tree edge $e$ is
	returned when there are no back edges leaving from~$u$.  (As
	it will be clear in Sections~\ref{sec:recursion-amortization}
	and~\ref{sec:certificate}, this order facilitates the analysis
	and the implementation of the certificate.)
\item $\oracleleft(C,e)$: for the given $e=(u,v)$, it obtains
	$\vbeadstring$ from $\beadstring$ as discussed in
	Section~\ref{sec:basic-scheme}. This implies updating also
	$\head$, $C$, and the block tree, since the recursion
	continues on~$v$. It returns bookkeeping information $I$ for
	what is updated, so that it is possible to revert to
	$\beadstring$, $\head$, $C$, and the block tree, to their
	status before this operation.
\item $\oracleright(C,e)$: for the given $e=(u,v)$, it obtains
	$\beadstring'$ from $\beadstring$ as discussed in
	Section~\ref{sec:basic-scheme}, which implies updating also
	$\head$, $C$, and the block tree. It returns bookkeeping
	information $I$ as in the case of $\oracleleft(C,e)$.
\item $\undooracle(C,I)$: reverts the bead string to $\beadstring$,
	the head $\head$, the certificate $C$, and the block tree, to
	their status before operation $I := \oracleleft(C,e)$ or $I :=
	\oracleright(C,e)$ was issued (in the same recursive call).
\end{itemize}

Note that a notion of certificate in listing problems has been
introduced in~\cite{Ferreira11}, but it cannot be directly applied to
our case due to the different nature of the problems and our use of more
complex structures such as biconnected components. 

Using our certificate
and its operations, we can now formalize the binary partition and its
recursive calls $\setofpaths(\pi_s, u, \beadstring)$ described in
Section~\ref{sec:basic-scheme} as Algorithm~\ref{alg:liststpaths},
where $\beadstring$ is replaced by its certificate $C$.

\begin{algorithm}[t]
	\caption{\label{alg:liststpaths} $\liststpaths(\pi_s,\,u,\,C)$}
\begin{algorithmic}[1]
	\IF{$u=t$}
		\STATE $\routput(\pi_s)$ \label{code:base}
		\STATE $\return$ \label{code:returnbase}
	\ENDIF
	\STATE $e = (u,v) := \chooseedge( C, u )$ \label{code:choose}
	\IF{ $e \text{ is back edge}$ \label{code:if_back}}
		\STATE $I := \oracleright(C,e)$  \label{code:right_update}
		\STATE $\liststpaths(\pi_s,\, u,\,C)$ \label{code:right_branch}
		\STATE $\undooracle(C, I)$ \label{code:right_undo}
	\ENDIF
        \STATE $I := \oracleleft(C,e)$ \label{code:left_update}
        \STATE $\liststpaths( \pi_s \cdot (u,v),\, v,\, C)$ \label{code:left_branch}
        \STATE $\undooracle(C, I)$ \label{code:left_undo}
\end{algorithmic}
\end{algorithm}

The base case ($u=t$) corresponds to lines~1--4 of
Algorithm~\ref{alg:liststpaths}. During recursion, the left branching
corresponds to lines~5 and~11-13, while the right branching to
lines~5--10. Note that we perform only the left branching when there is
only one incident edge in $u$, which is a tree edge by definition of
$\chooseedge$. Also, lines~9 and~13 are needed to restore the
parameters to their values when returning from the recursive
calls.

\begin{lemma}
  \label{lemma:correctness_algo_listpaths}
  Algorithm~\ref{alg:liststpaths} correctly lists all the $st$-paths in
  $\setofpaths_{s,t}(G)$.
\end{lemma}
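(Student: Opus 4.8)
The plan is to prove Lemma~\ref{lemma:correctness_algo_listpaths} by induction on the recursion, establishing three things in tandem: (i) the invariant of Section~\ref{sec:basic-scheme} is maintained at every recursive call, namely that $\beadstring$ (as represented by the certificate $C$) is the unique bead string connecting the current vertex $u$ to $t$ in the block tree of the graph obtained from $\sbeadstring$ by deleting the internal vertices of $\pi_s$ and the discarded edges; (ii) every $st$-path of $G$ is output exactly once; and (iii) every sequence $\pi_s$ that reaches the base case is indeed a simple $st$-path. The base case $u=t$ is immediate: by the invariant $\pi_s$ is a simple path from $s$ to $t$, and line~\ref{code:base} outputs it.

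First I would argue the invariant is preserved. By Lemma~\ref{lemma:beadstring}, once we restrict attention to $G[\sbeadstring]$ we lose no $st$-paths, and every articulation point of the bead string lies on every such path; this justifies working inside $\beadstring$ and, crucially, guarantees that the \emph{head} $\head$ (the first biconnected component, containing $u$) has the property that there is both an $st$-path through any chosen edge $e=(u,v)$ with $v \in \head$ and — when $u$ has degree $\ge 2$ in $\head$ — an $st$-path avoiding $e$, because $\head$ is biconnected. The operations $\oracleleft(C,e)$ and $\oracleright(C,e)$ are defined (Section~\ref{sec:intro-cert}) precisely to recompute $\vbeadstring$ and $\beadstring'$ respectively, i.e. to restore the invariant for the child calls; and $\undooracle(C,I)$ reverts these changes, so that after both branches return the certificate is back to its state on entry, keeping the invariant valid for the caller. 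Here I would lean on the fact, asserted in Section~\ref{sec:basic-scheme}, that both $\vbeadstring$ and $\beadstring'$ are non-empty whenever we invoke the corresponding branch — left branching is always legitimate since $\chooseedge$ returns an edge keeping $u \leadsto t$ inside $\beadstring$, and right branching is only performed when $e$ is a back edge, so $u$ retains degree $\ge 1$ toward $t$ and biconnectivity of $\head$ ensures a path avoiding $e$ survives.

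Next I would prove completeness and non-redundancy via the binary-partition recurrence of Section~\ref{sec:basic-scheme}: $\setofpaths(\pi_s,u,\beadstring)$ is the disjoint union of the $st$-paths extending $\pi_s\cdot e$ and those extending $\pi_s$ while avoiding $e$. Disjointness is syntactic (a path either uses $e$ or it does not). Completeness follows because every $st$-path $\pi \in \setofpaths_{s,t}(G)$, restricted to $\beadstring$ by Lemma~\ref{lemma:beadstring}, either starts at $u$ with edge $e$ (captured by the left branch on $v$, where by induction $\setofpaths(\pi_s\cdot e, v, \vbeadstring)$ lists all extensions) or does not (captured by the right branch, handled in $G-e$ whose bead string is $\beadstring'$). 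One subtlety: when $e$ is a \emph{tree edge} in the current DFS of $\head$, there are no back edges at $u$, so by the choice rule in $\chooseedge$ every neighbor of $u$ toward $t$ has already been exhausted — meaning every $st$-path in $\beadstring$ must use $e$, and skipping the right branch (lines~\ref{code:if_back}--\ref{code:right_undo}) loses nothing; this is the analogue of Lemma~\ref{lemma:choose_cases}/Lemma~\ref{lemma:rightmost_leaf} in the earlier chapter. Termination holds since each recursive call either removes a vertex from $\beadstring$ (left) or an edge (right), and $\beadstring$ is finite.

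The main obstacle I anticipate is the verification that the certificate operations genuinely implement the recursive rule — i.e. that $\oracleleft$ and $\oracleright$ output exactly $\vbeadstring$ and $\beadstring'$, including the recomputation of the block tree and the re-rooting of the DFS, and that $\undooracle$ faithfully reverts them. In this excerpt these operations are used as black boxes (their implementation is deferred to Sections~\ref{sec:recursion-amortization} and~\ref{sec:certificate}), so for the purposes of \emph{this} correctness lemma I would treat their specified input/output behavior as given and fold the real work into the induction hypothesis: assuming the operations meet their specification, the recursion tree is in one-to-one correspondence with $\setofpaths_{s,t}(\sbeadstring) = \setofpaths_{s,t}(G)$, and each leaf emits its path once. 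The only genuinely delicate point at this level is handling the special case noted before Lemma~\ref{lemma:rightmost_leaf} — the rightmost/last edge at $u$ — which is exactly why $\chooseedge$ picks the DFS-postorder-last neighbor, ensuring the tree edge is chosen only when no path avoiding it exists.
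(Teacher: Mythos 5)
Your proposal is correct and follows essentially the same route as the paper's proof: the invariant that $\pi_s$ is a simple $s \leadsto u$ path (preserved because $\oracleleft$ removes $u$ before the left recursion), the disjoint partition of extensions into those using versus avoiding the chosen edge $e$, and the observation that $\chooseedge$ returns the tree edge last so the right branch may be safely skipped for it. Your version adds an explicit induction framing, a termination argument, and a more careful discussion of the non-emptiness of the child bead strings, but these elaborate rather than replace the paper's shorter argument.
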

\begin{proof}
  For a given vertex $u$ the function $\chooseedge(C, u)$ returns an
  edge $e$ incident to $u$. We maintain the invariant that $\pi_s$ is
  a path $s \leadsto u$, since at the point of the recursive call in
  line~\ref{code:left_branch}: (i) is connected as we append edge
  $(u,v)$ to $\pi_s$ and; (ii) it is simple as vertex $u$ is removed
  from the graph $G$ in the call to $\oracleleft(C,e)$ in
  line~\ref{code:left_update}. In the case of recursive call in
  line~\ref{code:right_branch} the invariant is trivially maintained
  as $\pi_s$ does not change.
  The algorithm only outputs $st$-paths since $\pi_s$ is
  a $s \leadsto u$ path and $u=t$ when the algorithm outputs, in
  line~\ref{code:base}. 

  The paths with prefix $\pi_s$ that do not use $e$ are listed by
  the recursive call in line~\ref{code:right_branch}. This is done by
  removing~$e$ from the graph (line~\ref{code:right_update}) and thus
  no path can include $e$. Paths that use $e$ are listed in
  line~\ref{code:left_branch} since in the recursive call $e$ is added
  to $\pi_s$. Given that the tree edge incident to $u$ is the last one
  to be returned by $\chooseedge(C,u)$, there is no path that does not
  use this edge, therefore it is not necessary to call
  line~\ref{code:right_branch} for this edge.
\end{proof}

A natural question is what is the time complexity: we must account for
the cost of maintaining~$C$ and for the cost of the recursive calls of
Algorithm~\ref{alg:liststpaths}. Since we cannot always maintain the
certificate in $O(1)$ time, the ideal situation for attaining an
optimal cost is taking $O(\mu)$ time if at least $\mu$ $st$-paths are
listed in the current call (and its nested calls). Unfortunately, we
cannot estimate~$\mu$ efficiently and cannot design
Algorithm~\ref{alg:liststpaths} so that it takes $O(\mu)$ adaptively.
We circumvent this by using a different cost scheme in
Section~\ref{sub:recursion-tree-cost} that is based on the recursion
tree induced by Algorithm~\ref{alg:liststpaths}.
Section~\ref{sec:certificate} is devoted to the efficient
implementation of the above certificate operations according to the
cost scheme that we discuss next.

\subsection{Recursion tree and cost amortization}
\label{sub:recursion-tree-cost}

We now show how to distribute the costs among the several recursive
calls of Algorithm~\ref{alg:liststpaths} so that optimality is
achieved. Consider a generic execution on the bead string
$\beadstring$. We trace this execution by using a binary recursion
tree $R$. The nodes of $R$ are labeled by the arguments of 
Algorithm~\ref{alg:liststpaths}: specifically, we denote a node
in $R$ by the triple $x = \langle \pi_s, u, C \rangle$ iff it
represents the call with arguments $\pi_s$, $u$, and~$C$.\footnote{For
clarity, we use ``nodes'' when referring to $R$ and ``vertices'' when
referring to $\beadstring$.}  The left branching is represented by the
left child, and the right branching (if any) by the right child of the
current node. 

\begin{lemma}
\label{lem:properties_recursion}
The binary recursion tree $R$ for $\beadstring$ has the following properties: 
\begin{enumerate}
	\setlength{\itemsep}{0pt} 
\item \label{item:R1} There is a one-to-one correspondence between the
  paths in $\setofpaths_{s,t}(\beadstring)$ and the leaves in the recursion
  tree rooted at node $\langle \pi_s, u, C \rangle$.
\item \label{item:R2} Consider any leaf and its corresponding $st$-path
  $\pi$: there are $|\pi|$ left branches in
  the corresponding root-to-leaf trace.
\item \label{item:R3} Consider the instruction $e:=\chooseedge(C,u)$
  in Algorithm~\ref{alg:liststpaths}: unary (i.e.\mbox{} single-child)
  nodes correspond to left branches ($e$ is a tree edge) while binary
  nodes correspond to left and right branches ($e$ is a back
  edge).
\item \label{item:R4} The number of binary nodes is
  $|\setofpaths_{s,t}(\beadstring)| - 1$.
\end{enumerate}
\end{lemma}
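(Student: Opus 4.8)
The plan is to establish the four items essentially in the stated order, using each to shorten the proof of the next, so that only item~\ref{item:R3} requires genuine graph-theoretic input while the other three are bookkeeping on the recursion. For item~\ref{item:R1} I would invoke Lemma~\ref{lemma:correctness_algo_listpaths}: a leaf of $R$ is reached exactly when $u=t$, and at that point the call outputs the path $\pi_s$, which by the loop invariant is a genuine $st$-path; conversely the left and right branches partition $\setofpaths(\pi_s,u,\beadstring)$ into the paths through $e$ and the paths avoiding $e$, a \emph{disjoint} union, so by induction on the recursion every path in $\setofpaths_{s,t}(\beadstring)$ is produced at exactly one leaf and no two leaves produce the same path. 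Hence the map ``leaf $\mapsto$ output path'' is a bijection onto $\setofpaths_{s,t}(\beadstring)$.

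For item~\ref{item:R2} the key is a one-line invariant along a root-to-leaf trace: the first argument $\pi_s$ gains exactly one edge on a left branch (line~\ref{code:left_branch}, where the call is made with $\pi_s\cdot(u,v)$) and is unchanged on a right branch (line~\ref{code:right_branch}). Since $\pi_s$ is empty at the root of the full recursion and equals $\pi$ at the leaf, the number of left branches on the trace is exactly $|\pi|$. Item~\ref{item:R3} is where the real argument lies. Structurally it is immediate from the control flow of Algorithm~\ref{alg:liststpaths}: the left branch (lines~\ref{code:left_update}--\ref{code:left_undo}) is always executed at an internal node, while the right branch (lines~\ref{code:right_update}--\ref{code:right_undo}) is executed iff the test ``$e$ is a back edge'' on line~\ref{code:if_back} succeeds; so binary nodes are precisely those at which $\chooseedge(C,u)$ returns a back edge, and unary nodes those at which it returns a tree edge. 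What must still be checked is that this dichotomy is exhaustive and consistent with the fact that $R$ only contains calls that list solutions. Here I would use that $\head$ is biconnected and rooted at $u$ in the DFS tree, so $u$ has a unique tree edge in $\beadstring$; thus $\chooseedge$ returns a tree edge precisely when $u$ has no incident back edge (by the postorder tie-breaking in the definition of $\chooseedge$), in which case $u$ has a single incident edge in the head and only the left branch is sensible, whereas when $\chooseedge$ returns a back edge, biconnectivity of the head guarantees (as noted in Section~\ref{sec:basic-scheme}) both an $st$-path through $e$ and one avoiding $e$, so both children are non-trivial. The main obstacle is making this head/DFS-tree bookkeeping precise across recursive levels: I have to argue that after each $\oracleleft$/$\oracleright$ update the recomputed bead string still has a biconnected head rooted at the new active vertex, so that the characterization propagates down $R$; this rests on Lemma~\ref{lemma:beadstring} and on the stated behaviour of the certificate operations.

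Finally, item~\ref{item:R4} follows by a standard counting argument: by item~\ref{item:R3} every internal node of $R$ has one or two children, so contracting each maximal chain of unary nodes to a single edge turns $R$ into a full binary tree with the same set of leaves; a full binary tree with $L$ leaves has exactly $L-1$ two-children nodes, and by item~\ref{item:R1} we have $L=|\setofpaths_{s,t}(\beadstring)|$, giving $|\setofpaths_{s,t}(\beadstring)|-1$ binary nodes. I expect items~\ref{item:R1}, \ref{item:R2} and \ref{item:R4} to be short once the structural claim in item~\ref{item:R3} is in place.
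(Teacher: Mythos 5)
Your proposal is correct and follows essentially the same route as the paper: item~\ref{item:R1} from disjointness of the binary partition and the output-at-leaves structure (as in Lemma~\ref{lemma:correctness_algo_listpaths}), item~\ref{item:R2} from ``one edge per left branch,'' item~\ref{item:R3} from the control flow of Algorithm~\ref{alg:liststpaths} plus biconnectivity of $\head$ and the postorder tie-breaking of $\chooseedge$, and item~\ref{item:R4} by the standard leaves-minus-one count for binary trees. The only cosmetic difference is that you flag explicitly the need to check the biconnected-head-rooted-at-$u$ invariant propagates through the certificate updates, which the paper leaves implicit; this is a fair observation but does not change the argument.
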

\begin{proof}
We proceed in order as follows.
\begin{enumerate}
\item We only output a solution in a leaf and we only do recursive
  calls that lead us to a solution. Moreover every node partitions the
  set of solutions in the ones that use an edge and the ones that do
  not use it. This guarantees that the leaves in the left subtree of
  the node corresponding to the recursive call and the leaves in the
  right subtree do not intersect. This implies that different leaves
  correspond to different paths from $s$ to $t$, and that for each
  path there is a corresponding leaf.
\item Each left branch corresponds to the inclusion of an edge in the
  path $\pi$.
\item Since we are in a biconnected component, there is always a left
  branch. There can be no unary node as a right branch: indeed for any
  edge of $\beadstring$ there exists always a path from $s$ to $t$
  passing through that edge. Since the tree edge is always the last
  one to be chosen, unary nodes cannot correspond to back edges and
  binary nodes are always back edges.
\item It follows from point \ref{item:R1} and from the fact that the recursion tree is
  a binary tree. (In any binary tree, the number of binary nodes is
  equal to the number of leaves minus 1.)
\end{enumerate}
\end{proof}

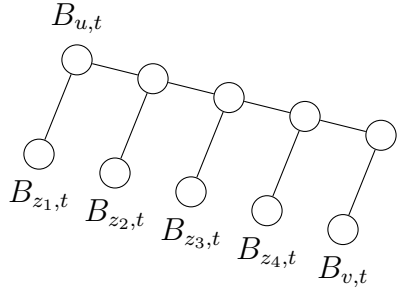
\begin{figure}[t]
	\centering
\begin{tikzpicture}
[nodeDecorate/.style={shape=circle,inner sep=4pt,draw,fill=white},%
  lineDecorate/.style={-,dashed},%
  elipseDecorate/.style={color=gray!30},
  scale=0.25]

  \node (a) at (0,0) [nodeDecorate,label=above:$B_{u,t}$] {};
  \node (b) at (4,-1) [nodeDecorate] {};
  \node (c) at (8,-2) [nodeDecorate] {};
  \node (d) at (12,-3) [nodeDecorate] {};
  \node (e) at (16,-4) [nodeDecorate] {};

  \node (a2) at (-2,-5) [nodeDecorate,label=below:$B_{z_1,t}$] {};
  \node (b2) at (2,-6) [nodeDecorate,label=below:$B_{z_2,t}$] {};
  \node (c2) at (6,-7) [nodeDecorate,label=below:$B_{z_3,t}$] {};
  \node (d2) at (10,-8) [nodeDecorate,label=below:$B_{z_4,t}$] {};
  \node (e2) at (14,-9) [nodeDecorate,label=below:$B_{v,t}$] {};

\path {
	(a) edge node {} (b)
	(b) edge node {} (c)
	(c) edge node {} (d)
	(d) edge node {} (e)
	(a) edge node {} (a2)
	(b) edge node {} (b2)
	(c) edge node {} (c2)
	(d) edge node {} (d2)
	(e) edge node {} (e2)
};
\end{tikzpicture}
\caption{Spine of the recursion tree \label{fig:spine}}
\end{figure}
We define a \emph{spine} of $R$ to be a subset of $R$'s nodes linked
as follows: the first node is a node $x$ that is either the left child
of its parent or the root of $R$, and the other nodes are those
reachable from $x$ by right branching in $R$. Let $x = \langle \pi_s,
u, C \rangle$ be the first node in a spine $S$. The nodes in $S$
correspond to the edges that are incident to vertex $u$ in
$\beadstring$: hence their number equals the degree $d(u)$ of $u$ in
$\beadstring$, and the deepest (last) node in $S$ is always a tree
edge in $\beadstring$ while the others are back edges.
Fig.~\ref{fig:spine} shows the spine corresponding to $B_{u,t}$ in
Fig.~\ref{fig:Certificate}. Summing up, $R$ can be seen as composed by
spines, unary nodes, and leaves where each spine has a unary node as
deepest node. This gives a global picture of $R$ that we now exploit
for the analysis.


We define the \emph{compact head}, denoted by \mbox{$\chead = (V_X,
E_X)$}, as the (multi)graph obtained by compacting the maximal chains
of degree-2 vertices, except $u$, $t$, and the vertices that are the
leaves of its DFS tree rooted at~$u$.

The rationale behind the above definition is that the costs defined in
terms of $\chead$ amortize well, as the size of $\chead$ and the
number of $st$-paths in the subtree of $R$ rooted at node $x = \langle
\pi_s, u, C \rangle$ are intimately related (see
Lemma~\ref{lemma:lower_bound_paths_beadstring} in
Section~\ref{sec:recursion-amortization}) while this is not
necessarily true for $\head$.

Recall that each leaf corresponds to a path $\pi$ and each spine
corresponds to a
compact head $\chead=(V_X,E_X)$. We now define the following abstract
cost for spines, unary nodes, and leaves of $R$, for a sufficiently
large constant $c_0 > 0$, that Algorithm~\ref{alg:liststpaths} must
fulfill:
\begin{equation}
  \label{eq:abstrac_cost}
  T(r) =
  \left\{\begin{array}{ll}
      c_0 & \mbox{if $r$ is unary}\\ 
      c_0 |\pi| & \mbox{if $r$ is a leaf}\\ 
      c_0 (|V_X|+|E_X|) \quad & \mbox{if $r$ is a spine}
\end{array}\right. 
\end{equation}

\begin{lemma}
  \label{lemma:total_cost_recursion_tree}
  The sum of the costs in the nodes of the recursion tree $\sum_{r \in R} T(r) = O(\sum_{\pi \in \setofpaths_{s,t}(\beadstring)}{|\pi|})$.
\end{lemma}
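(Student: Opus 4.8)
The plan is to split $\sum_{r\in R}T(r)$ according to the three cases of the cost scheme in Eq.~\ref{eq:abstrac_cost} — leaves, unary nodes, and spines — and bound each piece separately by $O(\sum_{\pi\in\setofpaths_{s,t}(\beadstring)}|\pi|)$. The leaf part is immediate: by part~\ref{item:R1} of Lemma~\ref{lem:properties_recursion} the leaves of $R$ are in bijection with $\setofpaths_{s,t}(\beadstring)$, so the leaf contribution is exactly $c_0\sum_{\pi\in\setofpaths_{s,t}(\beadstring)}|\pi|$, i.e.\ already the target bound. The real work is to amortize the unary-node contribution and the spine contribution — whose total sizes have a priori nothing to do with the output size — against the leaves.

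First I would establish the one structural fact that drives the amortization: how a root-to-leaf trace of $R$ interacts with the spine decomposition. Fix a leaf $\lambda$ with corresponding $st$-path $\pi$. Each time the trace enters a spine $S$ it does so at the first node of $S$; it then proceeds downward along right branches while it stays inside $S$ and, since a spine consists only of internal nodes (its deepest node is a tree edge, hence unary by part~\ref{item:R3} of Lemma~\ref{lem:properties_recursion}), it must leave $S$ by taking a left branch. Thus the trace takes exactly one left branch inside each spine it meets, and conversely every left branch of the trace lies inside the unique spine containing its tail; by part~\ref{item:R2} of Lemma~\ref{lem:properties_recursion} the trace to $\lambda$ has exactly $|\pi|$ left branches. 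Consequently the number of spines whose first node is an ancestor of $\lambda$ equals $|\pi|$, and summing over all leaves gives $\sum_{S}\ell(S)=\sum_{\pi\in\setofpaths_{s,t}(\beadstring)}|\pi|$, where $\ell(S)$ is the number of leaves of $R$ descending from the first node of $S$.

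For the spine contribution I would invoke Lemma~\ref{lemma:lower_bound_paths_beadstring}, which gives, for a spine $S$ with compact head $\chead=(V_X,E_X)$, the lower bound $|V_X|+|E_X|=O(\ell(S))$; intuitively an ear decomposition of the biconnected compact head exhibits, for each compacted ear, a distinct new $st$-path among the $\ell(S)$ solutions below the spine. Hence $\sum_{S}c_0(|V_X|+|E_X|)=O\!\big(\sum_{S}\ell(S)\big)=O\!\big(\sum_{\pi}|\pi|\big)$ by the previous paragraph. For the unary nodes, the map sending a unary node to the spine it terminates (obtained by climbing right-parent links) is injective, so the number of unary nodes is at most the number of spines, which is at most $\sum_{S}\ell(S)\le\sum_{\pi}|\pi|$ since $\ell(S)\ge1$; thus the unary contribution $\sum_{r\text{ unary}}c_0$ is also $O(\sum_{\pi}|\pi|)$. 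Adding the three estimates yields $\sum_{r\in R}T(r)=O(\sum_{\pi\in\setofpaths_{s,t}(\beadstring)}|\pi|)$.

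I expect the only genuinely non-routine ingredient to be Lemma~\ref{lemma:lower_bound_paths_beadstring}: everything else here is a clean charging argument on the binary recursion tree built purely from parts~\ref{item:R1}--\ref{item:R3} of Lemma~\ref{lem:properties_recursion}. Establishing that lower bound requires actually exploiting biconnectivity of the head (via an ear decomposition) together with the fact that the degree-$2$ compaction defining $\chead$ retains precisely the branching structure that multiplies $st$-paths — which is exactly why Eq.~\ref{eq:abstrac_cost} is stated in terms of $\chead$ rather than $\head$. That combinatorial lemma, not the amortization bookkeeping, is the hard part.
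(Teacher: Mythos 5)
Your proof follows the paper's route exactly: split $\sum_{r\in R}T(r)$ into leaf, unary-node, and spine contributions; bound each against $\sum_{\pi}|\pi|$ by double-counting left branches along root-to-leaf traces (so that, in your notation, $\sum_{S}\ell(S)=\sum_{\pi}|\pi|$ and the number of spines, hence of unary nodes, is at most that sum); and amortize each spine's $c_0(|V_X|+|E_X|)$ charge over the $\ell(S)$ leaves below it. One inaccuracy worth flagging: Lemma~\ref{lemma:lower_bound_paths_beadstring} by itself only gives $\ell(S)\ge |E_X|-|V_X|+1$ (that is the ear-decomposition bound), \emph{not} the statement $|V_X|+|E_X|=O(\ell(S))$ that you attribute to it. To pass from the former to the latter you additionally need Lemma~\ref{lemma:density}, which uses the degree-2 compaction to show $|E_X|/|V_X|\ge 11/10$, hence $|E_X|-|V_X|+1=\Theta(|E_X|+|V_X|)$. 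You do sense this dependence when you remark that the compaction is what makes $\chead$ (rather than $\head$) amortize well, but in the paper that is a separate density lemma, not a consequence of the ear-decomposition argument; presenting the two as one lemma leaves a gap if a reader takes your invocation of Lemma~\ref{lemma:lower_bound_paths_beadstring} at face value.
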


Section~\ref{sec:recursion-amortization} contains the proof of
Lemma~\ref{lemma:total_cost_recursion_tree} and related properties.
Setting $u:=s$, we obtain that the cost in
Lemma~\ref{lemma:total_cost_recursion_tree} is optimal, by
Lemma~\ref{lemma:beadstring}.

\begin{theorem}
  \label{theorem:optimal_paths}
  Algorithm~\ref{alg:liststpaths} solves problem
  Problem~\ref{prob:liststpaths} in $O(m + \sum_{\pi \in
  \setofpaths_{s,t}(G)}{|\pi|})$ time.
\end{theorem}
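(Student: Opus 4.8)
The plan is to assemble the statement from three ingredients that are already in place: the correctness of Algorithm~\ref{alg:liststpaths} (Lemma~\ref{lemma:correctness_algo_listpaths}), the reduction of the whole problem to the bead string $\sbeadstring$ (Lemma~\ref{lemma:beadstring}), and the amortized bound on the recursion tree (Lemma~\ref{lemma:total_cost_recursion_tree}). What remains is to account for a one-time $O(m)$ preprocessing cost and to check that the per-node work actually performed by Algorithm~\ref{alg:liststpaths} matches the abstract cost $T(r)$ of Eq.~\eqref{eq:abstrac_cost}, so that Lemma~\ref{lemma:total_cost_recursion_tree} applies verbatim.

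First I would dispatch correctness: by Lemma~\ref{lemma:correctness_algo_listpaths}, invoking $\liststpaths(\{\},\, s,\, C)$ with $C$ the certificate of the initial bead string $\sbeadstring$ outputs every $st$-path once and only once, and outputs nothing else. Then I would handle preprocessing: compute the block tree of $G$ with a single DFS in $O(m)$ time, locate the unique chain of biconnected components joining the \bcc\ of $s$ to the \bcc\ of $t$, which is $\sbeadstring$, and discard the rest of $G$; by Lemma~\ref{lemma:beadstring} this loses no $st$-path, so $\setofpaths_{s,t}(G)=\setofpaths_{s,t}(\sbeadstring)$ from now on. Building the initial certificate $C$ once (a DFS tree of $\sbeadstring$ rooted at $s$ together with its back-edge classification, the head $\head$, the compact head $\chead$, and the block tree) costs $O(|E[\sbeadstring]|)=O(m)$ time.

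Next I would bound the recursion. With the certificate implementation of Section~\ref{sec:certificate}, each call of Algorithm~\ref{alg:liststpaths} performs only one $\chooseedge$, at most one $\oracleright$ followed by its $\undooracle$, one $\oracleleft$ followed by its $\undooracle$, and the $O(|\pi_s|)$ output at a base case. Charging this work onto the recursion tree $R$, and using the decomposition of $R$ into spines, unary nodes, and leaves from Section~\ref{sub:recursion-tree-cost}: a unary node costs $O(1)$, a leaf corresponding to path $\pi$ costs $O(|\pi|)$, and a spine whose compact head is $\chead=(V_X,E_X)$ costs $O(|V_X|+|E_X|)$ --- precisely the cost function of Eq.~\eqref{eq:abstrac_cost}. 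Lemma~\ref{lemma:total_cost_recursion_tree} then yields $\sum_{r\in R}T(r)=O(\sum_{\pi\in\setofpaths_{s,t}(\sbeadstring)}|\pi|)$. Adding the $O(m)$ preprocessing and using $\setofpaths_{s,t}(\sbeadstring)=\setofpaths_{s,t}(G)$ gives the claimed $O(m+\sum_{\pi\in\setofpaths_{s,t}(G)}|\pi|)$ bound.

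The hard part is not this assembly but the two pieces it rests on, both deferred. First, Lemma~\ref{lemma:total_cost_recursion_tree} (Section~\ref{sec:recursion-amortization}) requires a lower bound relating $|V_X|+|E_X|$ of a spine's compact head to the number of $st$-paths listed in the subtree of $R$ rooted at that spine's first node (cf.\ the role of $\chead$ and Lemma~\ref{lemma:lower_bound_paths_beadstring} mentioned there), combined with the one-to-one leaf/path correspondence and the counting argument of Lemma~\ref{lem:properties_recursion} bounding the number of left branches on a root-to-leaf trace by $|\pi|$. Second, implementing $\chooseedge$, $\oracleleft$, $\oracleright$, $\undooracle$ within Eq.~\eqref{eq:abstrac_cost} (Section~\ref{sec:certificate}) means maintaining the DFS tree, the back-edge classification, the head and compact head, and the block tree of $\beadstring$ under vertex and edge deletions together with their undo, using persistent lists and arrays so that only $O(1)$ extra space per pending recursive call is needed. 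Once those two facts are established, the theorem follows immediately as above.
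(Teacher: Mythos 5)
Your proposal is correct and mirrors the paper's own argument: correctness from Lemma~\ref{lemma:correctness_algo_listpaths}, $O(m)$ preprocessing for the block tree and initial certificate (Lemma~\ref{lem:certificate_scratch}), the restriction to $\sbeadstring$ via Lemma~\ref{lemma:beadstring}, the amortization of Lemma~\ref{lemma:total_cost_recursion_tree}, and the certificate-operation bounds of Section~\ref{sec:certificate} showing that the per-node work realizes Eq.~\eqref{eq:abstrac_cost} (this last step is exactly what the paper packages as Lemma~\ref{lem:algo_cost}, which it says ``completes the proof of Theorem~\ref{theorem:optimal_paths}''). The two deferred pieces you flag are indeed where the real work lies, and they are handled in Sections~\ref{sec:recursion-amortization} and~\ref{sec:certificate} precisely as you anticipate.
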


By Lemma~\ref{lemma:reduction}, we obtain an optimal result for
listing cycles.

\begin{theorem}
  \label{theorem:optimal_cycles}
  Problem~\ref{prob:listcycles} can be optimally solved in \mbox{$O(m +
  \sum_{c \in \setofcycles(G)}{|c|})$} time. 
\end{theorem}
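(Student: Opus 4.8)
The plan is to derive Theorem~\ref{theorem:optimal_cycles} as an immediate consequence of the two results already in place: the optimal $st$-path listing algorithm of Theorem~\ref{theorem:optimal_paths} and the optimality-preserving reduction of Lemma~\ref{lemma:reduction}. So the proof is essentially a one-line composition, and the bulk of the work has already been front-loaded into those two statements; what remains is to state the composition cleanly and check that the hypotheses of Lemma~\ref{lemma:reduction} are met.

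First I would recall the statement of Lemma~\ref{lemma:reduction}: it asserts that any algorithm solving Problem~\ref{prob:liststpaths} in time $O(m + \sum_{\pi \in \setofpaths_{s,t}(G)}{|\pi|})$ can be turned, black-box, into an algorithm solving Problem~\ref{prob:listcycles} in time $O(m + \sum_{c \in \setofcycles(G)}{|c|})$. Then I would invoke Theorem~\ref{theorem:optimal_paths}, which provides exactly such an $st$-path algorithm (namely Algorithm~\ref{alg:liststpaths}). Feeding this algorithm into the reduction of Lemma~\ref{lemma:reduction} yields an algorithm for Problem~\ref{prob:listcycles} running in $O(m + \sum_{c \in \setofcycles(G)}{|c|})$ time. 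Finally I would note that this bound is optimal in the usual output-sensitive sense: $\Omega(m)$ time is needed just to read the connected input graph $G$, and $\Omega(\sum_{c \in \setofcycles(G)}{|c|})$ time is needed just to write the output, so the algorithm matches both lower bounds up to constant factors. One small remark worth including: since $G$ is connected, $m \le \sum_{c} |c| + O(n)$ is not automatic, but the reduction's analysis already absorbed the $O(m)$ term (each edge of a biconnected component lies on some listed cycle within that component), so the stated bound is genuinely $O(m + \sum_c |c|)$.

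Honestly, there is no real obstacle here — the theorem is a corollary, and the only thing to be careful about is not to re-prove anything. I would resist the temptation to re-open the reduction argument or the recursion-tree amortization; both are fully discharged by the cited results, and the proof of Theorem~\ref{theorem:optimal_cycles} should be just a couple of sentences chaining Lemma~\ref{lemma:reduction} with Theorem~\ref{theorem:optimal_paths} and then remarking on matching lower bounds. If I wanted to be slightly more generous to the reader, I might spell out that the reduction processes biconnected components one at a time, picks a back edge $b=(s,t)$ in each, lists $st$-paths in $B - b$ (which are exactly the cycles through $b$), removes $b$, and recurses on the newly created biconnected components — but this is already the content of Lemma~\ref{lemma:reduction}'s proof and need not be repeated.

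\begin{proof}
  By Theorem~\ref{theorem:optimal_paths}, Algorithm~\ref{alg:liststpaths}
  solves Problem~\ref{prob:liststpaths} in $O(m + \sum_{\pi \in
  \setofpaths_{s,t}(G)}{|\pi|})$ time. Plugging this algorithm into the
  reduction of Lemma~\ref{lemma:reduction}, we obtain an algorithm that
  solves Problem~\ref{prob:listcycles} in $O(m + \sum_{c \in
  \setofcycles(G)}{|c|})$ time. This is optimal: $\Omega(m)$ time is
  necessarily required to read the connected input graph $G$, and
  $\Omega(\sum_{c \in \setofcycles(G)}{|c|})$ time is necessarily
  required to list the output.
\end{proof}
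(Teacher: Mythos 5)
Your proof is correct and takes exactly the same route as the paper: Theorem~\ref{theorem:optimal_cycles} is stated there as an immediate corollary of Lemma~\ref{lemma:reduction} combined with Theorem~\ref{theorem:optimal_paths}, with the paper's ``proof'' being the single sentence that precedes the theorem statement. Your added remark on the matching $\Omega(m)$ and $\Omega(\sum_c |c|)$ lower bounds mirrors the optimality discussion the paper gives in its results overview, so nothing is missing or superfluous.
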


\section{Amortization strategy}
\label{sec:recursion-amortization}

We devote this section to prove
Lemma~\ref{lemma:total_cost_recursion_tree}. Let us split the sum
in Eq.~\eqref{eq:abstrac_cost} in three parts, and bound each part
individually, as
\begin{equation}
  \label{eq:sum_R}
  \sum_{r \in R} T(r) \leq \sum_{r:\,\mathrm{unary}} T(r) + \sum_{r:\,\mathrm{leaf}} T(r) + \sum_{r:\,\mathrm{spine}} T(r).
\end{equation}

We have that $\sum_{r:\,\mathrm{unary}} T(r) = O(\sum_{\pi \in
  \setofpaths_{s,t}(G)}{|\pi|})$, since there are
$|\setofpaths_{s,t}(G)|$ leaves, and the root-to-leaf trace leading to
the leaf for $\pi$ contains at most $|\pi|$ unary nodes by
Lemma~\ref{lem:properties_recursion}, where each unary node has cost
$O(1)$ by Eq.~\eqref{eq:abstrac_cost}.

Also, $\sum_{r:\,\mathrm{leaf}} T(r) = O(\sum_{\pi \in
  \setofpaths_{s,t}(G)}{|\pi|})$, since the leaf $r$ for $\pi$ has cost
$O(|\pi|)$ by Eq.~\eqref{eq:abstrac_cost}.

It remains to bound $\sum_{r\,\mathrm{spine}} T(r)$. By
Eq.~\eqref{eq:abstrac_cost}, we can rewrite this cost as
$\sum_{\chead} c_0 (|V_X| + |E_X|)$, where the sum ranges over the
compacted heads $\chead$ associated with the spines $r$. We use the
following lemma to provide a lower bound on the number of $st$-paths
descending from $r$.

\begin{lemma}
  \label{lemma:lower_bound_paths_beadstring}
  Given a spine $r$, and its bead string $\beadstring$ with head
  $\head$, there are at least $|E_X| - |V_X| + 1$ $st$-paths in $G$
  that have prefix $\pi_s = s \leadsto u$ and suffix $u \leadsto t$
  internal to $\beadstring$, where the compacted head is $\chead =
  (V_X, E_X)$.
\end{lemma}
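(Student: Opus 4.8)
The plan is to prove a lower bound on the number of $st$-paths descending from a spine node $r = \langle \pi_s, u, C \rangle$ in terms of the cyclomatic number of the compacted head $\chead = (V_X, E_X)$. The quantity $|E_X| - |V_X| + 1$ is exactly the cyclomatic number $\nu$ of the (multi)graph $\chead$, i.e.\ the number of independent cycles, or equivalently the number of back edges with respect to any spanning tree of $\chead$ rooted at $u$. So the strategy is: first observe that since $\head$ is biconnected (it is the first bead of a bead string, hence a genuine biconnected component, by definition of the recursion invariant), there is at least one $u \leadsto t$ path internal to $\beadstring$; fixing such a ``base'' path $\rho_0$ through the DFS tree, I want to exhibit $\nu$ \emph{distinct} $st$-paths all of the form $\pi_s \cdot (u \leadsto t)$ with the $u \leadsto t$ part staying inside $\beadstring$.

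The key steps, in order. (1) Set up the DFS tree $\tau$ of $\beadstring$ rooted at $u$ used by the certificate, and note that the edges of $\chead$ partition into tree edges and $\nu = |E_X| - |V_X| + 1$ back edges $b_1, \dots, b_\nu$. (2) Since $\head$ is biconnected, $t$ (or rather the articulation point leading toward $t$, but WLOG work inside $\head$ and then concatenate the rest of $\beadstring$ which is forced) is connected to $u$; fix one reference path $u \leadsto t$ inside $\beadstring$. (3) For each back edge $b_i = (x_i, y_i)$, I construct an $st$-path whose $u \leadsto t$ suffix \emph{uses} $b_i$: because $\head$ is biconnected, $b_i$ lies on some cycle, and more usefully there is a $u \leadsto t$ path passing through $b_i$ — one routes from $u$ down the tree to one endpoint of $b_i$, crosses $b_i$, and then reaches $t$; the compaction of degree-2 chains does not destroy this since a compacted edge stands for a real path. (4) Argue these $\nu$ paths are pairwise distinct: the cleanest way is to pick, for each $i$, a path that uses $b_i$ but none of $b_{i+1}, \dots, b_\nu$ (order the back edges, e.g., by the postorder of their lower endpoint, and peel them off greedily), so that the set of back edges used distinguishes them; alternatively, use the standard fact that adding each back edge $b_i$ to a spanning tree of $\chead$ and taking the fundamental cycle gives $\nu$ distinct cycles, each re-routable into a distinct $u\leadsto t$ path. (5) Prepend the common prefix $\pi_s$ to all of them; since $\pi_s$ ends at $u$ and the suffixes are internal to $\beadstring$ (hence vertex-disjoint from $\pi_s$ except at $u$, by the invariant defining $\beadstring$ via vertex deletion), each concatenation is a genuine simple $st$-path, and distinctness of suffixes yields distinctness of the full paths.

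The main obstacle I anticipate is step (4) — guaranteeing genuine distinctness and, more subtly, that each constructed $u \leadsto t$ path is actually \emph{simple} after re-routing through $b_i$ inside a compacted multigraph. Re-routing a path to force it through a chosen back edge can create repeated vertices if done naively; the fix is to work with fundamental cycles of a fixed spanning tree and to re-route along tree paths only, which are automatically simple, and to invoke biconnectivity of $\head$ to splice the fundamental cycle of $b_i$ into the reference $u\leadsto t$ path without creating a repetition (an ear-decomposition / Menger-type argument: $\head$ being 2-connected means $u$ and $t$ are joined by two internally disjoint paths, giving room to absorb each back edge). A secondary subtlety is handling the degree-2 compaction and the tail of $\beadstring$ beyond $\head$: the tail (the remaining beads $B_2, \dots, B_r$) is forced and traversed identically by every path, so it contributes a common fixed suffix and does not affect counting — I would state this once and then argue entirely within $\chead$. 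Once distinctness is nailed down, the bound $|E_X| - |V_X| + 1$ follows immediately, and this lemma then feeds the amortization of the spine cost $c_0(|V_X| + |E_X|) = O(|V_X| + \nu + \text{(tree edges)})$ against the $\nu$ guaranteed descending leaves, exactly paralleling Lemma~\ref{lem:cyclomatic} and Lemma~\ref{lem:amortization} in the $k$-subtrees chapter.
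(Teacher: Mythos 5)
The quantity you are targeting is the right one (the cyclomatic number $|E_X|-|V_X|+1$ of $\chead$), and you correctly reduce to exhibiting that many distinct $u\leadsto t$ paths inside $\head$ and then prepending the forced prefix/suffix. But the paper does not construct these paths from the back edges of a DFS tree; it uses an ear decomposition of the biconnected (multi)graph $\chead$: start from a cycle that contains both $u$ and $t$ (giving two $u$--$t$ paths and $|E|-|V|+1=1$), and observe that each added ear increases $|E|-|V|$ by exactly one while also producing at least one new $u$--$t$ path, since the ear's endpoints lie on the current 2-connected graph, hence on some $u$--$t$ path, and rerouting that path along the ear yields a fresh path using new edges. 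Induction then gives the bound outright, with no need to certify simplicity or distinctness edge by edge.

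Your step (4) has a genuine gap. The fundamental cycle of a back edge $b_i=(x_i,y_i)$ in the DFS tree need not meet the tree path $u\leadsto t$ in more than one vertex, in which case it cannot be spliced into that path at all; and your greedy peeling ``use $b_i$, avoid $b_{i+1},\dots,b_\nu$'' can fail because the only simple $u\leadsto t$ path through $b_i$ in the full graph may be forced to use one of the later back edges. Concretely, take the DFS tree $u\!-\!a\!-\!b\!-\!c$ with an extra branch $b\!-\!d\!-\!e$, set $t=c$, and back edges $(c,u)$, $(e,b)$, $(e,a)$. The fundamental cycle of $(e,b)$ is $b\!-\!d\!-\!e\!-\!b$, which touches the $u$--$t$ tree path only at the single vertex $b$, so it cannot be rerouted; and the only simple $u\leadsto t$ path through $(e,b)$, namely $u\!-\!a\!-\!e\!-\!b\!-\!c$, is forced to use $(e,a)$, so after peeling $(e,a)$ there is no path left that uses $(e,b)$. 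The ear-decomposition induction sidesteps exactly this difficulty, which is why the paper adopts it; the fundamental-cycle argument that works cleanly for Lemma~\ref{lem:cyclomatic} (where you only need $\nu$ distinct \emph{spanning trees}, not $\nu$ distinct simple $u$--$t$ \emph{paths}) does not transfer here.
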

\begin{proof}
   $\chead$ is biconnected. In any biconnected graph $B = (V_B, E_B)$
   there are at least $|E_B| - |V_B| + 1$ $xy$-paths for any $x,y \in
   V_B$. Find an ear decomposition \cite{Diestel} of $B$ and consider
   the process of forming $B$ by adding ears one at the time, starting
   from a single cycle including $x$ and $y$. Initially 
   $|V_B|=|E_B|$ and there are 2 $xy$-paths. Each new ear forms a path
   connecting two vertices that are part of a $xy$-path, increasing
   the number of paths by at least 1. If the ear has $k$ edges, its
   addition increases $V$ by $k-1$, $E$ by $k$, and the number of
   $xy$-paths by at least 1. The result follows by induction.
\end{proof}

The implication of Lemma~\ref{lemma:lower_bound_paths_beadstring} is
that there are at least $|E_X| - |V_X| + 1$ leaves descending from the
given spine $r$. Hence, we can charge to each of them a cost of
$\frac{c_0 (|V_X| + |E_X|)}{|E_X| - |V_X| +
  1}$. Lemma~\ref{lemma:density} allows us to prove that the latter
cost is $O(1)$ when $\head$ is different from a single edge or a
cycle. (If $\head$ is a single edge or a cycle, $\chead$ is a
  single or double edge, and the cost is trivially a constant.)
\begin{lemma}
  \label{lemma:density}
  For a compacted head $\chead = (V_X, E_X)$, its density is
  $\frac{|E_X|}{|V_X|} \geq \frac{11}{10}$.
\end{lemma}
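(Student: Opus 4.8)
The statement to prove is that the compacted head $\chead = (V_X, E_X)$ has density $|E_X|/|V_X| \geq 11/10$. The key structural facts I would exploit are: (i) $\chead$ is biconnected (so every vertex has degree at least $2$); and (ii) by the way $\chead$ is \emph{defined} — compacting all maximal chains of degree-$2$ vertices except $u$, $t$, and the DFS-tree leaves hanging off $\head$ — the only vertices of degree $2$ that can survive in $\chead$ are these few distinguished ones. First I would set up the degree count: since $\chead$ is biconnected, $\deg(w) \geq 2$ for all $w \in V_X$, and I would separate $V_X$ into the set $V_2$ of degree-$2$ vertices and the set $V_{\geq 3}$ of vertices of degree at least $3$. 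Then $2|E_X| = \sum_{w} \deg(w) \geq 2|V_2| + 3|V_{\geq 3}| = 2|V_X| + |V_{\geq 3}|$, which gives $|E_X| \geq |V_X| + |V_{\geq 3}|/2$.

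\medskip

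So the whole inequality reduces to bounding $|V_2|$ from above — equivalently, showing $|V_{\geq 3}| \geq \tfrac{1}{5}|V_X|$, which yields $|E_X|/|V_X| \geq 1 + \tfrac{1}{10} = \tfrac{11}{10}$. Here I would invoke the compaction definition: after compacting, a maximal path of degree-$2$ vertices collapses to a single edge, so in $\chead$ no two degree-$2$ vertices are adjacent \emph{unless} one of them is among the exempted vertices $\{u, t\} \cup (\text{DFS leaves of } \head)$. I would first handle the clean case where $\chead$ has no exempted degree-$2$ vertices: then the degree-$2$ vertices form an independent set, and since $\chead$ is biconnected (in particular connected with $\geq 2$ edges), each degree-$2$ vertex has both neighbors in $V_{\geq 3}$; a counting/incidence argument (each vertex in $V_{\geq 3}$ can be adjacent to boundedly many of them relative to its degree, and edges among $V_{\geq 3}$ cost nothing) gives $|V_2| \leq$ a constant fraction, and in fact a cleaner route is: contract each degree-$2$ vertex into one of its incident edges, obtaining a multigraph on $V_{\geq 3}$ with $|E_X| - |V_2|$ edges, minimum degree $3$, hence $2(|E_X| - |V_2|) \geq 3|V_{\geq 3}| = 3(|V_X| - |V_2|)$, i.e. $2|E_X| \geq 3|V_X| - |V_2|$; combined with $|E_X| = |V_X| + |E_X| - |V_X|$ and $2|E_X| - 2|V_X| \geq 2|V_2| \cdot 0$… I would reconcile these two linear inequalities in $|E_X|, |V_X|, |V_2|$ to extract the $11/10$ bound. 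The second and more delicate case is when exempted vertices are present: there are at most $2$ of type $\{u,t\}$, and the DFS leaves of $\head$ — I would need to argue these are also few, or that each such exempted degree-$2$ vertex contributes its own edge and the global count still survives; the constant $11/10$ (rather than, say, $6/5$ or $4/3$) is presumably chosen precisely to absorb this bounded "slack" of exempted vertices relative to the size of $\chead$, so I would track how many exempted vertices there can be as a function of $|V_X|$ and verify the worst case.

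\medskip

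\textbf{Main obstacle.} The hard part will be the bookkeeping around the exempted degree-$2$ vertices — the DFS-tree leaves of $\head$ in particular. I need a bound like "the number of such leaves is small compared to $|V_X|$", which is not obvious a priori: a DFS tree can have many leaves. The resolution almost certainly uses that these leaves are leaves of the DFS tree of the \emph{uncompacted} head but they survive in $\chead$ only if they have degree $\geq 2$ after… or, more likely, the definition of compaction keeps them \emph{as} marked vertices, and the real content is that a DFS-tree leaf $\ell$ of $\head$ rooted at $u$, being biconnected, has all its (back-)edges going "up" and so $\ell$ together with its tree-parent chain forces $V_{\geq 3}$ structure nearby. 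I would chase exactly which vertices are exempted, count them against $|V_X|$, and check that even after granting each exempted vertex a "free" degree-$2$ slot the degree sum still forces $2|E_X| \geq \tfrac{11}{5}|V_X|$. If a direct global argument is too lossy, the fallback is to argue edge-by-edge on the ear decomposition (already used in Lemma~\ref{lemma:lower_bound_paths_beadstring}): each ear of length $k$ adds $k$ edges and $k-1$ vertices, and the compaction guarantees $k \leq 2$ for all but a controlled number of ears, which again delivers a density strictly above $1$ that I would pin down to $11/10$.
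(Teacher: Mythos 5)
Your Route-1 reduction to the intermediate claim $|V_{\geq 3}| \geq \tfrac{1}{5}|V_X|$ is where the argument breaks: that inequality is false in general, and you correctly flag this danger yourself in your ``Main obstacle'' paragraph --- a DFS tree can indeed have many leaves, and there is \emph{no} vertex-count bound of this form on the surviving degree-$2$ leaves. Concretely, take a root $r=u$ with a single tree-child $v$, give $v$ tree-children $\ell_1,\ldots,\ell_k$, and give each $\ell_i$ one back edge to $r$. This is biconnected, it is already compacted (the only degree-$2$ vertices are the exempted DFS-tree leaves $\ell_i$), and it has $|V_X| = k+2$ but $|V_{\geq 3}| = 2$, so $|V_{\geq 3}|/|V_X| \to 0$ as $k$ grows. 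The lemma's conclusion is of course safe here ($|E_X| = 2k+1$, density $\to 2$), but your degree count $2|E_X| \geq 2|V_2| + 3|V_{\geq 3}|$ replaces $d(r) = d(v) = k+1$ by the floor value $3$ and is therefore arbitrarily lossy in precisely the regime where leaves dominate $V_X$. Your second inequality, from contracting degree-$2$ vertices, rearranges to \emph{the same} linear relation $2|E_X| \geq 3|V_X| - |V_2|$, so there is nothing independent to ``reconcile'' against it.

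The paper's proof sidesteps this by working with degree \emph{sums} rather than vertex counts, which is exactly what absorbs high-degree hubs. Writing $V_X = \{r\} \cup V_2 \cup V_3$ with $r$ set apart, $x := \sum_{v \in V_3} d(v)$ and $y := \sum_{v \in V_2} d(v) = 2|V_2|$, the density equals $\frac{x+y+d(r)}{2(|V_3|+|V_2|+1)}$, and one uses $|V_3| \leq x/3$, $d(r) \geq 2$, $x \geq 3$. The structural ingredient that replaces your missing bound --- and that you gestured at (``forces $V_{\geq 3}$ structure nearby'') but did not pin down --- is this: each degree-$2$ vertex is a DFS-tree leaf with exactly one tree edge and one back edge, and its tree edge lands on a vertex of $V_3$ (not the root, since a leaf-child of $r$ would make $r$ an articulation point, and not another leaf). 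Hence $|V_2| \leq x$, i.e. $y \leq 2x$. Minimizing $\frac{x+y+2}{\tfrac{2}{3}x+y+2}$ over $0 \leq y \leq 2x$, $x \geq 3$ gives $\tfrac{11}{10}$ at $(x,y)=(3,6)$. The point is that $y \leq 2x$ is a strictly weaker constraint than any bound of the form $|V_2| \leq c\,|V_{\geq 3}|$: a single hub of degree $100$ certifies $x \geq 100$ and thus licenses up to $100$ leaves despite $|V_3| = 1$, which is exactly the situation in which your vertex-count route collapses but the degree-sum route does not.
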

\begin{proof}
	Consider the following partition $V_X = \{r\} \cup V_2 \cup
	V_3$ where: $r$ is the root; $V_2$ is the set of vertices with
	degree 2 and; $V_3$, the vertices with degree $\geq 3$.  Since
	$H_X$ is compacted DFS tree of a biconnected graph, we have
	that $V_2$ is a \emph{subset} of the leaves and $V_3$ contains
	the set of internal vertices (except $r$). There are no vertices
	with degree 1 and $d(r) \geq 2$. Let $x = \sum_{v \in V_3}
	d(v)$ and $y = \sum_{v \in V_2} d(v)$.  We can write the
	density as a function of $x$ and $y$, namely,
	$$\frac{|E_X|}{|V_X|} = \frac{x + y + d(r)}{2 (|V_3| + |V_2| + 1)} $$

	Note that $|V_3| \le \frac{x}{3}$ as the vertices in $V_3$
	have at least degree 3, $|V_2| = \frac{y}{2}$ as vertices in
	$V_2$ have degree exactly 2. Since $d(r) \ge 2$, we derive the
	following bound
	$$\frac{|E_X|}{|V_X|} \ge \frac{x + y + 2}{\frac{2}{3}x + y + 2} $$

	Consider any graph with $|V_X|>3$ and its DFS tree rooted at
        $r$. Note that: (i) there are no tree edges between any two
        leaves, (ii) every vertex in $V_2$ is a leaf and (iii) no leaf
        is a child of $r$.  Therefore, every tree edge incident in a
        vertex of $V_2$ is also incident in a vertex of $V_3$. Since
        exactly half the incident edges to $V_2$ are tree edges (the
        other half are back edges) we get that $y \le 2x$.

	With $|V_X| \ge 3$ there exists at least one internal vertex in
	the DFS tree and therefore $x \ge 3$.
 \begin{alignat*}{2}
	 \text{minimize }\quad   & \frac{x + y + 2}{\frac{2}{3}x + y + 2}\ \\
	 \text{subject to }\quad  & 0 \le y \le 2x, \\
	 		    & x \ge 3.
  \end{alignat*}

 Since for any $x$ the function is minimized by the maximum $y$ s.t.
 $y \le 2x$ and for any $y$ by the minimum $x$, we get
 $$\frac{|E_X|}{|V_X|} \ge \frac{9x + 6}{8x + 6} \ge
 \frac{11}{10}.$$ 
\end{proof}

Specifically, let $\alpha = \frac{11}{10}$ and write $\alpha = 1 +
2/\beta$ for a constant $\beta$: we have that $|E_X| + |V_X| = (|E_X|
- |V_X|) + 2 |V_X| \leq (|E_X| - |V_X|) + \beta (|E_X| - |V_X|) =
\frac{\alpha+1}{\alpha-1} (|E_X| - |V_X|)$. Thus, we can charge each
leaf with a cost of $\frac{c_0 (|V_X| + |E_X|)}{|E_X| - |V_X| + 1}
\leq c_0 \frac{\alpha+1}{\alpha-1} = O(1)$. This motivates the
definition of $\chead$, since Lemma~\ref{lemma:density} does not
necessarily hold for the head $\head$ (due to the unary nodes in its
DFS tree).

One last step to bound $\sum_{\chead} c_0 (|V_X| + |E_X|)$: as noted
before, a root-to-leaf trace for the string storing $\pi$ has $|\pi|$
left branches by Lemma~\ref{lem:properties_recursion}, and as many
spines, each spine charging $c_0 \frac{\alpha+1}{\alpha-1} = O(1)$ to
the leaf at hand. This means that each of the $|\setofpaths_{s,t}(G)|$
leaves is charged for a cost of $O(|\pi|)$, thus bounding the sum as
$\sum_{r\,\mathrm{spine}} T(r) = \sum_{\chead} c_0 (|V_X| + |E_X|) =
O(\sum_{\pi \in \setofpaths_{s,t}(G)}{|\pi|})$. This completes the
proof of Lemma~\ref{lemma:total_cost_recursion_tree}. As a corollary,
we obtain the following result.

\begin{lemma}
  \label{lemma:amortized_cost_per_path}
  The recursion tree $R$ with cost as in Eq.~\eqref{eq:abstrac_cost}
  induces an $O(|\pi|)$ amortized cost for each $st$-path $\pi$.
\end{lemma}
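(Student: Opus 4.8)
The plan is to derive this statement as an immediate corollary of Lemma~\ref{lemma:total_cost_recursion_tree} together with the structural facts about $R$ collected in Lemma~\ref{lem:properties_recursion}. Recall that, by Lemma~\ref{lem:properties_recursion}\ref{item:R1}, the leaves of $R$ are in bijection with the $st$-paths in $\setofpaths_{s,t}(\beadstring)$, so it is meaningful to speak of ``the leaf of $\pi$'' for each path $\pi$. The notion of \emph{amortized cost $O(|\pi|)$ per path} means exactly that the total work $\sum_{r\in R} T(r)$ can be distributed over the leaves so that the leaf of $\pi$ is charged $O(|\pi|)$; and this is precisely the accounting already carried out inside the proof of Lemma~\ref{lemma:total_cost_recursion_tree}, which I would simply re-read through the lens of a per-path charge.

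Concretely, I would split $\sum_{r\in R}T(r)$ into the three contributions of Eq.~\eqref{eq:abstrac_cost} and track, for a fixed leaf $\ell$ corresponding to $\pi$, how much each contributes to $\ell$. First, the unary nodes: by Lemma~\ref{lem:properties_recursion}\ref{item:R2} the root-to-leaf trace of $\ell$ contains exactly $|\pi|$ left branches, hence at most $|\pi|$ unary nodes, each of cost $c_0$; charging each such node to $\ell$ gives $O(|\pi|)$. Second, the leaf cost of $\ell$ itself is $c_0|\pi| = O(|\pi|)$ directly from Eq.~\eqref{eq:abstrac_cost}. Third, the spines: each spine on the root-to-leaf trace of $\ell$ is, by the density argument (Lemmas~\ref{lemma:lower_bound_paths_beadstring} and~\ref{lemma:density}), chargeable at rate $O(1)$ to each leaf descending from it, in particular to $\ell$; and the number of spines met along the trace of $\ell$ is again bounded by the number of left branches, i.e.\ at most $|\pi|$, again by Lemma~\ref{lem:properties_recursion}\ref{item:R2}. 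Summing the three contributions, $\ell$ is charged $O(|\pi|)$, which is the claimed amortized bound.

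I do not expect a genuine obstacle here: all the heavy lifting — the existence of the $O(1)$-per-leaf charge for a spine, which rests on the cyclomatic lower bound $|E_X|-|V_X|+1$ on descending $st$-paths and on the density bound $|E_X|/|V_X|\ge 11/10$ — is already established in Section~\ref{sec:recursion-amortization}. The only point deserving a sentence of care is that the \emph{same} leaf is not over-counted beyond $O(|\pi|)$: each unary node and each spine appears on at most one root-to-leaf trace below it for a given $\ell$ (they lie on the path from the root to $\ell$), so the number of charges reaching $\ell$ is $O(|\pi|)$, each of size $O(1)$, plus the single leaf term. Hence the lemma follows, and setting $u:=s$ together with Lemma~\ref{lemma:beadstring} yields the optimal $O(m+\sum_{\pi\in\setofpaths_{s,t}(G)}|\pi|)$ running time of Algorithm~\ref{alg:liststpaths}, consistent with Theorem~\ref{theorem:optimal_paths}.
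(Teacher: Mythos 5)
Your proposal is correct and matches the paper's own route: the paper presents this lemma as an immediate corollary of the per-leaf charging argument already carried out in Section~\ref{sec:recursion-amortization} (unary nodes and spines bounded by the $|\pi|$ left branches, each spine charged $O(1)$ per descending leaf via Lemmas~\ref{lemma:lower_bound_paths_beadstring} and~\ref{lemma:density}, plus the $c_0|\pi|$ leaf term). Your only cosmetic imprecision is counting ``at most $|\pi|$'' spines on the trace rather than $|\pi|+1$ (the spine containing the root is extra), which is immaterial for the $O(|\pi|)$ bound.
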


\section{Certificate implementation and maintenance}
\label{sec:certificate}

We show how to represent and update the certificate $C$ so that the
time taken by Algorithm~\ref{alg:liststpaths} in the recursion tree
can be distributed among the nodes as in Eq.~\eqref{eq:abstrac_cost}:
namely, (i)~$O(1)$ time in unary nodes; (ii)~$O(|\pi|)$ time in the
leaf corresponding to path $\pi$; (iii)~$O(|V_X| + |E_X|)$ in each
spine.

The certificate $C$ associated with a node $\langle \pi_s, u, C
\rangle$ in the recursion tree is a compacted and augmented DFS tree
of bead string $\beadstring$, rooted at vertex~$u$. The DFS tree
changes over time along with $\beadstring$, and is
maintained in such a way that $t$ is in the leftmost path of the tree.

We augment the DFS tree with sufficient information to implicitly
represent articulation points and biconnected components.  
Recall that, by the properties of a DFS tree of an undirected graph,
there are no cross edges (edges between different branches of the
tree).

We compact the DFS tree by contracting the vertices that have
degree~2, except $u$, $t$, and the leaves (the latter surely have
incident back edges). Maintaining this compacted representation is not
a difficult data-structure problem. From now on we can assume
w.l.o.g.\mbox{} that $C$ is an augmented DFS tree rooted at $u$ where
internal nodes of the DFS tree have degree $\ge 3$, and each vertex
$v$ has associated the following information.

\begin{enumerate}
	\setlength{\itemsep}{0pt} 
	\item A doubly-linked list $lb(v)$ of back edges linking $v$
          to its descendants $w$ sorted by postorder DFS numbering.
	\item A doubly-linked list $ab(v)$ of back edges linking $v$
          to its ancestors $w$ sorted by preorder DFS numbering.
        \item \label{item:point3} An integer $\gamma(v)$, such that if $v$ is an
                ancestor of $w$ then $\gamma(v) < \gamma(w)$.
	\item \label{item:point4} The smallest $\gamma(w)$ over all
          $w$, such that $(h,w)$ is a back edge and $h$ is in the
          subtree of $v$, denoted by $\mathit{lowpoint}(v)$.
\end{enumerate}

Given three vertices $v,w,x \in C$ such that $v$ is the parent of $w$
and $x$ is not in the subtree\footnote{The second condition is always
  satisfied when $w$ is not in the leftmost path, since $t$ is not in
  the subtree of $w$.} of $w$, we can efficiently test if $v$ is an
articulation point, i.e.\mbox{} $\mathit{lowpoint}(w) \leq
\gamma(v)$. (Note that we adopt a variant of $\mathit{lowpoint}$ using
$\gamma(v)$ in place of the preorder numbering~\cite{Tarjan72}: it has
the same effect whereas using $\gamma(v)$ is preferable since it is
easier to dynamically maintain.)

\begin{lemma}
  \label{lem:certificate_scratch}
 The certificate associated with the root of the recursion can be
 computed in $O(m)$ time.
\end{lemma}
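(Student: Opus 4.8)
The plan is to build the certificate with a constant number of linear-time traversals, so that the total cost is $O(m+n)=O(m)$ since $G$ is connected. First I would compute the block tree of $G$ with a standard DFS-based biconnectivity algorithm (Hopcroft--Tarjan) in $O(m)$ time, recording for each vertex the biconnected components it belongs to and the articulation points. From the block tree I would extract the bead string $\sbeadstring$ by walking the unique tree path between the block containing $s$ and the block containing $t$, collecting the beads $B_1,\dots,B_r$ and the articulation points joining consecutive beads. By Lemma~\ref{lemma:beadstring} all $st$-paths live in $G[\sbeadstring]$, so from here on I work only with this induced subgraph, which has at most $m$ edges.

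Next I would produce the (uncompacted) rooted DFS tree of $G[\sbeadstring]$ with root $u:=s$ and with $t$ on the leftmost root-to-leaf path. Having the articulation points of $\sbeadstring$ and an $s$--$t$ path through them (obtainable in $O(m)$ time, e.g.\ by traversing each bead $B_i$ between its two articulation points), I would reorder the adjacency lists so that the DFS, whenever it has a choice, descends first toward $t$; this forces $t$ onto the leftmost path while keeping the traversal linear. In this single DFS I would classify every edge as a tree edge or a back edge (an undirected DFS has no cross edges), assign preorder and postorder numbers, initialise $\gamma(v)$ to a labelling consistent with the ancestor order (e.g.\ the preorder number, or an order-maintenance label, so that $\gamma(v)<\gamma(w)$ whenever $v$ is an ancestor of $w$), and, bottom-up as the recursion returns, compute $\mathit{lowpoint}(v)=\min\big(\gamma(v),\ \min_{(v,w)\ \text{back}}\gamma(w),\ \min_{w\ \text{child of}\ v}\mathit{lowpoint}(w)\big)$, all in $O(m)$ total time. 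Then I would fill the back-edge lists: for each back edge $(v,w)$ with $w$ a descendant of $v$, append it to $lb(v)$ and to $ab(w)$; to get $lb(v)$ sorted by the postorder number of $w$ and $ab(w)$ sorted by the preorder number of $v$, I would bucket-sort the $O(m)$ back edges (both keys being integers in $[1,n]$) and build the lists by scanning the sorted order. Finally I would compact the tree by contracting every maximal chain of degree-$2$ vertices other than $u$, $t$, and the leaves, updating the incident $lb/ab$ lists, $\gamma$, and $\mathit{lowpoint}$ along the way. Summing the costs — block tree, bead-string extraction, the DFS with lowpoints, the bucket sort, and the compaction — yields $O(m)$.

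The only subtlety I anticipate is doing two things inside one linear-time DFS: getting $t$ onto the leftmost path, and producing the two back-edge lists already in the prescribed sorted order; both are discharged by the preprocessing above (the articulation-point path, and the bucket sort on DFS numbers), so no superlinear work is incurred. Everything else is a textbook application of DFS-based biconnectivity and lowpoint computation, and the compaction is a single extra scan.
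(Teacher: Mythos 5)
Your proof is correct and reaches the same $O(m)$ bound, but it is organized differently from the paper's. You first run a full biconnectivity decomposition (Hopcroft--Tarjan) to build the block tree explicitly, extract the bead string $\sbeadstring$ from it, and only then construct the rooted DFS tree inside the induced subgraph $G[\sbeadstring]$. The paper instead performs everything within a single DFS over $G$: it computes lowpoints bottom-up and, when it returns from a child $v$ with $\mathit{lowpoint}(v)\le\gamma(z)$ (where $z$ is the parent) while $v$ is not on the leftmost path, it cuts the subtree of $v$, thereby discovering the block structure and pruning to $\sbeadstring$ implicitly and simultaneously. Both routes are linear, so neither wins asymptotically, but the paper's variant avoids a separate biconnectivity pass. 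The remaining deviations are also legitimate alternatives: to force $t$ onto the leftmost path you precompute an $s$--$t$ path through the articulation points of $\sbeadstring$ and reorder adjacency lists, whereas the paper first runs a partial DFS from $s$ until $t$ is reached and then re-runs the DFS traversing that discovered path first; and you obtain the sorted $lb$/$ab$ lists by bucket-sorting the back edges on DFS numbers, whereas the paper fills them in place during the DFS, appending to the ancestor-back-edge list on the first scan of $N(v)$ and to the descendant-back-edge list on a second, post-recursion re-scan of $N(v)$, which produces the prescribed preorder/postorder orderings automatically. The pieces you assemble are the same pieces the paper uses, just factored differently.
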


\begin{proof}
	In order to set $t$ to be in the leftmost path, we perform a
        DFS traversal of graph $G$ starting from $s$ and stop when we
        reach vertex $t$. We then compute the DFS tree, traversing the
        path $s \leadsto t$ first. When visiting vertex $v$, we set
        $\gamma(v)$ to depth of $v$ in the DFS. Before going up on the
        traversal, we compute the lowpoints using the lowpoints of the
        children. Let $z$ be the parent of $v$. If
        $\mathit{lowpoint}(v) \leq \gamma(z)$ and $v$ is not in the
        leftmost path in the DFS, we cut the subtree of $v$ as it does
        not belong to $B_{s,t}$.  When first exploring the
        neighborhood of $v$, if $w$ was already visited,
        i.e. $e=(u,w)$ is a back edge, and $w$ is a descendant of $v$;
        we add $e$ to $ab(w)$. This maintains the DFS preordering in
        the ancestor back edge list. Now, after the first scan of
        $N(v)$ is over and all the recursive calls returned (all the
        children were explored), we re-scan the neighborhood of
        $v$. If $e=(v,w)$ is a back edge and $w$ is an ancestor of
        $v$, we add $e$ to $lb(w)$. This maintains the DFS
        postorder in the descendant back edge list. This procedure
        takes at most two DFS traversals in $O(m)$ time.  This DFS
        tree can be compacted in the same time bound.  
\end{proof}

\begin{lemma}
	\label{lem:choose}
	Operation $\chooseedge(C,u)$ can be implemented in $O(1)$ time.
\end{lemma}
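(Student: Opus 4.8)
The plan is to show that, given the certificate data structure described just above—a compacted, augmented DFS tree of $\beadstring$ rooted at $u$ with the lists $lb(v)$, $ab(v)$, the integers $\gamma(v)$, and $\mathit{lowpoint}(v)$—the operation $\chooseedge(C,u)$ can be answered in $O(1)$ time. Recall that $\chooseedge(C,u)$ must return an edge $e=(u,v)$ with $v\in\head$ such that $\pi_s\cdot(u,v)\cdot u\leadsto t$ is an $st$-path with $u\leadsto t$ internal to $\beadstring$, and moreover the chosen $v$ must be the last neighbor of $u$ in DFS postorder, so that the unique tree edge out of $u$ is returned only when $u$ has no outgoing back edges.

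First I would observe that the candidate edges incident to $u$ split into exactly two kinds in the rooted DFS tree: the back edges from $u$ to its proper descendants, which are precisely the edges stored in $lb(u)$ (sorted by postorder DFS numbering), and the tree edges from $u$ to its children. (There are no back edges from $u$ to ancestors, since $u$ is the root, and no cross edges in an undirected DFS.) By the specification, when $lb(u)$ is non-empty we must return the back edge whose other endpoint has the largest postorder number; since $lb(u)$ is a doubly-linked list kept in postorder, this is simply the head (or tail, depending on the chosen orientation) of $lb(u)$, retrievable in $O(1)$ time. We should check that such a back edge is always a valid answer: because $\head$ is biconnected, any edge of $\head$ incident to $u$ lies on some $u\leadsto t$ path internal to $\beadstring$, so returning this back edge meets the requirement; this uses the biconnectivity of $\head$ noted in the $\chooseedge$ specification of Section~\ref{sec:intro-cert}.

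Next I would handle the remaining case, where $lb(u)$ is empty. Then $u$ has no outgoing back edges, and we must return the tree edge $(u,v)$ where $v$ is $u$'s child that leads towards $t$. Since the certificate is maintained with $t$ on the leftmost root-to-leaf path of the DFS tree, this is the leftmost child of $u$, which is available in $O(1)$ time from the first-child pointer of the (compacted) tree representation. Again correctness is immediate: with no back edge out of $u$, the only way to extend $\pi_s$ towards $t$ inside $\beadstring$ is through this tree edge, which, because $t$ is kept on the leftmost path, indeed leads to $t$. Finally I would remark that the compaction of degree-2 vertices does not affect this argument, since $u$ is never compacted away and its incident edges in the compacted tree correspond to its incident edges (or chains thereof) in the original $\beadstring$.

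I do not expect a serious obstacle here: the lemma is essentially a bookkeeping statement about the data structure, and the only subtlety is making sure the tie-breaking rule (tree edge chosen last, i.e. only when $lb(u)$ is empty) falls out cleanly from the postorder-sorted list plus the leftmost-path invariant. The mild care needed is to justify that every back edge in $lb(u)$ is a legitimate return value, which I would attribute to the biconnectivity of $\head$ already invoked in the statement of $\chooseedge$, rather than re-proving it. With those pieces in place the $O(1)$ bound follows immediately, as every lookup touches only a pointer into a doubly-linked list or a child pointer in the compacted DFS tree.
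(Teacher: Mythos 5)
Your proof is correct and takes essentially the same route as the paper's (which is only two sentences long): if $lb(u)$ is non-empty, return its last element; otherwise return the tree edge to $u$'s child, each in $O(1)$ via the list and child pointers. One small point worth noting: the paper justifies the second case by observing that $u$, being the root of a DFS tree of the bead string whose head $\head$ is biconnected, has \emph{exactly one} child (otherwise $u$ would be an articulation point of $\beadstring$, contradicting biconnectivity of $\head$); you instead speak of returning the ``leftmost child,'' which is a slight misstep in reasoning since a leftmost child among several would have the \emph{smallest} postorder number rather than the largest demanded by the $\chooseedge$ specification — the argument only goes through because, as the paper notes, there is in fact only one child.
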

\begin{proof}
If the list $lb(v)$ is empty, return the tree edge $e=(u,v)$ linking $u$
to its only child $v$ (there are no other children).  Else,
return the last edge in $lb(v)$. 
\end{proof}

We analyze the cost of updating and restoring the
certificate $C$. We can reuse parts of~$C$,
namely, those corresponding to the vertices that are not in the
compacted head $H_X = (V_X,E_X)$ as defined in
Section~\ref{sub:recursion-tree-cost}.
%
%
%
We prove that, given a unary node $u$ and its tree edge $e=(u,v)$, the
subtree of $v$ in~$C$ can be easily made a certificate for the left
branch of the recursion.

\begin{lemma}
	\label{lem:unary_left}
	On a unary node, $\oracleleft(C,e)$ takes $O(1)$
	time.
\end{lemma}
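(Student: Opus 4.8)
The plan is to show that on a unary node, the left‐branch update $\oracleleft(C,e)$ for the (unique) tree edge $e=(u,v)$ amounts to constant‐time surgery on the certificate. Recall that a unary node $u$ has exactly one incident edge in $\beadstring$, which is the tree edge $e=(u,v)$, and by definition of the compacted DFS tree $u$ is not contracted (it is a degree-controlled vertex; here degree $1$ in $\beadstring$ but its child $v$ carries the rest). The key structural observation is that deleting $u$ from $\beadstring$ and recomputing the block tree leaves the subtree of $v$ essentially untouched: since $u$ has only one incident edge, $u$ is itself an articulation point whose removal simply detaches nothing new — the new bead string $\vbeadstring$ connecting $v$ to $t$ is exactly the subtree rooted at $v$ in the current $C$, with its block-tree decomposition unchanged below $v$.

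First I would argue that the new head $H_v$ and the new certificate are obtained by rerooting at $v$: we discard the node $u$ and the edge $e$, and promote the subtree of $v$ to be the whole DFS tree. Because $u$ had no back edges (it is unary, so $lb(u)$ was empty and $\chooseedge$ returned the tree edge), removing $u$ creates no dangling back edges and invalidates none of the lists $lb(\cdot)$, $ab(\cdot)$ at other vertices. The auxiliary values $\gamma(\cdot)$ need not be recomputed globally: since $\gamma$ is only required to be monotone along root-to-leaf paths, and we are merely dropping the root, the existing $\gamma$ values on the subtree of $v$ already satisfy the invariant (we may keep $\gamma(v)$ as is, or note that any global additive shift is irrelevant). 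Likewise $\mathit{lowpoint}(\cdot)$ values inside the subtree of $v$ are unaffected, since lowpoints depend only on back edges within the subtree. So the whole update is: detach $u$, make $v$ the root, record in the bookkeeping structure $I$ the pointer to $u$ and to $e$ so that $\undooracle(C,I)$ can reattach them. Each of these is $O(1)$.

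Then I would verify the two things the statement implicitly demands: (i) the resulting $C$ is a valid certificate for $\langle \pi_s\cdot e,\, v,\, \cdot\rangle$, i.e.\ it is a compacted augmented DFS tree of $\vbeadstring$ rooted at $v$ with $t$ on the leftmost path — this follows because $t$ was on the leftmost path of the old tree and $v$ is the first vertex of that path after $u$, so $t$ remains on the leftmost path of the rerooted tree; and (ii) the time is $O(1)$, which is immediate from the constant number of pointer operations. A small subtlety: after removing $u$ the vertex $v$ might now have degree $2$ in $\vbeadstring$ and hence ought to be contracted in the compacted representation — but $v$ is the new root, and roots are explicitly exempt from contraction, so no compaction step is triggered.

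The main obstacle I anticipate is making precise the claim that nothing below $v$ changes — in particular that no biconnected component that was ``split'' by $u$ needs to be merged or recomputed. This rests on the fact that $u$ is unary, so in the block tree $u$'s bead is just the single edge $e$, and removing it leaves $v$'s bead (and all deeper beads) exactly as before. I would make this rigorous by appealing to the definition of bead string and to Lemma~\ref{lemma:beadstring}: $\vbeadstring$ is the unique bead string from $v$ to $t$ in the block tree of $\beadstring - u$, and since $u$ contributed only the pendant edge $e$, this block tree restricted to the $v$-side is identical to the old one. Everything else is routine bookkeeping, and the persistent-structure mechanism for $\undooracle$ is already assumed available.
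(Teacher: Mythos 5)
Your proof takes essentially the same approach as the paper's: at a unary node $u$ has exactly one incident edge $e=(u,v)$ (because $lb(u)$ is empty), so detaching $u$ and rerooting the certificate at $v$ is a constant number of pointer operations, and the $\gamma$-values, lowpoints, and back-edge lists in $v$'s subtree are untouched. The one point you leave implicit is that the paper also does ``Cut the list of children of $v$ keeping only the first child,'' whereas you assert outright that $v$'s subtree is already a DFS tree of $\vbeadstring$. These agree only because, after the spine's $\oracleright$ calls have removed every back edge in $lb(u)$, each non-leftmost child of $v$ has already been pruned (the $\oracleright$ call removing the last back edge from that child's subtree to $u$ detects it as hanging off $v$ and cuts it); so $v$ is left with a single child and the cut is a no-op. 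Your write-up relies on this invariant of $\oracleright$ without stating it; the paper's explicit one-line cut sidesteps having to argue it. Also a minor slip: you call $u$ ``itself an articulation point,'' but a degree-1 vertex is never an articulation point — it is $v$ that is the articulation point separating the bead $\{(u,v)\}$ from the rest of $\beadstring$; this does not affect the argument.
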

\begin{proof}
	Take edge $e=(u,v)$. Remove edge $e$ and set $v$ as the root
	of the certificate. Since $e$ is the only edge incident in
	$v$, the subtree $v$ is still a DFS tree. Cut the list of children
	of $v$ keeping only the first child. (The other children are no
	longer in the bead string and become part of $I$.) There is
	no need to update $\gamma(v)$. 
\end{proof}

We now devote the rest of this section to show how to efficiently
maintain $C$ on a spine.  Consider removing a back edge $e$ from $u$:
the compacted head $H_X=(V_X,E_X)$ of the bead string can be divided
into smaller biconnected components.  Many of those can be excluded
from the certificate (i.e. they are no longer in the new bead string,
and so they are bookkept in $I$) and additionally we have to update
the lowpoints that change. We prove that this operation can be
performed in $O(|V_X|)$ total time on a spine of the recursion tree.

\begin{lemma}
	\label{lem:removebackedge}
	The total cost of all the operations $\oracleright(C,e)$ in a
        spine is $O(|V_X|)$ time.
\end{lemma}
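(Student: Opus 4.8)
The plan is to amortize the work of the successive $\oracleright$ calls along a single spine against the vertices of the compacted head $\chead=(V_X,E_X)$ that get ``discarded'' from the bead string, so that each vertex is touched a constant number of times over the whole spine. First I would recall the structure of a spine: it starts at a node $\langle \pi_s, u, C\rangle$ and proceeds by right branching, and by Lemma~\ref{lem:properties_recursion}\ref{item:R3} each right branch deletes a \emph{back edge} incident to $u$, chosen (by $\chooseedge$) in decreasing postorder of the other endpoint; the last node of the spine is the single tree edge $e=(u,v)$ leaving $u$. So along the spine we remove, one at a time, the back edges $e_1,e_2,\dots$ stored in $lb(v_{\text{child}})$-type lists hanging off $u$, in a fixed order, finally leaving only the tree edge.

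The key observation is that when we remove a back edge $e=(u,w)$ from $\head$, the only parts of the certificate whose $\mathit{lowpoint}$ values or block/articulation structure can change are those on the tree path from $w$ up to $u$ and the subtrees hanging along it — i.e.\ exactly the material inside the compacted head $\chead$. Pieces of $\beadstring$ below an articulation point that becomes ``cut off'' from $t$ are simply moved into the bookkeeping structure $I$ (so they are charged to $I$, i.e.\ to later restoration, not here), and pieces outside $\chead$ never need to be inspected because they are degree-$\le 2$ chains that were already compacted away. Concretely, I would: (i) walk up from the deeper endpoint of the removed back edge recomputing $\mathit{lowpoint}$ along the path toward $u$, stopping as soon as a $\mathit{lowpoint}$ value is unchanged or an articulation point separating the rest from $t$ is detected; (ii) detach the now-disconnected subtrees (bookkeeping them in $I$); (iii) re-compact any vertex whose degree in the head has dropped to $2$. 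Each of these three actions, at a given vertex $x$, can be charged to $x$, and I would argue that once $x$ has been ``passed'' by this upward recomputation it either stays in the head with an up-to-date $\mathit{lowpoint}$ (and later back-edge removals from $u$ only shrink the head further, never revisiting $x$ from below) or it leaves the head permanently into $I$; either way $x$ is touched $O(1)$ times across the entire spine. Hence the total is $O(|V_X|)$ (the back edges themselves being $O(|E_X|)=O(|V_X|)$ by Lemma~\ref{lemma:density}, so this is also $O(|V_X|)$).

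The main obstacle I anticipate is making precise the claim that a vertex, once its $\mathit{lowpoint}$ has been recomputed during one $\oracleright$ on the spine, is never recomputed ``from scratch'' again later on the same spine — i.e.\ that the monotone shrinking of the head as we delete back edges $e_1,e_2,\dots$ from $u$ in postorder really does prevent re-dirtying previously-cleaned vertices. This needs the fact that the back edges out of $u$ are processed in decreasing postorder of their far endpoint, so that removing $e_{j}$ only affects a portion of the head that is ``to the right of'' (disjoint in subtree terms from, or an ancestor-prefix of) what was affected by $e_1,\dots,e_{j-1}$; combined with the DFS-tree property that there are no cross edges, this confines the cumulative recomputation work to a single pass over $V_X$. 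I would formalize this with a potential argument: $\Phi = $ (number of vertices still in the head with a possibly-stale $\mathit{lowpoint}$), initially $|V_X|$, never increasing along the spine, and each unit of work decrements $\Phi$ or is charged to a vertex leaving into $I$. The restoration side ($\undooracle$) is not part of this lemma — that cost is carried by $I$ and handled separately — so here I only need the forward direction, which keeps the bookkeeping manageable.
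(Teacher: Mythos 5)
Your plan tracks the paper's approach at a high level: process the back edges of $lb(u)$ in the postorder prescribed by $\chooseedge$, walk up from each endpoint $z_i$ toward $u$ recomputing $\mathit{lowpoint}$ values, stop when the value is still $\gamma(u)$, and amortize over $\chead$. The paper's proof indeed rests on exactly the observation you single out as the ``main obstacle'': that these upward walks never traverse the same \emph{tree edge} twice, because each walk halts at the first common ancestor of the endpoints of the remaining back edges. That's the right mechanism and you are right to flag it as the crux.

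There are, however, two substantive points you do not resolve. \textbf{First}, you never say how $\mathit{lowpoint}(w)$ is recomputed in $O(1)$ time per visit. A naive recomputation at a vertex $w$ must look at the lowpoints of all of $w$'s children, costing $\Theta(\fdeg(w))$, and a high-degree common ancestor of many $z_i$'s is touched once per walk that ends or passes there, so the naive cost blows past $O(|V_X|)$. The paper closes this with an auxiliary structure maintained along the spine: for each $w$, a list $l(w)$ of children whose lowpoint currently equals $\gamma(u)$. Testing $l(w)\neq\emptyset$ (together with whether the topmost back edge out of $w$ reaches $u$) decides in $O(1)$ whether $\mathit{lowpoint}(w)$ stays $\gamma(u)$, and hence whether to halt. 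Without this, or an equivalent device, the claimed bound does not follow from your walk-and-stop description. \textbf{Second}, your claim that each vertex is ``touched $O(1)$ times across the entire spine'' is not true as stated: a common ancestor $x$ of several $z_i$'s is visited once per walk that reaches it. The correct statement — and what the paper actually proves — is that each \emph{tree edge} of $\chead$ is traversed at most once, plus one initial touch per back edge, giving $O(|V_X|)$ total touches; your potential function $\Phi$ counting ``vertices with stale lowpoint'' does not decrease monotonically (a vertex can become stale again when a later removal affects a different subtree below it), so that formalization would not go through. Finally, a small error: Lemma~\ref{lemma:density} gives a \emph{lower} bound $|E_X|/|V_X|\ge 11/10$, so it does not imply $|E_X|=O(|V_X|)$; the number of back edges out of $u$ is $O(|V_X|)$ simply because $u$ has at most $|V_X|-1$ neighbors, and no appeal to the density lemma is needed (or valid) there.
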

\begin{proof}
	In the right branches along a spine, we remove all back edges
        in $lb(u)$. This is done by starting from the last edge in
        $lb(u)$, i.e.~proceeding in reverse DFS postorder.
	For back edge $b_i = (z_i,u)$, we traverse the vertices in the
	path from $z_i$ towards the root $u$, as these are the only lowpoints
	that can change.
	While moving upwards on the tree, on each vertex $w$, we
        update $\mathit{lowpoint}(w)$. This is done by taking the
        endpoint $y$ of the first edge in $ab(w)$ (the back edge that
        goes the topmost in the tree) and choosing the minimum between
        $\gamma(y)$ and the lowpoint of each child\footnote{If
          $\mathit{lowpoint}(w)$ does not change we cannot pay to
          explore its children. For each vertex we dynamically
          maintain a list $l(w)$ of its children that have lowpoint
          equal to $\gamma(u)$. Then, we can test in constant time if
          $l(w) \neq \emptyset$ and $y$ is not the root $u$. If both
          conditions are true $\mathit{lowpoint}(w)$ changes,
          otherwise it remains equal to $\gamma(u)$ and we stop.} of
        $w$. We stop when the updated $\mathit{lowpoint}(w) =
        \gamma(u)$ since it implies that the lowpoint of the vertex
        can not be further reduced.  Note that we stop before $u$,
        except when removing the last back edge in $lb(u)$.

	To prune the branches of the DFS tree that are no longer in
        $B_{u,t}$, consider again each vertex $w$ in the path from
        $z_i$ towards the root $u$ and its parent $y$. We check if the
        updated $\mathit{lowpoint}(w) \leq \gamma(y)$ and $w$ is not
        in the leftmost path of the DFS. If both conditions are
        satisfied, we have that $w \notin B_{u,t}$, and therefore we
        cut the subtree of $w$ and keep it in $I$ to restore later. We
        use the same halting criterion as in the previous paragraph.
	
	The cost of removing all back edges in the spine is
	$O(|V_X|)$: there are $O(|V_X|)$ tree edges and, in the paths
	from $z_i$ to $u$, we do not traverse the same tree edge twice
	since the process described stops at the first common ancestor
	of endpoints of back edges $b_i$. Additionally, we take $O(1)$
	time to cut a subtree of an articulation point in the DFS tree. 
\end{proof}

To compute $\oracleleft(C,e)$ in the binary nodes of a spine, we use
the fact that in every left branching from that spine, the graph is
the same (in a spine we only remove edges incident to $u$ and on a
left branch from the spine we remove the vertex $u$) and therefore its
block tree is also the same. However, the certificates on these nodes
are not the same, as they are rooted at different vertices. Using
the reverse DFS postorder of the edges, we are able to traverse
each edge in $H_X$ only a constant number of times in the spine.

\begin{lemma}
	\label{lem:promotebackedge}
	The total cost of all operations $\oracleleft(C,e)$ in a
        spine is amortized $O(|E_X|)$.
\end{lemma}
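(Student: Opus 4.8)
The plan is to bound the total running time of all the $\oracleleft$ operations performed while Algorithm~\ref{alg:liststpaths} descends a single spine starting at $\langle \pi_s, u, C\rangle$, whose head has compacted form $\chead = (V_X,E_X)$. Recall that this spine has one node per edge incident to $u$ in $\beadstring$: the deepest one is the unique tree edge and is a unary node, for which $\oracleleft$ already costs $O(1)$ by Lemma~\ref{lem:unary_left}; the remaining nodes are binary, one per back edge $e=(u,v)$ of $\head$, and these are the ones to analyze. The first thing I would record is that all these left branches act on the \emph{same} graph: a left branch deletes the whole vertex $u$ (hence all edges incident to $u$ at once), so whichever back edges at $u$ were removed earlier along the spine, the resulting graph is always $G[\beadstring]-u$, and its block tree is fixed along the spine. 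Moreover, since $\head$ is biconnected, the root $u$ of the DFS tree stored in $C$ has a single child $c$, so the subtree of $c$ in $C$ is already a DFS tree of $G[\beadstring]-u$ rooted at $c$.

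Next I would describe what a single $\oracleleft(C,e)$ with $e=(u,v)$ has to accomplish: obtain from the subtree of $c$ a certificate for $\vbeadstring$ rooted at $v$. This amounts to $(i)$~re-rooting, i.e.\mbox{} reversing the tree path from $c$ down to $v$ and reclassifying the edges incident to that path --- a back edge may become a tree edge, flip its ancestor/descendant orientation, or move between the $ab(\cdot)$ and $lb(\cdot)$ lists of a path vertex; $(ii)$~recomputing $\gamma(\cdot)$ and $\mathit{lowpoint}(\cdot)$ along the path, exactly as in the proof of Lemma~\ref{lem:removebackedge}; $(iii)$~re-running the compaction on the touched portion; and $(iv)$~pruning the subtrees that get separated from $t$, detected via the updated lowpoints and the leftmost-path test as in the proof of Lemma~\ref{lem:certificate_scratch}; all modifications are recorded in $I$ so that $\undooracle(C,I)$ reverts them within the same time. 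Handling a path vertex $w$ in this way costs $O(1 + d_{\chead}(w))$, where the degree term pays for the back edges of the compacted head incident to $w$ that are reclassified or reinserted into sorted lists.

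Then I would carry out the amortization. Process the back edges $e_i=(u,v_i)$ in the reverse DFS postorder in which $\chooseedge$ returns them along the spine. As in Lemma~\ref{lem:removebackedge}, when the re-rooting walk for $v_i$ climbs along its tree path it is halted as soon as it reaches a vertex already visited by the walk for some $v_j$ with $j<i$ (equivalently, once the recomputed $\gamma$ and $\mathit{lowpoint}$ stabilize), so the union of all these walks traverses each tree edge of $\chead$ only a constant number of times: this contributes $O(|V_X|)$. Summing the degree terms $d_{\chead}(w)$ over the path vertices touched, each back edge of $\chead$ is charged $O(1)$, contributing $O(|E_X|)$. Each pruned subtree is charged $O(1)$ to itself, and there are $O(|V_X|)$ of them over the spine. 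Adding the $O(1)$ for the unary node and the matching $\undooracle$ operations, the total is $O(|V_X|+|E_X|)$, which is $O(|E_X|)$ by the density bound of Lemma~\ref{lemma:density}; this is exactly the budget $c_0(|V_X|+|E_X|)$ that Eq.~\eqref{eq:abstrac_cost} allots to a spine.

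The step I expect to be the main obstacle is item~$(i)$: showing that re-rooting at $v$ stays \emph{local}, i.e.\mbox{} confined to a bounded neighborhood of the path from $c$ to $v$, instead of cascading through the whole subtree --- a priori, reversing a tree path can turn far-away back edges into cross edges, which a DFS tree must not contain. The property to exploit is the interaction between the biconnectivity of $\head$ and the DFS postorder: a back edge of $\head$ landing on $v$ certifies that the part of the tree hanging below $v$ remains attached ``high up'' after $u$ is deleted, so it can be reattached en bloc, leaving only the edges incident to the reversed path to be reclassified. Making this precise, and verifying that the early-stopping rule borrowed from the proof of Lemma~\ref{lem:removebackedge} is compatible with the reclassification and pruning work --- so that no tree or back edge of $\chead$ is visited more than a constant number of times across the whole spine --- is where the bulk of the technical effort will go.
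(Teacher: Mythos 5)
Your approach is genuinely different from the paper's, and unfortunately both of its two main steps have gaps.

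First, the mechanism. You propose to obtain the certificate for $B_{z_i,t}$ by \emph{re-rooting} the existing DFS subtree of $c$ (the unique child of $u$) at $z_i$, reclassifying the edges along the reversed $c$-to-$z_i$ path. But re-rooting a DFS tree at a new vertex does not in general yield a DFS tree: back edges far from the re-rooting path can become cross edges. A minimal example is already a cycle $v_1,\dots,v_k,v_1$ with DFS tree $v_1\text{-}v_2\text{-}\cdots\text{-}v_k$ and back edge $(v_k,v_1)$: after re-rooting at $v_i$, the vertices $v_{i-1},\dots,v_1$ and $v_{i+1},\dots,v_k$ hang in two separate branches, so $(v_k,v_1)$ becomes a cross edge. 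The correct DFS tree rooted at $v_i$ uses a \emph{different edge set} (it must include $(v_k,v_1)$ as a tree edge), so no amount of local reclassification along the $c$-to-$v_i$ path repairs this. You flag this as ``the main obstacle'' and gesture at biconnectivity plus postorder as the fix, but this is precisely where the argument fails; the paper sidesteps it entirely by \emph{rebuilding} the certificate via a fresh DFS from $z_i$ (as in Lemma~\ref{lem:certificate_scratch}), reusing subtrees of $C_{i-1}$ only past the first bead that $B_{z_i,t}$ and $B_{z_{i-1},t}$ share.

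Second, the accounting. You claim the total work along a spine can be bounded by $O(|V_X|+|E_X|)$ \emph{within the spine itself}, via an early-stopping rule analogous to Lemma~\ref{lem:removebackedge}. That stronger claim is false, and the paper does not make it. Consider a ``wheel'' head: $u$ joined by spokes to a $k$-cycle $z_1,\dots,z_k$, with $t'$ on the cycle. Here $|E_X|=\Theta(k)$. After deleting $u$ the cycle is a single bead, and \emph{every} $B_{z_i,t}$ needs a DFS tree of that same cycle rooted at a different vertex $z_i$; as shown above these are different trees, so each $\oracleleft$ must touch $\Theta(k)$ edges of the cycle, giving $\Theta(k^2)=\Theta(|E_X|^2)$ total work in the spine. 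This is why the paper's proof explicitly \emph{exports} the cost of re-exploring the first common bead $E'_X$ to nodes in the leftmost recursion path of the $i$th spine node (each such node is charged at most once, and each is a spine of its own with budget $\Omega(|E'_X|)$ under Eq.~\eqref{eq:abstrac_cost}). The word ``amortized'' in the lemma statement refers to this cross-spine charging; without it the bound is simply not achievable, and your within-spine accounting cannot close the gap.

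In short: the paper rebuilds and charges across the recursion tree; you re-root and charge within the spine. The rebuild is necessary for correctness, and the cross-spine charge is necessary for the time bound.
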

\begin{proof}
	Let $t'$ be the last vertex in the path $u \leadsto t$ s.t.
	$t' \in V_X$. Since $t'$ is an articulation point, the subtree
        of the DFS tree rooted in $t'$ is maintained in the
	case of removal of vertex $u$. Therefore the only modifications
	of the DFS tree occur in the compacted head $H_X$ of $B_{u,t}$.
	Let us compute the certificate $C_i$: this is the certificate
	of the left branch of the $i$th node of the spine where we
	augment the path with the back edge $b_i = (z_i,u)$ of
	$lb(u)$ in the order defined by $\chooseedge(C,u)$.

	For the case of $C_1$, we remove $u$ and rebuild the
	certificate starting form $z_1$ (the last edge in $lb(u)$)
	using the algorithm from Lemma~\ref{lem:certificate_scratch}
	restricted to $H_X$ and using $t'$ as target and $\gamma(t')$
	as a baseline to $\gamma$ (instead of the depth). This takes
	$O(|E_X|)$ time.  

	For the general case of $C_i$ with $i>1$ we also rebuild
	(part) of the certificate starting from $z_i$ using the
	procedure from Lemma~\ref{lem:certificate_scratch} but we use
	information gathered in $C_{i-1}$ to avoid exploring useless
	branches of the DFS tree. The key point is that, when we reach
	the first bead in common to both $B_{z_i,t}$ and
	$B_{z_{i-1},t}$, we only explore edges internal to this bead.
	If an edge $e$ leaving the bead leads to $t$, we can reuse
	a subtree of $C_{i-1}$. If $e$ does not lead to $t$, then it
	has already been explored (and cut) in $C_{i-1}$ and there is
	no need to explore it again since it will be discarded.
	Given the order we take $b_i$, each bead is
	not added more than once, and the total cost over the
	spine is $O(|E_X|)$.

	Nevertheless, the internal edges $E_X'$ of the first bead in
	common between $B_{z_i,t}$ and $B_{z_{i-1},t}$ can be explored
	several times during this procedure.\footnote{Consider the
	case where $z_i, \ldots, z_j$ are all in the same bead after
	the removal of $u$. The bead strings are the same, but the
	roots $z_i, \ldots, z_j$ are different, so we have to compute
	the corresponding DFS of the first component $|j-i|$ times.}
	We can charge the cost $O(|E'_X|)$ of exploring those edges to
	another node in the recursion tree, since this common bead is
	the head of at least one certificate in the recursion subtree
	of the left child of the $i$th node of the spine.
	Specifically, we charge the first node in the \emph{leftmost}
	path of the $i$th node of the spine that has exactly the
	edges $E'_X$ as head of its bead string: (i) if $|E'_X| \le 1$
	it corresponds to a unary node or a leaf in the recursion tree
	and therefore we can charge it with $O(1)$ cost; (ii)
	otherwise it corresponds to a first node of a spine and
	therefore we can also charge it with $O(|E'_X|)$. We use this
	charging scheme when $i \neq 1$ and the cost is always charged
	in the leftmost recursion path of $i$th node of the spine.
	Consequently, we never charge a node in the recursion tree more
	than once.  
\end{proof}

\begin{lemma}
	\label{lem:restore} On each node of the recursion tree,
	$\undooracle(C,I)$ takes time proportional to the size of the
	modifications kept in $I$.
\end{lemma}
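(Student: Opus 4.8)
The plan is to treat the bookkeeping structure $I$ as an \emph{undo log}: a stack that records, for every elementary modification performed on $C$ during a call to $\oracleleft$ or $\oracleright$, just enough information to invert that modification in constant time. First I would observe that the certificate $C$ is entirely pointer-based: it consists of the compacted DFS tree (parent / first-child / next-sibling pointers), the doubly-linked lists $lb(v)$, $ab(v)$ and the auxiliary child lists $l(w)$, together with the word-sized fields $\gamma(v)$ and $\mathit{lowpoint}(v)$. I would then go through the primitive operations that $\oracleleft$ and $\oracleright$ actually use, as described in Lemmas~\ref{lem:unary_left}, \ref{lem:removebackedge} and~\ref{lem:promotebackedge}, and check that each of them is one of the following atomic steps, each touching $O(1)$ memory words: removing an edge or a back-edge list node; re-rooting the tree at a child (a constant amount of pointer surgery); detaching an entire subtree at an articulation point and stashing it in $I$ by a \emph{single} handle, never by copying; or overwriting a $\gamma$ or $\mathit{lowpoint}$ field, or pushing/popping an entry of some $l(w)$. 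For the first, second and fourth kinds the log entry stores the old pointer value(s) or the old field value; for the third it stores the detach point and the subtree handle.

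With this invariant in hand, the proof of the lemma is immediate: $\undooracle(C,I)$ pops the entries of $I$ in LIFO order and applies the inverse of the corresponding atomic step — reinsert a list node at its recorded position, re-attach a stashed subtree at its recorded articulation point, restore an overwritten field to its recorded value, re-root at the recorded old root. Each inverse step costs $O(1)$ and there are $|I|$ of them, so $\undooracle$ runs in $\Theta(|I|)$ time; correctness follows because composing the inverses in reverse order exactly reverts $C$ (and with it $\beadstring$, $\head$ and the implicitly represented block tree) to its state before the matching $\oracleleft$/$\oracleright$ call.

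The delicate point — and the one I would spend the most care on — is the ``rebuild part of the DFS tree'' behaviour inside $\oracleleft$ on a binary node of a spine (Lemma~\ref{lem:promotebackedge}): there $C$ is not merely perturbed locally but partially recomputed from scratch by re-running the traversal of Lemma~\ref{lem:certificate_scratch} on a portion of the compacted head $H_X$. I would argue that this is still consistent with the undo-log view provided the rebuild is implemented \emph{non-destructively}: before the traversal overwrites the old root, severs the old child-lists, changes the old $\gamma$ and $\mathit{lowpoint}$ values along $H_X$, or discards the beads that leave $B_{u,t}$, each of these is first pushed onto $I$ (the old root as one entry, every severed child-list and every discarded bead as a single-handle stash, every overwritten field as a value entry); then undoing is again just popping and inverting. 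Finally I would note that this keeps $|I|$ proportional to the forward work — every unit of work charged in Lemmas~\ref{lem:unary_left}--\ref{lem:promotebackedge} produces $O(1)$ log entries — so $|I|$ is $O(1)$ on a unary node, $O(|\pi|)$ on a leaf, and $O(|V_X|+|E_X|)$ over a spine, matching the cost scheme of Eq.~\eqref{eq:abstrac_cost} and ensuring that restoring never dominates.
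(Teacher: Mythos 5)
Your proposal is correct and is essentially the paper's own argument: the paper appeals to ``persistent'' linked-list data structures that maintain a stack of modifications, which is exactly your undo-log of $O(1)$-cost atomic steps, and both proofs hinge on the same key observation that detached subtrees are stashed by a single handle (cutting only the attaching pointer) rather than being copied, so $|I|$ stays proportional to the forward work. You unfold the mechanics a bit more explicitly (in particular the non-destructive logging during the partial DFS rebuild inside $\oracleleft$), but the route is the same.
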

\begin{proof}
	We use standard data structures (i.e.~linked lists) for the
        representation of certificate $C$.  Persistent versions of
        these data structures exist that maintain a stack of
        modifications applied to them and that can restore its
        contents to their previous states.  Given the modifications in
        $I$, these data structures take $O(|I|)$ time to restore the
        previous version of $C$.

        Let us consider the case of performing $\oracleleft(C,e)$. We
        cut at most $O(|V_X|)$ edges from $C$. Note that, although we
        conceptually remove whole branches of the DFS tree, we only
        remove edges that attach those branches to the DFS tree. The
        other vertices and edges are left in the certificate but, as
        they no longer remain attached to $B_{u,t}$, they will never
        be reached or explored. In the case of $\oracleright(C,e)$, we
        have a similar situation, with at most $O(|E_X|$) edges being
        modified along the spine of the recursion tree. 
\end{proof}

From Lemmas~\ref{lem:choose} and
\ref{lem:removebackedge}--\ref{lem:restore}, it follows that on a
spine of the recursion tree we have the costs:
$\chooseedge(u)$ on each node which is bounded by $O(|V_X|)$ time as there
are at most $|V_X|$ back edges in $u$; $\oracleright(C,e)$,
$\undooracle(C,I)$ take $O(|V_X|)$ time; $\oracleleft(C,e)$ and
$\undooracle(C,I)$ are charged $O(|V_X|+|E_X|)$ time.  We thus have the following
result,  completing the proof of Theorem~\ref{theorem:optimal_paths}.

\begin{lemma}
	\label{lem:algo_cost}
	Algorithm~$\ref{alg:liststpaths}$ can be implemented with a
        cost fulfilling Eq.~\eqref{eq:abstrac_cost}, thus it takes
        total \mbox{$O(m+\sum_{r \in R} T(r)) = O(m+\sum_{\pi \in
	\setofpaths_{s,t}(\beadstring)}{|\pi|})$} time.
\end{lemma}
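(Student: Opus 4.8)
The plan is to assemble the lemma from the two pillars already established in the excerpt: the abstract cost bound of Lemma~\ref{lemma:total_cost_recursion_tree}, which says that $\sum_{r \in R} T(r) = O(\sum_{\pi \in \setofpaths_{s,t}(\beadstring)}{|\pi|})$ once the algorithm's running time is distributed among the nodes according to Eq.~\eqref{eq:abstrac_cost}, and the certificate-maintenance lemmas (Lemmas~\ref{lem:certificate_scratch}, \ref{lem:choose}, \ref{lem:unary_left}, \ref{lem:removebackedge}, \ref{lem:promotebackedge}, \ref{lem:restore}), which bound the actual cost of every operation Algorithm~\ref{alg:liststpaths} performs. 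So the first thing I would do is argue that the per-operation bounds just proved really do fit inside the cost scheme $T(r)$: a unary node issues one $\chooseedge$ ($O(1)$ by Lemma~\ref{lem:choose}), one $\oracleleft$ ($O(1)$ by Lemma~\ref{lem:unary_left}), one $\undooracle$ restoring $O(1)$ bookkeeping (Lemma~\ref{lem:restore}), so $O(1) = T(r)$ as in Eq.~\eqref{eq:abstrac_cost}; a leaf just outputs $\pi_s$ in $O(|\pi|)$ time; and along a spine with compacted head $H_X = (V_X,E_X)$ the sum of all $\chooseedge$ calls is $O(|V_X|)$ (one per back edge, $|lb(u)| \le |V_X|$), all $\oracleright$ and their $\undooracle$ cost $O(|V_X|)$ (Lemma~\ref{lem:removebackedge} plus Lemma~\ref{lem:restore}), and all $\oracleleft$ and their $\undooracle$ are charged $O(|V_X|+|E_X|)$ (Lemma~\ref{lem:promotebackedge} plus Lemma~\ref{lem:restore}). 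Hence the total work over the spine is $O(|V_X|+|E_X|) = T(r)$, exactly matching Eq.~\eqref{eq:abstrac_cost}.

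Having verified this, the argument is then short: by Lemma~\ref{lemma:total_cost_recursion_tree} the sum of these costs over the whole recursion tree $R$ is $O(\sum_{\pi \in \setofpaths_{s,t}(\beadstring)}{|\pi|})$. The only cost not charged to a node of $R$ is the one-time construction of the initial certificate at the root, which is $O(m)$ by Lemma~\ref{lem:certificate_scratch}; adding this gives the stated $O(m + \sum_{r \in R} T(r)) = O(m + \sum_{\pi \in \setofpaths_{s,t}(\beadstring)}{|\pi|})$. Correctness of the enumeration itself is Lemma~\ref{lemma:correctness_algo_listpaths}, so nothing new is needed there. I would also note that the charges used in Lemma~\ref{lem:promotebackedge} — where the exploration of a shared bead is pushed onto a descendant node that has exactly that bead as the head of its bead string — are accounted for within the global telescoping of Lemma~\ref{lemma:total_cost_recursion_tree}, since each node of $R$ is charged only a constant number of times; this is the one place where the bookkeeping has to be cross-checked rather than applied verbatim.

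The step I expect to be the main obstacle is precisely that cross-checking of the amortized charges in $\oracleleft$ along a spine (Lemma~\ref{lem:promotebackedge}): the cost of re-exploring the common bead $E_X'$ between consecutive certificates $C_{i-1}$ and $C_i$ is not paid locally but redirected, via the leftmost recursion path, to either a unary/leaf node (charged $O(1)$) or to the first node of another spine (charged $O(|E_X'|)$). To make the final time bound airtight one must confirm that this redirection is injective over the whole of $R$ — i.e. no node ever receives more than a constant number of such external charges — so that feeding the resulting per-node costs into Lemma~\ref{lemma:total_cost_recursion_tree} is legitimate. Everything else is a direct substitution of the per-operation lemmas into the abstract cost scheme, followed by one invocation each of Lemma~\ref{lemma:total_cost_recursion_tree} and Lemma~\ref{lem:certificate_scratch}.
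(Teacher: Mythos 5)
Your proposal is correct and takes essentially the same route as the paper: verify that each per-operation lemma fits the cost scheme of Eq.~\eqref{eq:abstrac_cost} node by node (unary, leaf, spine), invoke Lemma~\ref{lemma:total_cost_recursion_tree} for the global sum, and add the $O(m)$ for the one-time root certificate from Lemma~\ref{lem:certificate_scratch}. The cross-charging concern you flag about $\oracleleft$ along a spine is already discharged inside Lemma~\ref{lem:promotebackedge}, whose proof explicitly argues that the shared-bead cost $O(|E'_X|)$ is redirected to the first node on the leftmost recursion path whose head is exactly $E'_X$, and that ``we never charge a node in the recursion tree more than once,'' so feeding those charges into Lemma~\ref{lemma:total_cost_recursion_tree} is indeed legitimate.
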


\section{Extended analysis of operations}
\label{app:extend-analys-oper}

In this suplementary section, we present all details and illustrate
with figures the operations $\oracleright(C,e)$ and $\oracleleft(C,e)$
that are performed along a spine of the recursion tree. In order to
better detail the procedures in Lemma~\ref{lem:removebackedge} and
Lemma~\ref{lem:promotebackedge}, we divide them in smaller parts.  We
use bead string $B_{u,t}$ from Fig.~\ref{fig:Certificate} and the
respective spine from Fig.~\ref{fig:spine} as the base for the
examples. This spine contains four binary nodes corresponding to the
back edges in $lb(u)$ and an unary node corresponding to the tree edge
$(u,v)$. Note that edges are taken in order of the endpoints
$z_1,z_2,z_3,z_4,v$ as defined in operation $\chooseedge(C,u)$.

By Lemma~\ref{lemma:beadstring}, the impact of operations
$\oracleright(C,e)$ and $\oracleleft(C,e)$ in the certificate is
restricted to the biconnected component of $u$. Thus we mainly focus
on maintaining the compacted head $H_X = (V_X,E_X)$ of the bead
string~$B_{u,t}$.

\subsection{Operation right\_update(C,e)} 

\begin{figure*}[!t]
\centering
\subfigure[Step 1]{
\begin{tikzpicture}
[nodeDecorate/.style={shape=circle,inner sep=1pt,draw,thick,fill=black},%
  lineDecorate/.style={-,dashed},%
  elipseDecorate/.style={color=gray!30},
  scale=0.18]

\draw (10,22) circle (9);
\draw (5,11.1) circle (3);

\node (s) at (10,34) [nodeDecorate,color=lightgray,label=above left:$s$] {};
\node (u) at (10,31) [nodeDecorate,label=above left:{ $u$}] {};

\node (tp) at (6.2,13.9) [nodeDecorate,label=right:{ $z_4$}] {};
\node (t) at (5.5,11) [nodeDecorate,label=left:$t$] {};

\path {
	(s) edge[snake,-,color=lightgray] node {\quad\quad$\pi_s$} (u)
	(u) edge node {} (tp)
	(tp) edge node {} (t)
};

\node (a) at (9.3,28) [nodeDecorate,label=left:$v$] {};
\node (b) at (7.1,18) [nodeDecorate,] {};
\node (c) at (13,25) [nodeDecorate,] {};
\node (d) at (14.5,22) [nodeDecorate,label=right:$z_1$] {};
\node (e) at (11,22) [nodeDecorate,label=below:$z_3$] {};
\node (f) at (16,19) [nodeDecorate,label=below:$z_2$] {};
\node (g) at (13,19) [nodeDecorate,] {};
\node (h) at (10,16) [nodeDecorate,] {};

\path {
	(a) edge node {} (c)
	(c) edge node {} (d)
	(d) edge node {} (f)
	(c) edge node {} (e)
	(d) edge node {} (g)
	(b) edge node {} (h)
};

\path {
	(u) edge[dashed,bend left=-40] node {} (tp)
	(f) edge[dashed,bend left=-40] node {} (u)
	(g) edge[dashed,bend left=10] node {} (c)
	(e) edge[dashed,bend left=-10] node {} (u)
	(a) edge[dashed,bend left=5] node {} (h)
};
\end{tikzpicture}
}
\subfigure[Step 2]{
\begin{tikzpicture}
[nodeDecorate/.style={shape=circle,inner sep=1pt,draw,thick,fill=black},%
  lineDecorate/.style={-,dashed},%
  elipseDecorate/.style={color=gray!30},
  scale=0.18]
\draw (10,22) circle (9);
\draw (5,11.1) circle (3);

\node (s) at (10,34) [nodeDecorate,color=lightgray,label=above left:$s$] {};
\node (u) at (10,31) [nodeDecorate,label=above left:{ $u$}] {};

\node (tp) at (6.2,13.9) [nodeDecorate,label=right:{ $z_4$}] {};
\node (t) at (5.5,11) [nodeDecorate,label=left:$t$] {};

\path {
	(s) edge[snake,-,color=lightgray] node {\quad\quad$\pi_s$} (u)
	(u) edge node {} (tp)
	(tp) edge node {} (t)
};

\node (a) at (9.3,28) [nodeDecorate,label=left:$v$] {};
\node (b) at (7.1,18) [nodeDecorate,] {};
\node (c) at (13,25) [nodeDecorate,] {};
\node (e) at (11,22) [nodeDecorate,label=below:$z_3$] {};
\node (h) at (10,16) [nodeDecorate,] {};

\path {
	(a) edge node {} (c)
	(c) edge node {} (e)
	(b) edge node {} (h)
};

\path {
	(u) edge[dashed,bend left=-40] node {} (tp)
	(e) edge[dashed,bend left=-10] node {} (u)
	(a) edge[dashed,bend left=5] node {} (h)
};

\end{tikzpicture}
}
\subfigure[Step 3]{
\begin{tikzpicture}
[nodeDecorate/.style={shape=circle,inner sep=1pt,draw,thick,fill=black},%
  lineDecorate/.style={-,dashed},%
  elipseDecorate/.style={color=gray!30},
  scale=0.18]
\draw (10,22) circle (9);
\draw (5,11.1) circle (3);

\node (s) at (10,34) [nodeDecorate,color=lightgray,label=above left:$s$] {};
\node (u) at (10,31) [nodeDecorate,label=above left:{ $u$}] {};

\node (tp) at (6.2,13.9) [nodeDecorate,label=right:{ $z_4$}] {};
\node (t) at (5.5,11) [nodeDecorate,label=left:$t$] {};

\path {
	(s) edge[snake,-,color=lightgray] node {\quad\quad$\pi_s$} (u)
	(u) edge node {} (tp)
	(tp) edge node {} (t)
};

\node (a) at (9.3,28) [nodeDecorate,label=left:$v$] {};
\node (b) at (7.1,18) [nodeDecorate,] {};
\node (h) at (10,16) [nodeDecorate,] {};

\path {
	(b) edge node {} (h)
};

\path {
	(u) edge[dashed,bend left=-40] node {} (tp)
	(a) edge[dashed,bend left=5] node {} (h)
};

\end{tikzpicture}
}
\subfigure[Step 4 (final)]{

\begin{tikzpicture}
[nodeDecorate/.style={shape=circle,inner sep=1pt,draw,thick,fill=black},%
  lineDecorate/.style={-,dashed},%
  elipseDecorate/.style={color=gray!30},
  scale=0.18]
\draw (10,22) circle (9) [color=gray!30];
\draw (5,11.1) circle (3);
\draw (9.6,29.5) circle (1.5);
\draw (8,23) circle (5);
\draw (6.7,15.9) circle (2.1);

\node (s) at (10,34) [nodeDecorate,color=lightgray,label=above left:$s$] {};
\node (u) at (10,31) [nodeDecorate,label=above left:{ $u$}] {};

\node (tp) at (6.2,13.9) [nodeDecorate,label=right:{ $z_4$}] {};
\node (t) at (5.5,11) [nodeDecorate,label=left:$t$] {};

\path {
	(s) edge[snake,-,color=lightgray] node {\quad\quad$\pi_s$} (u)
	(u) edge node {} (tp)
	(tp) edge node {} (t)
};

\node (a) at (9.3,28) [nodeDecorate,label=below left:$v$] {};
\node (b) at (7.1,18) [nodeDecorate,] {};
\node (h) at (10,19) [nodeDecorate,] {};

\path {
	(b) edge node {} (h)
};

\path {
	(a) edge[dashed,bend left=5] node {} (h)
};

\end{tikzpicture}
}

\caption{Application of $\oracleright(C,e)$ on a spine of the recursion tree}
\label{fig:oracleleftexample}
\end{figure*}
\begin{lemma}
	\emph{(Lemma~\ref{lem:removebackedge} restated)} In a spine of
	the recursion tree, operations $\oracleright(C,e)$ can be
	implemented in $O(|V_X|)$ total time.
\end{lemma}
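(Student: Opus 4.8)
The plan is to process the back edges of $u$ along the spine in exactly the order dictated by $\chooseedge(C,u)$ — that is, in reverse DFS postorder of their lower endpoints $z_1, z_2, \ldots$, so that the last edge of $lb(u)$ is removed first — and to show that the total work of all the $\oracleright(C,e)$ operations on the spine telescopes to $O(|V_X|)$. Throughout, I rely on Lemma~\ref{lemma:beadstring}: removing a back edge incident to $u$ can only affect the block structure inside the head $\head$, so all modifications stay inside $H_X = (V_X, E_X)$.

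First I would describe the work for a single back edge $b_i = (z_i,u)$. The only $\mathit{lowpoint}$ values that can change are those of the vertices on the tree path $P_i$ from $z_i$ upward toward the root $u$, so I walk up $P_i$ and at each vertex $w$ recompute $\mathit{lowpoint}(w)$ as the minimum of $\gamma(y)$ — where $y$ is the endpoint of the topmost back edge in $ab(w)$ — and the lowpoints of the children of $w$. To keep this $O(1)$ per vertex even when the value does not change, I use the auxiliary list $l(w)$ of children whose lowpoint equals $\gamma(u)$, maintained dynamically: if $l(w)\neq\emptyset$ and the topmost ancestor back edge of $w$ does not reach above $u$, then $\mathit{lowpoint}(w)$ remains $\gamma(u)$ and the upward walk halts; otherwise it strictly decreases and we continue. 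Simultaneously, letting $y$ be the parent of $w$, I test whether $\mathit{lowpoint}(w)\le\gamma(y)$ while $w$ is not on the leftmost ($u\leadsto t$) path: in that case $y$ is an articulation point separating $w$'s subtree from $B_{u,t}$, so I cut that subtree, record the $O(1)$ bookkeeping into $I$, and apply the same halting criterion (its removal cannot further lower any lowpoint toward $u$).

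The heart of the argument is the amortization across the whole spine. I would argue that the walk triggered by $b_i$ stops at, or before, the first common ancestor of $z_i$ and the lower endpoints of the back edges still remaining in $lb(u)$: once $b_1,\ldots,b_{i-1}$ have been removed, the part of the tree above that common ancestor already has $\mathit{lowpoint}=\gamma(u)$, so the halting test fires immediately there. Hence, over all $\oracleright$ calls on the spine, no tree edge of $H_X$ is traversed more than a constant number of times; since $H_X$ is a tree on $V_X$ (degree-$2$ vertices other than $u$, $t$ and leaves having been contracted), it has $O(|V_X|)$ tree edges, and cutting a pruned subtree costs $O(1)$ because only the attaching edge is actually removed. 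Summing, the total time is $O(|V_X|)$.

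The step I expect to be the main obstacle is making the ``no tree edge is traversed twice'' claim fully precise: one must verify that the reverse-postorder processing order together with the $\mathit{lowpoint}=\gamma(u)$ halting rule genuinely forces the walks for successive back edges to be vertex-disjoint above their common ancestor, and that the dynamic maintenance of the lists $l(w)$ (together with $ab$ and $lb$) needed to detect ``lowpoint unchanged'' in $O(1)$ does not itself exceed $O(|V_X|)$ amortized cost along the spine. The cost of reverting these modifications is deferred to Lemma~\ref{lem:restore}.
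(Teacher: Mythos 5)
Your proposal is correct and follows essentially the same route as the paper's proof: process the back edges of $lb(u)$ in reverse DFS postorder, walk from each $z_i$ upward updating lowpoints, use the auxiliary lists $l(w)$ to detect ``lowpoint unchanged'' in $O(1)$ and halt there, cut subtrees at detected articulation points in $O(1)$, and amortize by observing that the halting rule stops each walk at the first common ancestor with the still-remaining back edges so no tree edge of $H_X$ is traversed more than a constant number of times. The ``main obstacle'' you flag is indeed the crux, and the paper resolves it exactly as you sketch (Lemmas~\ref{lem:lowpoints} and~\ref{lem:cutbranches}).
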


	In the right branches along a spine, we remove all back edges
	in $lb(u)$. This is done by starting from the last edge in
	$lb(u)$, i.e. proceeding in reverse DFS postorder. In the
	example from Fig.~\ref{fig:Certificate}, we remove the back
	edges $(z_1,u) \ldots (z_4,u)$. To update the certificate
	corresponding to $B_{u,t}$, we have to (i) update the
	lowpoints in each vertex of $H_X$; (ii) prune vertices that
	cease to be in $B_{u,t}$ after removing a back edge. For a
	vertex $w$ in the tree, there is no need to update~$\gamma(w)$.

	Consider the update of lowpoints in the DFS tree.  For
	a back edge $b_i = (z_i,u)$, we traverse the vertices in the
	path from $z_i$ towards the root $u$. By definition of
	lowpoint, these are the only lowpoints that can change.
	Suppose that we remove back edge $(z_4,u)$ in the example from
	Fig.~\ref{fig:Certificate}, only the lowpoints of the vertices
	in the path from $z_4$ towards the root $u$ change.
	Furthermore, consider a vertex $w$ in the tree that is an
	ancestor of at least two endpoints $z_i, z_j$ of back edges
	$b_i$, $b_j$. The lowpoint of $w$ does not change when we
	remove $b_i$.  These observations lead us to the following
	lemma.

\begin{lemma}
	In a spine of the recursion tree, the update of lowpoints in
	the certificate by operation $\oracleright(C,e)$ can be done
	in $O(|V_X|)$ total time.  \label{lem:lowpoints}
\end{lemma}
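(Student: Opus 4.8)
The plan is to bound the total work of all the lowpoint updates triggered by the back-edge removals $\removebackedge$ along a single spine by $O(|V_X|)$, using the fact that each such update only walks up a root-path in the compacted head $H_X=(V_X,E_X)$ and that, across the whole spine, no tree edge of $H_X$ is traversed twice. First I would recall the setup from Lemma~\ref{lem:removebackedge}: the spine processes the back edges $b_1=(z_1,u),\dots,b_d=(z_d,u)$ of $lb(u)$ in reverse DFS postorder, and removing $b_i$ can only affect $\mathit{lowpoint}(w)$ for vertices $w$ on the tree path $P_i$ from $z_i$ up to $u$. So the naive bound is $\sum_i |P_i|$, which could be $\Omega(d\cdot|V_X|)$ — the point of the lemma is that it is actually $O(|V_X|)$.

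The key structural observation is the one already stated in the text: if a vertex $w$ is a (proper) ancestor of two distinct back-edge endpoints $z_i,z_j$, then removing $b_i$ does \emph{not} change $\mathit{lowpoint}(w)$ — because $\mathit{lowpoint}(w)$ is still witnessed by a back edge descending from the $z_j$-side (or, more generally, $\mathit{lowpoint}(w)=\gamma(u)$ stays realized by some other back edge into $u$ from the subtree of $w$, which still exists). Combined with the halting rule of Lemma~\ref{lem:removebackedge} — we walk up from $z_i$ and \emph{stop as soon as} the recomputed $\mathit{lowpoint}$ equals $\gamma(u)$ — this means the upward walk for $b_i$ terminates at (or before) the lowest common ancestor of $z_i$ with the "still-connected-to-$u$" part of the tree, i.e. before reaching any tree edge that will also be on the walk for a later $b_j$. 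Formally, I would argue that the edge-disjointness claim: the set of tree edges actually traversed while processing $b_i$ is disjoint from the set traversed while processing $b_j$ for $i\neq j$. Then $\sum_i(\text{edges traversed for }b_i)\le |E(H_X\text{'s tree})| = |V_X|-1 = O(|V_X|)$, and since each traversed tree edge costs $O(1)$ (taking the minimum over $\gamma$ of the first ancestor back edge in $ab(w)$ and the children's lowpoints, using the auxiliary list $l(w)$ of children whose lowpoint equals $\gamma(u)$ to decide in $O(1)$ whether to continue, per the footnote of Lemma~\ref{lem:removebackedge}), the total is $O(|V_X|)$.

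I would organize the write-up as: (1) recall the halting criterion and the $O(1)$-per-vertex cost from Lemma~\ref{lem:removebackedge}; (2) prove the disjointness of the traversed-tree-edge sets across the $b_i$'s, which is the heart of the argument — here I'd use that the $z_i$ are processed in reverse DFS postorder so that by the time we process $b_i$ all back edges $b_j$ with $j>i$ (closer in postorder) have been removed, and the surviving back edges into $u$ pin $\mathit{lowpoint}=\gamma(u)$ exactly on the path from $u$ down to the current common ancestor, forcing the walk for $b_i$ to stop at its first vertex that is a common ancestor with an already-processed endpoint; hence it never re-enters an edge used before; (3) sum up, noting $|V_X|-1$ tree edges in the compacted head. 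The main obstacle I expect is step (2): making the "common ancestor" / halting argument fully rigorous requires carefully tracking, as back edges are removed one by one in reverse postorder, exactly which vertices still have $\mathit{lowpoint}=\gamma(u)$, and confirming that the walk for $b_i$ cannot "escape" past a vertex processed by some $b_j$ — in other words, that the reverse-postorder processing order is precisely what makes the traversed paths form an edge-disjoint partition of a subtree of $H_X$ rooted at $u$. The rest is bookkeeping on the $O(1)$ per-vertex update cost, which is immediate from the data structures in Section~\ref{sec:certificate}.
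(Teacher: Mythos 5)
Your proposal matches the paper's own proof: both rely on the halting rule (stop as soon as the recomputed lowpoint equals $\gamma(u)$), the auxiliary list $l(w)$ of children with lowpoint $\gamma(u)$ to get $O(1)$ per vertex, and the edge-disjointness of the upward walks across the spine — which the paper establishes by the same first-common-ancestor argument you sketch (via a short contradiction: a tree edge traversed for both $b_i$ and $b_{i+1}$ would have both endpoints above the halting vertex). No essential difference in approach.
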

\begin{proof}
	Take each back edge $b_i = (z_i,u)$ in the order defined by
        $\chooseedge(C,u)$. Remove $b_i$ from $lb(u)$ and $ab(z_i)$.
        Starting from $z_i$, consider each vertex $w$ in the path from
        $z_i$ towards the root $u$.  On vertex $w$, we update
        $\mathit{lowpoint}(w)$ using the standard procedure: take the
        endpoint $y$ of the first edge in $ab(w)$ (the back edge that
        goes the nearest to the root of the tree) and choosing the
        minimum between $\gamma(y)$ and the lowpoint of each child of
        $w$.  When the updated $\mathit{lowpoint}(w) = \gamma(u)$, we
        stop examining the path from $z_i$ to $u$ since it implies
        that the lowpoint of the vertex can not be further reduced
        (i.e. $w$ is both an ancestor to both $z_i$ and
        $z_{i+1}$).

	If $\mathit{lowpoint}(w)$ does not change we cannot pay to
	explore its children. In order to get around this, for each
	vertex we dynamically maintain, throughout the spine, a list
	$l(w)$ of its children that have lowpoint equal to
	$\gamma(u)$. Then, we can test in constant time if $l(w) \neq
	\emptyset$ and $y$ (the endpoint of the first edge in $ab(w)$)
	is not the root $u$. If both conditions are satisfied
	$\mathit{lowpoint}(w)$ changes, otherwise it remains equal to
	$\gamma(u)$ and we stop. The total time to create the lists is
	$O(|V_X|)$ and the time to update is bounded by the number of
	tree edges traversed, shown to be $O(|V_X|)$ in the next
	paragraph.

	The cost of updating the lowpoints when removing all
	back edges $b_i$ is $O(|V_X|)$: there are $O(|V_X|)$ tree
	edges and we do not traverse the same tree edge twice since
	the process described stops at the first common ancestor of
	endpoints of back edges $b_i$ and $b_{i+1}$. By contradiction:
	if a tree edge $(x,y)$ would be traversed twice when removing
	back edges $b_i$ and $b_{i+1}$, it would imply that both $x$
	and $y$ are ancestors of $z_i$ and $z_{i+1}$ (as edge $(x,y)$
	is both in the path $z_i$ to $u$ and the path $z_{i+1}$ to
	$u$) but we stop at the first ancestor of $z_i$ and
	$z_{i+1}$.
\end{proof}

Let us now consider the removal of vertices that
are no longer in $B_{u,t}$ as consequence of operation
$\oracleright(C,e)$ in a spine of the recursion tree. By removing a
back edge $b_i = (z_i,u)$, it is possible that a vertex $w$ previously
in $H_X$ is no longer in the bead string $B_{u,t}$ (e.g. $w$ is no
longer biconnected to $u$ and thus there is no simple path $u \leadsto
w \leadsto t$).

\begin{lemma}
	In a spine of the recursion tree, the branches of the DFS that
	are no longer in $B_{u,t}$ due to operation
	$\oracleright(C,e)$ can be removed from the certificate in
	$O(|V_X|)$ total time.
	\label{lem:cutbranches}
\end{lemma}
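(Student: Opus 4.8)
The plan is to observe that Lemma~\ref{lem:cutbranches} is precisely the pruning step already sketched inside the proof of Lemma~\ref{lem:removebackedge}, and to present it in full detail. Concretely, I would process the back edges $b_i=(z_i,u)$ of $lb(u)$ in the order fixed by $\chooseedge(C,u)$ (reverse DFS postorder), and for each of them run a second pass over the same path $z_i \leadsto u$ that Lemma~\ref{lem:lowpoints} already walks to refresh the lowpoints. On this pass, at each vertex $w$ of the path with tree-parent $y$, I test whether $y$ has become an articulation point separating the subtree of $w$ -- this is the test of Section~\ref{sec:certificate}, a single comparison between the \emph{updated} $\mathit{lowpoint}(w)$ and $\gamma(y)$ -- and whether $w$ fails to lie on the leftmost path of the DFS tree (the path that reaches $t$). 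If both hold, the subtree of $w$ has dropped out of $\beadstring$; I remove the tree edge $(y,w)$, store it together with the position of $w$ in $y$'s list of children in the bookkeeping structure $I$, and continue the walk upward from $y$. I use the same halting criterion as in Lemma~\ref{lem:lowpoints}: the pass for $b_i$ stops at the first vertex whose refreshed lowpoint equals $\gamma(u)$, which is the first common ancestor of $z_i$ and $z_{i+1}$ (and $u$ itself for the last back edge), because above that vertex no lowpoint, hence no articulation status, could have changed.

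For correctness I would argue that this removes exactly the vertices that leave $\beadstring$. A vertex not lying on any path $z_i \leadsto u$ keeps its lowpoint (Lemma~\ref{lem:lowpoints} only updates lowpoints along those paths) and, having been biconnected to $u$ before the deletion, still reaches above its nearest ancestor on such a path through a back edge other than $b_i$; hence it remains in the head. For a path vertex $w$, the conjunction of the articulation test and ``$w$ not on the leftmost path'' is exactly the statement that the subtree of $w$ is no longer biconnected with $u$ and contains no vertex of the $u \leadsto t$ path, i.e.\ that it is no longer part of $H_X=(V_X,E_X)$; and since a vertex whose lowpoint did not change retains its articulation status, stopping at the $\mathit{lowpoint}=\gamma(u)$ vertex misses nothing. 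Consistency across deletions is guaranteed by the processing order: if two endpoints $z_i,z_j$ lie in a common subtree they form a contiguous block in the postorder, so by the time that subtree is actually cut all of its back edges to $u$ have already been deleted and the test for the earlier ones would not have fired. Finally, a cut never traverses the detached subtree: by Lemma~\ref{lem:restore} its vertices and edges stay physically in $C$ but become unreachable, and $\undooracle(C,I)$ simply re-inserts the recorded tree edges.

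For the running time I would reuse verbatim the accounting of Lemma~\ref{lem:lowpoints}. The pruning pass for $b_i$ visits only tree edges of $H_X$ on the path from $z_i$ up to the first common ancestor with $z_{i+1}$; there are $O(|V_X|)$ tree edges in the compacted head, and no tree edge is visited twice over the whole spine, by the same contradiction argument as in Lemma~\ref{lem:lowpoints} (a tree edge traversed for both $b_i$ and $b_{i+1}$ would force both its endpoints to be ancestors of $z_i$ and of $z_{i+1}$, contradicting that the walk stops at their first common ancestor). The per-vertex work is $O(1)$: the articulation test is one comparison of precomputed $\gamma$ and $\mathit{lowpoint}$ values, leftmost-path membership is read from a flag maintained with $C$, and detaching plus recording one tree edge is constant time. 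Summing over all back edges in $lb(u)$ gives the claimed $O(|V_X|)$ total time.

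I expect the main obstacle to be the correctness half rather than the timing: pinning down that the articulation test, applied only at the path vertices encountered during the $z_i \leadsto u$ walks and under the $\mathit{lowpoint}=\gamma(u)$ halting rule, captures \emph{all} and \emph{only} the vertices that leave $\beadstring$ \emph{across the entire spine} -- in particular handling the interaction between back edges already deleted and those still pending, since a pending $b_j$ can keep a subtree attached and thereby legitimately suppress a cut that a later deletion will perform.
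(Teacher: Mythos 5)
Your proposal matches the paper's own proof of Lemma~\ref{lem:cutbranches} essentially line for line: the same articulation-point test (updated $\mathit{lowpoint}(w) \leq \gamma(y)$ combined with the leftmost-path check), the same walk up the $z_i \leadsto u$ path with the same $\mathit{lowpoint}(w) = \gamma(u)$ halting criterion, the same bookkeeping of cut subtrees in $I$, and the same $O(|V_X|)$ amortization by the no-tree-edge-visited-twice argument borrowed from Lemma~\ref{lem:lowpoints}. The extra correctness commentary you add (off-path vertices, suppression of premature cuts by pending back edges) is sound and goes slightly beyond what the paper writes out, but the approach is the same.
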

\begin{proof}
	To prune the branches of the DFS tree that are no longer in
        $H_X$, consider again each vertex $w$ in the path from $z_i$
        towards the root $u$ and the vertex $y$, parent of $w$. It is
        easy to check if $y$ is an articulation point by verifying if
        the updated $\mathit{lowpoint}(w) \leq \gamma(y)$ and there
        exists $x$ not in the subtree of $w$. If $w$ is not in the
        leftmost path, then $t$ is not in the subtree of $w$. If that
        is the case, we have that $w \notin B_{u,t}$, and therefore we
        cut the subtree of $w$ and bookkeep it in $I$ to restore
        later. Like in the update the lowpoints, we stop examining the
        path $z_i$ towards $u$ in a vertex $w$ when
        $\mathit{lowpoint}(w) = \gamma(u)$ (the lowpoints and
        biconnected components in the path from $w$ to $u$ do not
        change).  When cutting the subtree of $w$, note that there are
        no back edges connecting it to $B_{u,t}$ ($w$ is an
        articulation point) and therefore there are no updates to the
        lists $lb$ and $ab$ of the vertices in $B_{u,t}$.  Like in the
        case of updating the lowpoints, we do not traverse the same
        tree edge twice (we use the same halting criterion). 
\end{proof}

With Lemma~\ref{lem:lowpoints} and Lemma~\ref{lem:cutbranches} we
finalize the proof of Lemma~\ref{lem:removebackedge}. 
Fig.~\ref{fig:oracleleftexample} shows the changes the bead string $B_{u,t}$
from Fig.~\ref{fig:Certificate} goes through in the corresponding spine
of the recursion tree.

\subsection{Operation~left\_update(C,e)} 
In the binary nodes of a spine, we use
the fact that in every left branching from that spine the graph is
the same (in a spine we only remove edges incident to $u$ and on a
left branch from the spine we remove the vertex $u$) and therefore its
block tree is also the same. In Fig.~\ref{fig:blocktree_without_u},
we show the resulting block tree of the graph from
Fig.~\ref{fig:Certificate} after having removed vertex $u$. However,
the certificates on these left branches are not the same, as they are
rooted at different vertices. In the example we must compute the
certificates $C_1 \ldots C_4$ corresponding to bead strings $B_{z_1,t}
\ldots B_{z_4,t}$. We do not account for the cost of the left
branch on the last node of spine (corresponding to $B_{v,t}$) as the
node is unary and we have shown in Lemma~\ref{lem:unary_left} how to
maintain the certificate in $O(1)$ time.

By using the reverse DFS postorder of the back edges, we are able
to traverse each edge in $H_X$ only an amortized constant number of
times in the spine.
\begin{lemma}
	\emph{(Lemma~\ref{lem:promotebackedge} restated)} The calls to
	operation $\oracleleft(C,e)$ in a spine of the recursion tree
	can be charged with a time cost of $O(|E_X|)$ to that spine. 
\end{lemma}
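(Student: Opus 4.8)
The plan is to prove that the total cost of all $\oracleleft(C,e)$ operations performed along a single spine is $O(|E_X|)$, amortized. Recall that a spine starts at some node $\langle \pi_s, u, C\rangle$ and visits, by repeated right branching, all the back edges $b_1 = (z_1,u), b_2 = (z_2,u), \ldots$ in $lb(u)$ in reverse DFS postorder, followed by the unique tree edge $(u,v)$ as the last (unary) node. At the $i$th binary node of the spine we perform $\oracleleft(C, b_i)$, which removes the vertex $u$ and re-roots the certificate at $z_i$, producing the certificate $C_i$ of the bead string $B_{z_i, t}$. The key observation, already noted in Section~\ref{sec:basic-scheme} and in the statement of Lemma~\ref{lem:promotebackedge}, is that the graph underlying all these left branches is the \emph{same} (namely $H_X$ with $u$ removed), hence the block tree is the same; only the root of the certificate changes.

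First I would build $C_1$ from scratch: remove $u$, start a DFS from $z_1$ (the endpoint of the last back edge in $lb(u)$), restricted to $H_X$, using $t'$ (the last vertex of $u \leadsto t$ lying in $V_X$) as target and $\gamma(t')$ as the baseline for the $\gamma$ values, exactly as in Lemma~\ref{lem:certificate_scratch}; this costs $O(|E_X|)$. For $i > 1$, I would rebuild part of $C_i$ starting the DFS from $z_i$, but reuse the work stored in $C_{i-1}$: once the new DFS reaches the first bead that $B_{z_i,t}$ and $B_{z_{i-1},t}$ share, we only explore edges internal to that common bead. Any edge leaving this common bead either leads to $t$ — in which case the entire corresponding subtree of $C_{i-1}$ can be grafted onto $C_i$ without re-exploration — or it does not lead to $t$, in which case it was already explored and cut from $C_{i-1}$ and need not be touched. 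Because the back edges are processed in reverse DFS postorder, the beads ``peel off'' monotonically: each bead that is not the shared first bead is added to at most one $C_i$ over the whole spine, so those edges contribute $O(|E_X|)$ in total.

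The main obstacle, and the reason the bound is only amortized, is the shared first bead: if $z_i, z_{i+1}, \ldots, z_j$ all fall inside the same bead once $u$ is removed, then the bead strings $B_{z_i,t},\ldots,B_{z_j,t}$ coincide as graphs but are rooted at distinct vertices, so the DFS of that common component must be recomputed $\Theta(j-i)$ times, each costing $\Theta(|E'_X|)$ where $E'_X$ is the edge set of that bead. I would charge the cost $O(|E'_X|)$ of the $i$th such recomputation (for $i \neq 1$) not to the current spine but to a node in the recursion subtree hanging off the \emph{left child} of the $i$th spine node — specifically the first node on that leftmost path whose bead string has exactly $E'_X$ as its head. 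Such a node exists because the common bead is, by construction, the head of at least one certificate in that subtree. If $|E'_X| \le 1$ that node is a unary node or a leaf, absorbing $O(1)$; otherwise it is the first node of a (distinct) spine, absorbing $O(|E'_X|)$ within the spine-cost budget of Eq.~\eqref{eq:abstrac_cost}. Since this charging always targets the leftmost recursion path below the $i$th spine node, and each such target node is the head of a unique bead string, no node is charged more than once, and the per-spine cost collapses to the claimed amortized $O(|E_X|)$.
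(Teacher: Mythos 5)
Your proposal is correct and follows essentially the same approach as the paper: build $C_1$ from scratch in $O(|E_X|)$ via the restricted DFS with baseline $\gamma(t')$, reuse $C_{i-1}$ for $i>1$ by only exploring the first common bead and grafting/discarding subtrees, and amortize the repeated exploration of the shared bead $E'_X$ by charging it to the first node on the leftmost recursion path below the $i$th spine node. The paper organizes this as two sub-lemmas (the non-amortized $O(|E_X| + \sum_{i>1}|{E_X}_i'|)$ bound, then the amortization), but the key ideas and the charging scheme are identical to yours.
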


To achieve this time cost, for each back edge $b_i = (z_i,u)$, we
compute the certificate corresponding to $B_{z_i,t}$ based on the
certificate of $B_{z_{i-1},t}$. Consider the compacted head $H_X =
(V_X , E_X )$ of the bead string $B_{u,t}$. We use $O(|E_X|)$ time to
compute the first certificate $C_1$ corresponding to bead string
$B_{z_1,t}$. Fig.~\ref{fig:spine-left-update} shows bead string
$B_{z_1,t}$ from the example of Fig.~\ref{fig:Certificate}.

\begin{lemma}
	The certificate $C_1$, corresponding to bead string
	$B_{z_1,t}$, can be computed in $O(|E_X|)$ time.
\end{lemma}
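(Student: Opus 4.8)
The goal is to show that $C_1$, the certificate rooted at $z_1$ for the bead string $B_{z_1,t}$ (obtained after removing vertex $u$), can be built in $O(|E_X|)$ time. The key observation is that by Lemma~\ref{lemma:beadstring} nothing outside the biconnected component of $u$ changes when we delete $u$: the only part of the block tree that is affected is the compacted head $H_X = (V_X,E_X)$ of $B_{u,t}$. Everything below the last vertex $t'$ of $V_X$ on the path $u \leadsto t$ is an articulation-point-rooted subtree that survives the deletion of $u$ verbatim, so it need not be recomputed — we only reattach it once we know $z_1 \leadsto t'$ lies inside the new head. Thus the whole task reduces to recomputing a DFS tree on the subgraph induced by $V_X$ (plus $t'$ as the artificial target), rooted at $z_1$ instead of $u$.

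The plan is to run the same procedure used in Lemma~\ref{lem:certificate_scratch}, but confined to $H_X$: perform a DFS of the graph induced by $E_X$ starting at $z_1$, traversing the path $z_1 \leadsto t'$ first so that $t'$ (and hence the preserved subtree of $t$) ends up on the leftmost path; set $\gamma$ values from a baseline of $\gamma(t')$ rather than from depth, so that the $\gamma$-ordering remains consistent with the part of the certificate we are reusing; compute the $\mathit{lowpoint}$ of every vertex on the way up; and prune any branch $w$ with $\mathit{lowpoint}(w) \le \gamma(\text{parent})$ that is not on the leftmost path, since such a branch is not in $B_{z_1,t}$. Simultaneously rebuild the auxiliary doubly-linked lists $lb(\cdot)$ and $ab(\cdot)$ in the correct DFS postorder and preorder, exactly as in Lemma~\ref{lem:certificate_scratch}; this is possible with the same two-scan-of-the-neighborhood trick. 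All the modifications (the old root pointer, the cut edges, the replaced lists) are bookkept in $I$ so that $\undooracle$ can later revert them.

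For the running time: the DFS and the lowpoint computation visit each edge of $E_X$ a constant number of times and each vertex of $V_X$ once, so the traversal is $O(|V_X| + |E_X|) = O(|E_X|)$ (using $|V_X| \le |E_X|+1$, or more simply that $H_X$ is connected). Building the $lb$ and $ab$ lists and the $\gamma$ and $\mathit{lowpoint}$ annotations adds only $O(1)$ per edge and per vertex. Compacting the resulting tree (contracting degree-2 vertices other than $z_1$, $t'$, and the leaves) is again linear in the size of $H_X$. Hence the total cost of producing $C_1$ is $O(|E_X|)$.

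The main obstacle is correctness rather than the time bound: one must argue that the certificate so produced really represents $B_{z_1,t}$ — i.e. that the vertices retained after pruning are exactly those that lie on some simple path $z_1 \leadsto t$ in $G - u$, and that attaching the preserved subtree rooted at $t'$ yields precisely the compacted augmented DFS tree that the invariant of Algorithm~\ref{alg:liststpaths} requires. This follows from Lemma~\ref{lemma:beadstring} (only the component of $u$ is affected, and $t'$ is an articulation point separating $H_X$ from the rest of $B_{u,t}$, so the subtree below $t'$ is unchanged) together with the standard fact that $\mathit{lowpoint}(w) \le \gamma(\text{parent}(w))$ characterizes being biconnected past the parent — exactly the pruning test used in Lemma~\ref{lem:certificate_scratch}. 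Once this is checked, the $O(|E_X|)$ bound is immediate, and this base case is what lets the general step for $C_i$ with $i > 1$ in Lemma~\ref{lem:promotebackedge} reuse the work already done in $C_{i-1}$.
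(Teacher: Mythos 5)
Your proof is correct and follows essentially the same route as the paper's: identify $t'$ as the last vertex of $V_X$ on the path $u\leadsto t$, observe that as an articulation point its subtree survives the deletion of $u$ unchanged, and then rerun the Lemma~\ref{lem:certificate_scratch} procedure restricted to $H_X$ with $z_1$ as the new root and $\gamma(t')$ as the baseline for $\gamma$, rebuilding the $lb/ab$ lists and lowpoints and pruning non-biconnected branches along the way. The $O(|E_X|)$ bound then follows because the restricted DFS touches each edge of $E_X$ only a constant number of times.
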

\begin{proof}
	Let $t'$ be the last vertex in the path $u \leadsto t$ s.t.
	$t' \in V_X$. Since $t'$ is an articulation point, the subtree
	of the DFS tree rooted in $t'$ is maintained in the case of
	removal vertex $u$. Therefore the only modifications of the
	DFS tree occur in head $H_X$ of $B_{u,t}$.

	To compute $C_1$, we remove $u$ and rebuild the certificate
	starting form $z_1$ using the algorithm from
	Lemma~\ref{lem:certificate_scratch} restricted to $H_X$ and
	using $t'$ as target and $\gamma(t')$ as a baseline to
	$\gamma$ (instead of the depth). In particular we do
	the following.
	To set $t'$ to be in the leftmost path, we perform a
	DFS traversal of graph $H_X$ starting from $z_1$ and stop when
	we reach vertex $t'$. Then compute the DFS tree, traversing
	the path $z_1 \leadsto t'$ first.
	
	{\it Update of $\gamma$.} For each tree edge $(v,w)$ in the
	$t' \leadsto z_1$ path, we set $\gamma(v)=\gamma(w)-1$, using
	$\gamma(t')$ as a baseline.  During the rest of the traversal,
	when visiting vertex $v$, let $w$ be the parent of $v$ in the
	DFS tree. We set $\gamma(v)=\gamma(w)+1$. This maintains the
	property that $\gamma(v)>\gamma(w)$ for any $w$ ancestor of
	$v$.

	{\it Lowpoints and pruning the tree.}  Bottom-up in
	the DFS-tree, compute the lowpoints using the
	lowpoints of the children.  For $z$ the parent of $v$, if
	$\mathit{lowpoint}(v) \leq \gamma(z)$ and $v$ is not in the
	leftmost path in the DFS, cut the subtree of $v$ as it does
	not belong to~$B_{z_1,t}$.

	{\it Computing $lb$ and $ab$.} In the traversal, when finding
	a back edge $e=(v,w)$, if $w$ is a descendant of $v$ we append
	$e$ to $ab(w)$. This maintains the DFS preorder in the
	ancestor back edge list. After the first scan of $N(v)$ is
	over and all the recursive calls returned, re-scan the
	neighborhood of $v$. If $e=(v,w)$ is a back edge and $w$ is an
	ancestor of $v$, we add $e$ to $lb(w)$. This maintains the DFS
	postorder in the descendant back edge list.  This procedure
	takes $O(|E_X|)$ time.
\end{proof}

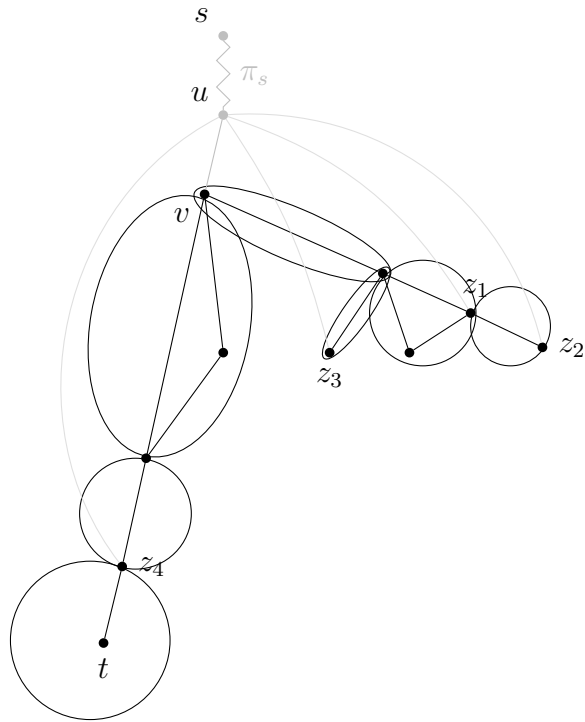
\begin{figure}[t!]
\centering
\begin{tikzpicture}
[nodeDecorate/.style={shape=circle,inner sep=1pt,draw,thick,fill=black},%
  lineDecorate/.style={-,dashed},%
  elipseDecorate/.style={color=gray!30},
  scale=0.35]
\draw (5,11.1) circle (3);
\draw[rotate around={-10:(8,23)}] (8,23) ellipse (3 and 5);
\draw[rotate around={-23:(12.6,26.5)}] (12.6,26.5) ellipse (4 and 1);
\draw[rotate around={55:(15,23.5)}] (15,23.5) ellipse (2.1 and 0.5);
\draw (6.7,15.9) circle (2.1);
\draw (17.5,23.5) circle (2.0);
\draw (20.8,23) circle (1.5);

\node (s) at (10,34) [nodeDecorate,color=lightgray,label=above left:$s$] {};
\node (u) at (10,31) [nodeDecorate,color=lightgray,label=above left:{ $u$}] {};

\node (tp) at (6.2,13.9) [nodeDecorate,label=right:{ $z_4$}] {};
\node (t) at (5.5,11) [nodeDecorate,label=below:$t$] {};
\node (a) at (9.3,28) [nodeDecorate,label=below left:$v$] {};

\path {
	(s) edge[snake,-,color=lightgray] node {\quad\quad$\pi_s$} (u)
	(u) edge[color=lightgray] node {} (a)
	(a) edge node {} (tp)
	(tp) edge node {} (t)
};

\node (b) at (7.1,18) [nodeDecorate,] {};
\node (c) at (16,25) [nodeDecorate,] {};
\node (d) at (19.3,23.5) [nodeDecorate,label=above:$~z_1$] {};
\node (e) at (14,22) [nodeDecorate,label=below:$z_3$] {};
\node (f) at (22,22.2) [nodeDecorate,label=right:$z_2$] {};
\node (g) at (17,22) [nodeDecorate,] {};
\node (h) at (10,22) [nodeDecorate,] {};

\path {
	(a) edge node {} (c)
	(c) edge node {} (d)
	(d) edge node {} (f)
	(c) edge node {} (e)
	(d) edge node {} (g)
	(b) edge node {} (h)
};

\path {
	(g) edge node {} (c)
	(a) edge node {} (h)


	(u) edge[bend left=-50,color=lightgray!50] node {} (tp)
	(f) edge[bend left=-40,color=lightgray!50] node {} (u)
	(e) edge[bend left=-10,color=lightgray!50] node {} (u)
	(d) edge[bend left=-20,color=lightgray!50] node {} (u)
};

\end{tikzpicture}
\caption{Block tree after removing vertex $u$}
\label{fig:blocktree_without_u}
\end{figure}

To compute each certificate $C_i$, corresponding to bead string
$B_{z_i,t}$, we are able to avoid visiting most of the edges that
belong $B_{z_{i-1},t}$. Since we take $z_i$ in reverse DFS
postorder, on the spine of the recursion we visit $O(|E_X|)$ edges
plus a term that can be amortized.

\begin{lemma}
	For each back edge $b_i = (z_i,u)$ with $i>1$, let ${E_X}_i'$
	be the edges in the first bead in common between $B_{z_i,t}$
	and $B_{z_{i-1},t}$. The total cost of computing all
	certificates $B_{z_i,t}$ in a spine of the recursion tree is:
	$O(|E_X| + \sum_{i>1}{|{E_X}_i'|})$.
	\label{lem:cost-spine-not-amortized}
\end{lemma}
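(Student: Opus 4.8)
The plan is to follow the outline already sketched inside the proof of Lemma~\ref{lem:promotebackedge}, but now keeping the contribution of the shared beads as an explicit additive term rather than charging it away. First I would dispose of the two cheap parts of the spine. The last node of the spine is unary (its edge $(u,v)$ is the tree edge of $u$), so $\oracleleft(C,e)$ there costs $O(1)$ by Lemma~\ref{lem:unary_left}. The first binary node requires building $C_1$, the certificate of $B_{z_1,t}$: by the procedure of Lemma~\ref{lem:certificate_scratch} applied to $H_X=(V_X,E_X)$ — rooted at $z_1$, using as target the last vertex $t'$ of $u \leadsto t$ with $t' \in V_X$, and $\gamma(t')$ as the baseline for $\gamma$ — this takes $O(|E_X|)$ time. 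So it remains to bound the total cost of producing $C_2,\dots,C_{d(u)-1}$, i.e. the certificates for $B_{z_2,t},\dots$, where the $z_i$ are processed in the order fixed by $\chooseedge(C,u)$, namely reverse DFS postorder.

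For $i>1$, I would compute $C_i$ incrementally from $C_{i-1}$. Recall that after deleting $u$ the underlying graph is the same for every left branch of the spine, so the block tree, and hence the notion of ``bead'', is common to all the $B_{z_i,t}$; only the roots differ. The recomputation reruns the DFS/lowpoint procedure of Lemma~\ref{lem:certificate_scratch} starting at $z_i$ and walking toward $t$, but it halts exploration inside any bead that was already built in $C_{i-1}$: while it traverses beads lying ``above'' the first bead shared by $B_{z_i,t}$ and $B_{z_{i-1},t}$ (in block-tree order toward $t$) it discovers fresh tree and back edges of $H_X$; upon entering the first shared bead it explores only the edges interior to that bead — precisely the set ${E_X}_i'$ — and then, for the articulation edge that leaves this bead toward $t$, it either reattaches the already-constructed subtree of $C_{i-1}$ (if that edge leads to $t$) or does nothing (if the edge was already cut in $C_{i-1}$, in which case it will be discarded again). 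Hence the work for $C_i$ is $O(\#\{\text{fresh edges of }H_X\text{ visited}\}) + O(|{E_X}_i'|)$, and summing over $i$ gives $O(|E_X|) + O\big(\sum_{i>1}\#\{\text{fresh edges for }i\}\big) + O\big(\sum_{i>1}|{E_X}_i'|\big)$.

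The crux — and the step I expect to be the main obstacle — is proving that $\sum_{i>1}\#\{\text{fresh edges of }H_X\text{ visited at step }i\} = O(|E_X|)$, i.e. that every edge of $H_X$ that is not interior to one of the shared beads is visited a constant number of times over the whole spine. This is where the reverse-postorder processing of the $z_i$ is essential: as $i$ increases, the root of the certificate moves monotonically ``downward/leftward'' along the fixed DFS tree of $H_X$, so a tree edge of $H_X$ that has already been absorbed into the reusable portion below $z_{i-1}$ is never revisited, and a back edge is charged only to the (at most two) consecutive roots $z_{i-1},z_i$ whose first common ancestor it helps to determine — the same halting-at-first-common-ancestor argument used in Lemma~\ref{lem:lowpoints}. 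Formalizing this requires an induction along the chain $z_1,\dots,z_{d(u)-1}$ showing that the ``already-built'' part of $C_{i-1}$ that $C_i$ reuses is an up-closed union of beads of the common block tree, together with the observation that $H_X$ is the compacted head and so has only $O(|V_X|)=O(|E_X|)$ tree edges; the remaining re-exploration is confined to the shared beads and is exactly accounted for by the $\sum_{i>1}|{E_X}_i'|$ term. Combining the three bounds yields the claimed total $O(|E_X| + \sum_{i>1}|{E_X}_i'|)$.
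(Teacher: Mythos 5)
Your proposal follows essentially the same route as the paper's proof: build $C_1$ from scratch in $O(|E_X|)$ by the procedure of Lemma~\ref{lem:certificate_scratch} restricted to $H_X$; for $i>1$ rebuild $C_i$ incrementally from $C_{i-1}$, halting the DFS inside the first shared bead and reusing or discarding the $C_{i-1}$ subtrees hanging off it; and conclude that, thanks to the reverse-postorder processing, each bead is incorporated afresh into at most one $C_i$, so the non-shared work sums to $O(|E_X|)$ while the shared beads contribute exactly $\sum_{i>1}|{E_X}_i'|$. The one place you deviate is in attributing the no-revisit property to the ``first common ancestor'' halting argument of Lemma~\ref{lem:lowpoints}; the paper's justification is instead phrased at the bead level (``each bead is not added more than once to a certificate $C_i$''), but this is a cosmetic difference in how the same monotonicity observation is stated, not a gap.
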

\begin{proof}
	Let us compute the certificate $C_i$: the certificate
	of the left branch of the $i$th node of the spine where we
	augment the path with back edge $b_i = (z_i,u)$ of
	$lb(u)$.

	For the general case of $C_i$ with $i>1$ we also rebuild
	(part) of the certificate starting from $z_i$ using the
	procedure from Lemma~\ref{lem:certificate_scratch} but we use
	information gathered in $C_{i-1}$ to avoid exploring useless
	branches of the DFS tree. The key point is that, when we reach
	the first bead in common to both $B_{z_i,t}$ and
	$B_{z_{i-1},t}$, we only explore edges internal to this bead.
	If an edge $e$ that leaves the bead leads to $t$, we can reuse
	a subtree of $C_{i-1}$. If $e$ does not lead to $t$, then it
	has already been explored (and cut) in $C_{i-1}$ and there is
	no need to explore it again since it is going to be discarded.

	In detail, we start computing a DFS from $z_i$ in $B_{u,t}$
	until we reach a vertex $t' \in B_{z_{i-1},t}$. Note that the
	bead of $t'$ has one entry point and one exit point in
	$C_{i-1}$. After reaching $t'$ we proceed with the traversal
	using only edges already in $C_{i-1}$. When arriving at a
	vertex $w$ that is not in the same bead of $t'$, we stop the
	traversal. If $w$ is in a bead towards $t$, we reuse the
	subtree of $w$ and use $\gamma(w)$ as a baseline of the
	numbering $\gamma$. Otherwise $w$ is in a bead towards
	$z_{i-1}$ and we cut this branch of the certificate. When all
	edges in the bead of $t'$ are traversed, we proceed with visit
	in the standard way.

	Given the order we take $b_i$, each bead is not added more
	than once to a certificate $C_i$, therefore the total cost
	over the spine is $O(|E_X|)$.
	Nevertheless, the internal edges ${E_X}_i'$ of the first bead
	in common between $B_{z_i,t}$ and $B_{z_{i-1},t}$ are explored
	for each back edge $b_i$.
\end{proof}

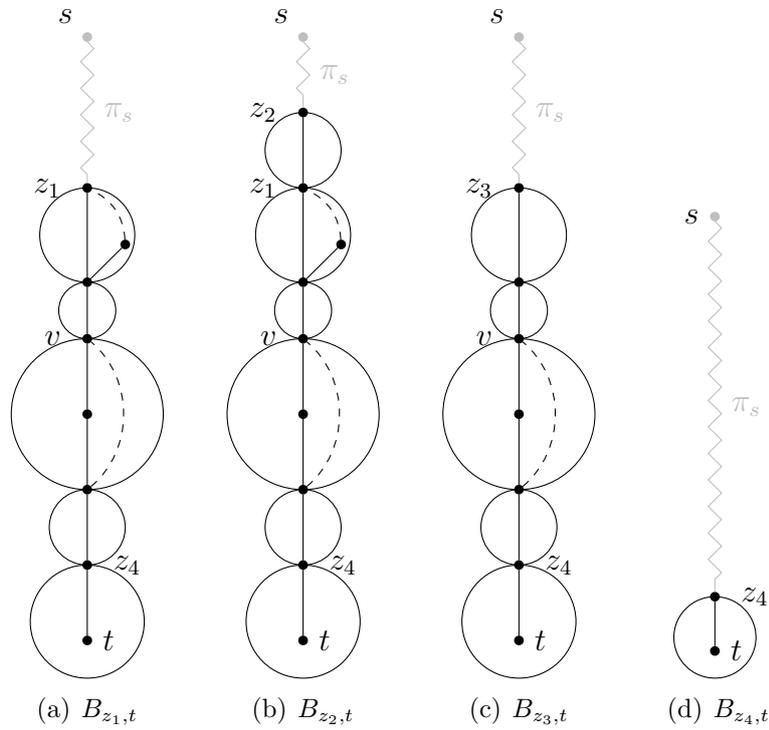
\begin{figure}[t]
\centering
\subfigure[$B_{z_1,t}$] {

\begin{tikzpicture}
[nodeDecorate/.style={shape=circle,inner sep=1pt,draw,thick,fill=black},%
  lineDecorate/.style={-,dashed},%
  elipseDecorate/.style={color=gray!30},
  scale=0.25]
\draw (10,23.5) circle (2.5);
\draw (10,19.5) circle (1.5);
\draw (10,14) circle (4);
\draw (10,8) circle (2);
\draw (10,3) circle (3);

\node (s) at (10,34) [nodeDecorate,color=lightgray,label=above left:$s$] {};

\node (z2) at (10,26) [nodeDecorate,label=left:$z_1~$] {};
\node (a) at (10,21) [nodeDecorate] {};
\node (b) at (12,23) [nodeDecorate] {};
\node (v) at (10,18) [nodeDecorate,label=left:$v~$] {};
\node (c) at (10,14) [nodeDecorate] {};
\node (d) at (10,10) [nodeDecorate] {};
\node (z4) at (10,6) [nodeDecorate,label=right:$~z_4$] {};
\node (t) at (10,2) [nodeDecorate,label=right:$t$] {};

\path {
	(s) edge[snake,-,color=lightgray] node {\quad\quad$\pi_s$} (z2)
	(z2) edge node {} (a)
	(a) edge node {} (b)
	(a) edge node {} (v)
	(v) edge node {} (c)
	(c) edge node {} (d)
	(d) edge node {} (z4)
	(z4) edge node {} (t)
};

\path {
	(d) edge[dashed,bend left=-50] node {} (v)
	(b) edge[dashed,bend left=-30] node {} (z2)
};
\end{tikzpicture}
}
\quad
\subfigure[$B_{z_2,t}$] {

\begin{tikzpicture}
[nodeDecorate/.style={shape=circle,inner sep=1pt,draw,thick,fill=black},%
  lineDecorate/.style={-,dashed},%
  elipseDecorate/.style={color=gray!30},
  scale=0.25]
\draw (10,28) circle (2);
\draw (10,23.5) circle (2.5);
\draw (10,19.5) circle (1.5);
\draw (10,14) circle (4);
\draw (10,8) circle (2);
\draw (10,3) circle (3);

\node (s) at (10,34) [nodeDecorate,color=lightgray,label=above left:$s$] {};

\node (z1) at (10,30) [nodeDecorate,label=left:$z_2~$] {};
\node (z2) at (10,26) [nodeDecorate,label=left:$z_1~$] {};
\node (a) at (10,21) [nodeDecorate] {};
\node (b) at (12,23) [nodeDecorate] {};
\node (v) at (10,18) [nodeDecorate,label=left:$v~$] {};
\node (c) at (10,14) [nodeDecorate] {};
\node (d) at (10,10) [nodeDecorate] {};
\node (z4) at (10,6) [nodeDecorate,label=right:$~z_4$] {};
\node (t) at (10,2) [nodeDecorate,label=right:$t$] {};

\path {
	(s) edge[snake,-,color=lightgray] node {\quad\quad$\pi_s$} (z1)
	(z1) edge node {} (z2)
	(z2) edge node {} (a)
	(a) edge node {} (b)
	(a) edge node {} (v)
	(v) edge node {} (c)
	(c) edge node {} (d)
	(d) edge node {} (z4)
	(z4) edge node {} (t)
};

\path {
	(d) edge[dashed,bend left=-50] node {} (v)
	(b) edge[dashed,bend left=-30] node {} (z2)
};
\end{tikzpicture}
}
\quad
\subfigure[$B_{z_3,t}$] {

\begin{tikzpicture}
[nodeDecorate/.style={shape=circle,inner sep=1pt,draw,thick,fill=black},%
  lineDecorate/.style={-,dashed},%
  elipseDecorate/.style={color=gray!30},
  scale=0.25]
\draw (10,23.5) circle (2.5);
\draw (10,19.5) circle (1.5);
\draw (10,14) circle (4);
\draw (10,8) circle (2);
\draw (10,3) circle (3);

\node (s) at (10,34) [nodeDecorate,color=lightgray,label=above left:$s$] {};

\node (z3) at (10,26) [nodeDecorate,label=left:$z_3~$] {};
\node (a) at (10,21) [nodeDecorate] {};
\node (v) at (10,18) [nodeDecorate,label=left:$v~$] {};
\node (c) at (10,14) [nodeDecorate] {};
\node (d) at (10,10) [nodeDecorate] {};
\node (z4) at (10,6) [nodeDecorate,label=right:$~z_4$] {};
\node (t) at (10,2) [nodeDecorate,label=right:$t$] {};

\path {
	(s) edge[snake,-,color=lightgray] node {\quad\quad$\pi_s$} (z3)
	(z3) edge node {} (a)
	(a) edge node {} (v)
	(v) edge node {} (c)
	(c) edge node {} (d)
	(d) edge node {} (z4)
	(z4) edge node {} (t)
};

\path {
	(d) edge[dashed,bend left=-50] node {} (v)
};
\end{tikzpicture}
}
\quad
\subfigure[$B_{z_4,t}$] {

\begin{tikzpicture}
[nodeDecorate/.style={shape=circle,inner sep=1pt,draw,thick,fill=black},%
  lineDecorate/.style={-,dashed},%
  elipseDecorate/.style={color=gray!30},
  scale=0.18]
\draw (10,3) circle (3);

\node (s) at (10,34) [nodeDecorate,color=lightgray,label=left:$s$] {};

\node (z4) at (10,6) [nodeDecorate,label=right:$~z_4$] {};
\node (t) at (10,2) [nodeDecorate,label=right:$t$] {};

\path {
	(s) edge[snake,-,color=lightgray] node {\quad\quad$\pi_s$} (z4)
	(z4) edge node {} (t)
};

\end{tikzpicture}
}
\caption{Certificates of the left branches of a spine}
\label{fig:spine-left-update}
\end{figure}

Although the edges in ${E_X}_i'$ are in a common bead between
$B_{z_i,t}$ and $B_{z_{i-1},t}$, these edges must be visited. The
entry point in the common bead can be different for $z_i$ and
$z_{i-1}$, the DFS tree of that bead can also be different. For an
example, consider the case where $z_i, \ldots, z_j$ are all in the
same bead after the removal of $u$. The bead strings $B_{z_i,t} \ldots
B_{z_j,t}$ are the same, but the roots $z_i, \ldots, z_j$ of the
certificate are different, so we have to compute the corresponding DFS
of the first bead $|j-i|$ times. Note that this is not the case for
the other beads in common: the entry point is always the same.

\begin{lemma}
	The cost $O(|E_X| + \sum_{i>1}{|{E_X}_i'}|)$ on a spine of the
	recursion tree can be amortized to $O(|E_X|)$.
	\label{lem:left-amortize}
\end{lemma}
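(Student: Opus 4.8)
The plan is to redistribute, for each $i>1$, the cost $O(|{E_X}_i'|)$ that Lemma~\ref{lem:cost-spine-not-amortized} attributes to re-exploring the first bead $K_i$ shared by $B_{z_i,t}$ and $B_{z_{i-1},t}$, pushing it onto a single carefully chosen node lying strictly below the $i$th node of the current spine, so that the residual per-spine cost becomes just $O(|E_X|)$ (the cost of building $C_1$ from scratch, plus the edges of each subsequent $C_i$ visited only once). First I would pin down the target. Let $p_i$ be the $i$th node of the spine and $y_i$ its left child, i.e.\ the root of the subtree that lists the $st$-paths beginning with edge $(u,z_i)$. Walking down the leftmost branch from $y_i$ (taking only left branchings of Algorithm~\ref{alg:liststpaths}), the bead string is shortened from its $z_i$-end one vertex at a time, so its head eventually becomes exactly $K_i$; let $w_i$ be the \emph{first} node on this leftmost branch whose head is $K_i$. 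Such a $w_i$ exists since no bead can be jumped over along a chain of left branchings, and it is unique by construction.

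Next I would check that $w_i$ can absorb the charge inside the budget of Eq.~\eqref{eq:abstrac_cost}. If $|{E_X}_i'|\le 1$, then $K_i$ is a single edge, so at $w_i$ the current vertex has exactly one incident (tree) edge; hence $w_i$ is a unary node or a leaf, and after enlarging the constant $c_0$ it can carry an extra $O(1)$ charge. Otherwise $K_i$ is a non-trivial biconnected component, so the current vertex at $w_i$ has degree $\ge 2$ in the head, $w_i$ is the first node of a spine whose compacted head is the compaction of $K_i$ and thus has $\Theta(|{E_X}_i'|)$ edges, and Eq.~\eqref{eq:abstrac_cost} grants that spine $c_0(|V_X|+|E_X|)=\Theta(|{E_X}_i'|)$ time, into which the charge fits (again up to enlarging $c_0$). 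In both cases the residual cost on the original spine becomes $O(|E_X|)$, which is exactly the lemma's claim; summed over all spines this equals the already-accounted $\sum_{r:\,\mathrm{spine}}T(r)$ of Lemma~\ref{lemma:total_cost_recursion_tree} (using that the left branch on the last, unary, node of a spine costs $O(1)$ by Lemma~\ref{lem:unary_left}).

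The crux — and the step I expect to be the main obstacle — is proving that no node of the recursion tree is charged more than a constant number of times, so that these extra charges are genuinely absorbed. I would show the assignment $(S,i)\mapsto w_i$ is injective by recovering $(S,i)$ from $r=w_i$: starting at $r$, move repeatedly to the parent as long as the current node is a left child, and stop at the first node that is not a left child. Since $r$ lies on the leftmost branch descending from $y_i$, every node strictly between $r$ and $y_i$ is a left child, and $y_i$ itself is a left child (of $p_i$); therefore this walk passes through $y_i$ and halts exactly at $p_i$, which is a right child because $i>1$ (it is reached inside the spine $S$ by right branchings). Thus $p_i$ is determined by $r$ alone, and from $p_i$ one reads off both the spine $S$ and the index $i$, so distinct pairs $(S,i)$ give distinct targets $w_i$ and each node is charged at most once. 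Combining this with the unchanged $O(|E_X|)$ residual per spine proves the amortized bound of Lemma~\ref{lem:left-amortize}, which completes the proof of the restated Lemma~\ref{lem:promotebackedge} and hence of Theorem~\ref{theorem:optimal_paths}.
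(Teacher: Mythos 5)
Your proposal matches the paper's proof: you charge the $O(|{E_X}_i'|)$ re-exploration cost to the first node on the leftmost branch descending from $y_i$ whose head is the shared bead $K_i$, distinguishing the unary/leaf case ($|{E_X}_i'|\le 1$) from the spine-start case, exactly as the paper does. Your explicit injectivity argument---recovering $p_i$ as the first non-left-child ancestor of the charged node, which is a right child since $i>1$---makes precise what the paper states more briefly, namely that the charge always lands on the leftmost recursion path of the $i$th spine node and hence no node is charged twice.
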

\begin{proof}
	We can charge the cost $O(|{E_X}_i'|)$ of exploring the edges
	in the first bead in common between $B_{z_i,t}$ and
	$B_{z_{i-1},t}$ to another node in the recursion tree. Since
	this common bead is the head of at least one certificate in
	the recursion subtree of the left child of the $i$th node of
	the spine.  Specifically, we charge the first and only node in
	the \emph{leftmost} path of the $i$th child of the spine that
	has exactly the edges ${E_X}_i'$ as head of its bead string:
	(i) if $|{E_X}_i'| \le 1$ it corresponds to a unary node or a
	leaf in the recursion tree and therefore we can charge it with
	$O(1)$ cost; (ii) otherwise it corresponds to a first node of
	a spine and therefore we can also charge it with
	$O(|{E_X}_i'|)$. We use this charging scheme when $i \neq 1$
	and the cost is always charged in the leftmost recursion path
	of $i$th node of the spine, consequently we never charge a
	node in the recursion tree more than once. 
\end{proof}

Lemmas~\ref{lem:cost-spine-not-amortized} and~\ref{lem:left-amortize}
finalize the proof of Lemma~\ref{lem:promotebackedge}. 
Fig.~\ref{fig:spine-left-update} shows the certificates of bead strings
$B_{z_i,t}$ on the left branches of the spine from
Figure~\ref{fig:spine}.

\chapter{Conclusion and future work}
\label{chapter:Conclusion}

This thesis described optimal algorithms for several relevant problems
to the topic of listing combinatorial patterns in graphs. These
algorithms share an approach that appears to be general and applicable
to diverse incarnations of the problem.

\medskip

We finish this exposition by considering future directions for the
work performed in this thesis.

\begin{enumerate}
	\item The results obtained can have practical implications in
		several domains. An \emph{experimental analysis} of
		the performance of the algorithms would be beneficial
		continuation of the work performed. We highlight the
		problem of listing cycles as the first candidate for
		experimentations.
	\item Defining a \emph{model of dynamic data structures} that
		support non-arbitrary operations has the potential to
		become an important tool in this and other topics.
		One possible starting point could be a stacked
		fully-dynamic data-structure model. In this model, to
		apply an operation on an edge $e$ (or vertex $v$)
		would imply the undoing of every operation done since
		the last operation on edge $e$ (resp.~vertex $v$). 
	\item A \emph{general formulation} of the technique, defining
		the requirements of the pattern and setting of the
		problem, would allow the application of the technique
		as a black box. This would likely require some work on
		the previous point. A possible difficulty in doing so
		is that the amortized analysis is specific to on the
		pattern being listed. This difficulty could be
		circumvented by parameterizing the time complexity of
		maintaining the dynamic data structure in function of
		the number of patterns it guarantees to exist in the
		input.
	\item Application to \emph{additional problems} of listing,
		such as bubbles, induced paths and holes described in
		Section~\ref{section:CombinatorialPatternsInGraphs},
		would likely achieve positive results.
	\item Another idea that likely would lead to positive results
		is an extension of the analysis of the algorithms and
		their time complexity to take into account
		\emph{succinct encodings of the output}. One clear
		point where this would be useful, would be the on the
		problem of listing $st$-paths and cycles. In this
		case, we believe that the algorithm is already
		optimally sensitive on the size of the front coding of
		the output.
	\item Further investigation of the applications of the
		technique to \emph{searching and indexing}
		of patterns in graphs is recommended.  Connected to
		the previous point, succinct representations of the
		output can be used as an index.
\end{enumerate}

\bibliographystyle{plain}
\bibliography{thesisbib}

\begin{thebibliography}{100}

\bibitem{alm2003}
E.~Alm and A.P. Arkin.
\newblock Biological networks.
\newblock {\em Current opinion in structural biology}, 13(2):193--202, 2003.

\bibitem{alon2003}
U.~Alon.
\newblock Biological networks: the tinkerer as an engineer.
\newblock {\em Science}, 301(5641):1866--1867, 2003.

\bibitem{arenas2008motif}
A.~Arenas, A.~Fernandez, S.~Fortunato, and S.~Gomez.
\newblock Motif-based communities in complex networks.
\newblock {\em Journal of Physics A: Mathematical and Theoretical},
  41(22):224001, 2008.

\bibitem{arunkumar1979enumeration}
S.~Arunkumar and SH~Lee.
\newblock Enumeration of all minimal cut-sets for a node pair in a graph.
\newblock {\em Reliability, IEEE Transactions on}, 28(1):51--55, 1979.

\bibitem{avis}
D.~Avis and K.~Fukuda.
\newblock {Reverse search for enumeration}.
\newblock {\em Discrete Applied Mathematics}, 65(1-3):21--46, 1996.

\bibitem{barahona1988application}
F.~Barahona, M.~Gr{\"o}tschel, M.~J{\"u}nger, and G.~Reinelt.
\newblock An application of combinatorial optimization to statistical physics
  and circuit layout design.
\newblock {\em Operations Research}, 36(3):493--513, 1988.

\bibitem{barnard1993substructure}
J.M. Barnard.
\newblock Substructure searching methods: Old and new.
\newblock {\em Journal of Chemical Information and Computer Sciences},
  33(4):532--538, 1993.

\bibitem{Bezem87}
G.J. Bezem and J.~van Leeuwen.
\newblock Enumeration in graphs.
\newblock Technical Report RUU-CS-87-07, Utrecht University, 1987.

\bibitem{birmele2012}
Etienne Birmel{\'e}, Pierluigi Crescenzi, Rui~A. Ferreira, Roberto Grossi,
  Vincent Lacroix, Andrea Marino, Nadia Pisanti, Gustavo Akio~Tominaga
  Sacomoto, and Marie-France Sagot.
\newblock Efficient bubble enumeration in directed graphs.
\newblock In Liliana Calder{\'o}n-Benavides, Cristina~N. Gonz{\'a}lez-Caro,
  Edgar Ch{\'a}vez, and Nivio Ziviani, editors, {\em SPIRE}, volume 7608 of
  {\em Lecture Notes in Computer Science}, pages 118--129. Springer, 2012.

\bibitem{ferreira2013}
Etienne Birmel{\'e}, Rui Ferreira, Roberto Grossi, Andrea Marino, Nadia
  Pisanti, Romeo Rizzi, Gustavo Sacomoto, and Marie-France Sagot.
\newblock Optimal listing of cycles and st-paths in undirected graphs.
\newblock In {\em Proceedings of the Twenty-Fourth Annual ACM-SIAM Symposium on
  Discrete Algorithms}. SIAM, 2013.

\bibitem{blair1991introduction}
J.R.S. Blair and B.W. Peyton.
\newblock An introduction to chordal graphs and clique trees.
\newblock Technical report, Oak Ridge National Lab., TN (United States), 1991.

\bibitem{boccaletti2006complex}
S.~Boccaletti, V.~Latora, Y.~Moreno, M.~Chavez, and D.U. Hwang.
\newblock Complex networks: Structure and dynamics.
\newblock {\em Physics reports}, 424(4):175--308, 2006.

\bibitem{bollobas2000contraction}
B.~Bollob{\'a}s, L.~Pebody, and O.~Riordan.
\newblock Contraction--deletion invariants for graphs.
\newblock {\em Journal of Combinatorial Theory, Series B}, 80(2):320--345,
  2000.

\bibitem{bondy1976graph}
J.A. Bondy and U.S.R. Murty.
\newblock {\em Graph theory with applications}, volume 290.
\newblock Macmillan London, 1976.

\bibitem{costa2007characterization}
L.F. Costa, F.A. Rodrigues, G.~Travieso, and P.R.V. Boas.
\newblock Characterization of complex networks: A survey of measurements.
\newblock {\em Advances in Physics}, 56(1):167--242, 2007.

\bibitem{Diestel}
Reinhard Diestel.
\newblock {\em Graph Theory (Graduate Texts in Mathematics)}.
\newblock Springer, 2005.

\bibitem{downey1999parameterized}
R.G. Downey and M.R. Fellows.
\newblock {\em Parameterized complexity}, volume~3.
\newblock springer New York, 1999.

\bibitem{duffin1959analysis}
RJ~Duffin.
\newblock An analysis of the wang algebra of networks.
\newblock {\em Transactions of the American Mathematical Society}, pages
  114--131, 1959.

\bibitem{euler1956seven}
L.~Euler.
\newblock {\em The seven bridges of Konigsberg}.
\newblock Wm. Benton, 1956.

\bibitem{Ferreira11}
Rui~A. Ferreira, Roberto Grossi, and Romeo Rizzi.
\newblock Output-sensitive listing of bounded-size trees in undirected graphs.
\newblock In {\em ESA}, pages 275--286, 2011.

\bibitem{feussner}
W.~Feussner.
\newblock Uber stromverzweigung in netzformigen leitern.
\newblock {\em Ann. Physik}, 9:1304--1329, 1902.

\bibitem{feussner2}
W.~Feussner.
\newblock Zur berechnung der stromstarke in netzformigen leitern.
\newblock {\em Ann. Physik}, 15:385--394, 1904.

\bibitem{flajolet2009analytic}
P.~Flajolet and R.~Sedgewick.
\newblock {\em Analytic combinatorics}.
\newblock Cambridge University press, 2009.

\bibitem{flum2006parameterized}
J.~Flum and M.~Grohe.
\newblock {\em Parameterized complexity theory}, volume~3.
\newblock Springer Berlin, 2006.

\bibitem{fukuda1997analysis}
K.~Fukuda, T.M. Liebling, and F.~Margot.
\newblock Analysis of backtrack algorithms for listing all vertices and all
  faces of a convex polyhedron.
\newblock {\em Computational Geometry}, 8(1):1--12, 1997.

\bibitem{fukuda1994finding}
K.~Fukuda and T.~Matsui.
\newblock Finding all the perfect matchings in bipartite graphs.
\newblock {\em Applied Mathematics Letters}, 7(1):15--18, 1994.

\bibitem{gabow}
Harold~N. Gabow and Eugene~W. Myers.
\newblock Finding all spanning trees of directed and undirected graphs.
\newblock {\em SIAM Journal on Computing}, 7(3):280--287, 1978.

\bibitem{gavish1982topological}
B.~Gavish.
\newblock Topological design of centralized computer networks---formulations
  and algorithms.
\newblock {\em Networks}, 12(4):355--377, 1982.

\bibitem{girvan2002community}
M.~Girvan and M.E.J. Newman.
\newblock Community structure in social and biological networks.
\newblock {\em Proceedings of the National Academy of Sciences},
  99(12):7821--7826, 2002.

\bibitem{goldberg2009efficient}
L.A. Goldberg.
\newblock {\em Efficient algorithms for listing combinatorial structures},
  volume~5.
\newblock Cambridge University Press, 2009.

\bibitem{golumbic1978trivially}
M.C. Golumbic.
\newblock Trivially perfect graphs.
\newblock {\em Discrete Mathematics}, 24(1):105--107, 1978.

\bibitem{graham1982history}
R.L. Graham, P.~Hell, and Simon Fraser University.~School of~Computing~Science.
\newblock {\em {On the history of the minimum spanning tree problem}}.
\newblock Simon Fraser University, School of Computing Science, 1982.

\bibitem{gribkovskaia2007bridges}
I.~Gribkovskaia, {\O}.~Halskau~Sr, and G.~Laporte.
\newblock The bridges of k{\"o}nigsberg---a historical perspective.
\newblock {\em Networks}, 49(3):199--203, 2007.

\bibitem{guillemot2010finding}
S.~Guillemot and F.~Sikora.
\newblock Finding and counting vertex-colored subtrees.
\newblock {\em Mathematical Foundations of Computer Science 2010}, pages
  405--416, 2010.

\bibitem{hakimi1961trees}
SL~Hakimi.
\newblock On trees of a graph and their generation.
\newblock {\em Journal of the Franklin Institute}, 272(5):347--359, 1961.

\bibitem{Halford04}
T.~R. Halford and K.~M. Chugg.
\newblock Enumerating and counting cycles in bipartite graphs.
\newblock In {\em IEEE Communication Theory Workshop}, 2004.

\bibitem{harary1963characterization}
F.~Harary.
\newblock A characterization of block-graphs.
\newblock {\em Canad. Math. Bull}, 6(1):1--6, 1963.

\bibitem{harary1973graphical}
F.~Harary and E.M. Palmer.
\newblock Graphical enumeration.
\newblock Technical report, DTIC Document, 1973.

\bibitem{hay2008resisting}
M.~Hay, G.~Miklau, D.~Jensen, D.~Towsley, and P.~Weis.
\newblock Resisting structural re-identification in anonymized social networks.
\newblock {\em Proceedings of the VLDB Endowment}, 1(1):102--114, 2008.

\bibitem{hogg2004enhancing}
T.~Hogg and L.~Adamic.
\newblock Enhancing reputation mechanisms via online social networks.
\newblock In {\em Proceedings of the 5th ACM conference on Electronic
  commerce}, pages 236--237. ACM, 2004.

\bibitem{Horvath04}
Tam\'{a}s Horv\'{a}th, Thomas G\"{a}rtner, and Stefan Wrobel.
\newblock Cyclic pattern kernels for predictive graph mining.
\newblock In {\em Proc. of 10th ACM SIGKDD}, pages 158--167, 2004.

\bibitem{hwang1992steiner}
F.K. Hwang, D.S. Richards, and P.~Winter.
\newblock {\em The Steiner tree problem}, volume~53.
\newblock North Holland, 1992.

\bibitem{Johnson1975}
Donald~B. Johnson.
\newblock Finding all the elementary circuits of a directed graph.
\newblock {\em SIAM J. Comput.}, 4(1):77--84, 1975.

\bibitem{johnson1988generating}
D.S. Johnson, M.~Yannakakis, and C.H. Papadimitriou.
\newblock On generating all maximal independent sets.
\newblock {\em Information Processing Letters}, 27(3):119--123, 1988.

\bibitem{junker2008}
B.H. Junker and F.~Schreiber.
\newblock {\em Analysis of biological networks}, volume~1.
\newblock Wiley Online Library, 2008.

\bibitem{kapoor}
Sanjiv Kapoor and Hariharan Ramesh.
\newblock Algorithms for enumerating all spanning trees of undirected and
  weighted graphs.
\newblock {\em SIAM J. Comput.}, 24:247--265, April 1995.

\bibitem{Klamt06}
S.~Klamt and et~al.
\newblock A methodology for the structural and functional analysis of signaling
  and regulatory networks.
\newblock {\em BMC Bioinformatics}, 7:56, 2006.

\bibitem{Klamt09}
S.~Klamt and A.~{von Kamp}.
\newblock Computing paths and cycles in biological interaction graphs.
\newblock {\em BMC Bioinformatics}, 10:181, 2009.

\bibitem{knuth2006art}
D.~Knuth.
\newblock The art of computer programming, volume 4, 2006.

\bibitem{koch2001enumerating}
I.~Koch.
\newblock Enumerating all connected maximal common subgraphs in two graphs.
\newblock {\em Theoretical Computer Science}, 250(1):1--30, 2001.

\bibitem{koutis2009limits}
I.~Koutis and R.~Williams.
\newblock Limits and applications of group algebras for parameterized problems.
\newblock {\em Automata, languages and programming}, pages 653--664, 2009.

\bibitem{kreher1998combinatorial}
D.L. Kreher and D.R. Stinson.
\newblock {\em Combinatorial algorithms: generation, enumeration, and search}.
\newblock CRC, 1998.

\bibitem{lee2002transcriptional}
T.I. Lee, N.J. Rinaldi, F.~Robert, D.T. Odom, Z.~Bar-Joseph, G.K. Gerber, N.M.
  Hannett, C.T. Harbison, C.M. Thompson, I.~Simon, et~al.
\newblock Transcriptional regulatory networks in saccharomyces cerevisiae.
\newblock {\em Science Signalling}, 298(5594):799, 2002.

\bibitem{leskovec2012stanford}
J.~Leskovec.
\newblock Stanford large network dataset collection, 2012.

\bibitem{Liu06}
Hongbo Liu and Jiaxin Wang.
\newblock A new way to enumerate cycles in graph.
\newblock In {\em AICT and ICIW}, pages 57--59, 2006.

\bibitem{makino2004new}
K.~Makino and T.~Uno.
\newblock New algorithms for enumerating all maximal cliques.
\newblock {\em Algorithm Theory-SWAT 2004}, pages 260--272, 2004.

\bibitem{Mateti76}
Prabhaker Mateti and Narsingh Deo.
\newblock On algorithms for enumerating all circuits of a graph.
\newblock {\em SIAM J. Comput.}, 5(1):90--99, 1976.

\bibitem{matsui1996enumeration}
Y.~Matsui and T.~Matsui.
\newblock Enumeration algorithm for the edge coloring problem on bipartite
  graphs.
\newblock {\em Combinatorics and Computer Science}, pages 18--26, 1996.

\bibitem{matsui2007enumeration}
Y.~Matsui and T.~Uno.
\newblock On the enumeration of bipartite minimum edge colorings.
\newblock {\em Graph Theory in Paris}, pages 271--285, 2007.

\bibitem{mcgregor1982backtrack}
J.J. McGregor.
\newblock Backtrack search algorithms and the maximal common subgraph problem.
\newblock {\em Software: Practice and Experience}, 12(1):23--34, 1982.

\bibitem{mcmillan1992symbolic}
K.L. McMillan.
\newblock Symbolic model checking: an approach to the state explosion problem.
\newblock Technical report, DTIC Document, 1992.

\bibitem{melendez1996puzzle}
E.~Mel{\'e}ndez-Hevia, T.G. Waddell, and M.~Cascante.
\newblock The puzzle of the krebs citric acid cycle: assembling the pieces of
  chemically feasible reactions, and opportunism in the design of metabolic
  pathways during evolution.
\newblock {\em Journal of Molecular Evolution}, 43(3):293--303, 1996.

\bibitem{milo}
R.~Milo, S.~Shen-Orr, S.~Itzkovitz, N.~Kashtan, D.~Chklovskii, and U.~Alon.
\newblock Network motifs: Simple building blocks of complex networks.
\newblock {\em Science}, 298(5594):824--827, 2002.

\bibitem{minty}
G.~Minty.
\newblock {A simple algorithm for listing all the trees of a graph}.
\newblock {\em Circuit Theory, IEEE Transactions on}, 12(1):120--120, 1965.

\bibitem{modani2008large}
N.~Modani and K.~Dey.
\newblock Large maximal cliques enumeration in sparse graphs.
\newblock In {\em Proceedings of the 17th ACM conference on Information and
  knowledge management}, pages 1377--1378. ACM, 2008.

\bibitem{moon1970counting}
JW~Moon.
\newblock Counting labelled trees, canadian mathematical monographs, no. 1.
\newblock In {\em Canadian Mathematical Congress}, 1970.

\bibitem{myers2003software}
C.R. Myers.
\newblock Software systems as complex networks: Structure, function, and
  evolvability of software collaboration graphs.
\newblock {\em Physical Review E}, 68(4):046116, 2003.

\bibitem{nakano}
Shin-ichi Nakano and Takeaki Uno.
\newblock Constant time generation of trees with specified diameter.
\newblock In Juraj Hromkovic, Manfred Nagl, and Bernhard Westfechtel, editors,
  {\em Graph-Theoretic Concepts in Computer Science}, volume 3353 of {\em
  Lecture Notes in Computer Science}, pages 33--45. Springer Berlin /
  Heidelberg, 2005.

\bibitem{pevzner2004novo}
P.A. Pevzner, H.~Tang, and G.~Tesler.
\newblock De novo repeat classification and fragment assembly.
\newblock {\em Genome Research}, 14(9):1786--1796, 2004.

\bibitem{pirzada2008applications}
S.~Pirzada.
\newblock Applications of graph theory.
\newblock {\em PAMM}, 7(1):2070013--2070013, 2008.

\bibitem{Ponstein66}
J.~Ponstein.
\newblock Self-avoiding paths and the adjacency matrix of a graph.
\newblock {\em SIAM Journal on Applied Mathematics}, 14:600--609, 1966.

\bibitem{propp1999enumeration}
J.~Propp et~al.
\newblock Enumeration of matchings: problems and progress.
\newblock {\em New Perspectives in Geometric Combinatorics (eds. L. Billera, A.
  Bj{\"o}rner, C. Greene, R. Simeon, and RP Stanley), Cambridge University
  Press, Cambridge}, pages 255--291, 1999.

\bibitem{Read75}
R~C Read and Robert~E Tarjan.
\newblock Bounds on backtrack algorithms for listing cycles, paths, and
  spanning trees.
\newblock {\em Networks}, 5(3):237--252, 1975.

\bibitem{robertson2010novo}
G.~Robertson, J.~Schein, R.~Chiu, R.~Corbett, M.~Field, S.D. Jackman,
  K.~Mungall, S.~Lee, H.M. Okada, J.Q. Qian, et~al.
\newblock De novo assembly and analysis of rna-seq data.
\newblock {\em Nature methods}, 7(11):909--912, 2010.

\bibitem{rospocher2006dit}
M.~Rospocher.
\newblock {\em On the computational complexity of enumerating certificates of
  NP problems}.
\newblock PhD thesis, DIT-University of Trento, 2006.

\bibitem{ruskey2003combinatorial}
F.~Ruskey.
\newblock Combinatorial generation.
\newblock {\em Preliminary working draft. University of Victoria, Victoria, BC,
  Canada}, 2003.

\bibitem{Sankar07}
K.~Sankar and A.V. Sarad.
\newblock A time and memory efficient way to enumerate cycles in a graph.
\newblock In {\em Intelligent and Advanced Systems}, pages 498--500, 2007.

\bibitem{savage}
C.~Savage.
\newblock {A survey of combinatorial Gray codes}.
\newblock {\em SIAM review}, 39(4):605--629, 1997.

\bibitem{Schott11}
R.~Schott and George~Stacey Staples.
\newblock Complexity of counting cycles using {Z}eons.
\newblock {\em Computers and Mathematics with Applications}, 62:1828--1837,
  2011.

\bibitem{schwartz1987telecommunication}
M.~Schwartz.
\newblock {\em Telecommunication networks: protocols, modeling and analysis},
  volume~7.
\newblock Addison-Wesley Reading, Massachusetts, 1987.

\bibitem{shen2002network}
S.S. Shen-Orr, R.~Milo, S.~Mangan, and U.~Alon.
\newblock Network motifs in the transcriptional regulation network of
  escherichia coli.
\newblock {\em Nature genetics}, 31(1):64--68, 2002.

\bibitem{shioura}
Akiyoshi Shioura, Akihisa Tamura, and Takeaki Uno.
\newblock An optimal algorithm for scanning all spanning trees of undirected
  graphs.
\newblock {\em SIAM Journal on Computing}, 26:678--692, 1994.

\bibitem{simpson2009abyss}
J.T. Simpson, K.~Wong, S.D. Jackman, J.E. Schein, S.J.M. Jones, and
  {\.I}.~Birol.
\newblock Abyss: a parallel assembler for short read sequence data.
\newblock {\em Genome research}, 19(6):1117--1123, 2009.

\bibitem{strogatz2001exploring}
S.H. Strogatz.
\newblock Exploring complex networks.
\newblock {\em Nature}, 410(6825):268--276, 2001.

\bibitem{Sussenguth65}
E.H. Sussenguth.
\newblock A graph-theoretical algorithm for matching chemical structures.
\newblock {\em J. Chem. Doc.}, 5:36--43, 1965.

\bibitem{Syslo81}
Maciej~M. Syslo.
\newblock An efficient cycle vector space algorithm for listing all cycles of a
  planar graph.
\newblock {\em SIAM J. Comput.}, 10(4):797--808, 1981.

\bibitem{Szwarcfiter76}
Jayme~L. Szwarcfiter and Peter~E. Lauer.
\newblock {A search strategy for the elementary cycles of a directed graph}.
\newblock {\em BIT Numerical Mathematics}, 16, 1976.

\bibitem{Tarjan72}
R.~Tarjan.
\newblock Depth-first search and linear graph algorithms.
\newblock {\em SIAM journal on computing}, 1(2):146--160, 1972.

\bibitem{Tarjan73}
Robert~Endre Tarjan.
\newblock Enumeration of the elementary circuits of a directed graph.
\newblock {\em SIAM J. Comput.}, 2(3):211--216, 1973.

\bibitem{Tiernan70}
James~C. Tiernan.
\newblock An efficient search algorithm to find the elementary circuits of a
  graph.
\newblock {\em Communonications ACM}, 13:722--726, 1970.

\bibitem{tsukiyama1980algorithm}
S.~Tsukiyama, I.~Shirakawa, H.~Ozaki, and H.~Ariyoshi.
\newblock An algorithm to enumerate all cutsets of a graph in linear time per
  cutset.
\newblock {\em Journal of the ACM (JACM)}, 27(4):619--632, 1980.

\bibitem{ullmann1976algorithm}
J.R. Ullmann.
\newblock An algorithm for subgraph isomorphism.
\newblock {\em Journal of the ACM (JACM)}, 23(1):31--42, 1976.

\bibitem{uno1997algorithms}
T.~Uno.
\newblock Algorithms for enumerating all perfect, maximum and maximal matchings
  in bipartite graphs.
\newblock {\em Algorithms and Computation}, pages 92--101, 1997.

\bibitem{uno2001fast}
T.~Uno.
\newblock A fast algorithm for enumerating bipartite perfect matchings.
\newblock {\em Algorithms and Computation}, pages 367--379, 2001.

\bibitem{uno2003output}
T.~UNO.
\newblock An output linear time algorithm for enumerating chordless cycles.
\newblock {\em 92nd SIGAL of Information Processing Society Japan}, pages
  47--53, 2003.

\bibitem{uno2007efficient}
T.~Uno.
\newblock An efficient algorithm for enumerating pseudo cliques.
\newblock {\em Algorithms and Computation}, pages 402--414, 2007.

\bibitem{uno1998new}
Takeaki Uno.
\newblock New approach for speeding up enumeration algorithms.
\newblock {\em Algorithms and Computation}, pages 287--296, 1998.

\bibitem{takeaki2003two}
Takeaki Uno.
\newblock Two general methods to reduce delay and change of enumeration
  algorithms.
\newblock 2003.

\bibitem{valiant1979permanent}
L.G. Valiant.
\newblock The complexity of computing the permanent.
\newblock {\em Theoretical computer science}, 8(2):189--201, 1979.

\bibitem{valiant1979complexity}
L.G. Valiant.
\newblock The complexity of enumeration and reliability problems.
\newblock {\em SIAM Journal on Computing}, 8(3):410--421, 1979.

\bibitem{wang1934new}
KT~Wang.
\newblock On a new method for the analysis of electrical networks.
\newblock {\em Nat. Res. Inst. for Engineering, Academia Sinica Memoir},
  (2):19, 1934.

\bibitem{wasa2012}
Kunihiro Wasa, Takeaki Uno, and Hiroki Arimura.
\newblock Constant time enumeration of bounded-size subtrees in trees and its
  applications.
\newblock {\em COCOON 2012}.

\bibitem{wei2006method}
X.~Wei and Z.~Gang.
\newblock Method for establishing a social network system based on motif,
  social status and social attitude, November~21 2006.
\newblock US Patent App. 11/603,284.

\bibitem{Welch66}
John~T. Welch, Jr.
\newblock A mechanical analysis of the cyclic structure of undirected linear
  graphs.
\newblock {\em J. ACM}, 13:205--210, 1966.

\bibitem{wernicke2006efficient}
S.~Wernicke.
\newblock Efficient detection of network motifs.
\newblock {\em Computational Biology and Bioinformatics, IEEE/ACM Transactions
  on}, 3(4):347--359, 2006.

\bibitem{Wild08}
Marcel Wild.
\newblock Generating all cycles, chordless cycles, and hamiltonian cycles with
  the principle of exclusion.
\newblock {\em J. of Discrete Algorithms}, 6:93--102, 2008.

\bibitem{wilf1989combinatorial}
H.S. Wilf and A.~Nijenhuis.
\newblock {\em Combinatorial algorithms: an update}.
\newblock SIAM, 1989.

\bibitem{Yau67}
S.S. Yau.
\newblock Generation of all hamiltonian circuits, paths, and centers of a
  graph, and related problems.
\newblock {\em IEEE Transactions on Circuit Theory}, 14:79--81, 1967.

\bibitem{zerbino2008velvet}
D.R. Zerbino and E.~Birney.
\newblock Velvet: algorithms for de novo short read assembly using de bruijn
  graphs.
\newblock {\em Genome research}, 18(5):821--829, 2008.

\end{thebibliography}

\printindex
\end{document}